\documentclass[10pt]{article}
\usepackage[margin=1in]{geometry}
\usepackage{amsmath,amssymb,amsthm,mathtools,enumitem,booktabs}
\usepackage{xcolor}
\usepackage{mathrsfs}
\usepackage[colorlinks,linkcolor=blue!55!black,citecolor=blue!55!black]{hyperref}

\newtheorem{theorem}{Theorem}[section]
\newtheorem{proposition}[theorem]{Proposition}
\newtheorem{lemma}[theorem]{Lemma}
\newtheorem{corollary}[theorem]{Corollary}
\theoremstyle{definition}
\newtheorem{definition}[theorem]{Definition}
\theoremstyle{remark}
\newtheorem{remark}[theorem]{Remark}

\DeclareMathOperator{\spec}{Sp}\DeclareMathOperator{\dist}{dist}
\DeclareMathOperator{\Tr}{Tr}\DeclareMathOperator{\tr}{tr}
\DeclareMathOperator{\Ind}{Index}\DeclareMathOperator{\sgn}{sgn}
\DeclareMathOperator{\sig}{sig}\DeclareMathOperator{\supp}{supp}
\DeclareMathOperator{\sfl}{sf}\DeclareMathOperator{\Lag}{Lag}
\DeclareMathOperator{\Herm}{Herm}\DeclareMathOperator{\wind}{wind}
\DeclareMathOperator{\Ran}{Ran}\DeclareMathOperator{\Gr}{Gr}\DeclareMathOperator{\ord}{ord}
\newcommand{\R}{\mathbb{R}}\newcommand{\C}{\mathbb{C}}\newcommand{\Z}{\mathbb{Z}}
\newcommand{\N}{\mathbb{N}}
\newcommand{\Om}{\Omega}\newcommand{\Gam}{\Gamma}\newcommand{\Lm}{\Lambda}
\newcommand{\U}{\mathrm{U}}\newcommand{\Hh}{\widehat H}
\newcommand{\cA}{\mathcal{A}}\newcommand{\cB}{\mathcal{B}}\newcommand{\Vp}{\mathrm{V}}
\newcommand{\cL}{\mathcal{L}}\newcommand{\cR}{\mathcal{R}}
\newcommand{\cD}{\mathcal{D}}
\newcommand{\cH}{\mathcal{H}}
\newcommand{\cU}{\mathcal{U}}\newcommand{\cG}{\mathcal{G}}\newcommand{\cS}{\mathcal{S}}
\newcommand{\cW}{\mathcal{W}}\newcommand{\cT}{\mathcal{T}}
\newcommand{\psid}{\Psi\mathrm{DO}}
\newcommand{\dbar}{\bar\partial}

\newcommand{\ip}[2]{\langle #1,#2\rangle}\newcommand{\norm}[1]{\lVert#1\rVert}
\newcommand{\kp}{\kappa}\newcommand{\Th}{\Theta}
\newcommand{\BDI}{\mathrm{BDI}}

\newcommand{\cN}{\mathcal{N}}
\newcommand{\muRS}{\mathrm{Mas}}
\title{\bf A bulk--edge correspondence
for two-dimensional self-adjoint boundary and interface problems}
\author{Guillaume Bal\thanks{Departments of Statistics and Mathematics and CCAM, University of
Chicago, Chicago, IL 60637; guillaumebal@uchicago.edu}}
\date{\today}

\begin{document}\maketitle

\begin{abstract}\noindent
This paper analyzes the quantized asymmetric transport along the boundary of a two-dimensional topological
insulator occupying a half plane, or along the interface separating two insulators. The Hamiltonian is
a formally self-adjoint elliptic differential operator $H$ of order $p$ acting on
$\C^N$-valued functions, with variable coefficients and a non-characteristic boundary. In contrast with problems posed on the Euclidean plane, terminating a sample requires a choice of self-adjoint
realization, $H_L$, and we show that a quantized transport depends on that choice. The Cauchy-data space $\cB=\C^{pN}$ carries a Hermitian form $\cA$, and  a termination is a family of $\cA$-Lagrangian subspaces indexed by the tangential momentum. A graded polynomial congruence transports them into one compact moduli space $\Lag(\cA)\cong\U(\kp)$, $\kp=pN/2$.

The physical observable modeling the asymmetric transport is the edge, or interface, current $\sigma_I[H_L]=\Tr\,i[H_L,P]\varphi'(H_L)$.
It is shown to be well defined along two routes.  For Shapiro--Lopatinski elliptic realizations, we use estimates from Boutet de Monvel's calculus to show that the current observable is a well-defined trace, that $\mathcal I=2\pi\sigma_I[H_L]$ is a Fredholm index, and that it is stable under deformations and compactly supported changes of the operator and of its termination. For translation invariant operators along the edge, a spectral analysis defines $\mathcal I$ when the edge spectrum escapes a bulk gap at large momenta. $\mathcal I$ is then also a spectral flow and a Maslov index.

Our main theorem is a bulk--edge correspondence for differential terminations
\[
  \mathcal I=\cR-\cR_L+\eta\bigl(L_\infty,\Lm^\Phi(+)\bigr)-\eta\bigl(L_\infty,\Lm^\Phi(-)\bigr),
\]
with $\cR$, a bulk quantity, the rotation of the Calder\'on path over the momentum line, $\cR_L\in\Z$ the winding of
the termination, and the two spectral asymmetries measuring its common limit $L_\infty$
at infinite momentum against the transported endpoints $\Lm^\Phi(\pm)$ of the Calder\'on path.
The right-hand side is computed from symbols by finite-dimensional linear algebra.  We determine the range of $\mathcal I$ over transported pointwise terminations, a window of consecutive values, one per chamber, which differential terminations shift by their winding $\cR_L$; we extend the formula to inadmissible and to pseudodifferential terminations, and, by reflection, to interfaces with arbitrary transmission conditions. At the transparent condition, the invariant equals the bulk-difference invariant of the Euclidean plane. Eight examples illustrate the theoretical findings.\end{abstract}

\noindent\textbf{Keywords:} Bulk--edge correspondence; topological insulators; edge and interface
currents; self-adjoint extensions; Calder\'on projector; Lagrangian Grassmannian; Maslov index; spectral flow; Boutet de Monvel calculus.

\medskip\noindent\textbf{MSC codes:} 35Q40, 35P05, 47A53, 47B35, 58J32, 58J30, 81Q10.

\section{Introduction}\label{sec:intro}

Since the first experiments displaying the quantum Hall effect \cite{vonKlitzing80} and their
topological interpretation \cite{TKNN82}, the analysis of topological phases of matter has received
continuous interest in the engineering, physics and mathematics literatures; we refer to
\cite{HasanKane10,QiZhang11,BernevigHughes13,ProdanSchulzBaldes16} for general introductions.
Robust asymmetric transport along interfaces separating two-dimensional insulating materials has been
observed or predicted in many areas of applied science, from the integer quantum Hall effect to
photonics and geophysical fluid flows. The robustness of this transport to arbitrarily large
perturbations affords a topological interpretation. An insulator alone displays no large-scale transport.
At an interface between two insulators in different topological phases, the difference of phases is compensated by an asymmetric transport that is guaranteed to exist.
Non-trivial topological phases of matter are by and large wave phenomena,
and therefore appear wherever Hamiltonian dynamics is a reasonable model. In electronics they include, beyond the quantum Hall effect itself, moir\'e structures in multilayer two-dimensional
materials and gated rhombohedral graphene \cite{bistritzer2011moire,FrazierBal}; beyond electronics, photonic crystals and waveguides \cite{HaldaneRaghu08,Ozawa19}, geophysical shallow-water models
\cite{DelplaceMarstonVenaille17,graf2021topology}, and cold plasma models
\cite{fu2021topological}.

\medskip\noindent{\bf Setting.}
We consider differential operators acting on functions of a two-dimensional Euclidean domain, the setting adapted to the description of
large-scale, macroscopic transport features. The half-plane geometry of this paper allows us to model
\emph{termination} of physical samples.  The central object in the analysis of this transport is the interface current observable
\begin{equation}\label{eq:sigmaIintro}
  \sigma_I[H_L]:=\Tr\,i[H_L,P]\,\varphi'(H_L),
\end{equation}
with $H_L$ a self-adjoint realization, $P=P(x)$ a switch function in the coordinate $x$ along the interface, $i[H_L,P]$ the operator of the current through a line $\{x=x_0\}$ transverse to it, and $\varphi'(H_L)$ a density of states supported in a bulk spectral gap. Here $i[H_L,P]$ is the form $i(\ip{H_Lu}{Pv}-\ip{Pu}{H_Lv})$, whose diagonal $v=u$ is the rate of change of $\ip{u}{Pu}$ along $u(t)=e^{-itH_L}u$. For pointwise terminations and smooth $P$ it acts as $i[H,P]$, and \eqref{eq:sigmaIintro} is a trace as written. With tangential derivatives in the termination, the boundary carries part of the current, and \eqref{eq:sigmaIintro} means $\Tr\,\rho(H_L)\,i[H_L,P]\,\rho(H_L)$ with $\varphi'=\rho^2$, the expectation of the current in the state $\varphi'(H_L)$ (Definition~\ref{def:sigmaI}, Remark~\ref{rem:E2current}). The expectation (trace) admits a natural physical interpretation as an asymmetric transport \cite{ProdanSchulzBaldes16}.
For interface Hamiltonians on the Euclidean plane $\R^2$, in which the transition between two bulk phases is implemented by spatially varying coefficients, the observable is quantized, $2\pi\sigma_I[H]\in\Z$, and stable against large classes of perturbations.
A bulk-edge correspondence identifies it with a bulk-difference invariant $\BDI(H_+,H_-)$ depending
only on the two bulk phases, computable by a Fedosov--H\"ormander formula \cite[Ch.~19]{Hormander3}. We refer to
\cite{Bal2019,Bal2022,QuinnBal,BalReview}  as well as \cite{faure2023manifestation} for this program, to \cite{BalYu,FrazierBal25,tauber2020anomalous} for (non-elliptic)
cases where the correspondence fails, and to \cite{JudTauber} for its dependence on boundary conditions. Higher-dimensional extensions \cite{bal2023topological,BernevigHughes13,ProdanSchulzBaldes16} are not considered.

A terminated sample is modeled by posing the operator on a half plane $\Om=\R_x\times\R_{+,y}$.   Throughout, $H$ is a formally self-adjoint elliptic differential operator of order $p\geq1$ on $\Omega$, acting on $\C^N$-valued functions, with smooth bounded coefficients that are $x$-independent outside a bounded set, a non-characteristic boundary and a bulk spectral gap. It must be equipped with a boundary condition, called a {\em termination} $L$ so that $(H,L)$ defines an essentially self-adjoint realization $H_L$. Pointwise terminations impose linear relations among the boundary values $\gamma u(x)$ at each $x$, without tangential derivatives; transported when the Green form depends on the tangential momentum $\xi$, they form a copy of the unitary group $\U(\kp)$ with $\kp=\tfrac12pN$, of real dimension $\kp^2$. More general differential terminations, with tangential derivatives, add a winding in $\xi$. While general nonlocal self-adjoint extensions are not treated, many pseudo-differential terminations are also covered.  For these terminations the current observable exists, is quantized, and is explicitly computable.

\medskip\noindent{\bf Invariant.}
We consider two routes to its definition. One is based on pseudo-differential calculus while the other one applies to translation-invariant operators by direct spectral analysis. In the presence of a boundary with general (pseudo-)differential terminations, the calculus on $\R^2$ used in \cite{Bal2019,Bal2022,QuinnBal} is replaced by the Boutet de Monvel calculus of boundary problems; see  \cite{BdM71} as well as \cite{Grubb96,Schrohe01} and a brief appendix for notation and main results we use in this paper. Along with the Helffer--Sj\"ostrand formula and Combes--Thomas estimates, the a priori elliptic estimates of the calculus allow us to show that $i[H_L,P]\,\varphi'(H_L)$ is trace-class and to identify $\mathcal I:=2\pi\sigma_I[H_L]$ with a Fredholm index, stable under continuous deformations and under changes of the operator and of the termination on a bounded set.

The second route defines the invariant by spectral calculus for  $x$-independent operators. All it requires is that the edge spectrum {\em escape} a bulk gap at large momenta. Escape is implied by ellipticity and is considerably more general, but it yields no stability under $x$-dependent perturbations. It is known in the Euclidean setting that non-elliptic problems may be less stable than elliptic ones \cite{BalYu,FrazierBal25,tauber2020anomalous}. On this route $\mathcal I$ is also a spectral flow of edge branches, a Maslov index, and a winding number of eigenphases. It is in that setting that the bulk--edge correspondence is proved. It applies on the first route as well whenever the data are $x$-independent outside a bounded set, the index then being that of an $x$-independent operator.

\medskip\noindent{\bf Bulk edge correspondence.} The Cauchy data $\gamma u\in\cB=\C^{pN}$ at the boundary carry the Hermitian Green form $\cA(\xi)$ of $H$. A termination is at each $\xi$ an $\cA(\xi)$-Lagrangian subspace $L(\xi)$, and the decaying solutions of
the fiber equation define the Calder\'on Lagrangian $\Lm^+(\xi,E_0)$. For $p\ge2$ the form $\cA(\xi)$ depends on $\xi$. A graded polynomial congruence (Theorem~\ref{thm:graded}) removes that dependence and places every termination and the Calder\'on path in one fixed moduli space
$\Lag(\cA)\cong\U(\kp)$, where both are shown to have limits at infinite momentum. For  every admissible termination differential along $\Gam$ whose edge spectrum escapes the gap at large momentum, our main result, Theorem~\ref{thm:main}, is the
bulk--edge correspondence
\begin{equation}\label{eq:APSintro}
  \mathcal I=\cR-\cR_L+\eta\bigl(L_\infty,\Lm^\Phi(+)\bigr)-\eta\bigl(L_\infty,\Lm^\Phi(-)\bigr).
\end{equation}
Here $\cR$ is the rotation of the Calder\'on path over the momentum line, $L_\infty$ the common limit of the transported termination at $\xi=\pm\infty$ with  $\cR_L\in\Z$ its winding, and the two spectral asymmetries compare $L_\infty$ with the endpoints $\Lm^\Phi(\pm)=\lim_{\xi\to\pm\infty}\Phi(\xi)^{-1}\Lm^+(\xi,E_0)$ of the transported Calder\'on path (Corollary~\ref{cor:graded}); under \textup{[S]}, $\Phi=\mathbf 1$ and $\Lm^\Phi(\pm)=\Lm^+(\pm)$. For transported pointwise terminations,  we have $\cR_L=0$. The right-hand side is evaluated from symbols by finite-dimensional linear algebra and one quadrature, with no operator diagonalized (section~\ref{sec:APS}). The formula extends, with regularized spectral asymmetries, to inadmissible terminations, which lie on the wall bounding the admissible set (Proposition~\ref{prop:wall}), and to some pseudodifferential terminations (Remark~\ref{rem:PDO}). When $p=1$,  \textup{[S]} holds, $\Phi=\mathbf 1$, and admissibility of a pointwise termination is Shapiro--Lopatinski ellipticity, \eqref{eq:APSintro} reads $\mathcal I=\cR+\eta(L,\Lm^+(+))-\eta(L,\Lm^+(-))$ with $\Lm^+(\pm)$ the decaying data of the principal symbol at $\xi=\pm1$, and $\mathcal I$ takes exactly $\kp+1$ consecutive values over admissible pointwise terminations, one on each chamber. 

\medskip\noindent{\bf Range of invariant and interfaces.}
The termination enters \eqref{eq:APSintro} through one point $L_\infty$ of $\U(\kp)$ and one integer $\cR_L$. As $L_\infty$ ranges over the admissible set, $\mathcal I$ takes exactly $\kp-\kp_\cap+1$ consecutive values, one on each chamber, where $\kp_\cap$ measures the relative position of the two endpoints of the Calder\'on path (Theorem~\ref{thm:range}). When the Calder\'on path closes, $\kp_\cap=\kp$ and the termination enters only through $\cR_L$. The latter may shift the invariant arbitrarily. It is bounded by $\kp d$ for differential terminations of order $d$ when the Green form does not depend on $\xi$, and unbounded over pseudodifferential ones (Remarks~\ref{rem:polyrange} and \ref{rem:PDO}).

By reflection, an interface between two media is a half-plane problem for a doubled operator, whose pointwise transmission conditions form $\U(2\kp)$. All the results above apply to the doubled problem. See \cite{KS} for related results. At the transparent transmission condition the invariant equals the bulk-difference invariant of \cite{Bal2022,QuinnBal}, with extensions to sharp interfaces (Theorem~\ref{thm:R2}).

\medskip\noindent{\bf Related work and outline.}
Many results on bulk--edge correspondences exist in the mathematical literature. The closest ones are those for continuum Hamiltonians on the Euclidean plane \cite{Bal2019,Bal2022,QuinnBal,BalReview}, which the current paper generalizes to boundary and interface contributions. The recent work \cite{KS}, which shaped the present approach, treats half-space and interface problems with varying self-adjoint boundary conditions by $K$-theoretic means; see Remark~\ref{rem:KS}. Most bulk--edge correspondences concern terminated samples, for discrete and continuous Hamiltonians, possibly with magnetic contributions or adapted Bloch structures; see, e.g., \cite{Hatsugai93,schulz2000simultaneous,EGS05,Graf2013,ProdanSchulzBaldes16,KellendonkSchulzBaldes04,BourneRennie18,cornean2024orbital,drouot2026bulk}, and \cite{drouot2021microlocal} for microlocal methods. Spectral flow over the self-adjoint boundary conditions of half-line Dirac operators appears in \cite{Thiang21}. See \cite{Gontier23} for second-order operators in a channel. The geometric constructions we use are the Calder\'on
projector \cite{Calderon63} and the Grassmannian of boundary conditions \cite{BBW}; the indices take the form of a Maslov index and its relation to spectral flow \cite{BBF98,Furutani04,Gontier23,Duistermaat,Hormander71}, as for the Bott--Maslov indices counting edge states of periodic tight-binding models \cite{AvilaSBVB13}; and the identification of self-adjoint realizations with a unitary group is von Neumann's parametrization \cite{BHS20,KS}.

The rest of the paper is structured as follows. Our main hypotheses and results are summarized in section~\ref{sec:setting}, and the Green form, the Calder\'on data, their graded transport and their limits at infinite momentum are developed in section~\ref{sec:green}. The proof that the edge observable is well defined may be found in section~\ref{sec:obs}. The bulk-edge correspondence for half spaces is presented in detail in section~\ref{sec:APS} and constraints on the range of the invariant in section~\ref{sec:range}, with a generalization to two glued half-spaces along an interface with arbitrary (elliptic self-adjoint) transmission in section~\ref{sec:interface}. Illustrative examples are worked out in section~\ref{sec:examples}, and Appendix~\ref{app:import} collects background from the Boutet de Monvel calculus and from symplectic geometry.

\paragraph{Notation.} Spatial variables are denoted by  $(x,y)$ with $\Om=\R_x\times\R_{+,y}$ and $\Gam=\partial\Om=\{y=0\}$; dual variables are $(\xi,\zeta)$; $D=\tfrac1i\nabla=(D_x,D_y)$. We say $f$ is a \emph{switch function} (or simply a {\em switch}), $f\in\cS[a,b;c,d]$, when $f:\R\to\R$ is bounded with $f=a$ on
$(-\infty,c]$ and $f=b$ on $(d,\infty)$; $\cS[a,b]$ is the union over $c\le d$ and
$\cS:=\cS[0,1]$, and $C^\infty\cS[\cdot]$ denotes the smooth switch functions. Traces of Cauchy data
are $\gamma_ju:=(D_y^ju)|_\Gam$ and $\gamma u:=(\gamma_0u,\dots,\gamma_{p-1}u)$. Inner products $\ip{v}{u}=\int\overline v\cdot u$ are conjugate linear in the first argument. $\U(\kp)$ is the unitary
group, $\Herm(\kp)$ the Hermitian matrices, $\Gr(j,\C^\kp)$ the complex Grassmannian, and
$\langle\,\cdot\,\rangle=(1+|\cdot|^2)^{1/2}$. The boundary part \eqref{eq:res} of a resolvent is written $\cG(z)$, $G^m$ denotes the filtration \eqref{eq:filtration}, and $\Hh_\sharp$ of section~\ref{sec:interface} is the fiber operator of the glued interface
Hamiltonian while $\Hh^\natural$ appears in Definition~\ref{def:calderon}. We use $\spec$ to denote spectrum.  

\section{Operator hypotheses and main results}\label{sec:setting}

We work on the half plane $\Om=\R_x\times\R_{+,y}$ with boundary $\Gam=\{y=0\}$, and write
\begin{equation}\label{eq:Hsplit}
  H=\sum_{k=0}^{p}A_k(x,y,D_x)\,D_y^k = \mathrm{Op}^w(a^w),
\end{equation}
with $A_k$ a matrix-valued differential operator in $x$ of order at most $p-k$ and Weyl symbol $a_k(x,y,\xi)$ in $x$, and $a^w(x,y;\xi,\zeta)$ the Weyl symbol of $H$. The principal symbol is called $\sigma_p$. The \emph{bulk operator} $H^{\R^2}$ is the operator on $\R^2$ obtained by freezing $a^w$ at $y=y_0$ of
\textup{[H4]} below and extending it constantly in $y$.

\medskip
\noindent\textbf{(i) The volume operator.}
$H$ is a differential operator on $\Om$ of order $p\ge1$ acting on $\C^N$-valued functions, with smooth bounded coefficients with bounded derivatives and equal to $x$-independent coefficients outside a compact set. We assume throughout the paper:
\begin{enumerate}[label=\textup{[H\arabic*]},leftmargin=*,nosep]
\item \emph{formal self-adjointness}: $a^w(x,y;\xi,\zeta)\in\Herm(\C^N)$ for all real arguments, or equivalently $\ip{v}{Hu}=\ip{Hv}{u}$ for $u,v\in C_c^\infty(\Om;\C^N)$;
\item \emph{ellipticity}: the principal symbol $\sigma_p(x,y;\xi,\zeta)$ is invertible for $(\xi,\zeta)\ne0$;
\item \emph{bulk gap}: there is a nonempty open interval $\Delta$, possibly unbounded, with
  $\Delta\cap\spec H^{\R^2}=\emptyset$, equivalently $\Delta\cap\spec a^w(y_0;\xi,\zeta)=\emptyset$ for all real $(\xi,\zeta)$;
\item \emph{asymptotic flatness}: $a^w$ is independent of $(x,y)$ for $y\ge y_0$, so that $H=H^{\R^2}$ there with constant coefficients.
\end{enumerate}
We use the notation $E_0\in\Delta$ for a fixed reference energy and the letter $E$ as a running spectral parameter. We set
\[
  \cB:=\C^{pN},\qquad \kp:=\tfrac12pN.
\]
That $pN$ is even and $\kp$ an integer follows from \textup{[H1]}--\textup{[H2]}, since $\det\sigma_p(x,y;1,\zeta)$ has its $pN$ roots in conjugate pairs with none real (Lemma~\ref{lem:hermODE}). The boundary is non-characteristic: $\sigma_p(x,y;0,1)=A_p(x,y)$, so the coefficient $A_p$ of $D_y^p$ is invertible by \textup{[H2]}. As a restriction, coefficients may vary in $x$ only within the collar $y<y_0$. Although we consider realizations $H_L$ that are {\em not} elliptic, the volume operator always is.

\begin{definition}[Calder\'on Lagrangian]\label{def:calderon}
Fix $x\in\Gam$ and $\xi\in\R$, and let $\Hh(x,\xi):=\sum_ka_k(x,y,\xi)D_y^k$ be the fiber operator on $\R_+$. It retains its $y$-dependence and, by [H4], has constant coefficients for $y\ge y_0$. The operator on $\R$ with these constant coefficients, the fiber at $\xi$ of the bulk operator $H^{\R^2}$, is denoted $\Hh^\natural(\xi)$.
For real $E$, define $\cL:=\Hh(x,\xi)-E$ and
\begin{equation}\label{eq:Lam}
  \Lm^+(x,\xi,E):=\bigl\{\gamma\psi:\cL\psi=0,\ \psi\in L^2(\R_+)\bigr\}.
\end{equation}
\end{definition}

By Theorem~\ref{thm:calderonB} the space $\Lm^+(x,\xi,E)$ has dimension $\kp=\tfrac12pN$ for every $\xi\in\R$ and $E\in\Delta$. By Lemma~\ref{lem:endpoints} it has limits
\[
  \Lm^+(x,\pm,E):=\lim_{\xi\to\pm\infty}\Lm^+(x,\xi,E),
\]
the \emph{endpoints} of the Calder\'on Lagrangian, against which the boundary condition will be tested. They may depend on $E$ in general, although they do not under the condition \textup{[P]} of section~\ref{sec:green}. We write $\Lm^+(x,\pm):=\Lm^+(x,\pm,E_0)$ at the reference energy when the spectral parameter is not displayed.

\medskip\noindent
\textbf{Split type \textup{[S]}.} $H$ is of \emph{split type}, denoted by \textup{[S]}, if $\ord A_k=0$ for $1\le k\le p-1$: the
coefficients of the intermediate normal derivatives are matrices. This holds automatically at $p=1$. \textup{[S]} is not
assumed, and is used only where stated as a convenient condition.

\medskip
\noindent\textbf{(ii) The termination.}

\begin{definition}[boundary conditions and realizations]\label{def:local}
A \emph{boundary condition} for $H$ is a trace operator
\begin{equation}\label{eq:traceop}
  T=\sum_{j<p}B_j\gamma_j,\qquad
  B_j \text{ classical (polyhomogeneous) pseudodifferential along }\Gam,
\end{equation}
of the Boutet de Monvel calculus, without the integral terms of class zero that calculus also allows (Appendix~\ref{app:import}),
with $\kp\times N$ blocks $B_j$, i.e., $\kp$ scalar conditions, whose Weyl symbols $b_j(x,\xi)$ form a $\kp\times pN$ matrix of rank $\kp$ at every point. Its \emph{realization} $H_L$, in the terminology of \cite{Grubb96}, is $H$ acting on
\[
  \cD_L:=\ker T\cap H^p(\Om;\C^N)
\]
and its \emph{field} is
\begin{equation}\label{eq:Lfield}
  L(x,\xi):=\Bigl\{w=(w_0,\dots,w_{p-1})\in\cB=\C^{pN}:\ \sum_{j<p}b_j(x,\xi)\,w_j=0\Bigr\},
\end{equation}
a $\kp$-dimensional subspace of the Cauchy-data space at each $(x,\xi)\in T^*\Gam$.

A boundary condition is {differential of \emph{tangential} order $d$} when the $B_j$ are differential operators of
order $\le d$ along $\Gam$, so that $L(x,\xi)$ is polynomial in $\xi$ of degree $\le d$; such $T$ always meet the requirements of \eqref{eq:traceop}, differential operators being classical with polynomial symbols. This {tangential} order is the one used for boundary conditions throughout (it differs from the Boutet de Monvel order of $T$, which is a graded version; see Appendix~\ref{app:import}). A boundary condition is called \emph{pointwise} when $d=0$.  Then $L(x,\xi)=L(x)$ and
$\cD_L=\bigl\{u\in H^p(\Om;\C^N):\ \gamma u(x)\in L(x)\ \text{for all }x\in\Gam\bigr\}$.
Differential boundary conditions are local operators. A nonlocal boundary condition, with $b_j$
not polynomial in $\xi$, is allowed by \eqref{eq:traceop} and considered in Theorems~\ref{thm:APS}\textup{(ii)} and \ref{thm:fiberexist} and in Remark~\ref{rem:PDO}.
A boundary condition whose realization is essentially self-adjoint on $\cD_L$, $H_L$ then denoting its
closure, is a \emph{termination}; these form the \emph{moduli space of terminations}. Throughout, $L$
denotes a termination, and also its field. At an interface we use the term \emph{transmission condition} (section~\ref{sec:interface}).
The $i$-th row of $T$ has (Boutet de Monvel) \emph{order} $m_i:=\max_j(j+\operatorname{ord}B_{ij})$, and $T$ is \emph{normal}, in the sense of Boutet de Monvel and Grubb \cite[Def.~1.4.3]{Grubb96}, when each row reaches $m_i$ at $j=m_i$, with a coefficient of order zero, and for each $m$ these coefficients in the rows of order $m$ form a matrix of full rank.
\end{definition}

Section~\ref{sec:obs} uses the boundary calculus for $B_j$ differential along $\Gam$. Which fields $L$ qualify is analyzed in section~\ref{sec:green}. As $\kp=\tfrac12pN$, a self-adjoint realization imposes half as many scalar conditions as there are Cauchy data, and at fixed $(x,\xi)$ the possible values of $L$ form a copy of $\U(\kp)$, of real dimension
$\kp^2$ (Theorem~\ref{thm:signature}, Proposition~\ref{prop:U}).

\medskip
\noindent\textbf{The hypotheses on the termination.}
The principal frozen fiber equation $\sigma_p(x,0;\pm1,D_y)\psi=0$ is equivalent to the first-order
system $\partial_y\Psi=M_\infty(x,0;\pm)\Psi$ for $\Psi=\gamma\psi\in\cB$, with $M_\infty$ its
\emph{companion matrix}, whose eigenvalues are the $i\zeta$ over the characteristic roots $\zeta$ of
$\det\sigma_p(x,0;\pm1,\cdot)$. Let $V_\pm(x)\subset\cB$, called the \emph{frozen data}, be the decaying Cauchy data of that problem,
that is the stable subspace of $M_\infty(x,0;\pm)$. They reappear in the proof of
Lemma~\ref{lem:endpoints} as the value at $\epsilon=0$ of rescaled data. With $\Th_t:=\mathrm{diag}(1,t,\dots,t^{p-1})\otimes\mathbf 1_N$ the grading of $\cB$
by order of trace, let the \emph{principal field} of $T$ be
$L^{\downarrow}(x,\pm):=\lim_{t\to\infty}\Th_t^{-1}L(x,\pm t)$, written $L^\downarrow(x)$ when it does not
depend on the sign (e.g.\ for pointwise $L$); when the symbol $\sigma_T(x,\pm)$ at $\xi=\pm1$ of the top-order part of $T$, row by row, has rank $\kp$, $L^\downarrow(x,\pm)=\ker\sigma_T(x,\pm)$ (Appendix~\ref{app:grass}).

\begin{definition}[Shapiro--Lopatinski ellipticity]\label{def:SL}
The pair $(H,L)$ is \emph{Shapiro--Lopatinski elliptic} at $(x,\pm)$ when the frozen data $V_\pm(x)$ and the principal field $L^{\downarrow}(x,\pm)$, two $\kp$-planes, are transverse, i.e.,
\begin{equation}\label{eq:SLdef}
  \cB=V_\pm(x)\oplus L^{\downarrow}(x,\pm),
\end{equation}
and \emph{uniformly} so when, for some boundary condition $T$ defining the domain $\cD_L$, $\sigma_T(x,\pm)$ restricted to $V_\pm(x)$ is invertible with inverse bounded uniformly in $x$. Then $\sigma_T$ has rank $\kp$, $L^\downarrow=\ker\sigma_T$ is continuous in $x$, and the transversality is uniform.
\end{definition}
\noindent This is the condition of \cite[\S20.1]{Hormander3}, equivalently the well-posedness of
\cite{BBW} in Calder\'on-projector form. We pair it with an additional condition on the edge
spectrum.

\begin{definition}[escape]\label{def:escape}
For $H$ and $L$ independent of $x$, let $\Hh_L(\xi)$ be the fiber operator $\Hh(\xi)$ of Definition~\ref{def:calderon} on the domain $\{u\in H^p(\R_+):\gamma u\in L(\xi)\}$, and for $K\subset\Delta$ compact put
$S(K):=\{\xi\in\R:\spec\Hh_L(\xi)\cap K\ne\emptyset\}$. The edge spectrum \emph{escapes} if $S(K)$ is
bounded for every such $K$, i.e., if the edge branches leave every compact subset of the bulk gap as
$|\xi|\to\infty$.
\end{definition}
\medskip\noindent The third hypothesis on the termination, \emph{admissibility}, asks that the limits of
$L$ at infinite tangential momentum be uniformly transverse to the endpoints $\Lm^+(x,\pm)$. It is stated in
Definition~\ref{def:adm}, once the frame in which both are compared has been introduced. Admissibility implies escape when the Calder\'on Lagrangian converges to its endpoints uniformly in energy, while without escape, even in measure, the density of states does not localize (Proposition~\ref{prop:exact}).

\medskip
\noindent\textbf{(iii) The edge observable.} Let $\varphi\in C^\infty\cS[0,1]$ be a smooth non-decreasing switch with $\varphi'\ge0$ supported in $\Delta$, and let $\chi\in C^\infty\cS$ be a smooth spatial switch in the variable $x$ along $\Gam$ while $\Pi=\Pi_{x_0}:=\mathbf 1_{x\ge x_0}$, with $x_0\in\R$ arbitrary, is a sharp switch. We have $\varphi'(H_L)$ as a density of states confined to the gap and introduce the current operator $i[H_L,\chi]$. Contrasting with the Euclidean setting of \cite{Bal2019,Bal2022,QuinnBal}, the termination enters the definition of the commutator to account for boundary effects. As a {\em form} on $\cD_L$,
\[
  \ip{u}{i[H_L,\chi]v}:=i\bigl(\ip{H_Lu}{\chi v}-\ip{\chi u}{H_Lv}\bigr)
  =\ip{u}{i[H,\chi]v}+\ip{\gamma u}{\cA(\chi\gamma v)}_{L^2(\Gam)},
\]
with $i[H,\chi]$ the purely bulk expression and $\cA$ the \emph{Green form}, the operator on $\Gam$ defined by
$\ip{v}{Hu}-\ip{Hv}{u}=i\ip{\gamma v}{\cA\gamma u}_{L^2(\Gam)}$ (Proposition~\ref{prop:green}). The boundary
term vanishes for pointwise terminations, for which $\chi\cD_L\subseteq\cD_L$. It does not vanish for
terminations with tangential derivatives, for which the bulk expression alone is not quantized
(Remark~\ref{rem:E2current}). The form is bounded for the graph norm of $H_L$, so it defines
$i[H_L,\chi]=i(H_L\chi-\chi H_L)$, with $H_L$ extended by duality, as a bounded operator from $\cD(H_L)$
to its antidual $\cD(H_L)^*$, the space of bounded conjugate-linear functionals on $\cD(H_L)$. It takes values in $L^2$ only where the boundary term vanishes, a proper subspace for
tangential derivatives. As $\varphi_k(H_L)$, $\varphi_k\in C_c^\infty(\Delta;\R)$, maps the antidual into
$\cD(H_L)$, the operator $\varphi_1(H_L)\,i[H_L,\chi]\,\varphi_2(H_L)$ is bounded on $L^2$, and we compute its trace.

\begin{definition}[the edge current observable]\label{def:sigmaI}
Let $U=e^{2\pi i\varphi(H_L)}$. When $[\Pi,U]$ is trace-class, set
\begin{equation}\label{eq:Idef}
  \mathcal I:=\Ind\bigl(\Pi U\Pi+(\mathbf 1-\Pi)\bigr)=-\Tr\bigl(U^*[\Pi, U]\bigr).
\end{equation}
Let $\mathcal P:=\cS$ be the set of switches $P(x)$, which contains $\chi$ and $\Pi_{x_0}$. For $P\in\mathcal P$ and $\varphi_1,\varphi_2\in C_c^\infty(\Delta;\R)$ with
$\varphi_1\varphi_2=\varphi'$, let
\begin{equation}\label{eq:TP}
  \sigma_I[H_L]:=\Tr\,\cT_P,\quad \cT_P:=i\bigl(\varphi_1(H_L)\,H_L P\,\varphi_2(H_L)-\varphi_1(H_L)\,P H_L\,\varphi_2(H_L)\bigr),
\end{equation}
assuming in the first definition that $\cT_P$ is trace-class. 

Here, $\varphi(H_L)H_L$ is interpreted as $\psi(H_L)$ for $\psi(E)=E\varphi(E)$.
We have $\ip f{\cT_Pg}=i\bigl(\ip{H_Lu}{Pv}-\ip{Pu}{H_Lv}\bigr)$ with $u={\varphi_1}(H_L)f$ and
$v=\varphi_2(H_L)g$. So $\cT_\chi=\varphi_1(H_L)\,i[H_L,\chi]\,\varphi_2(H_L)$ is the form commutator placed between $\varphi_1(H_L)$ and
$\varphi_2(H_L)$. With $\varphi'=\rho^2$, $\rho\in C_c^\infty(\Delta;\R)$, $\sigma_I=\Tr\,\rho(H_L)\,i[H_L,\chi]\,\rho(H_L)$ is the expectation
of the current with density of states $\varphi'(H_L)$, the current being the form commutator: for
$u\in\cD(H_L)$ and $u_t=e^{-itH_L}u$, $\frac d{dt}\ip{u_t}{\chi u_t}=i(\ip{H_Lu_t}{\chi u_t}-\ip{\chi u_t}{H_Lu_t})$ \cite{ProdanSchulzBaldes16}. For $v\in\cD(H_L)$, $i[H_L,\chi]v\in L^2$ if and only if $\chi v\in\cD(H_L)$: if the form equals $\ip ug$ for all
$u\in\cD(H_L)$, then $\ip{H_Lu}{\chi v}=\ip u{\chi H_Lv-ig}$, so $\chi v\in\cD(H_L^*)=\cD(H_L)$, and conversely.
Hence $i[H_L,\chi]\varphi_2(H_L)$ is $L^2$-valued if and only if $\chi\Ran\varphi_2(H_L)\subseteq\cD(H_L)$, as for
pointwise terminations, where $\chi\cD_L\subseteq\cD_L$. It then equals the bulk expression $i[H,\chi]\varphi_2(H_L)$,
trace-class as in Theorem~\ref{thm:traceclass}, and cyclicity gives
\begin{equation}\label{eq:sigmaI}
  \sigma_I[H_L]=\Tr\,i[H_L,\chi]\,\varphi'(H_L).
\end{equation}
\end{definition}
\begin{remark}\label{rem:E2current}
For $H=-\Delta$ on $y>0$ with termination $T=\gamma_0+iD_x\gamma_1$, $\cA=\bigl(\begin{smallmatrix}0&1\\1&0\end{smallmatrix}\bigr)$ and $\gamma_0=-\partial_x\gamma_1$ on $\cD_L$, so that
$i[H_L,\chi]=i[H,\chi]+\gamma_1^*\chi'\gamma_1$ as forms on $\cD_L$, with $\gamma_1^*:L^2(\Gam)\to\cD(H_L)^*$ the adjoint of the trace
$\gamma_1$, a layer supported on $\Gam$. Here $i[H,\chi]\varphi'(H_L)$ is trace-class, as in Theorem~\ref{thm:traceclass}, but
$i[H_L,\chi]\varphi'(H_L)$ is not $L^2$-valued, a fortiori {\em not} trace-class: for $0\ne v\in\Ran\varphi'(H_L)$,
$T(\chi v)=\chi'\gamma_1v\ne0$, $\gamma_1v$ being real-analytic in $x$, so that $\chi v\notin\cD(H_L)$. Thus
\eqref{eq:sigmaI} is not available, and \eqref{eq:TP} defines the current.
We verify that $\bigl(|\chi'|^{1/2}\gamma_1\overline{\varphi_1}(H_L)\bigr)^*\sgn(\chi')\,|\chi'|^{1/2}\gamma_1\varphi_2(H_L)$ is
indeed trace-class, a product of Hilbert--Schmidt operators. 

The edge states are $\psi=e^{-y/\xi}$, $\xi>0$, with $E=\xi^2-\xi^{-2}$ and $E'=2\xi+2\xi^{-3}$. The bulk
expression contributes the velocity $\ip{\psi}{\partial_\xi\Hh\,\psi}/\norm\psi^2=2\xi$, and the boundary
term of the form commutator the rest, $|\partial_y\psi(0)|^2/\norm\psi^2=2\xi^{-3}$. The bulk
trace $\Tr\,i[H,\chi]\varphi'(H_L)$ gives an incorrect $\int\varphi'(E)\,\xi^4(1+\xi^4)^{-1}dE<1$, while $\mathcal I=2\pi\sigma_I[H_L]=1$. The boundary part of $\cT_\chi$ carries the missing $\int\varphi'(E)(1+\xi^4)^{-1}dE$.
\end{remark}
\noindent Uniform Shapiro--Lopatinski ellipticity and, for $x$-independent data, escape are the two routes
to the existence of $\mathcal I$ (Theorems~\ref{thm:traceclass}, \ref{thm:quantisation} and
\ref{thm:fiberexist}); stability for $x$-dependent data is considered only for elliptic realizations as we use there the Boutet de Monvel calculus recalled in the appendix (Remark~\ref{rem:escape}).
\medskip

\noindent\textbf{(iv) Main results and their hypotheses.} \textup{[H1]--[H4]} are assumed throughout. The
invariant is defined along two routes: the \emph{elliptic route}, for uniformly Shapiro--Lopatinski elliptic
terminations with $B_j$ differential along $\Gam$; and the \emph{spectral route}, for
$x$-independent data with $L$ real-analytic in $\xi$ and edge spectrum escaping the gap.
\begin{itemize}[leftmargin=*,nosep]
\item \emph{Quantization.} On either route, $[\Pi,U]$ is trace-class, $\mathcal I$ is the index of $\Pi U\Pi$,
and $2\pi\sigma_I=\mathcal I$ (Theorems~\ref{thm:quantisation} and \ref{thm:fiberexist}).
\item \emph{Stability and locality.} On the elliptic route, $\mathcal I$ does not depend on the switches, is
constant along continuous paths of realizations, and is unchanged by compactly supported perturbations of the
operator and of the termination (Corollary~\ref{cor:stab}, Theorem~\ref{thm:local}). For $x$-independent data, or at $p=1$ for pointwise terminations, when the principal symbol does not depend on $y$ on the collar, the $y$-dependence of the lower-order terms there does not change $\mathcal I$, once $L$ is transported accordingly (Theorem~\ref{thm:collar}).
\item \emph{Spectral flow.} On the spectral route, $\mathcal I$ is the spectral flow of the edge branches
across $E_0$, a Maslov index, and a count of eigenphase passages through $2\pi\Z$ (Theorem~\ref{thm:three}).
\item \emph{Bulk--edge correspondence.} On the spectral route, $\mathcal I$ is a signed count of eigenphase passages of the relative unitary $U_L^{-1}U_+$ of termination and Calder\'on Lagrangian, equal at any momentum cutoff to a winding plus two spectral asymmetries, and also at infinite momentum when that unitary has limits
(Theorem~\ref{thm:APS}). The asymmetries are regularized for inadmissible terminations
(Proposition~\ref{prop:wall}), and the formula becomes \eqref{eq:APSintro} for admissible differential terminations
(Theorem~\ref{thm:main}). The endpoints are principal-symbol data under the weight condition \textup{[P]} of section~\ref{sec:green}, and
$\cR$ is a bulk quantity up to endpoint data (Corollary~\ref{cor:bulkR}).
\item \emph{Range.} Over admissible transported pointwise terminations with escaping edge spectrum,
$\mathcal I$ takes exactly $\kp-\kp_\cap+1$ values (Theorem~\ref{thm:range}), $\kp_\cap$ being the dimension of the intersection of the two endpoints (Proposition~\ref{prop:admtopology}). Differential terminations
shift them by any integer $\cR_L$ (Remark~\ref{rem:polyrange}).
\item \emph{Interfaces.} An interface is a half-space problem for a doubled operator
(Theorem~\ref{thm:double}). Under the interface hypotheses \textup{[C1]--[C3]} of section~\ref{sec:interface} and a transversality \eqref{eq:C2frozen} of the frozen principal data of the two media, the transparent condition $L_{\mathrm{tr}}$, continuity of the Cauchy data across the interface, gives
$\BDI$ (Theorem~\ref{thm:R2}). At $p=1$, for $y$-independent media with a common principal symbol, $\mathcal I(L_{\mathrm{tr}})=\cR[H_+]-\cR[H_-]$, so that $\BDI$ is a difference of two half-plane rotations (Corollary~\ref{cor:etaglue}).
\end{itemize}

\section{The Green form and the Calder\'on Lagrangian}\label{sec:green}
This section converts the boundary problem into finite-dimensional geometry. The coefficients of $H$ on $\Gam$ assemble into a
Hermitian form $\cA$ on the Cauchy data (Proposition~\ref{prop:green}), of signature zero
(Theorem~\ref{thm:signature}), whose Lagrangian subspaces form a copy of $\U(\kp)$
(Proposition~\ref{prop:U}); Proposition~\ref{prop:SA} says when a Lagrangian field defines a termination. A congruence removes the dependence of $\cA$ on the tangential momentum
(Theorem~\ref{thm:graded}). All Lagrangian data are read after this transport
(Definition~\ref{def:transported}). The operator contributes the Calder\'on Lagrangian $\Lm^+(\xi,E)$
(Theorem~\ref{thm:calderonB}), real-analytic in $(\xi,E)$ (Lemma~\ref{lem:analytic}), with limits at
$\xi=\pm\infty$ (Lemma~\ref{lem:endpoints}), explicit under the weight conditions \textup{[P]} or \textup{[M]}. Its image in $\U(\kp)$ is called the Calder\'on path (Definition~\ref{def:Uplus}). A
termination is admissible when its limits at infinite momentum are transverse to the endpoints of that
path (Definition~\ref{def:adm}). We mention under which condition admissibility of a pointwise termination implies uniform Shapiro--Lopatinski ellipticity (Remark~\ref{rem:SL}). The topology of the admissible set is described in Proposition~\ref{prop:admtopology}. Finally,
we show that the edge spectrum reduces to one determinant on $\U(\kp)$ (Theorem~\ref{thm:edge}), which makes the
computation of edge invariants a problem of finite-dimensional linear algebra.

\paragraph{The Green form.}\label{ssec:green0}

We assume \textup{[H1]--[H2]} throughout the section, \textup{[H1]--[H4]}  starting with Theorem~\ref{thm:calderonB}, and will use the weight conditions \textup{[P]} and \textup{[M]} introduced before Lemma~\ref{lem:endpoints} in some results. Lemma~\ref{lem:essspec} and Theorem~\ref{thm:edge} consider in addition that $H$ and $L$ are $x$-independent; elsewhere $x\in\Gam$ enters as a parameter, often not displayed.
\begin{proposition}\label{prop:green}
For $H$  in \eqref{eq:Hsplit}, there is a unique $\cA$, formally self-adjoint as an operator on
$\Gam$, such that
\begin{equation}\label{eq:green}
  \ip{v}{Hu}-\ip{Hv}{u}=i\,\ip{\gamma v}{\cA\,\gamma u}_{L^2(\Gam)},\qquad u,v\in H^p(\Om;\C^N).
\end{equation}
Throughout, $\cA(x,\xi)$ denotes the \emph{Weyl} symbol of $\cA$, a smooth family of Hermitian matrices on
$\cB$:
\[
  \cA(x,\xi)\in\Herm(\cB).
\]
In the grading of $\cB$ by order of trace, its blocks $\cA_{jl}$, $0\le j,l\le p-1$, operators along $\Gam$ with Weyl symbols $\cA_{jl}(x,\xi)$, satisfy the following:
\begin{enumerate}[label=\textup{(\roman*)},nosep]
\item with $m:=k-1-j-l$,
\begin{equation}\label{eq:blocks}
  \cA_{jl}=\sum_{k\ge j+l+1}i^{m}\binom{k-1-l}{m}\,\bigl(\partial_y^{\,m}A_k\bigr)\big|_\Gam ,
\end{equation}
the term $k=j+l+1$ carrying no normal derivative and coefficient $1$, and the same identity holds for the Weyl symbols, with $a_k$ in place of $A_k$;
\item $\ord\cA_{jl}\le p-1-j-l$, and $\cA_{jl}=0$ whenever $j+l\ge p$, the sum in \eqref{eq:blocks} being empty;
\item $\cA_{j,p-1-j}=A_p|_{y=0}$, of order zero and invertible by \textup{[H2]};
\item $H\mapsto\cA$ is linear in the coefficients of $H$ and their normal derivatives. If
$\cA'(x,\xi)\in\Herm(\cB)$ vanishes on the blocks $j+l\ge p-1$  then
\begin{equation}\label{eq:sigstable}
  \sig\bigl(\cA(x,\xi)+t\cA'(x,\xi)\bigr)=\sig\cA(x,\xi)\qquad\text{for every }t\in\R .
\end{equation}
\end{enumerate}
Finally
\begin{equation}\label{eq:detA}
  \det\cA(x,\xi)=\pm\bigl(\det A_p(x,0)\bigr)^p\ne0
\end{equation}
independently of $\xi$. We write $q(w):=w^*\cA(x,\xi)w$ and $\omega(v,w):=i\ip{v}{\cA(x,\xi)w}$ on $\cB$, and, where $(x,\xi)$ is fixed, $\cA$ also for the matrix $\cA(x,\xi)$.  A typical example of $\cA'(x,\xi)$ is the Weyl symbol of $\cA_\Vp$, defined as in \eqref{eq:green} with $H$ replaced by a perturbation $\Vp=\Vp^*$ of order $\le p-1$.
\end{proposition}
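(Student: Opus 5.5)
The plan is to derive \eqref{eq:green} by integrating by parts in $y$ only, one power of $D_y$ at a time. Since $H$ is formally self-adjoint by \textup{[H1]}, we have $H=H^*$ formally. Write $H=\sum_k A_kD_y^k$, and note that the formal adjoint of $A_kD_y^k$ is $D_y^kA_k^*$, where $A_k^*$ is the formal adjoint in $x$. For $u,v\in C_c^\infty(\overline\Om)$, the basic identity is $\ip{v}{D_y w}-\ip{D_yv}{w}=i\int_\Gam \overline{v}\,w\,dx$. It comes from $D_y=-i\partial_y$ and the outward normal $-\partial_y$.

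Iterating this identity gives
\[
\ip{v}{A_kD_y^ku}-\ip{D_y^kA_k^*v}{u}=i\sum_{r=0}^{k-1}\ip{\gamma_0 D_y^{k-1-r}A_k^*v}{\gamma_r u}_{L^2(\Gam)} .
\]
Next, commute $D_y^{k-1-r}$ past $A_k^*$ by the Leibniz rule, which produces $\binom{k-1-r}{m}(D_y^mA_k^*)D_y^{k-1-r-m}$. Take the adjoint along $\Gam$, and use $D_y^m=i^{-m}\partial_y^m$ acting on coefficients together with conjugation. This moves the operator onto $u$ and expresses the boundary term as $\sum_{j,l}\ip{\gamma_jv}{\cA_{jl}\gamma_l u}$, where $j=k-1-r-m$ and $l=r$.

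Summing over $k$, the difference $\sum_k\ip{D_y^kA_k^*v}{u}-\ip{Hv}{u}$ vanishes because $H^*=H$ formally. This yields \eqref{eq:green}. Careful bookkeeping of signs and indices should reproduce \eqref{eq:blocks}. The main obstacle is exactly this bookkeeping: getting the factor $i^m$ and the binomial $\binom{k-1-l}{m}$ with the roles of $j$ and $l$ placed correctly. A cleaner route may be to first symmetrize, writing $H$ in a form that makes self-adjointness manifest, so that the result comes out with $A_k$ rather than $A_k^*$. The identity for Weyl symbols then follows because each block is a composition of multiplication and differential operators in $x$ alone. Density of $C_c^\infty(\overline\Om)$ in $H^p$, together with continuity of the traces, extends the identity to $H^p$.

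Uniqueness holds because $\gamma:H^p(\Om)\to\prod_j H^{p-j-1/2}(\Gam)$ is surjective, so the sesquilinear form determines $\cA$. Formal self-adjointness of $\cA$ follows since the left-hand side of \eqref{eq:green} is anti-Hermitian in $(u,v)$.

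Items (ii) and (iii) are read off from \eqref{eq:blocks}. The sum needs $k\ge j+l+1$ with $k\le p$, so it is empty for $j+l\ge p$. The order satisfies $\ord\cA_{jl}\le\max_k \ord A_k\le p-(j+l+1)$. On the antidiagonal $j+l=p-1$ only $k=p$ contributes, with $m=0$, giving $A_p|_\Gam$. Item (iv) has two parts. Linearity is evident from \eqref{eq:blocks}. For signature stability, write $\cA+t\cA'$ in block form: it vanishes below the antidiagonal and has the invertible $A_p$ on the antidiagonal. Such a matrix is congruent, via a block-triangular change of basis that eliminates the entries above the antidiagonal, to its antidiagonal part. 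Since $\cA'$ vanishes on the blocks $j+l\ge p-1$, the antidiagonal part does not depend on $t$, so the signature is that of the antidiagonal part for every $t$. Alternatively, one can note that $\det(\cA+t\cA')$ is independent of $t$ by the same triangular structure, and that the signature is constant along a path of invertible Hermitian matrices. Finally, \eqref{eq:detA} follows because a block anti-triangular matrix has determinant $\pm\prod_j\det\cA_{j,p-1-j}=\pm(\det A_p(x,0))^p$, and this is independent of $\xi$.
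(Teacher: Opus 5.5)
Your proposal is correct and follows the paper's proof essentially step for step. The shared steps are integration by parts in $y$ with $A_k^*$ moved onto $v$, the Leibniz rule applied to $\gamma_{k-1-l}(A_k^*v)$, and uniqueness from surjectivity of $\gamma$. For \eqref{eq:sigstable} and \eqref{eq:detA} both arguments use the anti-triangular block structure and the constancy of the signature along a path of nondegenerate Hermitian forms; your congruence argument also works, being Lemma~\ref{lem:tricong} with $\mathcal C$ the anti-diagonal part.

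The bookkeeping you flagged closes at once with your own indices $j=k-1-r-m$, $l=r$. The $L^2(\Gam)$-adjoint of $(-i)^m(\partial_y^mA_k^*)|_\Gam$ is $i^m(\partial_y^mA_k)|_\Gam$. This gives exactly the coefficient $i^m\binom{k-1-l}{m}$ of \eqref{eq:blocks}, with $A_k$ rather than $A_k^*$, so no preliminary symmetrization is needed.
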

\begin{proof}
Existence of the Green form, and its triangular structure with respect to the anti-diagonal, are classical \cite[(1.3.10)--(1.3.14)]{Grubb96}; we derive the blocks explicitly. It suffices to treat $u,v\in C^\infty_c(\overline\Om)$, dense in $H^p$, both sides of \eqref{eq:green} being continuous on $H^p$ by (ii), since $\cA_{jl}$ maps $H^{p-l-1/2}(\Gam)$ to $H^{j+1/2}(\Gam)$. Integrating by parts $p$ times in $y$ produces the boundary term, while the tangential operators $A_k$ are moved across the pairing by taking their adjoints in $L^2(\Gam)$, which produces no boundary term on $\Gam=\R$. This leaves a sesquilinear form in $(\gamma u,\gamma v)$ whose entries are operators along $\Gam$ built from the $A_k$ and their normal derivatives at $y=0$.
Uniqueness holds because $\gamma$ is onto, and $\cA^*=\cA$ because \textup{[H1]} makes the left side
of \eqref{eq:green} anti-Hermitian.

By induction on $k$, $\ip{f}{D_y^kg}-\ip{D_y^kf}{g}=i\sum_{j'+l=k-1}\ip{\gamma_{j'}f}{\gamma_lg}_{L^2(\Gam)}$, so $A_kD_y^k$ contributes the boundary pairing
$i\sum_{j'+l=k-1}\ip{\gamma_{j'}(A_k^*v)}{\gamma_lu}$, the adjoint $A_k^*$ in $L^2(\Gam)$ having been moved to $v$.
The Leibniz rule for $D_y=-i\partial_y$ gives $\gamma_{j'}(A_k^*v)=\sum_{m\le j'}\binom{j'}{m}(-i)^m(\partial_y^mA_k^*)|_\Gam\,\gamma_{j'-m}v$,
and moving $\bigl((-i)^m\partial_y^mA_k^*\bigr)^*=i^m\partial_y^mA_k$ to the right of the
pairing distributes the term over the blocks $(j'-m,l)$ with coefficient $i^m\binom{j'}{m}\,\partial_y^mA_k|_\Gam$. Since $j'=k-1-l$, the block $(j,l)$ receives a contribution from $k$ exactly
when $m:=k-1-j-l$ satisfies $0\le m\le j'$, that is exactly when $k\ge j+l+1$, and then only from
$\partial_y^{\,m}A_k|_\Gam$, with coefficient $i^m\binom{k-1-l}{m}$. Collecting these gives \eqref{eq:blocks} and (i).

This also gives (ii), since $\ord\partial_y^mA_k\le\ord A_k\le p-k\le p-1-j-l$ for every $k\ge j+l+1$;
and it gives (iii), since $j+l=p-1$ together with $j+l+1\le k\le p$ forces $k=p$ and $m=0$, leaving
$\cA_{j,p-1-j}=A_p|_{y=0}$, a matrix, invertible by \textup{[H2]}.

For \eqref{eq:detA}, let $J$ be the block flip $(Jw)_l=w_{p-1-l}$, a permutation with $\det J=\pm1$.
Then $(\cA(x,\xi)J)_{jl}=\cA_{j,p-1-l}(x,\xi)$ vanishes when $j>l$, by (ii), so $\cA(x,\xi)J$ is block upper-triangular
with diagonal blocks $A_p(x,0)$ by (iii); hence
$\det\cA(x,\xi)=\pm\det(\cA(x,\xi)J)=\pm(\det A_p(x,0))^p\ne0$ by \textup{[H2]}, for every $\xi$. {The determinant depends on the anti-diagonal alone}, leading to (iv) as follows: $(\cA+t\cA')(x,\xi)$ is Hermitian, has the anti-diagonal blocks of $\cA(x,\xi)$ and vanishes on the blocks $j+l\ge p$, so
it is nondegenerate with $t$-independent determinant, and a connected family of nondegenerate Hermitian
forms has constant signature. Linearity in (iv) is clear from \eqref{eq:blocks}, and for $\cA_\Vp$ only $k\le p-1$ contribute, so its blocks vanish for $j+l\ge p-1$.
\end{proof}

\begin{theorem}[vanishing signature]\label{thm:signature}
For every $(x,\xi)$, $\sig\cA(x,\xi)=0$. Equivalently $\cA(x,\xi)$ admits maximal $q$-isotropic subspaces
of dimension $\kp$, so that
\[
  \Lag\bigl(\cA(x,\xi)\bigr)\ne\emptyset .
\]
\end{theorem}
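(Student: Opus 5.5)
The plan is to use the signature stability of Proposition~\ref{prop:green}(iv) to deform $\cA(x,\xi)$ into a block anti-diagonal model form whose signature can be read off directly.

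Fix $(x,\xi)$. Let $\cA'(x,\xi):=-\cA(x,\xi)$ restricted to the blocks $j+l\le p-2$, with all other blocks set to zero. This $\cA'$ is Hermitian, because the index set $\{j+l\le p-2\}$ is symmetric under $(j,l)\mapsto(l,j)$. It also vanishes on the blocks $j+l\ge p-1$. By \eqref{eq:sigstable} with $t=1$,
\[
  \sig\cA(x,\xi)=\sig\cA_0,
\]
where $\cA_0$ keeps only the anti-diagonal blocks $\cA_{j,p-1-j}=A_p(x,0)=:A$ from (iii), and $A$ is Hermitian by [H1]. In the notation of the proof of \eqref{eq:detA}, $\cA_0=(\mathbf 1_p\otimes A)\,J$, with $J$ the block flip.

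To compute $\sig\cA_0$, decompose $\cB$ into the subspaces $\{w_j,w_{p-1-j}\}$ for $j<p-1-j$, plus the middle block $w_{(p-1)/2}$ when $p$ is odd. These subspaces are $\cA_0$-orthogonal.

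On a pair with $j\ne p-1-j$ the form is $\bigl(\begin{smallmatrix}0&A\\A&0\end{smallmatrix}\bigr)$. This has signature zero because $A$ is invertible: the change of variables $w_\pm=w_j\pm w_{p-1-j}$ turns it into $\tfrac12\bigl(q_A(w_+)-q_A(w_-)\bigr)$. Alternatively, the graph $\{w_{p-1-j}=0\}$ is an isotropic subspace of half dimension.

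For odd $p$ the middle block contributes $\sig A$. This is the main obstacle. I would use ellipticity: $\sigma_p(x,0;0,\zeta)=A\zeta^p$. Hermitian-ness together with invertibility along the whole punctured circle $(\xi,\zeta)\ne0$ forces the eigenvalues of $\sigma_p(x,0;\cos\theta,\sin\theta)$ never to vanish. They therefore keep their signs as $\theta$ goes from $\pi/2$ to $-\pi/2$. At these two endpoints the symbol equals $A$ and $(-1)^pA=-A$ respectively. Hence $\sig A=\sig(-A)=-\sig A$, so $\sig A=0$.

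The same argument, run on the whole symbol rather than on a single block, shows that $N$ is even when $p$ is odd, consistent with $\kp\in\Z$. The paper attributes this fact to Lemma~\ref{lem:hermODE}, so I would cite that lemma for it.

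Summing the contributions gives $\sig\cA(x,\xi)=0$. The equivalence with the existence of a $\kp$-dimensional $q$-isotropic subspace is standard linear algebra. Diagonalize $\cA$ as $\mathrm{diag}(\mathbf 1_\kp,-\mathbf 1_\kp)$, which is possible since $\cA$ is nondegenerate by \eqref{eq:detA}. The graph of any unitary between the positive and negative eigenspaces is then Lagrangian. Conversely, an isotropic subspace of dimension $\kp$ meets each of the two definite eigenspaces trivially, and this forces both indices of inertia to be at least $\kp$, hence equal to $\kp$.
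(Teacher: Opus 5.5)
Your proposal is correct and follows the paper's proof essentially step by step. You reduce to the block anti-diagonal part via \eqref{eq:sigstable}, observe that each pair of slots $\{j,p-1-j\}$ contributes signature zero, and for odd $p$ dispose of the middle slot by following the invertible Hermitian symbol $\sigma_p(x,0;\cdot)$ around the unit circle from $(0,1)$ to $(0,-1)$. One small slip in the converse: trivial intersection with a definite subspace shows each index of inertia is at most $\kp$, not at least $\kp$, and it is this bound together with $n_++n_-=2\kp$ that gives equality.
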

\begin{proof}
We reduce to the anti-diagonal and count. (a) By Proposition~\ref{prop:green}\textup{(ii)--(iii)}, $\cA=\cA_0+\cA'$ with $\cA_0$ the block anti-diagonal matrix with blocks $\cA_{j,p-1-j}=A_p(x,0)$ and $\cA'$ Hermitian, vanishing on the blocks $j+l\ge p-1$; so $\sig\cA=\sig\cA_0$ by \eqref{eq:sigstable}. (b) In $\cA_0$, each pair of slots $\{j,p-1-j\}$ with $j\ne p-1-j$ carries $\bigl(\begin{smallmatrix}0&A_p\\A_p&0\end{smallmatrix}\bigr)$, whose eigenvalues are $\pm$ those of $A_p$, of signature zero. For even $p$ these pairs exhaust the slots, and $\sig\cA_0=0$. (c) For odd $p$ the middle slot $j=\tfrac{p-1}2$ carries $A_p(x,0)=\sigma_p(x,0;0,1)$ itself. By homogeneity of odd degree, $-A_p(x,0)=\sigma_p(x,0;0,-1)$, and $\sigma_p(x,0;\omega)$ is invertible and Hermitian on the unit circle $|\omega|=1$ by \textup{[H1]--[H2]}, so $A_p$ and $-A_p$ have the same signature, which is therefore zero. Finally, a nondegenerate Hermitian form of signature zero on $\C^{2\kp}$ has maximal isotropic subspaces of dimension $\kp$.
\end{proof}

\noindent The Cauchy data of decaying solutions are Lagrangian by the following classical facts,
applied below to the principal symbol at $\xi=\pm1$ and to the bulk fibers $\Hh^\natural(\xi)-E$.

\begin{lemma}[constant-coefficient Hermitian problems]\label{lem:hermODE}
Let $P(\zeta)=\sum_{k\le p}P_k\zeta^k$ be an $N\times N$ matrix polynomial, Hermitian for real $\zeta$, with
$P_p$ invertible and $\det P$ without real roots. Then:
\begin{enumerate}[label=\textup{(\alph*)},nosep]
\item $\det P$ has $\kp=\tfrac12pN$ roots in each open half plane, so that $pN$ is even;
\item the Cauchy data $\Lm^\pm$ of the solutions of $P(D_y)\psi=0$ decaying as $y\to\pm\infty$ are the
spectral subspaces of the companion matrix $M$ for the roots $\zeta$ in the two half planes, complementary and of
dimension $\kp$;
\item $\Lm^\pm$ are Lagrangian for the Green form $\cA_P$ of $P(D_y)$, whose blocks are the coefficients
of $\zeta^{j+l+1}$ in $P$ by \eqref{eq:blocks}.
\end{enumerate}
\end{lemma}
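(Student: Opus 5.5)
For (a), I would use a homotopy and root-counting argument. Consider the matrix polynomials $P_t(\zeta)$ for $t\in[0,1]$ joining $P$ to a model whose root count is known. The natural choice is to exploit the Hermitian structure directly. For real $\zeta$, $P(\zeta)$ is Hermitian and invertible, so its signature is constant on $\R$. For $|\zeta|$ large, $P(\zeta)\approx P_p\zeta^p$, so $\sig P_p\zeta^p$ agrees at $\zeta\to\pm\infty$. If $p$ is odd this forces $\sig P_p=\sig(-P_p)$, hence $\sig P_p=0$ and $N$ is even.

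For the count itself, I would use $\overline{\det P(\bar\zeta)}=\det P(\zeta)^*=\det P(\zeta)$, which holds since $P(\bar\zeta)^*=P(\zeta)$ by Hermitian-ness on $\R$ and polynomial identity. Thus the roots come in conjugate pairs $\zeta,\bar\zeta$ with multiplicities. None of them is real, so each half plane contains exactly half of the $pN$ roots, namely $\kp$. This is simpler than a homotopy and I would use it directly.

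For (b), I would rewrite $P(D_y)\psi=0$ as the first-order system $\partial_y\Psi=M\Psi$ with $\Psi=(\psi,D_y\psi,\dots,D_y^{p-1}\psi)$, using the invertibility of $P_p$. Then $\det(\lambda-M)$ is proportional to $\det P(\lambda/i)$, so the eigenvalues of $M$ are $i\zeta$ over the roots. Solutions decaying at $+\infty$ correspond to $\Re(i\zeta)<0$, i.e.\ $\Im\zeta>0$. Their Cauchy data form the generalized eigenspace of $M$ for these eigenvalues, and similarly $\Lm^-$ corresponds to $\Im\zeta<0$. The two spaces are complementary since no eigenvalue lies on the imaginary axis, and each has dimension $\kp$ by (a), algebraic multiplicities matching root multiplicities.

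For (c), I would apply the Green identity of Proposition~\ref{prop:green} in one variable to $P(D_y)$ on the half-line $\R_+$. The blocks of $\cA_P$ are read off from \eqref{eq:blocks} with constant $y$-independent coefficients, so only $m=0$ survives and $\cA_{jl}=P_{j+l+1}$. For $u,v$ decaying solutions in $\Lm^+$, the left side $\ip{v}{P(D_y)u}-\ip{P(D_y)v}{u}$ vanishes, so $\ip{\gamma v}{\cA_P\gamma u}=0$, and $\Lm^+$ is isotropic. Since $\cA_P$ is nondegenerate with signature zero (the determinant and signature argument of Proposition~\ref{prop:green} and Theorem~\ref{thm:signature}, here with $P_p$ in place of $A_p$ and the odd-$p$ signature fact from (a)), a $\kp$-dimensional isotropic subspace is Lagrangian. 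For $\Lm^-$ I would use the same identity on $\R_-$, which flips the sign of the boundary term and does not affect isotropy.

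The main subtlety I expect is the use of exponential-polynomial solutions, which are smooth and decay with all derivatives so integration by parts is legitimate. Beyond that, the step requiring the most care is the identification of the dimensions in (b) with the root count in (a), which must account for Jordan blocks.
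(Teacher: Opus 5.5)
Your proof is correct and follows the paper's route: the conjugate-pair root count from $\overline{\det P(\bar\zeta)}=\det P(\zeta)$, the stable and unstable spectral subspaces of the companion matrix, and Green's identity applied to decaying solutions, with nondegeneracy of $\cA_P$ (anti-diagonal blocks $P_p$) making the $\kp$-dimensional isotropic spaces Lagrangian. The differences are cosmetic: the paper applies Green's formula on $[y_1,y_2]$ to show the pairing is independent of $y$ and then lets $y\to\pm\infty$, and your signature-zero detour is unnecessary, since nondegeneracy alone gives $L\subseteq L^{\perp_q}$ with $\dim L^{\perp_q}=\kp$.
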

\begin{proof}
(a) As $P(\zeta)$ is Hermitian for real $\zeta$, $\det P$ is a polynomial real on $\R$, hence with real coefficients, of degree $pN$ since $P_p$ is invertible. Having no real roots, its roots come in conjugate pairs, so each open half plane contains $\tfrac12pN=\kp$ of them, counted with multiplicity.
(b) With $\Psi(y)=(\psi,D_y\psi,\dots,D_y^{p-1}\psi)(y)$, the equation $P(D_y)\psi=0$ reads $\partial_y\Psi=M\Psi$ with $M$ the companion matrix, whose eigenvalues are the $i\zeta$ over the roots $\zeta$ of $\det P$. So $\Psi(y)=e^{yM}\Psi(0)$, and since $e^{i\zeta y}$ decays as $y\to+\infty$ exactly when $\Im\zeta>0$, and as $y\to-\infty$ when $\Im\zeta<0$, $\Lm^\pm$ are the spectral subspaces of $M$ for $\pm\Im\zeta>0$: each of dimension $\kp$ by (a), and complementary since $M$ has no imaginary eigenvalue.
(c) For two solutions $\psi_1,\psi_2$, \eqref{eq:green} on an interval $[y_1,y_2]$ gives $i\ip{\Psi_2}{\cA_P\Psi_1}\big|_{y_1}^{y_2}=\ip{\psi_2}{P(D_y)\psi_1}-\ip{P(D_y)\psi_2}{\psi_1}=0$, so the pairing $\ip{\Psi_2(y)}{\cA_P\Psi_1(y)}$ does not depend on $y$. If both solutions decay as $y\to+\infty$, it vanishes, so $\Lm^+$ is isotropic; likewise $\Lm^-$ as $y\to-\infty$. As $\cA_P$ is nondegenerate by \eqref{eq:detA}, an isotropic subspace of dimension $\kp=\tfrac12\dim\cB$ is Lagrangian.
\end{proof}

The next proposition identifies the Lagrangians of $\cA$ with $\U(\kp)$. Diagonalize
\[
\cA=\Upsilon^*\,\mathrm{diag}(D_+,-D_-)\,\Upsilon
\]
with $\Upsilon$ unitary and fixed once and for all, $D_+=\mathrm{diag}(a_1,\dots,a_\kp)$ the positive eigenvalues and $D_-=\mathrm{diag}(b_\kp,\dots,b_1)$ the moduli of the negative ones, $\kp$ of each by Theorem~\ref{thm:signature}; the two lists need not agree, and we set $\varkappa_\cA:=(\det D_+/\det D_-)^{1/2}>0$. Write $\alpha=\Upsilon w=(\alpha_+,\alpha_-)$, so that
$q(w)=\alpha_+^*D_+\alpha_+-\alpha_-^*D_-\alpha_-$.
\begin{proposition}\label{prop:U}
Every $L\in\Lag(\cA)$ is the graph $L=\{(B\alpha_-,\alpha_-):\alpha_-\in\C^\kp\}$ over the $q$-\emph{negative} block of a unique
$B$ with $B^*D_+B=D_-$, and
\begin{equation}\label{eq:UL}
  L\ \longmapsto\ U_L:=D_+^{1/2}B\,D_-^{-1/2}
\end{equation}
is a real-analytic diffeomorphism $\Lag(\cA)\to\U(\kp)$; in particular $\dim_\R\Lag(\cA)=\kp^2$. For $L,\Lm\in\Lag(\cA)$,
\begin{equation}\label{eq:transdet}
  \dim(L\cap\Lm)=\dim\ker\bigl(U_L-U_\Lm\bigr),\qquad
  L\pitchfork\Lm\iff\det\bigl(U_L-U_\Lm\bigr)\ne0 ,
\end{equation}
and
\begin{equation}\label{eq:detUL}
  \det U_L=\varkappa_\cA\,\det B , \qquad |\det B|=\varkappa_\cA^{-1}.
\end{equation}
\end{proposition}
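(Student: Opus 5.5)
The argument is linear algebra in the coordinates $\alpha=\Upsilon w=(\alpha_+,\alpha_-)$, in which $q(w)=\alpha_+^*D_+\alpha_+-\alpha_-^*D_-\alpha_-$. The plan is to first prove the graph representation, then rescale to the standard form $|\beta_+|^2-|\beta_-|^2$, and finally read off the intersection and determinant formulas.

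\emph{Graph representation.} Let $L\in\Lag(\cA)$, so $\dim L=\kp$ and $q|_L=0$. The projection $L\to\C^\kp$, $\alpha\mapsto\alpha_-$, is injective. Indeed, if $\alpha\in L$ has $\alpha_-=0$, then $0=q(\alpha)=\alpha_+^*D_+\alpha_+$, and since $D_+>0$ this forces $\alpha_+=0$. By dimension count the projection is onto, so $L$ is the graph of a unique linear map $B$, with $L=\{(B\alpha_-,\alpha_-)\}$. Isotropy reads $\alpha_-^*(B^*D_+B-D_-)\alpha_-=0$ for all $\alpha_-$. The matrix in parentheses is Hermitian, so this is equivalent to $B^*D_+B=D_-$. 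Conversely, any $B$ with $B^*D_+B=D_-$ has a $\kp$-dimensional isotropic graph, which is Lagrangian because $\cA$ is nondegenerate (Theorem~\ref{thm:signature}).

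\emph{Rescaling and the diffeomorphism.} Set $U=D_+^{1/2}BD_-^{-1/2}$. Then $U^*U=D_-^{-1/2}B^*D_+BD_-^{-1/2}=\mathbf 1$, so the conditions $B^*D_+B=D_-$ and $U\in\U(\kp)$ are equivalent. The inverse map is $U\mapsto B=D_+^{-1/2}UD_-^{1/2}\mapsto$ graph, so $L\mapsto U_L$ is a bijection. For smoothness, use the charts of the Grassmannian given by graphs over the fixed complement $\{\alpha_-=0\}$. Every Lagrangian lies in this single chart, and in the chart coordinate $B$ the map is linear. Hence $\Lag(\cA)$ is the real-analytic submanifold of $\Gr(\kp,\cB)$ that the chart identifies with the affine image of $\U(\kp)$. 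This gives a real-analytic diffeomorphism with $\dim_\R=\kp^2$. The one point needing care is this smoothness claim, namely that $\Lag(\cA)$ carries the subspace structure from the Grassmannian; the single-chart observation settles it.

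\emph{Intersections and determinants.} A vector in $L\cap\Lm$ has a common $\alpha_-$ with $B_L\alpha_-=B_\Lm\alpha_-$. Projection to $\alpha_-$ is injective on each graph, so $L\cap\Lm\cong\ker(B_L-B_\Lm)$. Moreover $U_L-U_\Lm=D_+^{1/2}(B_L-B_\Lm)D_-^{-1/2}$, and conjugation by invertible matrices preserves kernel dimension. This gives \eqref{eq:transdet}, and transversality is the case of trivial intersection, i.e. $\det\ne0$. For \eqref{eq:detUL}, take the determinant of \eqref{eq:UL}:
\[
\det U_L=(\det D_+)^{1/2}(\det D_-)^{-1/2}\det B=\varkappa_\cA\det B.
\]
Since $|\det U_L|=1$, it follows that $|\det B|=\varkappa_\cA^{-1}$.
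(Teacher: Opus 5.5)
Your proof is correct and follows the paper's argument essentially step for step. It uses injectivity of the projection onto the $q$-negative block, polarization to obtain $B^*D_+B=D_-$, the rescaling $U=D_+^{1/2}BD_-^{-1/2}$, and the identification $L\cap\Lm\cong\ker(B_L-B_\Lm)$. Your single-chart argument for real-analyticity and your explicit determinant computation supply details that the paper leaves to its citation of von Neumann's parametrization.
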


\begin{proof}
This is von Neumann's parametrization in graph coordinates \cite[Ch.~1--2]{BHS20}, \cite[\S1]{Furutani04}, \cite[Ch.~12, 15]{BBW}; the graph is taken over the $q$-negative block, which fixes the sign conventions of the paper, the chart over the positive block being $L\mapsto U_L^{-1}$. If $(\alpha_+,0)\in L$, then $q=\alpha_+^*D_+\alpha_+=0$ gives $\alpha_+=0$, so $L$ projects bijectively onto the second factor, and $L$ is the graph of a unique $B$. By polarization, isotropy is $B^*D_+B=D_-$, that is $U_L^*U_L=\mathbf 1$; conversely every unitary $U$ gives such a $B=D_+^{-1/2}UD_-^{1/2}$, whose graph is isotropic of dimension $\kp$, hence Lagrangian. For \eqref{eq:transdet}, $(B\alpha,\alpha)=(B'\alpha',\alpha')$ forces $\alpha=\alpha'$, so $L\cap\Lm\cong\ker(B-B')$, and $U_L-U_\Lm=D_+^{1/2}(B-B')D_-^{-1/2}$. Reordering the eigenvectors within each block replaces $U_L$ by $P_+U_LP_-^{-1}$ with fixed permutations $P_\pm$; this conjugates every $U_L^{-1}U_\Lm$ and multiplies $\det U_L$ by a constant sign, so nothing defined below from relative unitaries or from $\arg\det U_L$ depends on it.
\end{proof}

\noindent We now briefly explore when the Lagrangian field of a boundary condition defines a termination as well as the converse statement.
\begin{proposition}\label{prop:SA}
Let $T$ be a boundary condition with field $L(x,\xi)\in\Lag(\cA(x,\xi))$.
\begin{enumerate}[label=\textup{(\roman*)},nosep]
\item If $\cA$ and $L$ are both independent of $\xi$, as for pointwise $L$ under \textup{[S]}, or both independent of $x$, then $H_L$ is symmetric, and self-adjoint on $\cD_L$ when $(H,L)$ is uniformly Shapiro--Lopatinski elliptic. For $x$-independent data it is essentially self-adjoint without ellipticity, its closure being the direct integral of the $\Hh_L(\xi)$. Conversely, for $x$-independent $H$ and $B_j$, every termination has a Lagrangian field.
\item If $\cA$ is constant and $\cD_L=\{u:\alpha_+(\gamma u)=\mathfrak B\,\alpha_-(\gamma u)\}$ for an operator $\mathfrak B$ along $\Gam$, then $H_L$ is symmetric if and only if $D_+^{1/2}\mathfrak B D_-^{-1/2}$ is isometric in $L^2(\Gam;\C^\kp)$ on $D_-^{1/2}\alpha_-(\gamma\cD_L)$. If moreover $T$ is joined to an $x$-independent condition by an affine path $T_t$ of such symmetric conditions, uniformly Shapiro--Lopatinski elliptic with uniformly bounded right inverses $R_{T_t}$, then $H_L$ is self-adjoint.
\end{enumerate}
\end{proposition}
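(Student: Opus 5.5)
The plan is to reduce every claim to the Green identity \eqref{eq:green} and to quantization of the symbol relation $L(x,\xi)\in\Lag(\cA(x,\xi))$. In the constant cases, that quantization is exact: only the symbol is involved and no subprincipal corrections appear.

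\textbf{Symmetry in (i).} For $u,v\in\cD_L$, \eqref{eq:green} gives $\ip{v}{Hu}-\ip{Hv}{u}=i\ip{\gamma v}{\cA\gamma u}_{L^2(\Gam)}$, so symmetry amounts to $\gamma\cD_L$ being isotropic for the operator $\cA$ on $\Gam$.

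If $\cA$ and $L$ do not depend on $\xi$, then $\cA(x)$ is a multiplication operator and $L(x)$ is a pointwise constraint. The pairing is $\int_\Gam \gamma v(x)^*\cA(x)\gamma u(x)\,dx$, which vanishes pointwise by isotropy of $L(x)$.

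If both are independent of $x$, I will Fourier transform along $\Gam$. The pairing becomes $\int\widehat{\gamma v}(\xi)^*\cA(\xi)\widehat{\gamma u}(\xi)\,d\xi$, and $\widehat{\gamma u}(\xi)\in L(\xi)$ for a.e.\ $\xi$. The Weyl symbol of an $x$-independent operator is exactly its Fourier multiplier, so the same pointwise argument applies.

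\textbf{Self-adjointness in (i).} Under uniform Shapiro--Lopatinski ellipticity I will invoke the Boutet de Monvel a priori estimate from the appendix: if $w\in\cD(H_L^*)$, then elliptic regularity for the normal problem places $w\in H^p$. The adjoint boundary condition is then computed from \eqref{eq:green}: $\gamma w$ must be $\cA$-orthogonal to $\gamma\cD_L$, i.e.\ lie in $L^{\perp_\cA}=L$ pointwise in the symbol. Hence $w\in\cD_L$.

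\textbf{Essential self-adjointness without ellipticity.} For $x$-independent data, the partial Fourier transform in $x$ decomposes $H_L$ as $\int^\oplus\Hh_L(\xi)\,d\xi$. Each fiber is a half-line ODE operator of order $p$ with leading coefficient $A_p$ invertible, bounded coefficients, and a Lagrangian boundary condition at $0$. By the classical Glazman--Krein theory, with deficiency indices computed from the $\kp$-dimensional decaying spaces of Lemma~\ref{lem:hermODE}, each $\Hh_L(\xi)$ is self-adjoint on $\{u\in H^p(\R_+):\gamma u\in L(\xi)\}$.

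A direct integral of self-adjoint fibers, measurable in $\xi$, is self-adjoint. Its domain contains $\cD_L$ as a core, since $\cD_L$ is the set of $u$ with fiberwise data in $L(\xi)$ and sufficient $\xi$-decay, which is dense in the graph norm by truncation in $\xi$.

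\textbf{Converse in (i).} For the converse, self-adjointness forces each fiber to be symmetric, so $L(\xi)$ is isotropic a.e. Since $L(\xi)$ has dimension $\kp$ by the rank assumption, it is Lagrangian, and it is Lagrangian everywhere by continuity of the symbols.

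\textbf{Part (ii).} In the $\Upsilon$ coordinates, $q(\gamma u)=\norm{D_+^{1/2}\alpha_+}^2-\norm{D_-^{1/2}\alpha_-}^2$. Polarization shows that symmetry is exactly isometry of $D_+^{1/2}\mathfrak BD_-^{-1/2}$ on the stated range.

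For self-adjointness I will use a continuity argument along $T_t$. The uniformly bounded right inverses $R_{T_t}$ make the realizations $H_{L_t}$ a family of closed operators on domains that are uniformly equivalent in $H^p$. Self-adjointness is equivalent to $H_{L_t}\pm i$ being onto. At $t=0$ this holds by (i). The set of $t$ where it holds is open and closed, by stability of surjectivity under perturbations that are relatively bounded in the graph norm together with the uniform a priori estimate.

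\textbf{Main obstacle.} The delicate step is this last one: comparing operators on \emph{different} domains. I would conjugate with the isomorphisms $\mathbf 1-R_{T_t}(T_t-T_s)$, which map $\ker T_s$ onto $\ker T_t$ for $t$ near $s$, to move to a fixed domain where the standard stability theorem applies.
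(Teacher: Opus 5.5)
Your proposal is correct. On the symmetry statements, the fiberwise self-adjointness with a core obtained by truncating $\hat u$ in $\xi$, the converse, and the isometry criterion in \textup{(ii)}, it is the paper's argument; your Glazman--Krein count replaces the citation of \cite{BHS20}. One small correction there: the deficiency count at $z=\pm i$ is not Lemma~\ref{lem:hermODE} itself, which is stated at real energies. It does follow from it, since the number of roots with $\Im\zeta>0$ of $\det\bigl(a^w(y_0;\xi,\zeta)-z\bigr)$ is constant on the connected set $\C^+\cup\Delta\cup\C^-$.

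The real difference is the self-adjointness step in \textup{(ii)}. The paper never compares realizations on different domains. It applies the method of continuity \cite[Thm.~5.2]{GilbargTrudinger} to the full system $\{H-z,T_t\}:H^p(\Om)\to L^2\times\partial\cH$, which lives on the fixed space $H^p$ and is affine in $t$. The uniform estimate $\norm u_{H^p}\le C(\norm{(H-z)u}+\norm{T_tu})$ comes from Lemma~\ref{lem:ellreg}\textup{(i)} and the bound $\norm{u_0}\le|\Im z|^{-1}\norm{(H-z)u_0}$ applied to $u_0=u-R_{T_t}T_tu\in\cD_{L_t}$. Surjectivity at the $x$-independent end comes from \textup{(i)} and $R_{T_0}$, and taking zero boundary data then gives surjectivity of $H_L-z$.

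Your conjugation also works; it is the same domain-correction device the paper uses in Lemma~\ref{lem:CT} and Proposition~\ref{prop:normres}. Indeed $T_tV_{s,t}=T_s$ identically for $V_{s,t}=\mathbf 1-R_{T_t}(T_t-T_s)$. Hence $V_{s,t}$ is a bijection $\ker T_s\to\ker T_t$ once $\norm{R_{T_t}(T_t-T_s)}_{H^p\to H^p}\le C|t-s|<1$. The openness radius is uniform in $s$ because the same two ingredients bound $\norm{(H_{L_s}-z)^{-1}}_{L^2\to H^p}$ uniformly, and that uniformity is what gives closedness. Both routes rest on the identical estimate; the paper's packaging simply makes your ``main obstacle'' disappear.

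One step in \textup{(i)} needs repair. The a priori estimate yields closedness of $H_L$ on $\cD_L$, not $\cD(H_L^*)\subseteq H^p$. That inclusion is the adjoint theorem for \emph{normal} realizations \cite[Thms.~1.6.9, 1.6.11]{Grubb96}. The paper first notes that a pointwise $T$ can be given normal rows, since $\sigma_T$ has rank $\kp$ and $L^\downarrow$ is continuous.

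For $x$-independent differential $T$, normality fails in general: $\gamma_0+iB\gamma_1$ is not normal. Such an $x$-independent row is exactly the endpoint of the path in the example following the proposition, where your \textup{(ii)} invokes \textup{(i)} at $t=0$. The paper treats that case by combining essential self-adjointness with the closedness from Lemma~\ref{lem:ellreg}\textup{(i)}. Both ingredients are already in your proposal, so route the $x$-independent elliptic case that way.
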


\begin{proof}
(i) The right-hand side of \eqref{eq:green} vanishes on $\cD_L\times\cD_L$: pointwise in $x$ in the first case, $\cA(x)$ being a matrix and $\gamma u(x),\gamma v(x)\in L(x)$; after Fourier transform in $x$ in the second, since $\gamma\widehat u(\xi),\gamma\widehat v(\xi)\in L(\xi)$. Under uniform ellipticity $\sigma_T$ has rank $\kp$ with $L^\downarrow$ continuous, so the rows of a pointwise $T$ may be chosen adapted to the filtration with leading coefficients of full rank, i.e., normal (Definition~\ref{def:local}), and the adjoint of $H_L$ is the realization of the adjoint condition \cite[Thms.~1.6.9, 1.6.11]{Grubb96}, whose field is $L^{\perp_q}=L$. In the $x$-independent elliptic case, self-adjointness on $\cD_L$ follows from essential self-adjointness, shown next, and the closedness of $H_L$ on $\cD_L$ given by Lemma~\ref{lem:ellreg}\textup{(i)}. In the $x$-independent case each fiber $\Hh_L(\xi)$ is self-adjoint on $\{u\in H^p(\R_+):\gamma u\in L(\xi)\}$, its maximal domain being $H^p(\R_+)$ \cite[Ch.~2--3]{BHS20}, and the functions of $\cD_L$ with $\widehat u$ smooth and compactly supported in $\xi$ form a core of the direct integral. Conversely, symmetry forces isotropy of $L(\xi)$ for a.e.\ $\xi$, hence for all $\xi$ by continuity, and the rank condition gives $\dim L(\xi)=\kp$.

(ii) In the coordinates $\alpha$, the right-hand side of \eqref{eq:green} is $i\ip{\alpha_-v}{(\mathfrak B^*D_+\mathfrak B-D_-)\alpha_-u}$, which vanishes on $\cD_L\times\cD_L$ exactly under the stated isometry. For $z\notin\R$, Lemma~\ref{lem:ellreg}\textup{(i)} below, the bound $\norm{u_0}\le|\Im z|^{-1}\norm{(H-z)u_0}$ for $u_0:=u-R_{T_t}T_tu\in\cD_{L_t}$, and $R_{T_t}$ give $\norm u_{H^p}\le C\bigl(\norm{(H-z)u}+\norm{T_tu}\bigr)$ uniformly in $t$. The system $\{H-z,T_t\}$, affine in $t$, is onto at the $x$-independent end by \textup{(i)} and $R_{T_t}$, hence at $T$ by the method of continuity \cite[Thm.~5.2]{GilbargTrudinger}, so $H_L-z$ is onto.
\end{proof}

\noindent Outside the cases of Proposition~\ref{prop:SA}\textup{(i)}, symmetry may be checked with \textup{(ii)}. For $H=-\Delta$, $\cA=\bigl(\begin{smallmatrix}0&1\\1&0\end{smallmatrix}\bigr)$ is constant, with $\alpha_\pm=(\gamma_0\pm\gamma_1)/\sqrt2$ and $D_\pm=1$. The symmetrized termination
\[
  T=\gamma_0+iB\gamma_1,\qquad B:=\tfrac12\bigl(b(x)D_x+D_xb(x)\bigr),\qquad b\in C_b^\infty(\R;\R),
\]
reads $\alpha_+=\mathfrak B\alpha_-$ with $\mathfrak B=-(1-iB)(1+iB)^{-1}$, the unitary Cayley transform of self-adjoint $B$ on $L^2(\Gam)$. So $H_L$ is symmetric although $L(x,\xi)=\ker\bigl(1,\ ib(x)\xi\bigr)$ depends on both variables, whereas $\gamma_0+ib(x)D_x\gamma_1$ is not symmetric when $b'\ne0$. When $|b|\ge b_0>0$, the top-order part of the row is $iB\gamma_1$, so $L^\downarrow=\C(1,0)$, transverse to $V_\pm=\C(1,i)$, and $\sigma_T(x,\pm)(1,i)=\mp b(x)$ has modulus $\ge b_0$: the realization is uniformly Shapiro--Lopatinski elliptic. Moreover $T$ has a bounded right inverse $R_T:H^{-1/2}(\Gam)\to H^2(\Om)$, $TR_T=\mathbf 1$, constructed for such rows in Appendix~\ref{app:import}. As $b$ has constant sign, $b_t:=(1-t)b+t\,b(0)$ keeps $|b_t|\ge b_0$, so the right inverses $R_{T_t}$ are uniformly bounded, and Proposition~\ref{prop:SA}\textup{(ii)} shows that $H_L$ is self-adjoint. At $b\equiv1$ this is the $x$-independent termination of Remark~\ref{rem:E2current}, to which Proposition~\ref{prop:SA}\textup{(i)} applies directly.

\medskip

So $\U(\kp)$ is the whole moduli space of pointwise terminations under \textup{[S]}, and of transported pointwise terminations in general (Corollary~\ref{cor:graded}). For $U_0\in\U(\kp)$ define
\begin{equation}\label{eq:wall}
  \Sigma_{U_0}:=\bigl\{U\in\U(\kp):\det(U-U_0)=0\bigr\}
  =\bigl\{U:\ 1\in\spec\bigl(U_0^{-1}U\bigr)\bigr\} ,
\end{equation}
the set of $U$ not transverse to $U_0$ called the \emph{wall} of $U_0$ or the \emph{Maslov cycle}.
By \eqref{eq:transdet} it is exactly the locus where the corresponding Lagrangian meets $L_{U_0}$. It
is a real hypersurface of $\U(\kp)$, whose points with $\dim\ker(U-U_0)=j$ form a set of codimension $j^2$ (Appendix~\ref{app:import}).

\paragraph{Trivializing the fiber.}\label{ssec:trivial}

The chart \eqref{eq:UL} lives in one fiber. However, for $p\ge2$ the fiber may move as $\cA(\xi)$ depends on $\xi$ a priori. As a central step of the paper, we now show that the dependence can always be removed by a congruence since $\cA(\xi)$ is polynomial (at $x$ fixed). We introduce the grading $\cB=\bigoplus_{j<p}\cB_j$ by order of trace of Proposition~\ref{prop:green}, $\cB_j\cong\C^N$ being the slot of $\gamma_j$, and define the filtrations we will be using:
\begin{equation}\label{eq:filtration}
  F_k:=\bigoplus_{j\ge k}\cB_j,\qquad G^m:=\bigoplus_{j\le m}\cB_j,\qquad
  \cB_{\mathrm{low}}:=G^{\lceil p/2\rceil-1},\qquad \cB_{\mathrm{high}}:=F_{\lceil p/2\rceil}.
\end{equation}
So $\cB_{\mathrm{high}}$ consists of the Cauchy data whose traces $\gamma_j$, $j<\lceil p/2\rceil$, vanish. At $p=2$ it is
$\{\gamma_0=0\}$, the Lagrangian of the Dirichlet boundary condition, which we call the \emph{Dirichlet Lagrangian},  and $\cB_{\mathrm{high}}=0$ at $p=1$. Note that $\cB=\cB_{\mathrm{low}}\oplus\cB_{\mathrm{high}}$, with dimensions $N\lceil p/2\rceil$ and $N\lfloor p/2\rfloor$.

\begin{definition}[the graded class $\mathfrak A$]\label{def:graded}
In that grading we say that a Hermitian polynomial $\cA(\xi)=\sum_{d\ge0}\xi^d\cA^d$ lies in $\mathfrak A$ if
\begin{equation}\label{eq:gradedclass}
  \bigl(\cA^d\bigr)_{jl}=0\qquad \text{for} \qquad j+l\geq p-d.
\end{equation}
By Proposition~\ref{prop:green}\textup{(ii)} the Green form of any $H$ satisfying \textup{[H1]--[H2]}
lies in $\mathfrak A$; in particular $\deg_\xi\cA\le p-1$, and \textup{[S]} implies $\cA^d=0$ for
$d\ge1$.
\end{definition}

\begin{lemma}\label{lem:antiinv}
If $\cA\in\mathfrak A$ is invertible then $(\cA^{-1})_{jl}=0$ unless $j+l\ge p-1$.
\end{lemma}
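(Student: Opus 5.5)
The plan is to work at a fixed $\xi$ and reduce the claim to the elementary fact that the inverse of a block upper-triangular matrix is again block upper-triangular. This is the same device used for \eqref{eq:detA} in the proof of Proposition~\ref{prop:green}.

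First, I will record what membership in $\mathfrak A$ gives at fixed $\xi$. The condition \eqref{eq:gradedclass} with $d\ge0$ contains, for every $d$, the range $j+l\ge p$, since $p\ge p-d$. Hence every coefficient $\cA^d$, and therefore the matrix $\cA(\xi)=\sum_d\xi^d\cA^d$, has vanishing blocks $\cA_{jl}(\xi)$ whenever $j+l\ge p$. This is the only structural input needed; the finer grading in $d$ plays no role here.

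Next, I introduce the block flip $J$, with $(Jw)_l=w_{p-1-l}$. It is an involution, $J^{-1}=J$, and $(\cA J)_{jl}=\cA_{j,p-1-l}$. By the previous step this block vanishes when $j+(p-1-l)\ge p$, that is, when $j>l$. So $\cA J$ is block upper-triangular, and it is invertible because $\cA$ is. Its inverse is therefore block upper-triangular as well, by back-substitution or a Neumann series in the nilpotent strictly upper part. Invertibility of the diagonal blocks $\cA_{j,p-1-j}$ follows from $\det(\cA J)=\prod_j\det\cA_{j,p-1-j}\ne0$.

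Finally, I translate back. Since $(\cA J)^{-1}=J\cA^{-1}$ and $(J\cA^{-1})_{jl}=(\cA^{-1})_{p-1-j,\,l}$, upper-triangularity gives $(\cA^{-1})_{p-1-j,\,l}=0$ for $j>l$. Setting $j'=p-1-j$, this says $(\cA^{-1})_{j'l}=0$ whenever $j'+l<p-1$, which is the statement.

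I do not expect a genuine obstacle. The only point needing care is the index bookkeeping under $J$, in particular checking that the inequality $j>l$ transforms exactly into $j'+l<p-1$ and not into an off-by-one variant. I would verify this on the case $p=2$ as a sanity check: there $\cA=\bigl(\begin{smallmatrix}a&b\\b^*&0\end{smallmatrix}\bigr)$ and $\cA^{-1}$ has zero $(0,0)$ block, as required.
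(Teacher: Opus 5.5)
Your proof is correct and is essentially the paper's argument. The paper multiplies by the block flip on the left, so that $J\cA$ is block lower triangular and hence so is $(J\cA)^{-1}=\cA^{-1}J$; you multiply on the right and use the upper triangularity of $\cA J$, and your index bookkeeping (including $j>l \iff j'+l<p-1$) checks out.
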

\begin{proof}
With $J$ the block flip of the proof of Proposition~\ref{prop:green}, $(J\cA)_{jl}=\cA_{p-1-j,l}$ vanishes unless $l\le j$, so $J\cA$ is block lower triangular with invertible diagonal blocks, the anti-diagonal blocks of $\cA$; hence $(J\cA)^{-1}=\cA^{-1}J$ is block lower triangular, which is the claim.
\end{proof}
\begin{lemma}[triangular congruence]\label{lem:tricong}
Let $\mathcal C$ be an invertible Hermitian form on $\cB$ and let $K$ be an endomorphism with
$KF_k\subseteq F_{k+1}$ for every $k$, hence $K^p=0$, and with $\mathcal CK$ Hermitian. Then
$\mathcal C(\mathbf 1+K)$ is Hermitian and nondegenerate, and
\[
  \Phi:=\bigl(\mathbf 1+K\bigr)^{-1/2}
\]
is of the form $\mathbf 1+(\text{nilpotent})$ and satisfies $\Phi^*\,\mathcal C(\mathbf 1+K)\,\Phi=\mathcal C$; in particular
$\mathcal C(\mathbf 1+K)$ has the signature of $\mathcal C$.
\end{lemma}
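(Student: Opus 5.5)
The proof is finite-dimensional linear algebra and rests on one observation: $\mathcal C(\mathbf 1+K)$ is Hermitian, so $\mathcal CK$ is Hermitian, and this makes $K$ \emph{$\mathcal C$-self-adjoint}. We use it to show that every polynomial in $K$ with real coefficients is $\mathcal C$-self-adjoint, and then apply this to the binomial series for $(\mathbf 1+K)^{-1/2}$.

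First, Hermitian-ness and nondegeneracy. Since $\mathcal CK=(\mathcal CK)^*=K^*\mathcal C$, the sum $\mathcal C(\mathbf 1+K)=\mathcal C+\mathcal CK$ is a sum of Hermitian matrices, hence Hermitian. It is nondegenerate because $\mathcal C$ is invertible and $\mathbf 1+K$ is unipotent. Indeed, $KF_k\subseteq F_{k+1}$ with $F_0=\cB$ and $F_p=0$ gives $K^p=0$, so $\det(\mathbf 1+K)=1$.

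Second, the identity $\mathcal CK^n=(K^*)^n\mathcal C$ for all $n\ge 0$. This follows by induction from $\mathcal CK=K^*\mathcal C$:
\[
  \mathcal CK^{n+1}=(\mathcal CK^n)K=(K^*)^n\mathcal CK=(K^*)^{n+1}\mathcal C .
\]
Hence for any polynomial $f$ with real coefficients, $\mathcal Cf(K)=f(K)^*\mathcal C$.

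Third, define $\Phi$ by the terminating binomial series
\[
  \Phi:=\sum_{n=0}^{p-1}\binom{-1/2}{n}K^n=\mathbf 1+\sum_{n\ge1}\binom{-1/2}{n}K^n .
\]
This is $\mathbf 1$ plus a nilpotent, and its coefficients are real. The formal identity $\bigl((1+t)^{-1/2}\bigr)^2(1+t)=1$ holds in $\mathbb Q[[t]]$. Truncating modulo $t^p$, where $K^p=0$ lets us work, gives $\Phi^2(\mathbf 1+K)=\mathbf 1$, and $\Phi$ commutes with $K$. This justifies the notation $(\mathbf 1+K)^{-1/2}$. It is also the unique square root of $(\mathbf 1+K)^{-1}$ of the form $\mathbf 1+$nilpotent, though uniqueness is not needed.

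Then, using the second step with $f(t)=\sum\binom{-1/2}{n}t^n$,
\[
  \Phi^*\mathcal C(\mathbf 1+K)\Phi=\Phi^*\mathcal C\,\Phi(\mathbf 1+K)\Phi
  =\mathcal C\,\Phi^2(\mathbf 1+K)=\mathcal C .
\]
The first equality holds because $\Phi$ commutes with $\mathbf 1+K$, and the second because $\Phi^*\mathcal C=\mathcal C\Phi$. The signature claim follows from Sylvester's law of inertia, since $\Phi$ is invertible.

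The only delicate point is the second step: recognizing that Hermitian-ness of $\mathcal CK$ propagates to every real polynomial in $K$, so that $\Phi^*$ can be moved past $\mathcal C$. The filtration hypothesis enters only through nilpotency, which makes the series finite. Everything else is routine.
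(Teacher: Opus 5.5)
Your proof is correct and follows the paper's argument essentially step for step: $\mathcal C$-self-adjointness of real polynomials in $K$, then the truncated binomial series for $\Phi$, then $\Phi^*\mathcal C=\mathcal C\Phi$. One slip: the middle term of your final display has a stray trailing $\Phi$ and should read $\Phi^*\mathcal C\,\Phi(\mathbf 1+K)$, as your own justification (commute $\Phi$ past $\mathbf 1+K$, then use $\Phi^*\mathcal C=\mathcal C\Phi$) indicates.
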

\begin{proof}
$\mathcal C(\mathbf 1+K)$ is Hermitian because $\mathcal C$ and $\mathcal CK$ are, and nondegenerate because
$\mathbf 1+K$ is unipotent. That $\mathcal CK$ is Hermitian says that $K$ is self-adjoint for the form
$\mathcal C$, $K^*\mathcal C=\mathcal CK$, and then so is every real polynomial in $K$. Since $K$ is nilpotent,
$\Phi$ is such a polynomial, the binomial series of $(1+t)^{-1/2}$ truncated after the term of degree $p-1$, and
$\Phi^2(\mathbf 1+K)=\mathbf 1$ because this identity holds for the power series. Hence $\Phi^*\mathcal C=\mathcal C\Phi$
and
\[
  \Phi^*\mathcal C(\mathbf 1+K)\Phi=\mathcal C\,\Phi(\mathbf 1+K)\Phi=\mathcal C\,\Phi^2(\mathbf 1+K)=\mathcal C,
\]
$\Phi$ commuting with $K$. Congruence preserves signature which yields the result.
\end{proof}

We now obtain the following graded trivialization of the Green form.
\begin{theorem}\label{thm:graded}
Let $\cA\in\mathfrak A$ with $\cA(0)$ invertible, and define $K(\xi):=\cA(0)^{-1}\bigl(\cA(\xi)-\cA(0)\bigr)$. Then $KF_k\subseteq F_{k+1}$, so $\cA(\xi)=\cA(0)\bigl(\mathbf 1+K(\xi)\bigr)$ is
invertible for every $\xi$, and
\begin{equation}\label{eq:gradedPhi}
  \Phi(\xi):=\bigl(\mathbf 1+K(\xi)\bigr)^{-1/2}
\end{equation}
of Lemma~\ref{lem:tricong} is a polynomial in $\xi$ of degree $\le p-1$, and so is $\Phi(\xi)^{-1}=\bigl(\mathbf 1+K(\xi)\bigr)^{1/2}$. It satisfies
\begin{equation}\label{eq:graded}
  \Phi(\xi)^*\,\cA(\xi)\,\Phi(\xi)=\cA(0)\qquad\text{for every }\xi\in\R .
\end{equation}
\end{theorem}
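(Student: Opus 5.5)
The plan is to reduce everything to Lemma~\ref{lem:tricong} with $\mathcal C:=\cA(0)$ and $K:=K(\xi)$ at each fixed $\xi$. There are three things to verify: the filtration property $KF_k\subseteq F_{k+1}$; the Hermitian symmetry of $\mathcal CK$; and the polynomial degree bounds. Symmetry is immediate, since $\mathcal CK(\xi)=\cA(\xi)-\cA(0)$ is a difference of Hermitian matrices. Then $\cA(\xi)=\cA(0)(\mathbf 1+K(\xi))$ is invertible, being a product of invertibles because $\mathbf 1+K$ is unipotent. Lemma~\ref{lem:tricong} gives $\Phi^*\cA(\xi)\Phi=\cA(0)$, which is \eqref{eq:graded}.

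The main step is the filtration property, which is a block-index count. Write $\cA(\xi)-\cA(0)=\sum_{d\ge1}\xi^d\cA^d$. By \eqref{eq:gradedclass}, the block $(\cA^d)_{jl}$ is nonzero only if $j+l\le p-1-d\le p-2$. By Lemma~\ref{lem:antiinv}, $(\cA(0)^{-1})_{ij}$ is nonzero only if $i+j\ge p-1$. Hence a nonzero block $(\cA(0)^{-1}\cA^d)_{il}=\sum_j(\cA(0)^{-1})_{ij}(\cA^d)_{jl}$ requires some $j$ with $i\ge p-1-j$ and $j\le p-1-d-l$. These force $i\ge l+d\ge l+1$. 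So $K(\xi)$ maps $\cB_l$ into $F_{l+d}\subseteq F_{l+1}$, giving $KF_k\subseteq F_{k+1}$ and $K^p=0$. This is the one place where the graded class structure and Lemma~\ref{lem:antiinv} interact, and it is where I expect the only real care to be needed: getting the index inequalities in the right direction, with row index $i$ and column index $l$.

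For the degree bounds, I would refine the count just made. Each monomial $\xi^d$ in $K$ raises the grading by at least $d$, that is, $K^{(d)}\cB_l\subseteq F_{l+d}$, where $K(\xi)=\sum_{d\ge1}\xi^dK^{(d)}$. A product $K^{(d_1)}\cdots K^{(d_r)}$ therefore raises the grading by at least $d_1+\dots+d_r$, so it vanishes once that sum exceeds $p-1$. Now $\Phi(\xi)=\sum_{r<p}\binom{-1/2}{r}K(\xi)^r$ and $\Phi(\xi)^{-1}=\sum_{r<p}\binom{1/2}{r}K(\xi)^r$ are the truncated binomial series of Lemma~\ref{lem:tricong}. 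Expanding each $K(\xi)^r$ into monomials, every surviving term has total $\xi$-degree $d_1+\dots+d_r\le p-1$. Hence both $\Phi$ and $\Phi^{-1}$ are polynomials in $\xi$ of degree $\le p-1$, and they are mutually inverse since $\Phi^2(\mathbf 1+K)=\mathbf 1$.
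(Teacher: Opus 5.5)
Your proposal is correct and follows the paper's proof essentially step for step. You use the same block-index count, combining \eqref{eq:gradedclass} with Lemma~\ref{lem:antiinv}, to show that the $\xi^d$-coefficient of $K$ raises the trace order by at least $d$, and the same observation that $\cA(0)K(\xi)=\cA(\xi)-\cA(0)$ is Hermitian, so that Lemma~\ref{lem:tricong} applies; your degree bound is the paper's as well, since a nonzero term of total $\xi$-degree $d$ must raise the grading by at least $d$, forcing $d\le p-1$.
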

\begin{proof}
Write $\mathcal C:=\cA(0)$, so that $K=\mathcal C^{-1}R$ with
$R:=\cA-\mathcal C=\sum_{d\ge1}\xi^d\cA^d$ and $\cA=\mathcal C(\mathbf 1+K)$. By Lemma~\ref{lem:antiinv} $(\mathcal C^{-1})_{jk}=0$ unless $j+k\ge p-1$, and by \eqref{eq:gradedclass} $(\cA^d)_{kl}=0$ unless
$k+l\le p-1-d$; so a nonzero $(j,l)$ block of the coefficient of $\xi^d$ in $K$ needs some $k$ with
$p-1-j\le k\le p-1-d-l$, whence $j-l\ge d\ge1$. Thus $KF_k\subseteq F_{k+1}$ and $K^p=0$; moreover a
nonzero $(j,l)$ block of $K^m$ at degree $d$ needs $j-l\ge d$, and $j-l\le p-1$, so
$\deg K^m\le p-1$ for every $m$, and $\Phi^{\pm1}$, polynomials in $K$, have degree $\le p-1$. $\mathcal CK=R$ is Hermitian, so Lemma~\ref{lem:tricong} applies and gives
\eqref{eq:graded}.
\end{proof}

When $p=1$ the above result is vacuous since $\cA$ is constant. When $p=2$ one has $K^2=0$ and
$\Phi=\mathbf 1-\tfrac12K$.

\paragraph{The Calder\'on Lagrangian.}\label{ssec:calderon}

For the rest of the section, we assume \textup{[H1]--[H4]}.
\begin{theorem}\label{thm:calderonB}
For every $\xi\in\R$ and every $E\in\Delta$, $\dim\Lm^+(\xi,E)=\kp$ and $\Lm^+(\xi,E)$ is $q$-Lagrangian.
\end{theorem}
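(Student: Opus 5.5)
Isotropy and dimension are handled separately. Isotropy comes from the Green identity. Dimension comes from transporting the problem from the constant-coefficient region $y\ge y_0$, where Lemma~\ref{lem:hermODE} applies, down to $y=0$.

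\emph{Isotropy.} Let $\psi_1,\psi_2\in L^2(\R_+)$ solve $\cL\psi=0$. Since $\cL$ is an ODE with invertible leading coefficient $A_p$, its solutions are smooth. For $y\ge y_0$ the coefficients are constant, so each solution is a combination of exponential polynomials $e^{i\zeta y}y^k$. Square integrability then forces $\Im\zeta>0$, so the solutions and all their derivatives decay exponentially, and in particular $\psi_i\in H^p(\R_+)$.

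Apply \eqref{eq:green} for the one-dimensional fiber operator on $[0,R]$. Its Green form at $y=0$ is $\cA(x,\xi)$: the symbol blocks \eqref{eq:blocks} with $a_k$ in place of $A_k$, and the identity of Proposition~\ref{prop:green}(i) is stated for Weyl symbols. This gives
\[
0=\ip{\psi_2}{\cL\psi_1}-\ip{\cL\psi_2}{\psi_1}=i\ip{\Psi_2(R)}{\cA_R\Psi_1(R)}-i\ip{\gamma\psi_2}{\cA\gamma\psi_1}.
\]
Here $E$ is real, so it cancels, and $\cA_R$ is the constant Green form of $\Hh^\natural(\xi)$. Letting $R\to\infty$ kills the first boundary term, so $\Lm^+(\xi,E)$ is $q$-isotropic. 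Since $\cA$ is nondegenerate by \eqref{eq:detA}, it suffices to show $\dim\Lm^+=\kp$.

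\emph{Dimension.} The map $\psi\mapsto\gamma\psi$ is injective on solutions, by uniqueness for the Cauchy problem of a nonsingular ODE. So $\dim\Lm^+$ equals the dimension of the space of $L^2$ solutions on $\R_+$. Every solution is determined by its Cauchy data at $y_0$. Such a solution is in $L^2$ exactly when its data at $y_0$ lie in the decaying subspace of $\Hh^\natural(\xi)-E$.

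To use Lemma~\ref{lem:hermODE}, take $P(\zeta)=a^w(y_0;\xi,\zeta)-E$, the symbol of $\Hh^\natural(\xi)-E$. It is a matrix polynomial in $\zeta$, Hermitian for real $\zeta$ by [H1]. Its top coefficient is $A_p$, invertible by [H2] and the non-characteristic boundary. It has no real roots: $\det P(\zeta)=0$ for real $\zeta$ would put $E\in\spec a^w(y_0;\xi,\zeta)$, contradicting [H3] since $E\in\Delta$. Lemma~\ref{lem:hermODE}(b) then gives a decaying space $\Lm^+_\natural$ of dimension exactly $\kp$.

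The solutions on $[0,y_0]$ are parametrized by their data at $y_0$ through the invertible propagator of the ODE. Hence the $L^2$ solutions on $\R_+$ form a space isomorphic to $\Lm^+_\natural$, of dimension $\kp$. Mapping these solutions to $y=0$ gives $\dim\Lm^+(\xi,E)=\kp$, and together with isotropy, $\Lm^+$ is $q$-Lagrangian.

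The main obstacle is the bookkeeping in the Green identity for the fiber operator. One must check that the $\xi$-dependent Hermitian matrix from integrating by parts in $y$ alone is exactly the Weyl symbol $\cA(x,\xi)$ of Proposition~\ref{prop:green}. I would handle this by rerunning the induction of that proof with $a_k(x,y,\xi)$, which are scalars in $\xi$ at fixed $\xi$; in one variable there is no tangential adjoint to move. The other point to watch is that the exponential decay must hold for the derivatives as well, so that the boundary term at $R$ genuinely vanishes; the explicit exponential-polynomial form of the solutions for $y\ge y_0$ gives this.
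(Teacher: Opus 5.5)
Your proof is correct and follows essentially the same route as the paper. Dimension $\kp$ comes from Lemma~\ref{lem:hermODE} applied to the constant-coefficient symbol for $y\ge y_0$, which has no real roots by [H3], and is carried to $y=0$ by the invertibility of $A_p$ at every $y$ and the injectivity of $\gamma$; isotropy comes from the one-dimensional Green identity with exponential decay at infinity. (Your displayed identity has the boundary terms with the opposite sign to \eqref{eq:green}, where the term at $y=0$ carries $+i$, but since the expression vanishes this is harmless.)
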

\begin{proof}
This is also classical \cite[\S20.1]{Hormander3}.
Recall from \eqref{eq:Lam} that $\Lm^+(\xi,E)=\{\gamma\psi:\cL\psi=0,\ \psi\in L^2(\R_+)\}$ is the space of Cauchy data of the solutions of the fiber equation that decay into the interior, $\cL=\Hh(x,\xi)-E$. By [H4], the coefficients of $\Hh(x,\xi)$ are constant for $y\ge y_0$, equal to those of $\Hh^\natural$, so on $[y_0,\infty)$ the $L^2$ solutions of $\cL\psi=0$ are the decaying exponential solutions of $\Hh^\natural-E$. That space has dimension $\kp$ by Lemma~\ref{lem:hermODE} applied to the symbol of $\Hh^\natural(\xi)-E$, which has no real characteristic root for any $\xi\in\R$. A real root would put $E\in\spec\Hh^\natural(\xi)\subseteq\spec H^{\R^2}$, excluded by [H3]. Restriction to $[y_0,\infty)$ is a bijection from $\{\psi\in L^2(\R_+):\cL\psi=0\}$ onto that space: injective by ODE uniqueness and surjective because the leading coefficient $A_p(y)$ is invertible for {every} $y$ (which is [H2] at $(\xi,\zeta)=(0,1)$) so a solution on $[y_0,\infty)$ extends uniquely backwards to $[0,y_0]$. Hence $\dim\Lm^+=\kp$, $\gamma$ being injective on solutions. {Isotropy} is the argument of Lemma~\ref{lem:hermODE}\textup{(c)} applied to $\cL$ on $\R_+$, with \eqref{eq:green} in its one-dimensional form and its solutions decaying at $+\infty$. Since $q$ has signature $0$ on $\cB=\C^{2\kp}$, then a $q$-isotropic subspace of dimension $\kp$ is maximal isotropic and $\Lm^+(\xi,E)$ is $q$-Lagrangian.
\end{proof}

\medskip\noindent\textbf{Transported data.}\label{ssec:graded}
When $\cA$ depends on $\xi$, the Lagrangian data live in the moving fibers $\Lag(\cA(x,\xi))$. The
congruence $\Phi$ of Theorem~\ref{thm:graded} carries them to one fixed fiber, and from now on all
Lagrangian data are read there: without \textup{[S]}, limits at $\xi=\pm\infty$ of fields and Calder\'on Lagrangians, endpoints, walls, rotations and asymmetries are always those of the transported data, also where $\Phi$ is not displayed. Under \textup{[S]}, $\Phi=\mathbf 1$ and nothing is transported.

\begin{definition}[transported data]\label{def:transported}
For a Lagrangian field $L(x,\xi)$ and for the Calder\'on Lagrangian, define $L^\Phi:=\Phi^{-1}L$ and
$\Lm^\Phi:=\Phi^{-1}\Lm^+$, both in $\Lag(\cA(x,0))$. A termination whose $L^\Phi$ does not depend on
$\xi$, that is $L(\xi)=\Phi(\xi)L_0$ with $L_0\in\Lag(\cA(x,0))$, is a \emph{transported pointwise
termination}. For $x$-independent data it is differential of order $\le p-1$ along $\Gam$,
$L(\xi)=\ker\bigl(B_0\Phi(\xi)^{-1}\bigr)$ with $\ker B_0=L_0$ and $\Phi^{-1}$ polynomial.
\end{definition}

\begin{lemma}[the Calder\'on Lagrangian is real-analytic]\label{lem:analytic}
$(\xi,E)\mapsto\Lm^+(\xi,E)\in\Gr(\kp,\cB)$ is real-analytic on $\R_\xi\times\Delta$. Moreover, read
in the transported frame of Theorem~\ref{thm:graded}, where $\cA(0)$ is fixed and \eqref{eq:UL} is
one real-analytic chart, the \emph{edge determinant}
\begin{equation}\label{eq:Fanalytic}
  F^\Phi(\xi,E):=\det\bigl(U_{L^\Phi(\xi)}-U_{\Lm^\Phi(\xi,E)}\bigr)
\end{equation}
is real-analytic on $\R_\xi\times\Delta$ for every $L$ real-analytic in $\xi$. This holds in particular for
every pointwise termination, and for every termination differential along $\Gam$, where $L(\xi)$ is the
kernel of a polynomial matrix family of constant rank. For smooth $L$ independent of $E$,  it is real-analytic in
$E$ at each fixed $\xi$. Its zero set, with the intersection dimensions
$\dim\bigl(L\cap\Lm^+(\xi,E)\bigr)$, does not depend on the frame since $\Phi(\xi)^{-1}$ is
invertible.
\end{lemma}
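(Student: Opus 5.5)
The plan is to write $\Lm^+(\xi,E)$ as the image of a spectral projector of a matrix that depends analytically on $(\xi,E)$. By [H4], for $y\ge y_0$ the fiber equation $(\Hh(\xi)-E)\psi=0$ becomes the constant-coefficient system $\partial_y\Psi=M^\natural(\xi,E)\Psi$. Here $M^\natural$ is the companion matrix of $\Hh^\natural(\xi)-E$, and it is polynomial in $(\xi,E)$ because $A_p$ is invertible and does not depend on $\xi$. By [H3] and Lemma~\ref{lem:hermODE}, $M^\natural$ has no eigenvalue on $i\R$ for $(\xi,E)\in\R\times\Delta$. The Riesz projector
\[
\mathcal P(\xi,E)=\frac1{2\pi i}\oint_{\mathcal C}(z-M^\natural(\xi,E))^{-1}dz,
\]
taken over a contour $\mathcal C$ enclosing the eigenvalues with negative real part, is therefore locally real-analytic: near any fixed $(\xi_0,E_0)$ one can use a single contour, and the resolvent is analytic in the parameters. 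Its range is the decaying data $\Lm^+_{y_0}(\xi,E)$ at $y=y_0$, which has constant rank $\kp$.

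Next, transport these data back to $y=0$ with the fundamental solution $\mathcal F(\xi,E)$ of the first-order system on $[0,y_0]$. Its coefficients are polynomial in $(\xi,E)$ and smooth in $y$, using the invertibility of $A_p(y)$ from [H2] as in the proof of Theorem~\ref{thm:calderonB}. Solutions of linear ODEs depend analytically (even entirely) on parameters that enter analytically, so $\mathcal F$ is real-analytic and invertible. This gives $\Lm^+(\xi,E)=\mathcal F(\xi,E)^{-1}\Ran\mathcal P(\xi,E)$. A constant-rank analytic family of projectors yields an analytic map into $\Gr(\kp,\cB)$. Concretely, choose a fixed basis $e_1,\dots,e_\kp$ such that the vectors $\mathcal P(\xi_0,E_0)e_i$ are independent. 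Then $\mathcal F^{-1}\mathcal P e_i$ is an analytic frame near $(\xi_0,E_0)$, and Grassmannian charts are rational in frames.

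For $F^\Phi$, Theorem~\ref{thm:graded} shows that $\Phi(\xi)^{-1}$ is polynomial, so $\Lm^\Phi=\Phi^{-1}\Lm^+$ again has an analytic frame and lies in $\Lag(\cA(0))$. The chart \eqref{eq:UL} is analytic: split a frame $W$ as $\Upsilon W=(W_+,W_-)$, where $W_-$ is invertible by Proposition~\ref{prop:U}; then $B=W_+W_-^{-1}$ and $U=D_+^{1/2}BD_-^{-1/2}$. Hence $U_{\Lm^\Phi}$ is analytic. The same argument applies to $L^\Phi$ whenever $L$ is analytic in $\xi$. Pointwise $L$ is constant. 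A differential $L(\xi)=\ker b(\xi)$ with $b$ polynomial of constant rank $\kp$ has an analytic kernel frame locally, namely the columns obtained by solving for $\kp$ variables from a nonvanishing $\kp\times\kp$ minor via Cramer's rule. The determinant is then a polynomial in analytic entries. If $L$ is merely smooth in $\xi$ but independent of $E$, analyticity in $E$ at fixed $\xi$ follows directly. Finally, $\Phi^{-1}$ maps $L\cap\Lm^+$ isomorphically onto $L^\Phi\cap\Lm^\Phi$. Combined with \eqref{eq:transdet}, this shows that the zero set and the kernel dimensions are frame-independent.

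The step I expect to be the main obstacle is the analytic kernel frame for polynomial $b(\xi)$, specifically its global well-definedness. Analyticity is only a local property, so a locally chosen frame suffices provided the minors used vary across patches. The remaining point to check is that $U_L$ does not depend on the choice of frame, which holds since $B$ is intrinsic to the subspace.
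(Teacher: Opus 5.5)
Your proposal is correct and follows the same route as the paper's proof. That route is the Riesz projection onto the stable part of the constant-coefficient companion matrix beyond $y_0$, carried back to $y=0$ by the transition matrix on $[0,y_0]$, which is entire in $(\xi,E)$, followed by the polynomial congruence $\Phi(\xi)^{-1}$ and the single fixed analytic chart \eqref{eq:UL}. Your explicit analytic frames, via Cramer's rule for polynomial kernels and $B=W_+W_-^{-1}$ for the chart, only spell out details the paper takes for granted.
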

\begin{proof}
The companion matrix $M(y;\xi,E)$ of the fiber equation $(\Hh(\xi)-E)\psi=0$, written as the first-order system $\partial_y\Psi=M\Psi$ for $\Psi=\gamma\psi$, is built from $A_p^{-1}a_k(x,y,\xi)$ and is polynomial in $(\xi,E)$, so everything below extends holomorphically to
complex parameters near each real point. Beyond $y_0$ the system has constant coefficients by
\textup{[H4]}, $M=M^\natural(\xi,E)$, the companion matrix of $\Hh^\natural(\xi)-E$, and by \textup{[H3]} and the root count of Theorem~\ref{thm:calderonB} the spectrum of
$M^\natural$ avoids $i\R$ with $\kp$ eigenvalues in each half plane, a splitting that persists near
each real point; so the Riesz projection $P^s(\xi,E)$ onto the decaying part is holomorphic there
\cite[Ch.~II \S1.4]{kato1966perturbation}. On $[0,y_0]$ the transition matrix
$\mathsf F_{\xi,E}(0,y_0)$ is entire, by its Picard series \cite[Ch.~1, \S\S7--8]{CoddingtonLevinson}.  Hence
\begin{equation}\label{eq:LamRiesz}
  \Lm^+(\xi,E)=\mathsf F_{\xi,E}(0,y_0)\,\Ran P^s(\xi,E)
\end{equation}
is real-analytic. The congruence $\Phi$ of Theorem~\ref{thm:graded} is polynomial in $\xi$ with
$\Phi^*\cA(\xi)\Phi=\cA(0)$, so \eqref{eq:UL} is one fixed real-analytic chart on $\Lag(\cA(0))$,
which gives \eqref{eq:Fanalytic}; the zero set and the intersection dimensions are frame-independent since
$\Phi(\xi)^{-1}$ is invertible.
\end{proof}

\paragraph{Calder\'on endpoints.}
We now show that these endpoints are defined and how they may be computed. We need the following notation on Grassmannian geometry
collected from Appendix~\ref{app:grass}. A basis $e_1,\dots,e_{2\kp}$ of $\cB$ adapted to the grading has
$\Th_te_i=t^{\lambda_i}e_i$, with $\lambda_i=j$ on $\cB_j$ and $\Th_t=\mathrm{diag}(1,t,\dots,t^{p-1})\otimes\mathbf 1_N$. A $\kp$-plane $P$ with frame $v_1,\dots,v_\kp$ has Pl\"ucker vector
$\varpi_P=v_1\wedge\dots\wedge v_\kp\in\bigwedge^\kp\cB$, whose coordinates $p_I(P)$ in the basis
$e_I=e_{i_1}\wedge\dots\wedge e_{i_\kp}$ are the $\kp\times\kp$ minors of the frame. Vectors of this form are
called \emph{decomposable}; they form a closed cone, and each spans a unique $\kp$-plane
\cite[Ch.~1 \S5]{GriffithsHarris}. The vector $e_I$ has weight
$w_I=\sum_{i\in I}\lambda_i$. Then $w^*$ is the largest weight of a $\kp$-set, $\overline w(P)$ the largest weight with
$p_I(P)\ne0$, and the graded limit $P^\uparrow:=\lim_{t\to\infty}\Th_tP$ keeps exactly the coordinates of
weight $\overline w(P)$, by \eqref{eq:torusreps}; for a curve $\epsilon\mapsto P(\epsilon)$ we write $P^\dagger:=\lim_{t\to\infty}\Th_tP(t^{-1})$ (Lemma~\ref{lem:gradedcurve}).

\medskip\noindent\textbf{The weight conditions [P] and [M].} Two conditions describe the Calder\'on endpoints in closed form. The first asks that they be quantities of the principal symbol alone, independent of $E$, specifically as:
\[
  \text{\textup{[P]}}\qquad
  \Lm^\Phi(x,\pm,E)=V^\Phi_\pm(x)^{\uparrow}\quad\text{for every }E\in\Delta,\ \text{the limits in $\xi$ being uniform in }x.
\]
Here $\Lm^\Phi(x,\pm,E)$ are the limits of the transported Calder\'on Lagrangian $\Phi(\xi)^{-1}\Lm^+(x,\xi,E)$, which exist by Corollary~\ref{cor:graded}\textup{(iii)} below, and $V^\Phi_\pm:=(\mathbf 1+\cN_\pm)V_\pm$ are the transported frozen data, with $(\cN_\pm)_{jl}=(\pm1)^{j-l}(N_{j-l})_{jl}$ for $j>l$ and zero otherwise, read off $\Phi(\xi)^{-1}=\mathbf 1+\sum_{D\ge1}\xi^DN_D$ (Theorem~\ref{thm:graded}). Under \textup{[S]}, $\Phi=\mathbf 1$, $\Lm^\Phi=\Lm^+$ and $V^\Phi_\pm=V_\pm$. \textup{[P]} is used for the symbolic evaluation of the endpoints (section~\ref{sec:APS}). Without it the limits still exist, but may retain lower-order data and depend on $E_0$.  \textup{[P]} holds for all examples of this paper, numerically for \textup{(E7)} and \textup{(E8)}. The second condition, on the frozen data $V_\pm(x)$ of section~\ref{sec:setting} alone, asks that they carry maximal weight:
\begin{equation}\label{eq:Ehyp}
  \text{\textup{[M]}}\qquad
  \max_{w_I=w^*}\bigl|p_I\bigl(V_\pm(x)\bigr)\bigr|\ \ge\ c>0
  \quad\text{for both signs and every }x,
\end{equation}
for orthonormal frames of $V_\pm(x)$ (so that $\sum_I|p_I(V_\pm(x))|^2=1$). Since $w^*$ is the largest weight, \textup{[M]} says that $\overline w(V_\pm(x))=w^*$ for every $x$, with a lower bound uniform in $x$. At even $p$ the maximal weight is carried by the single index set of $\cB_{\mathrm{high}}$, so \textup{[M]} says that $V_\pm\cap\cB_{\mathrm{low}}=0$ uniformly, which is easily verified: the frozen principal problem at $\xi=\pm1$ with the Neumann-type conditions $\gamma_j\psi=0$, $j\ge p/2$, has no decaying solution.

Both conditions are vacuous at $p=1$. By Lemma~\ref{lem:endpoints}, \textup{[M]} implies \textup{[P]} under \textup{[S]}; it holds for second-order systems with a scalar principal symbol (Lemma~\ref{lem:scalarp2}) and fails for \textup{(E8)}, where \textup{[P]} is verified numerically.
\begin{lemma}\label{lem:endpoints}
Fix $E\in\Delta$. For $x\in\Gam$ the limits
$\Lm^+(x,\pm,E):=\lim_{\xi\to\pm\infty}\Lm^+(x,\xi,E)$ exist in $\Gr(\kp,\cB)$, with
\begin{equation}\label{eq:endrate}
  \partial_\xi\Lm^+(x,\xi,E)=O\bigl(|\xi|^{-2}\bigr) .
\end{equation}
If $\overline w(V_\pm(x))=w^*$, then $\Lm^+(x,\pm,E)=V_\pm(x)^{\uparrow}$ for every $E$, a quantity of the principal symbol at $(x,0)$ alone, independent of $E$ and of the lower-order coefficients of $H$. Under \textup{[M]} this holds at every $x$ and \eqref{eq:endrate} is uniform in $x$ and in $E$ in compact subsets of $\Delta$, so that \textup{[M]} implies \textup{[P]} under \textup{[S]}.
\end{lemma}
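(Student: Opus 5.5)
The plan is to rescale the fiber problem at large momentum into a semiclassical perturbation of the frozen principal problem, and then read off the limit of the Cauchy data by graded Grassmannian geometry. Set $\epsilon:=|\xi|^{-1}$ and $\xi=\pm\epsilon^{-1}$. Since $\Lm^+(x,\xi,E)=\mathsf F_{\xi,E}(0,y_0)\Ran P^s(\xi,E)$ by \eqref{eq:LamRiesz}, I would change variables $y=\epsilon s$ and conjugate the Cauchy data by the grading, $\Psi\mapsto\Th_{\epsilon}\Psi$ (equivalently $\Th_t$ with $t=\epsilon^{-1}$ inverted). The equation $\epsilon^{p}(\Hh(\xi)-E)\psi=0$ then becomes $\sigma_p(x,\epsilon s;\pm1,D_s)\psi+\epsilon R_\epsilon\psi=0$. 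Here $R_\epsilon$ collects the lower-order terms and the $E$-term, with coefficients polynomial in $\epsilon$, and the $y$-dependence enters only through $\epsilon s$. The companion matrix of the rescaled system is $M_\infty(x,\epsilon s;\pm)+O(\epsilon)$, and it is constant for $s\ge y_0/\epsilon$.

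Let $W_\pm(\epsilon)$ be the decaying data of the rescaled problem. The first step is to show that $W_\pm(\epsilon)$ is smooth (indeed analytic) in $\epsilon$ up to $\epsilon=0$, with $W_\pm(0)=V_\pm(x)$. I would argue this with exponential dichotomy. $M_\infty(x,y;\pm)$ is hyperbolic uniformly in $y$, with $\kp$ eigenvalues on each side by [H2] and Lemma~\ref{lem:hermODE}. The slowly varying system $\partial_s\Psi=(M_\infty(x,\epsilon s)+O(\epsilon))\Psi$ therefore has an exponential dichotomy on $[0,\infty)$ for small $\epsilon$, and its stable subspace at $s=0$ is within $O(\epsilon)$ of the frozen stable subspace $V_\pm(x)$. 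Roughland-type roughness estimates on the derivative in $\epsilon$ give $\partial_\epsilon W_\pm=O(1)$. Undoing the rescaling, $\Lm^+(x,\xi,E)=\Th_{\epsilon}^{-1}W_\pm(\epsilon)$, up to the fixed orientation convention, so that $\Lm^+(x,\xi,E)=\Th_tW_\pm(t^{-1})$ with $t=|\xi|$. This is exactly the curve limit $P^\dagger$ of Lemma~\ref{lem:gradedcurve}, applied to the analytic curve $W_\pm(\epsilon)$.

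Existence of the limit then follows from the Plücker description. The coordinates of $\Th_tW(t^{-1})$ are $t^{w_I}p_I(W(t^{-1}))$, a finite sum of terms $t^{w_I-k}c_{I,k}$. After normalizing by the largest surviving power, the Plücker vector converges, and it stays in the closed decomposable cone. The rate \eqref{eq:endrate} follows because the normalized coordinates are functions of $1/t$ with a convergent expansion, so their $t$-derivative is $O(t^{-2})$.

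If $\overline w(V_\pm)=w^*$, the top-weight coordinates are already nonzero at $\epsilon=0$. No other coordinate can beat them, since $w^*$ is maximal, so the limit keeps only the weight-$w^*$ coordinates of $V_\pm(x)$. That limit is $V_\pm(x)^\uparrow$, a datum of the principal symbol at $(x,0)$ that is independent of $E$ and of the lower-order terms. Under [M] the lower bound $c$ makes all constants uniform in $x$ and in $E$ on compacts: the remainder terms are uniformly $O(\epsilon)$ while the leading coordinates are bounded below. Under [S], $\Phi=\mathbf 1$, which yields [P].

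The main obstacle I expect is the dichotomy step on the whole half-line, with variable coefficients on $[0,y_0/\epsilon]$ and analytic dependence in $\epsilon$ up to $0$. For the limit without [M], I also need genuine analyticity, not merely smoothness, to guarantee a finite leading power.
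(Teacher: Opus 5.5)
Your outline (rescaling to $z=|\xi|y$, exponential dichotomy of the slowly varying system, graded limit of the curve $\epsilon\mapsto W_\pm(\epsilon)$ via Lemma~\ref{lem:gradedcurve}) is the paper's. However, the regularity step in the middle is wrong as stated, and your treatment of the general case rests on it.

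The curve $W_\pm(\epsilon)$ is in general \emph{not} analytic at $\epsilon=0$. The coefficients are only smooth in $y$, and variation in the collar produces exponentially small terms. Take $H=-\Delta+V(y)$ with $0\le V\in C_c^\infty((a,y_0))$, $V\not\equiv0$, $a>0$, which satisfies \textup{[H1]--[H4]}. A Wronskian computation against $e^{-\kappa y}$, $\kappa=\sqrt{\xi^2-E}$, gives $u'(0)/u(0)=-\kappa+r(\xi)$ with $r\ne0$ but $r=O(e^{-2a\kappa})$. So the second coordinate of $W_\pm(\epsilon)=\C\bigl(1,-i\epsilon\,u'(0)/u(0)\bigr)$ is $i\sqrt{1-E\epsilon^2}$ plus a nonzero function vanishing to infinite order at $\epsilon=0$. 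Hence there is no convergent expansion in $1/t$, and the Pl\"ucker coordinates of $\Th_tW(t^{-1})$ are not finite sums $\sum_kc_{I,k}t^{w_I-k}$.

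The only regularity argument you actually sketch, roughness of the dichotomy, gives at most one derivative, and even that needs care. $\partial_\epsilon\widetilde M$ contains $z\,\partial_yM_\infty(x,z\epsilon;\pm)$, of size up to $y_0/\epsilon$ on $[0,y_0/\epsilon]$. It is therefore not a uniformly bounded perturbation, and a uniform bound must play the linear growth in $z$ against the exponential decay of stable solutions. One derivative suffices when $\overline w(V_\pm)=w^*$. It does not suffice for the general existence claim, where the leading power $m=\max_I(w_I-o_I)$ involves Taylor coefficients of the $p_I$ up to order about $w^*$.

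Two things are missing: (a) a proof that $W_\pm$ is $C^\infty$ on $[0,\epsilon_0)$, and (b) the observation that smoothness is all the graded limit needs.

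For (a), the paper works in a chain of weighted spaces $X_i=L^2(e^{-\nu_i'z}dz;\R_+)$, with $\nu_0'<\dots<\nu_k'$ below the dichotomy rate. Multiplication by $z$ maps $X_i$ to $X_{i+1}$, so each $\epsilon$-derivative of $\partial_z-\widetilde M(\cdot\,;\epsilon)$ costs one weight. Uniform right inverses on every $X_i$ and a normalization of the kernel frame then give $\partial_\epsilon^j$ of the decaying frame by induction on difference quotients. Continuity of the trace at $z=0$ gives $C^k$ for every $k$.

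For (b), your worry that analyticity is needed to guarantee a finite leading power is unfounded. Let $o_I\in\N\cup\{\infty\}$ be the order of vanishing of $p_I(\varpi(\epsilon))$ at $\epsilon=0$. Since $\varpi(0)\ne0$, some $o_I=0$, so $m$ is a maximum over finitely many indices, at least one finite, and $m\le w^*$. Flat coordinates are $O(t^{-n})$ for every $n$ and drop out. Taylor's formula with remainder then gives convergence at rate $O(t^{-1})$ with $t$-derivative $O(t^{-2})$, which is \eqref{eq:endrate}. With these two repairs your argument becomes the paper's.
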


\begin{proof}
The lemma concerns $\Lm^+$ before the transport of Definition~\ref{def:transported}. The transported endpoints are treated in Corollary~\ref{cor:graded}\textup{(iii)}.

{Step 1.} With $\Psi:=(\psi,D_y\psi,\dots,D_y^{p-1}\psi)$, $\cL\psi=0$ becomes
$\partial_y\Psi=M(y;\xi,E)\Psi$ with $M$ the companion matrix of $\Hh(\xi)-E$, built from $A_p^{-1}a_k(x,y,\xi)$. The entry
in the last block row and $k$-th block column is $O(|\xi|^{p-k})$ since $\ord A_k\le p-k$. Define the rescaling
$z:=|\xi|y$, $\epsilon:=|\xi|^{-1}$ and $\Psi=\Th_{|\xi|}\widetilde\Psi$. The $(j,l)$ entry of the
new coefficient is $|\xi|^{-1+l-j}M_{jl}$; on the superdiagonal $l=j+1$ this is $|\xi|^0$, and in the
last row $j=p-1$ it is $|\xi|^{-p+k}O(|\xi|^{p-k})=O(1)$. Retaining the leading part gives
\begin{equation}\label{eq:rescaled}
  \partial_z\widetilde\Psi=\bigl(M_\infty(x,z\epsilon;\sgn\xi)+O(\epsilon)\bigr)\widetilde\Psi ,
\end{equation}
with $M_\infty(x,y;\pm)$ the companion matrix of the \emph{principal} symbol at $\xi=\pm1$, with remainder $O(\epsilon)$ in $L^\infty(dz)$, uniformly in $x$ and in $E$ in compact sets: it is a finite sum $\sum_{r\ge1}\epsilon^rc_r(x,z\epsilon)$, the $c_r$ being coefficients of $A_p^{-1}H$, bounded by \textup{[H4]}.

Step 2. By Lemma~\ref{lem:hermODE} applied to $\sigma_p(x,y;\pm1,\cdot)$, without real roots by \textup{[H2]},
$\spec M_\infty(x,y;\pm)\cap i\R=\emptyset$ with $\kp$ eigenvalues in each open half plane and a gap $2\nu>0$
uniform in $(x,y)$. The coefficient $\widetilde M(\cdot\,;\epsilon)$ of
\eqref{eq:rescaled} is then hyperbolic at each $z$ with that uniform gap, has $z$-derivative
$O(\epsilon)$ since $\partial_z M_\infty(x,z\epsilon;\pm)=\epsilon\,\partial_yM_\infty$, and is
constant for $z\ge y_0/\epsilon$ by \textup{[H4]}. {It is a slowly varying hyperbolic system},
and so has an exponential dichotomy on $\R_+$ for small $\epsilon$, with any exponent $\nu''<\nu$ and constants
uniform in $(x,\epsilon)$ \cite[Lecture~6, Prop.~1]{Coppel}. Fix such a $\nu''$, an integer $k\ge1$
and a chain $0<\nu'_0<\dots<\nu'_k<\nu''$, and put $X_i:=L^2(e^{-\nu'_iz}dz;\R_+)$ and
$D_i:=\{u\in X_i:\partial_zu\in X_i\}$, so that $X_0\subset\dots\subset X_k$ and multiplication by $z^j$ is
bounded $X_i\to X_{i+j}$ for $i+j\le k$. The dichotomy makes $\mathsf T(\epsilon):=\partial_z-\widetilde
M(\cdot\,;\epsilon)$ surjective from $D_i$ onto $X_i$, with right inverses bounded uniformly in $\epsilon$ (cf.\ \cite[Lecture~3]{Coppel}),
and with a kernel of dimension $\kp$ that is the same
subspace for every $i\le k$ and consists of the $L^2(\R_+)$ solutions, an unstable solution growing
at least like $e^{\nu''z}$ and lying in no $X_i$. So
$V(x,\pm;\epsilon):=\{u(0):u\in\ker\mathsf T(\epsilon)\}$ is the $\kp$-dimensional space of their
Cauchy data, and $V_\pm(x):=V(x,\pm;0)$ is the stable subspace of the frozen $M_\infty(x,0;\pm)$.
Differentiating in $\epsilon$ brings down a power of $z$,
\[
  \partial_\epsilon\bigl[M_\infty(x,z\epsilon;\pm)\bigr]=z\,\partial_yM_\infty(x,z\epsilon;\pm),
\]
so $\partial_\epsilon^{\,j}\mathsf T$ is bounded $D_i\to X_{i+j}$: each $\epsilon$-derivative costs one weight.
Normalize a frame $u(\epsilon)$ of the kernel by requiring $u(\epsilon)(0)-u(0)(0)$ to lie in a fixed complement of $V_\pm(x)$ in $\cB$. Difference quotients satisfy $\mathsf T(\epsilon)\,\delta_hu=-(\delta_h\mathsf T)\,u(\epsilon+h)$, with $\delta_h\mathsf T$ bounded $D_i\to X_{i+1}$ uniformly in $h$. The right inverses on $X_{i+1}$ and the normalization, which fixes the kernel component, bound $\delta_hu$ in $D_{i+1}$ and give its convergence, so that inductively $\partial_\epsilon^{\,j}u$ exists in $D_j$ for $j\le k$. The trace $u\mapsto u(0)$ being
continuous on $D_k$, $V$ is $C^k$ in $\epsilon$, and as $k$ is arbitrary it is $C^\infty$ on $[0,\epsilon_0)$:
\begin{equation}\label{eq:Veps}
  V(x,\pm;\epsilon)=V_\pm(x)+O(\epsilon),\qquad
  \partial_\epsilon^{\,j}V=O(1)\quad\text{for every }j.
\end{equation}
Undoing the rescaling,
\begin{equation}\label{eq:LamTheta}
  \Lm^+(x,\xi,E)=\Th_{|\xi|}\,V\bigl(x,\sgn\xi;|\xi|^{-1}\bigr).
\end{equation}

Step 3. By Lemma~\ref{lem:gradedcurve}, applied to $P=V(x,\pm;\cdot)$, which is $C^\infty$ by \eqref{eq:Veps}, and by
\eqref{eq:LamTheta}, the limits $\Lm^+(x,\pm)=V(x,\pm;\cdot)^\dagger$ exist, with rate
\eqref{eq:endrate}. If $\overline w(V_\pm)=w^*$, the same lemma gives $V_\pm(x)^\uparrow$ as the limit, determined by the principal symbol, $M_\infty$ involving no other data. For the uniformity under \textup{[M]}: the dichotomy constants depend only on $\nu$ and on the $C^1$ bounds of $M_\infty$, uniform in $x$ by \textup{[H4]}, so the right inverses of $\mathsf T(\epsilon)$ on the $X_i$, and hence the constants in \eqref{eq:Veps}, are uniform in $x$ and in $E$ in compact sets; Lemma~\ref{lem:gradedcurve} then gives \eqref{eq:endrate} uniformly, its constant depending only on these bounds and on the lower bound of the maximal-weight coordinates, which \textup{[M]} makes uniform in $x$.
\end{proof}

\begin{corollary}[transported data]\label{cor:graded}
(i) $L\mapsto\Phi(\xi)^{-1}L$ trivializes $\xi\mapsto\Lag(\cA(x,\xi))$, whose fiber is the fixed
  compact $\Lag(\cA(x,0))\cong\U(\kp)$.
\\
(ii)   For $x$-independent data every field $\Phi L_0$, $L_0\in\Lag(\cA(0))$, defines an essentially self-adjoint realization by the fiber argument of Proposition~\ref{prop:SA}\textup{(i)}, hence a transported pointwise termination. For $x$-dependent data $\Phi L_0$ is Lagrangian only at the level of symbols.
\\
(iii)  The limits $\Lm^\Phi(x,\pm)$ exist, with the rate \eqref{eq:endrate}, and equal $(V^\Phi_\pm)^{\uparrow}$ when $\overline w(V^\Phi_\pm)=w^*$, with $V^\Phi_\pm$ as in \textup{[P]}.
\end{corollary}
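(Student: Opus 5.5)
Part (i) is mostly a restatement of Theorem~\ref{thm:graded} in the language of Lagrangians. For fixed $x$, \eqref{eq:graded} reads $\Phi(\xi)^*\cA(x,\xi)\Phi(\xi)=\cA(x,0)$. So for every $w,w'\in\cB$,
\[
\ip{\Phi(\xi)^{-1}w}{\cA(x,0)\,\Phi(\xi)^{-1}w'}=\ip{w}{\cA(x,\xi)w'} .
\]
Hence $\Phi(\xi)^{-1}$ maps $\cA(x,\xi)$-isotropic subspaces to $\cA(x,0)$-isotropic subspaces of the same dimension. It is therefore a bijection $\Lag(\cA(x,\xi))\to\Lag(\cA(x,0))$, with inverse given by $\Phi(\xi)$. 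The map is polynomial in $\xi$, so $(\xi,L)\mapsto(\xi,\Phi(\xi)^{-1}L)$ trivializes the bundle. Its fiber is identified with $\U(\kp)$ by Proposition~\ref{prop:U}, applied to the fixed form $\cA(x,0)$, which is compact.

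For (ii), I take $x$-independent data. Then $\Phi(\xi)L_0\in\Lag(\cA(\xi))$ for each $\xi$, by the inverse of the map in (i). It is the kernel of $B_0\Phi(\xi)^{-1}$, which is polynomial in $\xi$ by Theorem~\ref{thm:graded} and of constant rank $\kp$. So it is a differential boundary condition with a Lagrangian field. Proposition~\ref{prop:SA}(i), in its $x$-independent case, then gives essential self-adjointness as a direct integral of the fibers $\Hh_L(\xi)$. When the data depend on $x$, $\Phi(x,\xi)$ depends on $x$ and does not commute with quantization. I would therefore only assert Lagrangianity at the level of symbols and give no operator statement.

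For (iii), I would repeat Step~3 of Lemma~\ref{lem:endpoints} with the transport included. By \eqref{eq:LamTheta}, $\Lm^\Phi(\xi)=\Phi(\xi)^{-1}\Th_{|\xi|}V(x,\sgn\xi;|\xi|^{-1})$. The key algebraic step is to commute $\Phi^{-1}$ through the grading. Write $\Phi(\xi)^{-1}=\mathbf 1+\sum_{D\ge1}\xi^DN_D$. From the proof of Theorem~\ref{thm:graded}, a nonzero $(j,l)$ block of $N_D$ requires $j-l\ge D$. Conjugating by $\Th_t$ multiplies the $(j,l)$ block by $t^{l-j}$. Thus, with $t=|\xi|$ and $\epsilon=t^{-1}$,
\[
\Th_t^{-1}\Phi(\xi)^{-1}\Th_t=\mathbf 1+\cN_{\sgn\xi}+O(\epsilon).
\]
Here $\cN_\pm$ collects exactly the blocks with $j-l=D$, which is the definition given in [P]. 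All other blocks carry a nonnegative power of $\epsilon$, and the whole expression is polynomial, hence smooth, in $\epsilon$. Consequently
\[
\Lm^\Phi(\xi)=\Th_{|\xi|}\,\widetilde V(x,\sgn\xi;|\xi|^{-1}),\qquad \widetilde V(\epsilon):=\bigl(\mathbf 1+\cN_\pm+O(\epsilon)\bigr)V(x,\pm;\epsilon).
\]
The curve $\widetilde V$ is $C^\infty$ in $\epsilon$ by \eqref{eq:Veps}, and $\widetilde V(0)=V^\Phi_\pm$. Lemma~\ref{lem:gradedcurve} then gives the existence of the limits $\widetilde V^\dagger$ with the rate \eqref{eq:endrate}. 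When $\overline w(V^\Phi_\pm)=w^*$, the same lemma identifies the limit as $(V^\Phi_\pm)^\uparrow$.

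The main point to check carefully is the block-weight inequality $j-l\ge D$ for $\Phi^{-1}$, and not just for $K$. It holds because powers of $K$ only increase $j-l$ and the degree additively, so any polynomial in $K$ keeps the inequality. The remaining steps are bookkeeping with the signs $(\pm1)^{j-l}$ coming from $\xi^D=(\sgn\xi)^D t^D$.
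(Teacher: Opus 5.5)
Your proposal is correct and follows the paper's proof essentially step by step. Parts (i) and (ii) go through the congruence \eqref{eq:graded} and the $x$-independent case of Proposition~\ref{prop:SA}\textup{(i)}, and part (iii) writes $\Th_{|\xi|}^{-1}\Phi(\xi)^{-1}\Th_{|\xi|}$ as a polynomial $G_\pm(\epsilon)$ in $\epsilon=|\xi|^{-1}$ with $G_\pm(0)=\mathbf 1+\cN_\pm$, then applies Lemma~\ref{lem:gradedcurve} to the curve $G_\pm(\epsilon)V(\pm;\epsilon)$; the one point you should state explicitly is that $\mathbf 1+\cN_\pm$ is invertible (it is unipotent, $\cN_\pm$ being strictly block lower triangular), so that this curve stays in $\Gr(\kp,\cB)$ up to $\epsilon=0$.
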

\begin{proof}
(i) $\Phi(\xi)^{-1}$ is a congruence from $\cA(\xi)$ to $\cA(0)$ by \eqref{eq:graded}, so it carries
$\Lag(\cA(\xi))$ bijectively to $\Lag(\cA(0))$, smoothly in $\xi$. (ii) is the same read backwards with
Proposition~\ref{prop:SA}\textup{(i)}.

(iii) By \eqref{eq:LamTheta}, $\Lm^+(\xi)=\Th_{|\xi|}V(\pm;\epsilon)$ with $\epsilon=|\xi|^{-1}$ and
$\pm=\sgn\xi$, where $V(\pm;\cdot)$ is a $C^\infty$ curve by \eqref{eq:Veps}, to which Lemma~\ref{lem:gradedcurve} applies. By Theorem~\ref{thm:graded},
$\Phi(\xi)^{-1}=\mathbf 1+\sum_D\xi^DN_D$ with $(N_D)_{jl}=0$ unless $j\ge l+D$. Conjugation by
$\Th_{|\xi|}$ multiplies the entry $(j,l)$ by $|\xi|^{l-j}$, so
\[
  G_\pm(\epsilon):=\Th_{|\xi|}^{-1}\Phi(\xi)^{-1}\Th_{|\xi|}
  =\mathbf 1+\sum_D(\pm1)^D\sum_{j,l}\epsilon^{\,j-l-D}(N_D)_{jl}
\]
is a polynomial in $\epsilon$, invertible, with $G_\pm(0)=\mathbf 1+\cN_\pm$, where $\cN_\pm$ collects the entries
with $j-l=D$. Hence $\Lm^\Phi(\xi)=\Th_{|\xi|}V^\Phi(\pm;\epsilon)$ with $V^\Phi(\pm;\epsilon):=G_\pm(\epsilon)V(\pm;\epsilon)$,
again a $C^\infty$ curve, now through $V^\Phi_\pm:=(\mathbf 1+\cN_\pm)V_\pm$ at $\epsilon=0$, and Lemma~\ref{lem:gradedcurve} gives the limits,
the rate, and the value $(V^\Phi_\pm)^{\uparrow}$ when $\overline w(V^\Phi_\pm)=w^*$.
\end{proof}
\noindent Since \textup{[M]} for $V_\pm$ does not imply it for $V^\Phi_\pm$, the few results that use \textup{[M]} itself are stated under \textup{[S]}; without \textup{[S]} they hold when \eqref{eq:Ehyp} is verified directly for $V^\Phi_\pm$, as in \textup{(E7)}.

\begin{definition}[the Calder\'on path]\label{def:Uplus}
$U_+(\xi,E)\in\U(\kp)$ denotes the image of $\Lm^\Phi(\xi,E)$ under \eqref{eq:UL} in the fixed fiber
$\Lag(\cA(0))$, $U_+(\pm,E)$ that of the {endpoints of the Calder\'on path} $\Lm^\Phi(\pm,E)$, with $U_+(\pm):=U_+(\pm,E_0)$, and
$U_L:=U_{L^\Phi}$. At fixed $E_0$, the \emph{Calder\'on path} is
\[
  \gamma_{E_0}:\ \overline\R\ni\xi\longmapsto\Lm^\Phi(\xi,E_0)\in\Lag(\cA(0)),
\]
a path from $\Lm^\Phi(-)$ to $\Lm^\Phi(+)$, whose endpoints exist in general by
Corollary~\ref{cor:graded}\textup{(iii)}. Under \textup{[S]}, $\Phi=\mathbf 1$.
\end{definition}

\begin{definition}[admissibility]\label{def:adm}
A boundary condition in the calculus is \emph{admissible} if the limits $L(x,\pm):=\lim_{\xi\to\pm\infty}L^\Phi(x,\xi)\in\Lag(\cA(x,0))$ of its transported field, the \emph{endpoints of the termination}, (Definition~\ref{def:transported}) exist and:
\\
(a)  it is eventually constant in $x$: $L(x,\pm)=L^{\mathrm{as}}(\pm)$ for $|x|\ge R$; and
\\
(b) it is transverse to the two endpoints of the Calder\'on path, uniformly in $x$:
\begin{equation}\label{eq:admdef}
  \inf_x\bigl|\ell(x,\pm)\bigr|>0,\qquad
  \ell(x,\pm)=\det\bigl(U_{L(x,\pm)}-U_+(x,\pm)\bigr)
\end{equation}
the \emph{endpoint determinant}, whose nonvanishing at a fixed $x$ says
$L(x,\pm)\pitchfork\Lm^\Phi(x,\pm)$.
\end{definition}
Here $U_+(x,\pm)=U_+(x,\pm,E_0)$ is read at the reference energy. Under \textup{[P]} the endpoints do not depend on the energy, so admissibility at $E_0$ is admissibility at every $E\in\Delta$. For a transported pointwise termination it is $L_0(x)$.

\begin{remark}[admissibility and ellipticity]\label{rem:SL}
The endpoint determinant \eqref{eq:admdef} is not the Lopatinski determinant. Ellipticity
\eqref{eq:SLdef} tests the principal field $L^\downarrow(x,\pm)$ against the frozen data $V_\pm(x)$, both
$\kp$-planes in the $2\kp$-dimensional $\cB$ by Lemma~\ref{lem:hermODE}\textup{(b)}, while admissibility tests
the endpoints $L(x,\pm)$ of the termination against the endpoints $\Lm^\Phi(x,\pm)$ of the Calder\'on path. The two coincide at $p=1$, where
$L^\downarrow(\pm)=L(\pm)$ and $\Lm^+(\pm)=V_\pm$, a $T$ with orthonormal rows giving the uniform version. Under \textup{[M]}, Lemma~\ref{lem:endpoints} and
\eqref{eq:torusreps} give $\varpi_{\Lm^+(x,\pm)}=\sum_{w_I=w^*}p_I(V_\pm(x))\,e_I$, so that
$\Lm^+(x,\pm)\supseteq\cB_{\mathrm{high}}$ by \eqref{eq:highinlimit}, and, both having dimension $\kp$ at even $p$,
\begin{equation}\label{eq:Ehigheven}
  \Lm^+(x,\pm)=\cB_{\mathrm{high}}\qquad(p\ \text{even}).
\end{equation}
Under \textup{[S]} and \textup{[M]}, an admissible pointwise $L$ makes $(H,L)$ uniformly Shapiro--Lopatinski
elliptic: $\varpi_L\wedge\varpi_{\Lm^+(\pm)}=\sum_{w_I=w^*}\pm\,p_{I^c}(L)\,p_I(V_\pm)$ involves only the
coordinates of $L$ of the smallest weight $w_{\mathrm{tot}}-w^*$; admissibility makes it nonzero, so these are
the coordinates of $\varpi_{L^\downarrow}$ by \eqref{eq:torusreps}, and the same sum is
$\varpi_{L^\downarrow}\wedge\varpi_{V_\pm}$. Hence $L^\downarrow\pitchfork V_\pm$, uniformly in $x$ by
\eqref{eq:Ehyp} and \eqref{eq:transdet}; rows of $T$ adapted to the filtration, with leading coefficients given by these coordinates, then satisfy the uniform bound of Definition~\ref{def:SL}. The converse
fails: for $-\Delta$, $\Lm^+(\xi,E)=\C(1,ik)$ with $k=k(\xi,E):=\sqrt{\xi^2-E}$, so $\Lm^+(\pm)=\C(0,1)$ is the
Dirichlet Lagrangian, and the Dirichlet condition is elliptic, $V_\pm=\C(1,i)$, but not admissible, with $\mathcal I=0$ since
$\spec\Hh_L(\xi)=[\xi^2,\infty)$ misses $\Delta$. Under \textup{[M]} at $p=2$ the wall always contains the
Dirichlet Lagrangian, by \eqref{eq:Ehigheven}. Without \textup{[M]} the implication fails: in \textup{(E8)} the
generic transported pointwise termination is admissible and (numerically) not elliptic. Section~\ref{sec:obs}
uses ellipticity on the elliptic route and escape on the spectral route, the latter provided by admissibility when the Calder\'on path converges uniformly in energy (Proposition~\ref{prop:exact}\textup{(i)}), so the remark is
only a criterion.
\end{remark}

\paragraph{Topology of admissible set.}\label{ssec:admset}

Admissibility asks that the endpoints $L(\pm)$ of the termination be transverse to those $\Lm^\Phi(\pm)$ of the Calder\'on path. The derivation of the topology of that set follows standard analyses of Lagrangian Grassmannians: an affine chart of the Lagrangians transverse to one of them, reduction to the non-degenerate part, and Sylvester's law of inertia; see \cite[Ch.~2]{McDuffSalamon17} for the chart and the Maslov cycle, and \cite{Furutani04,BHS20} in the unitary parametrization.

\begin{proposition}\label{prop:admtopology}
Let $\Lm^\Phi(\pm)\in\Lag(\cA(0))\cong\U(\kp)$ be the endpoints of the Calder\'on path. Write $U_1:=U_+(-)$, $U_2:=U_+(+)$ \textup{(}Definition~\ref{def:Uplus}\textup{)} and $\cU:=\U(\kp)\setminus(\Sigma_{U_1}\cup\Sigma_{U_2})$ for the set of $\kp$-planes transverse to both endpoints, an open subset of one fiber. For $x$-independent data it is the set of endpoint values $L_0$ of admissible transported pointwise
terminations $\Phi L_0$. Let $W:=U_1^{-1}U_2$ be the relative unitary of the two endpoints and
\begin{equation}\label{eq:defect}
  \kp_\cap:=\dim\bigl(\Lm^\Phi(-)\cap\Lm^\Phi(+)\bigr)=\dim\ker(\mathbf 1-W)\in\{0,1,\dots,\kp\}
\end{equation}
be the \emph{defect}, $\kp_\cap=0$ meaning transversality of the endpoints and $\kp_\cap=\kp$ their
coincidence. Define $\mathsf K:=\ker(\mathbf 1-W)$, $W':=W|_{\mathsf K^\perp}$ and
$\mathsf B:=0_{\mathsf K}\oplus i(\mathbf 1+W')(\mathbf 1-W')^{-1}$ in the splitting $\C^\kp=\mathsf K\oplus \mathsf K^\perp$.

\begin{enumerate}[label=\textup{(\roman*)},nosep]
\item \textup{(The chart.)} The map
\[
  U\ \longmapsto\ \mathsf A:=\mathsf A_0-\mathsf B,\qquad
  \mathsf A_0:=i(\mathbf 1+U_1^{-1}U)(\mathbf 1-U_1^{-1}U)^{-1},
\]
is a diffeomorphism of $\U(\kp)\setminus\Sigma_{U_1}$ onto $\Herm(\kp)$ which carries $\cU$ onto the matrices whose \emph{compression} $\mathsf A_{22}:=\Pi_{\mathsf K^\perp}\mathsf A|_{\mathsf K^\perp}$ to $\mathsf K^\perp$ is invertible:
\begin{equation}\label{eq:cayleyimage}
  \cU\ \cong\ \bigl\{\mathsf A\in\Herm(\kp):\det\mathsf A_{22}\ne0\bigr\},\qquad
  \dim\bigl(L\cap\Lm^\Phi(+)\bigr)=\dim\ker\mathsf A_{22} .
\end{equation}
At $\kp_\cap=0$ one has $\mathsf A_{22}=\mathsf A$ and $\cU$ is the set of \emph{invertible Hermitian matrices}; at $\kp_\cap=\kp$ one has $\mathsf B=0$ and $\cU$ is all of $\Herm(\kp)$.
\item \textup{(The chambers.)} $\cU$ has exactly $\kp-\kp_\cap+1$ connected components, called {\em chambers}, indexed by
\begin{equation}\label{eq:chamberlabel}
  j(L):=\#\bigl\{\text{positive eigenvalues of }\mathsf A_{22}(L)\bigr\}\in\{0,1,\dots,\kp-\kp_\cap\}.
\end{equation}
Every chamber is open and \emph{nonempty}.
\item \textup{(Constraints on defects.)} Not every value in \eqref{eq:defect} is attained: under
\textup{[S]} and \textup{[M]} one has $\kp_\cap\ge N\lfloor p/2\rfloor$, while $\kp_\cap=0$ at $p=1$. So under
these conditions, without which the defect can vanish for every $p$ as in \textup{(E8)}, the defect
is $0$ at $p=1$, is $\kp$ at even $p$, and satisfies $\kp-\tfrac12N\le\kp_\cap\le\kp$ at odd
$p\ge3$. Correspondingly $\cU$ has $\kp+1$ chambers at $p=1$, one at even $p$, and at most
$\tfrac12N+1$ at odd $p\ge3$.
\end{enumerate}
\end{proposition}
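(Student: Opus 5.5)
The plan is to reduce everything to the unitary picture of Proposition~\ref{prop:U} and then to a Cayley chart centered at $U_1$. First, for $x$-independent data, Corollary~\ref{cor:graded}\textup{(ii)} shows that every $\Phi L_0$ is a termination. Its transported field is constant and equal to $L_0$, so admissibility (Definition~\ref{def:adm}) reduces to $\det(U_{L_0}-U_\pm)\ne0$, i.e.\ $U_{L_0}\in\cU$. The identity $\kp_\cap=\dim\ker(\mathbf 1-W)$ is \eqref{eq:transdet} applied to the two endpoints, after multiplying by $U_1^{-1}$.

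For (i), set $V:=U_1^{-1}U$. The classical Cayley transform $V\mapsto i(\mathbf 1+V)(\mathbf 1-V)^{-1}$ is a real-analytic diffeomorphism from $\{V\in\U(\kp):1\notin\spec V\}$ onto $\Herm(\kp)$, with inverse $\mathsf A_0\mapsto(\mathsf A_0-i)(\mathsf A_0+i)^{-1}$. Subtracting the fixed matrix $\mathsf B$ is a translation, and $\mathsf B$ is Hermitian because $W'$ is unitary without eigenvalue $1$ on $\mathsf K^\perp$; here $\mathsf K^\perp$ is $W$-invariant since $W$ is normal. Next I would characterize $\Sigma_{U_2}$ in this chart. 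We have $\ker(U-U_2)=\ker(V-W)$, and $V-W$ is compared through the Cayley transforms. For $v\in\ker(V-W)$, write $v=(\mathbf 1-V)u$; a direct computation (the standard identity $\mathsf C(V)-\mathsf C(W)=2i(\mathbf 1-W)^{-1}(V-W)(\mathbf 1-V)^{-1}$ when both are defined) handles the transverse part. The component along $\mathsf K$ must be treated separately, since $\mathsf C(W)$ blows up there. Concretely, $Wv=Vv$ with $v=k+k'$ and $Wk=k$ gives conditions that, after applying $(\mathbf 1-V)^{-1}$, reduce to a vector $u$ with $(\mathsf A_0-\mathsf B)u\in\mathsf K$ and $u\perp\mathsf K$ up to the identification. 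In other words, $\ker(V-W)\cong\ker\mathsf A_{22}$. This yields \eqref{eq:cayleyimage}, and the special cases $\kp_\cap=0$ ($\mathsf K=0$) and $\kp_\cap=\kp$ ($W=\mathbf 1$, $\mathsf B=0$, compression on $0$) are immediate. I expect the main obstacle to be getting this kernel identification exactly right, including signs.

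For (ii), the map $\mathsf A\mapsto\mathsf A_{22}$ is a trivial vector bundle projection $\Herm(\kp)\to\Herm(\mathsf K^\perp)$ with fiber the off-diagonal and $\mathsf K$-blocks, which form a contractible vector space. So $\cU$ is homeomorphic to $\{\text{invertible Hermitian } (\kp-\kp_\cap)\text{-matrices}\}\times\R^m$. By Sylvester's law and the connectedness of $\U(n)$ acting by conjugation, the invertible Hermitian $n\times n$ matrices split into exactly $n+1$ components labeled by the number of positive eigenvalues. Each component is open and nonempty, as shown by diagonal representatives $\mathrm{diag}(\mathbf 1_j,-\mathbf 1_{n-j})$.

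For (iii), at $p=1$ Remark~\ref{rem:SL} gives $\Lm^+(\pm)=V_\pm$, the stable subspaces of the principal symbol at $\xi=\pm1$. Odd homogeneity makes $M_\infty(-)=-M_\infty(+)$, so $V_-$ is the unstable subspace of $M_\infty(+)$ and hence complementary to $V_+$, giving $\kp_\cap=0$. Under [S] and [M], Remark~\ref{rem:SL} gives $\Lm^+(\pm)\supseteq\cB_{\mathrm{high}}$, of dimension $N\lfloor p/2\rfloor$, so both endpoints contain it and $\kp_\cap\ge N\lfloor p/2\rfloor$. This equals $\kp$ at even $p$ and $\kp-\tfrac12N$ at odd $p$. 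Feeding these bounds into (ii) gives the chamber counts.
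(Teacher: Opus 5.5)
Your route is the paper's: a Cayley chart centered at $U_1$, the second wall read off the compression $\mathsf A_{22}$, Sylvester's law for the chambers, and $\cB_{\mathrm{high}}\subseteq\Lm^+(\pm)$ together with $M_\infty(-)=-M_\infty(+)$ for (iii). Parts (ii), (iii) and the preliminary identifications are fine.

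The gap is in the only step of (i) with real content, $\ker(V-W)\cong\ker\mathsf A_{22}$, which you sketch with the change of variables reversed. The identity you quote, $\mathsf C(V)-\mathsf C(W)=2i(\mathbf 1-W)^{-1}(V-W)(\mathbf 1-V)^{-1}$, says the right variable is $u=(\mathbf 1-V)v$. Your ``$v=(\mathbf 1-V)u$'' instead prescribes $u=(\mathbf 1-V)^{-1}v$, and that $u$ is in general not orthogonal to $\mathsf K$. Already for $\kp=2$, $\kp_\cap=1$, $W=\mathrm{diag}(1,w)$ and $\mathsf A$ with $\mathsf A_{22}=0$ and nonzero off-diagonal entry $c$, the $\mathsf K$-component of $(\mathbf 1-V)^{-1}v$ is a nonzero multiple of $c$. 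The plan to apply the identity to a ``transverse part'' and treat the $\mathsf K$-component separately also does not work as stated. $\mathsf C(W)$ is undefined once $\kp_\cap>0$, and $V$ does not preserve $\mathsf K\oplus\mathsf K^\perp$, so nothing splits blockwise.

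What is missing is to clear the denominator of $\mathsf C(W)$ instead of splitting $v$. From $V=(\mathsf A_0-i)(\mathsf A_0+i)^{-1}=\mathbf 1-2i(\mathsf A_0+i)^{-1}$ one gets
\[
  V-W=\bigl[(\mathbf 1-W)\mathsf A_0-i(\mathbf 1+W)\bigr](\mathsf A_0+i)^{-1}=:M(\mathsf A_0+i)^{-1},
\]
with no inverse of $\mathbf 1-W$ needed. Hence $\ker(V-W)=(\mathsf A_0+i)\ker M$, and $(\mathsf A_0+i)^{-1}=(2i)^{-1}(\mathbf 1-V)$ confirms $u\propto(\mathbf 1-V)v$.

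Now read $Mu=0$ in the splitting $\mathsf K\oplus\mathsf K^\perp$:
\begin{itemize}
\item Since $\Ran(\mathbf 1-W)\subseteq\mathsf K^\perp$ and $\mathbf 1+W=2$ on $\mathsf K$, the $\mathsf K$-component is $-2iu_{\mathsf K}=0$.
\item Since $(\mathbf 1-W')^{-1}$ commutes with $\mathbf 1+W'$, the $\mathsf K^\perp$-component is then $(\mathbf 1-W')\mathsf A_{22}u=0$.
\end{itemize}
So $M$ is block lower triangular with invertible $\mathsf K$-block, and $\ker M=\{0\}\oplus\ker\mathsf A_{22}$. This is exactly your target characterization ($u\perp\mathsf K$, $\mathsf Au\in\mathsf K$), and \eqref{eq:transdet} then gives \eqref{eq:cayleyimage}.
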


\begin{proof}
$W$ is unitary, so $\mathsf K$ and $\mathsf K^\perp$ are both $W$-invariant and $1\notin\spec W'$. Hence $\mathsf B$ is
defined and Hermitian.

(i) Left translation by $U_1^{-1}$ carries $\Sigma_{U_1}$ to $\Sigma_{\mathbf 1}$ and $\Sigma_{U_2}$ to $\Sigma_W$. The Cayley transform
\[
  \mathrm{cay}:\Herm(\kp)\longrightarrow\U(\kp)\setminus\Sigma_{\mathbf 1},\qquad
  \mathrm{cay}(\mathsf A_0)=(\mathsf A_0-i)(\mathsf A_0+i)^{-1},
\]
is a diffeomorphism with inverse $V\mapsto i(\mathbf 1+V)(\mathbf 1-V)^{-1}$, no eigenvalue $(a-i)/(a+i)$ equalling $1$ for real $a$ \cite[Ch.~1]{BHS20}. So $U\mapsto\mathsf A_0$, and with it $U\mapsto\mathsf A=\mathsf A_0-\mathsf B$, is a diffeomorphism of $\U(\kp)\setminus\Sigma_{U_1}$ onto $\Herm(\kp)$. For the second wall, $\mathrm{cay}(\mathsf A_0)=\mathbf 1-2i(\mathsf A_0+i)^{-1}$ gives
\begin{equation}\label{eq:caydiff}
  \mathrm{cay}(\mathsf A_0)-W=\bigl[(\mathbf 1-W)\mathsf A_0-i(\mathbf 1+W)\bigr](\mathsf A_0+i)^{-1},
\end{equation}
so the two sides have the same kernel dimension. In the splitting $\mathsf K\oplus \mathsf K^\perp$, where $\mathbf 1-W$ vanishes on the first summand and $\mathbf 1+W$ is $2$ there, the right-hand factor is block lower triangular:
\[
  (\mathbf 1-W)\mathsf A_0-i(\mathbf 1+W)
  =\begin{pmatrix}-2i\,\mathbf 1_{\kp_\cap} & 0\\[3pt]
     (\mathbf 1-W')(\mathsf A_0)_{21} & (\mathbf 1-W')\bigl((\mathsf A_0)_{22}-\mathsf B'\bigr)
   \end{pmatrix},\qquad
  \mathsf B':=i(\mathbf 1+W')(\mathbf 1-W')^{-1},
\]
$W'$ being normal, so that $(\mathbf 1-W')^{-1}$ and $\mathbf 1+W'$ commute. Its determinant is $(-2i)^{\kp_\cap}\det(\mathbf 1-W')\det\mathsf A_{22}$ and its kernel is isomorphic to $\ker\mathsf A_{22}$, the block on $\mathsf K$ being invertible. With \eqref{eq:transdet} this gives \eqref{eq:cayleyimage}.

(ii) The condition $\det\mathsf A_{22}\ne0$ constrains one block of $\mathsf A$, so \eqref{eq:cayleyimage} is the product of the contractible $\Herm(\mathsf K)\times\mathrm{Hom}(\mathsf K^\perp,\mathsf K)$ with the invertible elements of $\Herm(\mathsf K^\perp)$. By Sylvester's law of inertia the latter fall into $\kp-\kp_\cap+1$ classes by signature, each nonempty and connected, being an orbit of the connected group $\mathrm{GL}(\mathsf K^\perp)$ acting by congruence; so the chambers are open, nonempty and connected.

(iii) Both endpoints contain $\cB_{\mathrm{high}}$ by \eqref{eq:highinlimit}, so
$\kp_\cap\ge\dim\cB_{\mathrm{high}}=N\lfloor p/2\rfloor$, which is $\kp$ at even $p$. At $p=1$,
$\Th_t=\mathbf 1$ and the endpoints are the frozen data themselves; $\sigma_p$ is then odd, so
$M_\infty(x,0;-)=-M_\infty(x,0;+)$ and $\Lm^+(-)$ is the \emph{unstable} subspace of
$M_\infty(x,0;+)$, complementary to $\Lm^+(+)$ by Lemma~\ref{lem:hermODE}\textup{(b)}. The
chamber counts are (ii).
\end{proof}

\paragraph{Reduction of the edge spectrum.}\label{ssec:edgeset}

We can now recast the spectrum of $\Hh_L(\xi)$, an unbounded operator on a half line, as the
vanishing of one determinant on $\U(\kp)$, together with the discreteness and multiplicity bounds
that allow us to count eigenvalues. That reduction allows for finite-dimensional computations.

\begin{lemma}[no essential spectrum in the gap]\label{lem:essspec}
Let $H$ and $L$ be $x$-independent and let $L$ be a termination. Then for every $\xi\in\R$,
$\spec_{\mathrm{ess}}\Hh_L(\xi)\subseteq\spec\Hh^\natural(\xi)\subseteq\spec H^{\R^2}$, so
$\spec_{\mathrm{ess}}\Hh_L(\xi)\cap\Delta=\emptyset$.
\end{lemma}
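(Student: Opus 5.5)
The plan is to compare $\Hh_L(\xi)$ with a decoupled operator and use that finite-rank resolvent differences preserve essential spectrum. Fix $\xi$. By \textup{[H4]} the fiber operator $\Hh(\xi)$ on $\R_+$ has constant coefficients, equal to those of $\Hh^\natural(\xi)$, for $y\ge y_0$. I would first use a Glazman-type decomposition. Choose $y_1>y_0$ and impose on $[0,y_1]$ and $[y_1,\infty)$ some auxiliary self-adjoint condition at $y=y_1$ for both pieces, for instance a Lagrangian pair obtained by Proposition~\ref{prop:U}. This gives the direct sum $\Hh_{L}^{[0,y_1]}(\xi)\oplus\Hh_{D}^{[y_1,\infty)}(\xi)$.

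This sum differs from $\Hh_L(\xi)$ only through finitely many boundary conditions at $y_1$. The domains of both operators are restrictions of the common maximal operator, whose domain on $H^p$ pieces differs by the $2pN$-dimensional Cauchy data at $y_1$. Both are therefore self-adjoint extensions of a common symmetric operator with finite deficiency indices. Krein's resolvent formula then makes their resolvent difference finite rank, and Weyl's theorem gives equal essential spectra.

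Next, the piece on $[0,y_1]$ is an elliptic self-adjoint ODE realization on a bounded interval with a Lagrangian condition at each end. Its domain lies in $H^p(0,y_1)$, which embeds compactly in $L^2$, so its resolvent is compact and its spectrum discrete. The essential spectrum is thus that of the constant-coefficient half-line operator $\Hh_D^{[y_1,\infty)}(\xi)$.

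By the same finite-rank argument, this half-line operator has the same essential spectrum as the whole-line operator $\Hh^\natural(\xi)$. One way to see this is to decouple $\R$ at $y_1$ into two half lines, which changes the resolvent by a finite-rank term. Another is a Weyl-sequence argument: any singular sequence for $\Hh_D^{[y_1,\infty)}$ at $E$ can be pushed to $+\infty$ by a cutoff, with commutator error small since it involves only derivatives of the cutoff on bounded windows, and translated. This yields approximate eigenfunctions of $\Hh^\natural(\xi)$, so $E\in\spec\Hh^\natural(\xi)$. The second method gives the inclusion directly, which is all that is needed.

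Finally, $\Hh^\natural(\xi)$ is a constant-coefficient operator on $\R$. Its Fourier symbol is $a^w(y_0;\xi,\zeta)$, so $\spec\Hh^\natural(\xi)=\bigcup_\zeta\spec a^w(y_0;\xi,\zeta)\subseteq\spec H^{\R^2}$. This is disjoint from $\Delta$ by \textup{[H3]}.

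The main obstacle is the rigorous finite-rank comparison for a general Lagrangian $L(\xi)$ and order $p$. I would handle it as follows. All operators involved are self-adjoint restrictions of the maximal operator on $H^p$-type domains (Proposition~\ref{prop:SA}\textup{(i)} gives self-adjointness of $\Hh_L(\xi)$). Their common restriction is the operator with all Cauchy data at $y_1$ set to zero and $\gamma u\in L(\xi)$ at $0$, which has finite codimension in each domain. The difference of resolvents is then at most rank $pN$, by the standard argument that two self-adjoint extensions of a symmetric operator with finite deficiency have finite-rank resolvent difference.

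If this bookkeeping is delicate, I would fall back on the Weyl-sequence argument applied directly to $\Hh_L(\xi)$. A singular sequence for $E\in\spec_{\mathrm{ess}}$ converges weakly to $0$. By local compactness, which follows from elliptic regularity $H^p\hookrightarrow L^2$ on bounded intervals, its mass on $[0,R]$ tends to $0$. It is therefore asymptotically supported at large $y$, where the operator is $\Hh^\natural(\xi)$, and after a smooth cutoff it yields a Weyl sequence for $\Hh^\natural(\xi)$. The commutator with the cutoff is controlled by lower-order derivatives, which are bounded by the graph norm.
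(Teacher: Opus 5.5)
Your proposal is correct. Your main route, however, differs from the paper's: the argument you hold in reserve as a fallback is exactly the paper's proof.

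The paper takes a singular Weyl sequence $u_n$ for $\Hh_L(\xi)$ at $E$. It uses the invertibility of $A_p$ to bound $u_n$ in $H^p(0,R)$, and Rellich to get $u_n\to0$ in $H^{p-1}(0,R)$. It then shows that $\chi u_n$, with $\chi=0$ for $y\le y_0$ and $\chi=1$ for $y\ge y_0+1$, is a Weyl sequence for $\Hh^\natural(\xi)$. The commutator $[\Hh(\xi),\chi]u_n$ tends to zero because it involves only derivatives of order $\le p-1$ on $(y_0,y_0+1)$. Note that this step needs those derivatives to tend to zero, not merely to be bounded by the graph norm as your last sentence says. Local compactness, or interpolation between the $H^p$ bound and the vanishing $L^2$ mass, supplies that.

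Your main route is different. You decouple at $y_1>y_0$ with an auxiliary Lagrangian of the Green form at height $y_1$; one exists because that form has signature zero, and the bulk $\Lm^+$ is one. You then identify $\Hh_L(\xi)$ and the decoupled sum as self-adjoint extensions of a common closed symmetric operator with deficiency indices $(pN,pN)$. Krein's formula and Weyl's theorem follow. You discard the compact-resolvent piece on $[0,y_1]$ and compare the constant-coefficient half line with the line by the same device.

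This is heavier. It needs the maximal domains to be $H^p$ on each piece, the deficiency count, and equivalence of the graph norm with the $H^p$ norm on the bounded interval. In return it gives more than the lemma: the equality of $\spec_{\mathrm{ess}}\Hh_L(\xi)$ with the essential spectrum of the constant-coefficient half-line piece. Hence the essential spectrum is independent of the termination and of the collar. The paper needs only the inclusion, for which the Weyl-sequence argument is shorter and self-contained.
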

\begin{proof}
Let $E\in\spec_{\mathrm{ess}}\Hh_L(\xi)$ and let $(u_n)$ be a singular Weyl sequence: $\norm{u_n}=1$, $u_n\rightharpoonup0$
and $(\Hh_L(\xi)-E)u_n\to0$. On each interval $(0,R)$ the equation bounds $u_n$ in $H^p(0,R)$, the leading
coefficient $A_p$ being invertible, so $u_n\to0$ in $H^{p-1}(0,R)$ by compact embedding, after passing to a subsequence. Let $\chi(y)$ be smooth,
equal to $0$ for $y\le y_0$ and to $1$ for $y\ge y_0+1$, and extend $\chi u_n$ by zero to $\R$. The coefficients
of $\Hh(\xi)$ and $\Hh^\natural(\xi)$ agree for $y\ge y_0$, so
$(\Hh^\natural(\xi)-E)(\chi u_n)=\chi(\Hh_L(\xi)-E)u_n+[\Hh(\xi),\chi]u_n\to0$, the commutator involving only
derivatives of order $\le p-1$ of $u_n$ on $(y_0,y_0+1)$, while $\norm{\chi u_n}\to1$. Hence
$E\in\spec\Hh^\natural(\xi)$. By Fourier transform in $y$, $\spec\Hh^\natural(\xi)=\bigcup_\zeta\spec a^w(y_0;\xi,\zeta)$,
which does not meet $\Delta$ by \textup{[H3]}.
\end{proof}

We now identify edge spectrum with pullback of the Maslov cycle.
\begin{theorem}\label{thm:edge}
Let $H$ and $L$ be $x$-independent and $E\in\Delta$. Then
\begin{equation}\label{eq:edgespec}
  E\in\spec\Hh_L(\xi)\iff L(\xi)\cap\Lm^+(\xi,E)\ne\{0\},
\end{equation}
with multiplicity $\dim\bigl(L(\xi)\cap\Lm^+(\xi,E)\bigr)\le\kp$, and only finitely many eigenvalues lie in each compact $K\subset\Delta$. In the chart \eqref{eq:UL} of the transported data, the right-hand side of \eqref{eq:edgespec} is the vanishing of the edge determinant \eqref{eq:Fanalytic}, $\det\bigl(U_L(\xi)-U_+(\xi,E)\bigr)=0$.

For a transported pointwise termination $U_L$ is a single point of $\U(\kp)$, and the edge spectrum in
$\R_\xi\times\Delta$ is the preimage of the wall $\Sigma_{U_L}$ under $(\xi,E)\mapsto U_+(\xi,E)$.

The edge states are \emph{exponentially localized}: a solution with data in $\Lm^+$ decays into the interior by definition, and \textup{[H3]} forces that decay to be exponential at a positive rate, uniform for $(\xi,E)$ in compact sets, the characteristic roots of $\Hh^\natural(\xi)-E$ being non-real (Lemma~\ref{lem:hermODE}).
\end{theorem}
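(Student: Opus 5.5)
The proof reduces the eigenvalue problem for $\Hh_L(\xi)$ at energy $E\in\Delta$ to the ODE on $\R_+$ and reads it off in Cauchy data.

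\emph{The equivalence.} An eigenfunction $u$ of $\Hh_L(\xi)$ with eigenvalue $E$ is an $H^p(\R_+)$ solution of $(\Hh(\xi)-E)u=0$ with $\gamma u\in L(\xi)$. Since $u\in L^2(\R_+)$, Definition~\ref{def:calderon} gives $\gamma u\in\Lm^+(\xi,E)$. Conversely, take $w\in L(\xi)\cap\Lm^+(\xi,E)$. By the proof of Theorem~\ref{thm:calderonB} there is a unique $L^2$ solution $\psi$ with $\gamma\psi=w$. For $y\ge y_0$ it is a combination of the decaying exponential solutions of $\Hh^\natural(\xi)-E$, and on $[0,y_0]$ it is smooth, so $\psi\in H^p(\R_+)$. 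Hence $\psi$ lies in the domain and is an eigenfunction.

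\emph{Multiplicity and finiteness.} The map $\psi\mapsto\gamma\psi$ is injective on solutions by ODE uniqueness, $A_p$ being invertible. So the eigenspace is isomorphic to $L(\xi)\cap\Lm^+(\xi,E)$, and its dimension is at most $\dim\Lm^+=\kp$. For finiteness in a compact $K\subset\Delta$:
\begin{itemize}
\item Lemma~\ref{lem:essspec} shows that $\spec\Hh_L(\xi)\cap\Delta$ is discrete spectrum, so it has no accumulation point in $\Delta$.
\item Alternatively, $F^\Phi(\xi,\cdot)$ is real-analytic in $E$ by Lemma~\ref{lem:analytic}. It is not identically zero, since self-adjointness forbids $\Delta$ from lying entirely in the point spectrum. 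Its zeros are therefore isolated, and only finitely many lie in $K$.
\end{itemize}

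\emph{The chart statement.} The subspace $L(\xi)\cap\Lm^+(\xi,E)$ is nonzero if and only if $\Phi^{-1}L\cap\Phi^{-1}\Lm^+$ is nonzero, because $\Phi(\xi)$ is invertible. By \eqref{eq:transdet} in $\Lag(\cA(0))$, this holds if and only if $\det(U_L(\xi)-U_+(\xi,E))=0$, with the same kernel dimension. For a transported pointwise termination $U_L$ is constant, so the zero set is $\{(\xi,E):U_+(\xi,E)\in\Sigma_{U_L}\}$ by \eqref{eq:wall}.

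\emph{Exponential localization.} For $y\ge y_0$ the solution is $e^{(y-y_0)M^\natural}\Psi(y_0)$ restricted to the stable subspace. By [H3] the spectrum of $M^\natural(\xi,E)$ avoids $i\R$. By continuity of eigenvalues in $(\xi,E)$, the real parts of the stable eigenvalues are bounded above by $-\nu<0$ uniformly on compacts, which gives decay at rate $e^{-\nu' y}$ for any $\nu'<\nu$.

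The main obstacle is modest: making sure that membership in the domain $\{u\in H^p:\gamma u\in L(\xi)\}$ matches the definition of $\Lm^+$ by $L^2$ solutions. This is handled by the regularity of ODE solutions with constant coefficients beyond $y_0$.
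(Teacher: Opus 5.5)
Your proof is correct and follows the paper's route: an eigenfunction is an $L^2(\R_+)$ solution with Cauchy data in $L(\xi)$, injectivity of $\gamma$ identifies the eigenspace with $L(\xi)\cap\Lm^+(\xi,E)$, finiteness in $K$ comes from Lemma~\ref{lem:essspec}, and \eqref{eq:transdet} after transport by $\Phi(\xi)^{-1}$ gives the determinant form. The one step you should make explicit is that \eqref{eq:transdet} applies only because both $L(\xi)$ and $\Lm^+(\xi,E)$ are $\cA(\xi)$-Lagrangian, which the paper takes from Proposition~\ref{prop:SA}\textup{(i)} and Theorem~\ref{thm:calderonB}.
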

\begin{proof}
That the eigenvalues are the parameters at which the boundary condition meets the Cauchy-data space is classical as the zeros of a Weyl-function determinant \cite[Ch.~2--3]{BHS20}, or the pullback of the Maslov cycle \cite{Nicolaescu95}. By Lemma~\ref{lem:essspec}, $\spec\Hh_L(\xi)\cap\Delta$ consists of isolated eigenvalues of finite multiplicity. An eigenfunction at $E\in\Delta$ is an $L^2(\R_+)$ solution of $(\Hh(\xi)-E)\psi=0$ with Cauchy data in $L$; since $\gamma$ is injective on such solutions, the eigenspace is isomorphic to $L\cap\Lm^+(\xi,E)$, which is \eqref{eq:edgespec}, and its dimension is at most $\dim L=\kp$. Both $L$ and $\Lm^+(\xi,E)$ are $\cA(\xi)$-Lagrangian, the first by Proposition~\ref{prop:SA}\textup{(i)}, the second by Theorem~\ref{thm:calderonB}. So \eqref{eq:transdet} applied after transport by $\Phi(\xi)^{-1}$, which preserves intersection dimensions, turns non-transversality into the vanishing of $\det\bigl(U_L(\xi)-U_+(\xi,E)\bigr)$, with the multiplicity as stated.

Finiteness in compact subsets of $\Delta$ comes from Lemma~\ref{lem:essspec}. Accumulation at $\partial\Delta$ is possible.
\end{proof}

\section{Existence, quantization and stability of $\mathcal I$}\label{sec:obs}

We now show conditions under which $\mathcal I$ is well defined, quantized and stable, along two routes.

The \emph{elliptic route} takes as hypothesis the uniform Shapiro--Lopatinski estimate and assumes that the
$B_j$ of \eqref{eq:traceop} are differential along $\Gam$ with $C_b^\infty$ coefficients and that $T$ has a bounded right inverse $R_T:\prod_iH^{p-m_i-1/2}(\Gam)\to H^p(\Om)$, $m_i$ the row orders of $T$. Normal trace operators have one \cite[Prop.~1.6.5, Rem.~1.6.5$'$]{Grubb96}, and Appendix~\ref{app:import} constructs one for the non-normal rows $\gamma_0+iB\gamma_1$ of the examples. It includes $x$-dependent perturbations on a
bounded set. We show localization of $f(H_L)$ near the boundary (Theorem~\ref{thm:sG}), the trace-class
property (Theorem~\ref{thm:traceclass}) and the index formula (Theorem~\ref{thm:quantisation}), and then
stability along paths of elliptic realizations (Corollary~\ref{cor:stab}) and under changes on a bounded set
(Theorem~\ref{thm:local}). When the principal symbol does not depend on $y$ on the collar ($0\le y\le y_0$), the lower-order coefficients there do not contribute to the invariant
(Theorem~\ref{thm:collar}).

The \emph{spectral route} applies to $x$-independent operators but is more general otherwise. A fiber argument gives
existence and quantization under the escape condition alone (Theorem~\ref{thm:fiberexist}). Escape is implied
by admissibility when the Calder\'on path converges to its endpoints uniformly in energy, and by ellipticity
(Remark~\ref{rem:escape}); without it the density of states does not localize (Proposition~\ref{prop:exact}).
This route gives no stability under $x$-dependent perturbations (Remark~\ref{rem:escape}).

\paragraph{Functional calculus at the boundary.}\label{ssec:noparam}

Write
\begin{equation}\label{eq:res}
  R_L(z):=(H_L-z)^{-1},\qquad R_b(z):=\bigl(H^{\R^2}-z\bigr)^{-1},\qquad
  \cG(z):=R_L(z)-r^+R_b(z)e^+ ,
\end{equation}
$\cG(z)$ being the part of the resolvent due to the boundary and the collar. The BdM boundary calculus
enters this section for elliptic trace operators with differential
$B_j$ in  the uniform local regularity of Lemma~\ref{lem:ellreg} and in the bounded right inverse $R_T$ of the
trace operator in the proof of Lemma~\ref{lem:CT}.

\paragraph{Localization and trace-class property under ellipticity.}\label{ssec:trace}

We start with a standard regularity result.
\begin{lemma}\label{lem:ellreg}
Call $B\subset\overline\Om$ a \emph{unit box} if $B=([x_1,x_1+1]\times[y_1,y_1+1])\cap\overline\Om$, and write $B^*$ for the concentric box of side $3$. Assume \textup{[H1]--[H2]} and uniform Shapiro--Lopatinski ellipticity.
\begin{enumerate}[label=\textup{(\roman*)},nosep]
\item There are $C_n$ \emph{independent of the box} with
$\norm w_{H^{pn}(B)}\le C_n\sum_{j\le n}\norm{H^jw}_{L^2(B^*)}$ for every unit box $B$ and every
$w\in\cD(H_L^n)$. For $n=1$ the same holds, with $w\in\cD_{L'}$, for every pair $(H',T')$, symmetric or not, with the principal symbols and ellipticity bounds of $(H,T)$ and bounded lower-order coefficients.
\item If $s>1+k$ and $T$ is bounded $H^{-s}(B')\to H^{s}(B)$ for unit boxes $B,B'$, then its kernel is of class
$C^k$ on $B\times B'$ with $|\partial_w^\alpha\partial_{w'}^\beta K_T(w,w')|\le C_s\norm{T}_{H^{-s}(B')\to H^{s}(B)}$ for
$|\alpha|+|\beta|\le k$, with $C_s$ independent of the boxes.
\end{enumerate}
\end{lemma}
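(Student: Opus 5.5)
Part (i) will follow from the local a priori estimates of the Boutet de Monvel calculus, made uniform over boxes by \textup{[H4]} and the $C_b^\infty$ bounds on the coefficients, with $n$ handled by induction. Part (ii) is a Sobolev embedding for kernels.

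For (i) at $n=1$, I will take a cutoff $\theta_B\in C_c^\infty(B^*)$ equal to $1$ on $B$. It is a translate of a fixed function, so its derivatives are bounded independently of $B$. For $w\in\cD_L$, the function $\theta_Bw$ lies in $H^p$, and
\[
(H-i)(\theta_Bw)=\theta_B(H-i)w+[H,\theta_B]w,\qquad T(\theta_Bw)=\theta_BTw+[T,\theta_B]w=[T,\theta_B]w .
\]
The commutators have order at most $p-1$ in the interior and strictly lower Boutet de Monvel order in the trace.

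Uniform Shapiro--Lopatinski ellipticity gives the global estimate
\[
\norm u_{H^p(\Om)}\le C\bigl(\norm{(H-i)u}+\sum_i\norm{(Tu)_i}_{H^{p-m_i-1/2}}\bigr),
\]
with $C$ depending only on the principal symbols, the ellipticity bound and $C_b^\infty$ norms. This is the estimate used in the proof of Proposition~\ref{prop:SA}\textup{(ii)}, via $R_T$ and $\norm{u_0}\le|\Im z|^{-1}\norm{(H-z)u_0}$. Hence the same constant serves for any $(H',T')$ sharing principal symbols and ellipticity bounds, which gives the second claim of (i).

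Applying the estimate to $\theta_Bw$ yields
\[
\norm{w}_{H^p(B)}\le C\bigl(\norm{Hw}_{L^2(B^*)}+\norm w_{L^2(B^*)}+\norm w_{H^{p-1}(B^*)}\bigr).
\]
I then absorb the $H^{p-1}$ term. I will use a nested family of boxes between $B$ and $B^*$ and combine interpolation $\norm w_{H^{p-1}}\le\delta\norm w_{H^p}+C_\delta\norm w_{L^2}$ with a standard iteration lemma over shrinking cutoffs, as in Gilbarg--Trudinger. A cleaner alternative is a finite chain of $p$ boxes, each step gaining one derivative. For the trace commutator, $[T,\theta_B]w$ involves only $\gamma_j w$ multiplied by derivatives of $\theta_B$ at lower order, and it is controlled by the trace theorem on the slightly larger box.

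For general $n$ I induct. If $w\in\cD(H_L^n)$, then $Hw\in\cD(H_L^{n-1})$. Higher regularity comes from the elliptic shift estimate
\[
\norm u_{H^{p(k+1)}}\lesssim\norm{Hu}_{H^{pk}}+\norm u_{L^2}+\norm{Tu}_{\dots},
\]
again from the calculus, with uniform constants. I apply it localized as above, with $Tw=0$, and then use the induction hypothesis on $Hw$.

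The main obstacle is keeping the localization compatible with the boundary condition when $T$ has tangential derivatives. In that case $\theta_Bw\notin\cD_L$, so the estimate must be used in its inhomogeneous form with the $Tu$ term. The commutator term $[T,\theta_B]w$ must also be shown to be of lower order in the right trace norms.

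For (ii), write $\partial_w^\alpha\partial_{w'}^\beta K_T(w,w')=\ip{\delta_w^{(\alpha)}}{T\delta_{w'}^{(\beta)}}$. Since $s>1+k$ in dimension two, the derivatives of delta functions $\delta^{(\beta)}_{w'}$ with $|\beta|\le k$ lie in $H^{-s}(B')$, with norm bounded uniformly in $w'$ and in the box (extension operators on boxes are translation invariant, including boxes touching $\Gam$). Then $T\delta^{(\beta)}_{w'}\in H^s(B)\hookrightarrow C^k$. Continuity of the kernel follows from the continuity of $w'\mapsto\delta_{w'}^{(\beta)}$ in $H^{-s}$, which holds since $s>1+k$ leaves a margin, and this gives the stated bound.
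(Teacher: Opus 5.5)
\paragraph{The gap: the starting estimate.} Your proof of (i) rests on the global estimate
\[
\norm u_{H^p(\Om)}\le C\Bigl(\norm{(H-i)u}+\sum_i\norm{(Tu)_i}_{H^{p-m_i-1/2}}\Bigr),
\]
and that estimate is not available at this point.

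\begin{itemize}
\item \emph{The justification is circular.} You justify it by the proof of Proposition~\ref{prop:SA}\textup{(ii)}, but that proof obtains it from Lemma~\ref{lem:ellreg}\textup{(i)} itself. The same is true of \eqref{eq:aprioriconj} and \eqref{eq:apriori-unif}, which are the box estimate summed over unit boxes.
\item \emph{It uses hypotheses the lemma does not have.} It needs a bounded right inverse $R_T$ and symmetry of the realization. Neither is among the hypotheses.
\item \emph{Symmetry is what removes the $L^2$ term.} It enters through $\norm{u_0}\le|\Im z|^{-1}\norm{(H-z)u_0}$. This is why your step ``the same constant serves for any $(H',T')$'' fails for the non-symmetric pairs the lemma explicitly allows. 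Take $H'=-\Delta+i$ with the Dirichlet condition: the principal symbol and ellipticity bounds are those of $-\Delta$, and the lower-order coefficient is bounded. Yet $\norm{(H'-i)u}=\norm{\Delta u}$ does not control $\norm u_{H^2}$, since $0$ lies in the spectrum of the Dirichlet Laplacian on the half plane.
\end{itemize}

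So the box-independence of the constant, which is the actual content of (i), is assumed rather than proved.

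\paragraph{What is needed, and how the paper supplies it.} You need a G{\aa}rding-type estimate
\[
\norm u_{H^p}\le C\Bigl(\norm{Hu}+\sum_i\norm{(Tu)_i}_{H^{p-m_i-1/2}}+\norm u\Bigr),
\]
produced directly by a parametrix, with $C$ controlled by the ellipticity data alone. This form carries the $L^2$ term and needs no symmetry. Bounded lower-order changes of $(H,T)$ are then absorbed by interpolation.

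The paper obtains it as follows.
\begin{itemize}
\item It takes the estimate on a fixed box from \cite[Prop.~1.4.2]{Grubb96}, after first bringing rows of order $\ge p$ below order $p$ by tangential operators.
\item It makes the constant box-independent by freezing the coefficients at the centre of the box. There the constant depends only on the Shapiro--Lopatinski bound, the ellipticity constant and $\norm{A_k}_\infty$.
\item The $O(\operatorname{diam}B^*)$ variation of the coefficients is absorbed on smaller boxes; equivalently, one uses precompactness of the translated local data.
\end{itemize}
A global version from a parametrix in the calculus with $x$-uniformly estimated symbols would also be a legitimate starting point, if you prefer to go from global to local.

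\paragraph{The rest of your plan.} Once you start from such an estimate, the following go through:
\begin{itemize}
\item your cutoff--commutator--absorption scheme, in the iteration-lemma version;
\item the induction in $n$ via a shift estimate, provided that estimate also comes from the parametrix, since it must supply the regularity $w\in H^{pn}$ and not only the bound;
\item your part (ii), which is the paper's argument.
\end{itemize}
Drop the ``finite chain of $p$ boxes, each gaining one derivative.'' It would require estimates in terms of negative-order norms of $Hu$ up to the boundary, which you have not got.
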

\begin{proof}
(i): On a fixed box this is the a priori estimate for an elliptic realization, given by its parametrix mapping into the domain \cite[Prop.~1.4.2]{Grubb96}; rows of $T$ of order $\ge p$ are first brought to order $<p$ by composition with pseudodifferential operators over $\Gam$, which leaves the realization unchanged \cite[\S1.4]{Grubb96}. The constant is uniform in the
box: freezing the coefficients at its centre, it depends only on the uniform Shapiro--Lopatinski
bound, the ellipticity constant and $\norm{A_k}_\infty$, and the difference, of size
$O(\mathrm{diam}\,B^*)$, is absorbed on boxes of a fixed smaller size (equivalently, by precompactness in $C^k$ of the translated local data). For $n>1$ apply this to
$H^jw$, $j<n$, on nested boxes. (ii) is Sobolev embedding in dimension two: $\partial^\beta\delta_{w'}\in H^{-s}$ uniformly in $w'$
for $s>1+|\beta|$, and $H^s\subset C^k$ for $s>1+k$.
\end{proof}

The next lemma is a Combes--Thomas estimate \cite{CombesThomas} in the form a boundary problem
requires. Conjugating by an exponential weight $e^{\alpha\phi}$, $\phi=\phi(x,y)$, we have
\begin{equation}\label{eq:conjshift}
  e^{\alpha\phi}\,D\,e^{-\alpha\phi}=D+i\alpha\nabla\phi .
\end{equation}
Write $H_\alpha$ and
$T_\alpha$ for $H$ and $T$ with $D$ replaced by $D+i\alpha\nabla\phi$, so that, for $\phi=\phi(x)$,
$L_\alpha(x,\xi)=L\bigl(x,\xi+i\alpha\phi'(x)\bigr)$, and
\[
  e^{\alpha\phi}\,H_T\,e^{-\alpha\phi}=\bigl(H_\alpha\bigr)_{T_\alpha}
  \qquad\text{on}\qquad \cD_{L_\alpha}=e^{\alpha\phi}\cD_L .
\]
For a {pointwise} termination and $\phi=\phi(x)$, $L$ does not depend on $\xi$, so $T_\alpha=T$. For a termination differential along $\Gam$ the domain does move, and the conjugated
realization is self-adjoint only for the weighted inner product $\langle e^{-\alpha\phi}\cdot,e^{-\alpha\phi}\cdot\rangle$, whose comparison with $L^2$ costs the factor $e^{2|\alpha|\norm\phi_\infty}$. So $\norm{(H_T-z)^{-1}}\le1/|\Im z|$ coming from
self-adjointness must be replaced, for the conjugated operator, by an a priori
estimate.

\begin{lemma}\label{lem:CT}
Let $T$ be a boundary condition \eqref{eq:traceop} defining a termination, with $(H,T)$
uniformly Shapiro--Lopatinski elliptic, so that its realization $H_T$, closed by Lemma~\ref{lem:ellreg}\textup{(i)}, is self-adjoint. Then there
are $c,C>0$ depending only on $p$, $N$, the coefficient bounds and the ellipticity constants
such that for every bounded $\phi=\phi(x,y)\in C^\infty$ with $|\partial^\beta\phi|\le1$ for
$1\le|\beta|\le p$, every $z$ in a fixed
bounded set with $d:=\dist(z,\spec H_T)\in(0,1]$, and every $|\alpha|\le c\,d$,
\[
  \norm{e^{\alpha\phi}(H_T-z)^{-1}e^{-\alpha\phi}}\le C/d .
\]
Consequently, for bounded functions $g_1,g_2$ on $\Om$ whose supports are at distance $\rho$, the multiplication operators satisfy, in operator norm on $L^2(\Om;\C^N)$,
\begin{equation}\label{eq:CTdecay}
  \bigl\|g_1(H_T-z)^{-1}g_2\bigr\|\le\frac Cd\norm{g_1}_\infty\norm{g_2}_\infty e^{-c\,d\,\rho}.
\end{equation}
The same holds for $H^{\R^2}$ in the absence of any boundary condition.
\end{lemma}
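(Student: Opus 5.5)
The plan is to treat the conjugated realization as a perturbation of $H_T-z$ and to use the uniform elliptic a priori estimate of Lemma~\ref{lem:ellreg}\textup{(i)} in place of the bound $1/|\Im z|$, which is lost because $(H_\alpha)_{T_\alpha}$ is not self-adjoint in $L^2$.

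\textbf{Step 1 (the conjugated system).} By \eqref{eq:conjshift}, $H_\alpha-H$ is a differential operator of order $\le p-1$ whose coefficients are polynomials in $\alpha$ and in derivatives of $\phi$ of order $\le p$, each term carrying at least one factor $\alpha$. Since $|\partial^\beta\phi|\le1$ and $|\alpha|\le c\,d\le c$, we get $\norm{(H_\alpha-H)w}\le C|\alpha|\norm w_{H^{p-1}}$. Likewise $T_\alpha-T$ is a differential trace operator of lower order in each row, with coefficients $O(|\alpha|)$. Thus $(H_\alpha,T_\alpha)$ has the principal symbols and ellipticity bounds of $(H,T)$ and bounded lower-order coefficients, so Lemma~\ref{lem:ellreg}\textup{(i)} in its $n=1$ form for pairs $(H',T')$ applies uniformly in $\alpha$.

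\textbf{Step 2 (reduce to the unperturbed domain).} For $w\in\cD_{L_\alpha}$, put $u:=w-R_T(T w)=w-R_T\bigl((T-T_\alpha)w\bigr)$, which lies in $\cD_L$. Then
\[
  \norm{(H_T-z)u}\ge d\norm u .
\]
Boundedness of $R_T$ and Step~1 give $\norm{w-u}_{H^p}\le C|\alpha|\norm{w}_{H^{p-1/2}}$, with the trace losing a half derivative. Hence
\[
  d\norm w\le\norm{(H_\alpha-z)w}+C|\alpha|\bigl(\norm w_{H^{p-1}}+\norm w_{H^{p-1/2}}\bigr).
\]
To close the estimate, bound $\norm w_{H^p}\le C\bigl(\norm{(H_\alpha-z)w}+\norm w\bigr)$ by summing Lemma~\ref{lem:ellreg}\textup{(i)} over a cover by unit boxes with bounded overlap, which works because the constants are box-independent. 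This yields
\[
  d\norm w\le C\norm{(H_\alpha-z)w}+C'|\alpha|\norm w ,
\]
and choosing $c\le 1/(2C')$ absorbs the last term, giving $\norm w\le (2C/d)\norm{(H_\alpha-z)w}$. Surjectivity of $(H_\alpha)_{T_\alpha}-z$ follows from this estimate along the path $t\alpha$, $t\in[0,1]$, by the method of continuity, starting from the self-adjoint case $t=0$. Alternatively, invertibility is immediate from the similarity $e^{\alpha\phi}(H_T-z)e^{-\alpha\phi}$ when $\phi$ is bounded, and only the norm bound is then needed. In either case $\norm{e^{\alpha\phi}(H_T-z)^{-1}e^{-\alpha\phi}}\le C/d$.

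\textbf{Step 3 (decay, \eqref{eq:CTdecay}).} Take $\phi$ to be a regularized, truncated distance to $\supp g_2$, smoothed at unit scale so that $|\partial^\beta\phi|\le1$. Arrange $\phi=0$ on $\supp g_2$ and $\phi\ge\rho-C_0$ on $\supp g_1$, and set $\alpha=c\,d$. Then
\[
  g_1(H_T-z)^{-1}g_2=g_1e^{-\alpha\phi}\bigl[e^{\alpha\phi}(H_T-z)^{-1}e^{-\alpha\phi}\bigr]e^{\alpha\phi}g_2 ,
\]
which gives the bound with $e^{-cd\rho}$ after absorbing $e^{cC_0}$ into $C$. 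The statement for $H^{\R^2}$ is the same argument with no trace operator.

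\textbf{Main obstacle.} The delicate point is Step~2: the boundary contribution $T_\alpha-T$ must be controlled uniformly, which requires both the right inverse $R_T$ with the correct Sobolev orders $\prod_iH^{p-m_i-1/2}$ and the uniform global elliptic estimate. One must check that the lower-order parts of $T_\alpha$ map into these spaces with norm $O(|\alpha|)$ for each row order $m_i$.
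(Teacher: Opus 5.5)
Your proposal is correct and follows the paper's proof essentially step for step: you correct the conjugated solution back into $\cD_L$ with the right inverse $R_T$, apply self-adjointness of $H_T$ to the corrected function and the box-summed a priori estimate for $(H_\alpha,T_\alpha)$ to absorb the $O(|\alpha|)$ errors, get invertibility from the similarity with $e^{\pm\alpha\phi}$, and conclude with a mollified truncated distance function. One small point: your intermediate bound by $\norm{w}_{H^{p-1/2}}$ is borderline, since the trace $\gamma_{p-1}$ is not bounded on $H^{p-1/2}$; this is harmless because you immediately dominate it by $\norm{w}_{H^p}$, which is exactly how the paper proceeds.
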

\begin{proof}
\emph{Step 1.} Since $\phi$ is bounded with bounded derivatives, $e^{\pm\alpha\phi}$ is bounded on $L^2$ and on
$H^p$, so $H_{T,\alpha}:=(H_\alpha)_{T_\alpha}$ is defined on $\cD_{L_\alpha}=e^{\alpha\phi}\cD_L$ and
$H_{T,\alpha}-z$ is a bijection onto $L^2$ with inverse $e^{\alpha\phi}(H_T-z)^{-1}e^{-\alpha\phi}$, to be bounded independently of $\norm\phi_\infty$.

\emph{Step 2.} By \eqref{eq:conjshift} the perturbations $H_\alpha-H$
and $T_\alpha-T$ are $O(\alpha)$ with coefficients polynomial in the derivatives of $\phi$, bounded
for $|\alpha|\le1$ because $|\nabla\phi|\le1$, and each is of one order less than the term it
perturbs; in $T_\alpha-T$ this includes the lower traces produced by $\partial_y\phi$ in $\gamma_j(e^{-\alpha\phi}u)$. So $(H_\alpha,T_\alpha)$ has the interior and boundary principal symbols of $(H,T)$,
uniform Shapiro--Lopatinski ellipticity persists for $|\alpha|\le\alpha_0$ with constants
independent of $\alpha$, and Lemma~\ref{lem:ellreg}\textup{(i)} with $n=1$, summed over unit boxes
whose triples have bounded overlap, gives
\begin{equation}\label{eq:aprioriconj}
  \norm w_{H^p(\Om)}\le C\bigl(\norm{H_{T,\alpha} w}+\norm w\bigr),\qquad w\in\cD_{L_\alpha},
\end{equation}
with $C$ independent of $\alpha$.

\emph{Step 3}. Write $\partial\cH:=\prod_iH^{p-m_i-1/2}(\Gam)$,
$m_i$ the orders of the rows of $T$. By hypothesis, $T$ has a bounded right inverse $R_T:\partial\cH\to H^p(\Om)$. By Step~2, $\norm{T-T_\alpha}_{H^p\to\partial\cH}\le C|\alpha|$.
For $f\in L^2$, put $\psi:=(H_{T,\alpha}-z)^{-1}f\in\cD_{L_\alpha}$ (so $T_\alpha\psi=0$) and $\psi_0:=\psi-R_TT\psi$. Then $T\psi_0=0$, so
$\psi_0\in\cD_L$, and $T\psi=(T-T_\alpha)\psi$ gives $\norm{\psi-\psi_0}_{H^p}\le C|\alpha|\norm \psi_{H^p}$, and
\[
  (H_T-z)\psi_0=(H-z)\psi-(H-z)(\psi-\psi_0)=f-(H_\alpha-H)\psi-(H-z)(\psi-\psi_0),
\]
so $\norm{(H_T-z)\psi_0}\le\norm f+C|\alpha|\norm \psi_{H^p}$. Self-adjointness of $H_T$ gives
$\norm{\psi_0}\le d^{-1}\norm{(H_T-z)\psi_0}$, while $\norm \psi\le\norm{\psi_0}+C|\alpha|\norm \psi_{H^p}$. Inserting
\eqref{eq:aprioriconj} in the form $\norm \psi_{H^p}\le C(\norm f+\norm \psi)$ into both, and using
$|\alpha|\le1$ and $d\le1$,
\[
  \norm \psi\ \le\ \frac{C_1}{d}\,\norm f+\frac{C_2|\alpha|}{d}\,\norm \psi ,
\]
so that $|\alpha|\le d/(2C_2)$ yields $\norm \psi\le 2C_1d^{-1}\norm f$, which is the assertion. For \eqref{eq:CTdecay}, let $\delta(w):=\min\bigl(\rho,\dist(w,\supp g_2)\bigr)$ on $\R^2$, a $1$-Lipschitz function with $\delta=0$ on $\supp g_2$ and $\delta=\rho$ on $\supp g_1$, and put $\phi:=c_0\,(\delta*\mathsf m)|_\Om$, with $\mathsf m\ge0$ a fixed mollifier supported in the unit ball, $\int\mathsf m=1$. Since $\partial^\beta(\delta*\mathsf m)=(\partial_k\delta)*\partial^{\beta-e_k}\mathsf m$ for $|\beta|\ge1$, a fixed $c_0\in(0,1]$ gives $|\partial^\beta\phi|\le1$ for $1\le|\beta|\le p$, while $|\delta*\mathsf m-\delta|\le1$; so $\phi\ge c_0(\rho-1)$ on $\supp g_1$ and $\phi\le c_0$ on $\supp g_2$. Writing $g_1(H_T-z)^{-1}g_2=\bigl(g_1e^{-\alpha\phi}\bigr)\bigl(e^{\alpha\phi}(H_T-z)^{-1}e^{-\alpha\phi}\bigr)\bigl(e^{\alpha\phi}g_2\bigr)$ with $\alpha=c\,d$ gives \eqref{eq:CTdecay}, with $c$ replaced by $c\,c_0$ and $C$ by $Ce^{2cc_0}$, as $d\le1$. For a pointwise termination and $\phi=\phi(x)$ the domain does not move, and the argument is that of \cite{CombesThomas}. For $H^{\R^2}$ there is no boundary condition and the same argument applies.
\end{proof}

We can now state the main localization estimate.
\begin{theorem}[localization of $f(H_L)$]\label{thm:sG}
Let $(H,L)$ be uniformly Shapiro--Lopatinski elliptic. Let $f\in C_c^\infty(K)$
for a fixed compact $K\subset\Delta$. Then the kernel of $f(H_L)$ is smooth with
\begin{equation}\label{eq:decay}
  \bigl|\partial^\alpha K_f(x,y;x',y')\bigr|\le C_{\alpha,M}\,\langle x-x'\rangle^{-M}e^{-c(y+y')}
  \qquad\text{for every }M ,
\end{equation}
where $C_{\alpha,M}=C'_{\alpha,M}\norm f_{C^{m(M)}}$ with $m(M):=4M+1$ and constants $C'_{\alpha,M}$ independent of
$f$, and $c>0$ is independent of $M$ and of $f$.
\end{theorem}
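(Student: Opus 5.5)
The plan is to represent $f(H_L)$ by the Helffer--Sj\"ostrand formula. Let $\tilde f$ be an almost analytic extension of $f$, supported in a fixed complex neighbourhood of $K$, with $|\bar\partial\tilde f(z)|\le C\norm f_{C^{m+1}}|\Im z|^m$. Then
\[
  f(H_L)=\frac1\pi\int_\C\bar\partial\tilde f(z)\,R_L(z)\,dm(z).
\]
Because $K\subset\Delta$ and $\Delta\cap\spec H^{\R^2}=\emptyset$ by \textup{[H3]}, the same integral with $R_b(z)$ in place of $R_L(z)$ gives $f(H^{\R^2})=0$. So only the boundary part $\cG(z)$ of \eqref{eq:res}, and more precisely its pieces localized near $\Gam$, contribute. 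The two decays in \eqref{eq:decay} come from different mechanisms. The $x$-decay comes from Lemma~\ref{lem:CT} applied to $R_L(z)$, whose rate is only $d\sim|\Im z|$ because edge spectrum may fill $K$. The $y$-decay comes from Lemma~\ref{lem:CT} applied to $R_b(z)$, whose rate $d\ge\delta:=\dist(K,\spec H^{\R^2})>0$ is uniform in $\Im z$.

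\emph{Step 1 (resolvent decomposition).} Fix $\chi=\chi(y)$ smooth, equal to $0$ for $y\le y_0$ and to $1$ for $y\ge y_0+1$. Then $\chi R_b(z)e^+$ maps into $\cD_L$, since it vanishes near $\Gam$, and $H=H^{\R^2}$ on $\supp\chi$ by \textup{[H4]}. Hence $(H_L-z)\chi R_b e^+=\chi+[H,\chi]R_be^+$, which gives
\[
  R_L\chi=\chi r^+R_be^+-R_L[H,\chi]R_be^+.
\]
Writing $R_L=R_L(1-\chi)+R_L\chi$, applying the same identity to the adjoint, and substituting once more, I will express $R_L(z)$ as $\chi r^+R_be^+\chi$, plus terms that vanish after integration against $\bar\partial\tilde f$, plus a finite sum of terms of the form
\[
  A_1(z)\,g_1\,R_L(z)\,g_2\,A_2(z).
\]
Here $g_1,g_2$ are bounded multipliers supported in $\{y\le y_0+1\}$, namely $1-\chi$ or the coefficients of $[H,\chi]$, and each $A_i$ is either the identity restricted to $\{y\le y_0+1\}$ or a factor $R_b(z)$ restricted to $\Om$. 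The bulk term $\chi r^+R_be^+\chi$ integrates to $\chi f(H^{\R^2})\chi=0$. For the remaining terms, Lemma~\ref{lem:CT} for $H^{\R^2}$ at rate $c\delta$ gives $e^{-c y}$ and $e^{-cy'}$ from the outer factors, uniformly in $\Im z$. Lemma~\ref{lem:CT} for $H_L$, at rate $c|\Im z|$ in the $x$-direction, gives a bound $C|\Im z|^{-1}e^{-c|\Im z||x-x'|}$ for the middle factor.

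\emph{Step 2 (polynomial decay in $x$ and the loss $m(M)$).} I will localize to unit boxes $B,B'$ at horizontal distance $\rho$ and integrate $|\Im z|^{m}\cdot|\Im z|^{-k}e^{-c|\Im z|\rho}$ over $\Im z\in(0,1]$. This is $O(\rho^{-M})$ as soon as $m-k\ge M$, where $k$ counts the powers of $|\Im z|^{-1}$ picked up along the way. Those powers come from the middle $R_L$ factor and from regularity: to pass from operator norms to kernel bounds, Lemma~\ref{lem:ellreg}\textup{(ii)} needs boundedness $H^{-s}(B')\to H^{s}(B)$ with $s>1+|\alpha|$. I will obtain that by writing $f(H_L)=(H_L-i)^{-n}\,[(\cdot-i)^{2n}f](H_L)\,(H_L-i)^{-n}$, where the middle factor is again treated by Helffer--Sj\"ostrand, and using Lemma~\ref{lem:ellreg}\textup{(i)} on both sides, together with its adjoint on the left. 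Each resolvent inserted near the boundary costs powers of $|\Im z|^{-1}$ that are absorbed by taking $m$ larger. Counting them should give $m(M)=4M+1$, with constants independent of $f$ apart from $\norm f_{C^{m(M)}}$.

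\emph{Step 3 (combining).} The exponential rate $c$ in $y+y'$ comes only from the $R_b$ factors at fixed distance $\delta$ from the spectrum, so it is independent of $M$ and of $f$, as claimed. Derivatives of the kernel are covered by the box estimates of Lemma~\ref{lem:ellreg}, whose constants are uniform over boxes.

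I expect the main obstacle to be Step 1. One must arrange the algebra so that every surviving term carries an $R_b$ factor on \emph{each} side, or a multiplier localized near $\Gam$ on each side, so that the uniform $y$-decay holds in both $y$ and $y'$ while the bad factor $R_L(z)$ appears only sandwiched between multipliers localized in $y\le y_0+1$. A further difficulty is that, for terminations with tangential derivatives, $\chi$ acts only in $y$ and so preserves $\cD_L$ near $\Gam$. This is why I cut off in $y$ and not in $x$, so that no boundary form term appears in $[H_L,\chi]$.
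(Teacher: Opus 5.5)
Your argument is sound, but it combines the two decays differently from the paper.

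\textbf{What the paper does.} It never forms a joint estimate. The $x$-decay comes from Lemma~\ref{lem:CT} applied directly to $R_L(z)$ in the first form of \eqref{eq:HS}. The $y$-decay comes separately from $\cG(z)$, via the one-sided identity
\[
g_1\cG g_2=g_1r^+R_be^+(\theta-\mathbf 1)g_2+g_1r^+R_be^+[H,\theta]R_Lg_2 ,
\]
with $g_1$ supported in $\{y\ge y_1\}$. The decay in $y'$ then follows by taking adjoints and using $\max(y,y')\ge\tfrac12(y+y')$. So the two-sided decomposition you flag as the main obstacle is not needed. The two box bounds are merged by a geometric mean, after splitting the $\Im z$-integral at $|\Im z|=\rho^{-1/2}$. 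Each of these steps halves the exponent: $J=4M$ gives $\rho^{-2M}$, then $\rho^{-M}$. That is exactly where $m(M)=4M+1$ comes from.

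\textbf{What your route buys.} You sum the $x$-decay of the outer $R_b$ factors over intermediate collar boxes. This gives, term by term, the joint bound $C|\Im z|^{-1}e^{-c|\Im z|\rho}e^{-c(y+y')}$. Integrating $|\Im z|^{J-1}e^{-c|\Im z|\rho}$ exactly then gives $O(\rho^{-J})$, so you lose only about $M+1$ derivatives, which implies the stated bound. Your assertion that the count ``should give'' $4M+1$ does not reflect your own scheme. The factors $(H_L\mp i)^{-n}$ sit at a fixed $z$ and cost no powers of $|\Im z|^{-1}$; the only such power comes from the single $R_L(z)$.

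\textbf{A missing ingredient.} For those outer factors you need, besides Lemma~\ref{lem:ellreg}\textup{(i)}, Lemma~\ref{lem:CT} at $z=\pm i$. This makes $\mathbf 1_B(H_L\mp i)^{-n}\mathbf 1_{B_1}$ decay in $\dist(B,B_1)$, without which the sum over intermediate boxes does not converge. The paper does exactly this.

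\textbf{Trade-off.} The paper's route is shorter. It uses the Combes--Thomas bound only in $L^2$, with $x$-weights for $R_L$ and $y$-weights for $R_b$, never both in the same term, and attaches $[H,\theta]$ to $R_L$ through an unweighted bound. Yours gives a sharper derivative loss. The price is the two-sided algebra and carrying the $x$-decay through the bulk factors and through the order-$(p-1)$ operators $[H,\chi]$. Your $g_i$ are differential operators, not multipliers; a box-wise use of Lemma~\ref{lem:ellreg}\textup{(i)} handles this.
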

\noindent The decay normal to $\Gam$ is exponential, at a rate fixed by the bulk gap; along $\Gam$ it is only rapid, as for a convolution in $x$ with a Schwartz function.
\begin{proof}
Let $\widetilde f$ be an almost-analytic extension of $f$ of order $J$: supported in a compact neighborhood of $K$ in $\C$ disjoint from $\spec H^{\R^2}$, with
$\widetilde f|_\R=f$ and $|\dbar\widetilde f(z)|\le C_{f,J}|\Im z|^J$ \cite[Ch.~2]{Davies}. The Helffer--Sj\"ostrand
formula \cite[Ch.~2]{Davies} gives, 
\begin{equation}\label{eq:HS}
  f(H_L)=-\frac1\pi\int_\C\dbar\widetilde f(z)\,R_L(z)\,dA(z)
  =-\frac1\pi\int_\C\dbar\widetilde f(z)\,\cG(z)\,dA(z),
\end{equation}
the second equality because $z\mapsto r^+R_b(z)e^+$ is holomorphic near $\supp\widetilde f$, which avoids
$\spec H^{\R^2}$, so that its contribution vanishes by Stokes' theorem.

\emph{Decay in $y+y'$.} By \eqref{eq:HS} it suffices to bound $g_1\cG(z)g_2$ for bounded cutoffs $g_1,g_2$, and by duality it suffices to take $g_1$ supported in
$\{y\ge y_1\}$, $y_1>y_0$, with $g_2$ arbitrary, since $\max(y,y')\ge\tfrac12(y+y')$. With $\theta=\theta(y)$ a
cutoff equal to $0$ for $y\le y_0+\frac13(y_1-y_0)$ and to $1$ for $y\ge y_0+\frac23(y_1-y_0)$, so that $H=H^{\R^2}$ on $\supp\theta$ by \textup{[H4]}, one has $e^+\theta R_Lg_2=R_be^+\bigl(\theta g_2+[H,\theta]R_Lg_2\bigr)$, hence
\[
  g_1\cG(z)g_2
  =g_1r^+R_be^+(\theta-\mathbf 1)g_2+g_1r^+R_be^+[H,\theta]R_Lg_2 ,
\]
where in both terms $g_1r^+R_be^+$ acts across a distance $\ge\tfrac13(y_1-y_0)$. Lemma~\ref{lem:CT}
for $H^{\R^2}$ with a weight in $y$ bounds it by $Cd_b^{-1}e^{-cd_b(y_1-y_0)/3}$,
$d_b:=\min\bigl(1,\dist(z,\spec H^{\R^2})\bigr)\ge\delta>0$ on $\supp\widetilde f$, $K$ being a compact subset of the gap, and $\norm{[H,\theta]R_L}\le C\norm{R_L}_{L^2\to H^p}\le C|\Im z|^{-1}$ by
Lemma~\ref{lem:ellreg}\textup{(i)}, $[H,\theta]$ having order $p-1$ and $z$ ranging in a bounded set.

\emph{Decay in $|x-x'|$.} We use Lemma~\ref{lem:CT} for $H_T=H_L$, applied to $R_L$ in the first form of \eqref{eq:HS}, with $\alpha=c|\Im z|\le c\,d$ since $d\ge|\Im z|$. The resulting bound is $C|\Im z|^{-1}e^{-c|\Im z|\rho}$, $\rho:=|x-x'|$.

Both bounds are $O(|\Im z|^{-1})$, and the rate in $y$ does not involve $z$. The bound
$|\dbar\widetilde f(z)|\le C_{f,J}|\Im z|^J$ makes the integral \eqref{eq:HS}
converge absolutely, and splitting at $|\Im z|=\rho^{-1/2}$ gives
$O(\rho^{-J/2})$. Then $J=4M$, for which $C_{f,J}\le C\norm f_{C^{4M+1}}$ \cite[Ch.~2]{Davies}, gives the rate $\rho^{-2M}$. The geometric mean of the bounds
in $x$ and in $y$ gives the product form of \eqref{eq:decay}.

\emph{Pointwise bounds.} Write $f(H_L)=(H_L+i)^{-n}g(H_L)(H_L+i)^{-n}$ with $g(E):=(E+i)^{2n}f(E)$, again in
$C_c^\infty(K)$, and insert partitions of unity $\sum_{B}\mathbf 1_{B}=\mathbf 1$ over unit boxes between the three
factors. For unit boxes $B,B_1$, $\mathbf 1_B(H_L+i)^{-n}\mathbf 1_{B_1}$ is bounded $L^2\to H^{pn}(B)$ with norm
$O(e^{-c\,\mathrm{dist}(B,B_1)})$, by Lemma~\ref{lem:ellreg}\textup{(i)} and Lemma~\ref{lem:CT} at $z=-i$ applied to
the functions $E^j(E+i)^{-n}$, $j\le n$, of $H_L$, which are combinations of powers $(H_L+i)^{-k}$, $0\le k\le n$, by binomial expansion of $E=(E+i)-i$, the term $k=0$ being local.
By duality, the last factor is bounded $H^{-pn}\to L^2$ with
the same decay, and the middle factor obeys the operator bounds above. Summing over the intermediate
boxes, $\mathbf 1_Bf(H_L)\mathbf 1_{B'}$ is bounded $H^{-pn}(B')\to H^{pn}(B)$ with the decay of
\eqref{eq:decay}, and Lemma~\ref{lem:ellreg}\textup{(ii)} with $pn>1+|\alpha|$ turns this into \eqref{eq:decay}.
\end{proof}
This yields the first main result of this section.
\begin{theorem}\label{thm:traceclass}
Under the hypotheses of Theorem~\ref{thm:sG}, $\vartheta(x)f(H_L)$ and $f(H_L)\vartheta(x)$ are trace-class for every $\vartheta\in C_c^\infty(\R)$, with the bound in trace norm $\norm\cdot_1$
\begin{equation}\label{eq:trbd}
  \norm{\vartheta(x)f(H_L)}_1\le C\norm f_{C^{k_1}}\qquad\text{for all }f\in C_c^\infty(\Delta)
  \text{ with }\supp f\subset K,
\end{equation}
$k_1$ and $C$ depending only on $H$, $L$, $\Delta$, $\vartheta$ and the fixed compact $K\subset\Delta$. Consequently, for a pointwise termination, $i[H_L,\chi]\varphi'(H_L)$ is trace-class, so that the trace \eqref{eq:sigmaI} is well defined and finite.
\end{theorem}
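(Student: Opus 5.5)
The trace-class claim will follow from the kernel bound \eqref{eq:decay} of Theorem~\ref{thm:sG} through a factorization into two Hilbert--Schmidt operators. I will use weights that make both factors square integrable. Let $w(x,y):=\langle x\rangle^{-1}e^{-cy/2}$, with $c$ the normal rate of \eqref{eq:decay}. Write $f=f_1f_2$ with $f_1,f_2\in C_c^\infty$ supported in a fixed compact $K'\subset\Delta$ containing $K$; for instance take $f_1=\eta$, a fixed cutoff equal to $1$ on $K$, and $f_2=f$. Then
\[
  \vartheta f(H_L)=\bigl(\vartheta f_1(H_L)w^{-1}\bigr)\bigl(w f_2(H_L)\bigr),
\]
and $\norm{\vartheta f(H_L)}_1\le\norm{\vartheta f_1(H_L)w^{-1}}_2\,\norm{wf_2(H_L)}_2$.

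Both Hilbert--Schmidt norms are computed from the kernel bounds of Theorem~\ref{thm:sG} with $\alpha=0$.

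\emph{Second factor.} Its kernel is bounded by $C_M\langle x\rangle^{-1}e^{-cy/2}\langle x-x'\rangle^{-M}e^{-c(y+y')}$, which is square integrable over $\Om\times\Om$ for $M\ge1$. Its Hilbert--Schmidt norm is therefore bounded by $C'\norm f_{C^{4M+1}}$.

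\emph{First factor.} Its kernel is bounded by $C\,|\vartheta(x)|\,\langle x'\rangle\,e^{cy'/2}\langle x-x'\rangle^{-M}e^{-c(y+y')}$. On $\supp\vartheta$ one has $\langle x'\rangle\le C\langle x-x'\rangle$, so this is at most $C|\vartheta(x)|\langle x-x'\rangle^{1-M}e^{-cy}e^{-cy'/2}$, which is square integrable for $M\ge2$. Its constant involves only $\norm\eta_{C^{m}}$, which is fixed.

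Combining the two factors gives \eqref{eq:trbd} with $k_1=4M+1$ for a fixed $M$ (say $M=2$). The case $f(H_L)\vartheta$ follows by taking adjoints, since $\overline f(H_L)=f(H_L)^*$.

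For the consequence, consider a pointwise termination. Then $\chi\cD_L\subseteq\cD_L$, so $i[H_L,\chi]\varphi'(H_L)=i[H,\chi]\varphi'(H_L)$ as observed after Definition~\ref{def:sigmaI}. Here $i[H,\chi]=\sum_{|\beta|\le p-1}c_\beta(x,y)\chi^{(k)}(x)D^\beta$ is a differential operator of order $p-1$ whose coefficients are supported in $\{x\in\supp\chi'\}$, a compact set in $x$. Choose $\vartheta\in C_c^\infty(\R)$ equal to $1$ there, and write
\[
  i[H,\chi]\varphi'(H_L)=\bigl(i[H,\chi](H_L+i)^{-1}\bigr)\bigl(\vartheta(x)g(H_L)\bigr),\qquad g(E):=(E+i)\varphi'(E),
\]
which uses $i[H,\chi]=i[H,\chi]\vartheta$ and the fact that $\vartheta$ commutes with functions of $H_L$ only up to a remainder. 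To avoid that remainder I will instead place the cutoff on the left. The operator $i[H,\chi]\vartheta\,(H_L+i)^{-1}$ is bounded on $L^2$, since $R_L(-i)$ maps into $H^p$ by Lemma~\ref{lem:ellreg}(i) and $i[H,\chi]$ has bounded coefficients and order $p-1$. It remains to show that $\vartheta(H_L+i)g(H_L)$, or more simply $\vartheta\varphi'(H_L)$ composed with derivatives of order at most $p-1$, is trace class. I will do this directly from the kernel bounds: every $D^\beta K_{\varphi'}$ with $|\beta|\le p-1$ satisfies \eqref{eq:decay}, so $\vartheta D^\beta\varphi'(H_L)$ factors as a product of two Hilbert--Schmidt operators exactly as above. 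Composing with bounded multiplications then gives a trace-class operator, so \eqref{eq:sigmaI} is finite.

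The main obstacle is the weighted Hilbert--Schmidt bookkeeping. One has to check that the polynomial decay along $\Gam$, which comes only from \eqref{eq:decay}, suffices to absorb the weight $\langle x'\rangle$, and that the $C^{k}$-dependence on $f$ is routed entirely through a single factor.
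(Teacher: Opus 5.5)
Your argument is correct and is essentially the paper's. As there, you split $\vartheta f(H_L)$ through the weights $\langle x\rangle^{\pm1}$ into two Hilbert--Schmidt factors controlled by \eqref{eq:decay}, and you obtain the consequence because the order-$(p-1)$ operator $i[H,\chi]$, with coefficients compactly supported in $x$, preserves \eqref{eq:decay} when applied to the kernel of $\varphi'(H_L)$. The paper lets the $\vartheta$-factor carry $f$ with $M=2$, so $k_1=9$; your placement of $f$ in the other factor would even permit $M=1$ there, and your extra weight $e^{\mp cy/2}$ is harmless but not needed.

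Drop the intermediate detour through $i[H,\chi]\vartheta(H_L+i)^{-1}$. The cutoff there sits to the left of $(H_L+i)^{-1}$, so the remaining factor $(H_L+i)\varphi'(H_L)$ carries no spatial localization and is not trace-class in general. The direct kernel argument you end with is what is actually needed.
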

\begin{proof}
Write $f=f\cdot f_2$ with $f_2\in C_c^\infty(\Delta)$ equal to $1$ on $K$, and
$\vartheta f(H_L)=\bigl(\vartheta f(H_L)\langle x\rangle\bigr)\bigl(\langle x\rangle^{-1}f_2(H_L)\bigr)$. By
\eqref{eq:decay} with $M=2$ both factors have square-integrable kernels: the first because
$\langle x'\rangle\le C\langle x-x'\rangle$ for $x\in\supp\vartheta$, the second because of the factor
$\langle x\rangle^{-1}$. So both are Hilbert--Schmidt, with norms bounded by $C\norm f_{C^{m(2)}}$ and a
constant, and their product is trace-class with \eqref{eq:trbd}, $k_1=m(2)=9$. The same holds for
$f(H_L)\vartheta$ by taking adjoints.

For the last claim, since $\chi=\chi(x)$, Leibniz's rule makes $i[H,\chi]$ a differential operator of order $\le p-1$ with coefficients supported in $\supp\chi'$. Applying it to the kernel of $\varphi'(H_L)$ in the first pair of variables preserves \eqref{eq:decay}, bounding {all} derivatives with the same exponential factor, and the support condition provides the compactly supported factor $\vartheta$. So each such term again obeys \eqref{eq:decay} and the above argument applies.
\end{proof}

\paragraph{Fedosov formula and quantization.}\label{ssec:quant}

Define
\[
  g:=e^{2\pi i\varphi}-1\in C_c^\infty(\Delta),\qquad
  U:=e^{2\pi i\varphi(H_L)}=\mathbf 1+\mathsf W,\qquad \mathsf W:=g(H_L),
\]
so that $\mathsf W$ obeys \eqref{eq:decay} by Theorem~\ref{thm:sG}. With
$\overline g=e^{-2\pi i\varphi}-1$,
\begin{equation}\label{eq:alg}
  (1+\overline g)\,g'
  =e^{-2\pi i\varphi}\cdot 2\pi i\,\varphi'\,e^{2\pi i\varphi}
  =2\pi i\,\varphi' .
\end{equation}
Recall the set $\mathcal P$ of switch functions of Definition~\ref{def:sigmaI}. We show independence of the switch function; see also \cite{Graf2013,ProdanSchulzBaldes16,Bal2019,BalReview}.
\begin{lemma}\label{lem:comm-tc}
Let $P\in\mathcal P$ and let $K\subset\Delta$ be compact. There are $C$ and $m$ such that for
$f\in C_c^\infty(K)$:
\begin{enumerate}[label=\textup{(\roman*)},nosep]
\item $\norm{[P,f(H_L)]}_1\le C\norm f_{C^m}$;
\item $\Tr[P,f(H_L)]=0$.
\end{enumerate}
In particular $[P,U]$ and $U^*PU-P=U^*[P,\mathsf W]$ are trace-class.
\end{lemma}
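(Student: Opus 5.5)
The plan is to reduce everything to the kernel bounds of Theorem~\ref{thm:sG} and the trace-class bound of Theorem~\ref{thm:traceclass}. The key observation is that $P\in\mathcal P$ is a switch, so $P-\mathbf 1_{x\ge d}$ and $P-\mathbf 1_{x\ge c}$ are bounded with compact support in $x$. Hence any difference $[P,f(H_L)]-[P',f(H_L)]$ between two switches is a sum of terms $\vartheta f(H_L)$ and $f(H_L)\vartheta$ with $\vartheta$ bounded and compactly supported in $x$. These are trace-class with the bound \eqref{eq:trbd}; the proof there used only $\vartheta\in L^\infty$ with compact support, not smoothness. Their traces cancel, since $\Tr(\vartheta f)=\Tr(f\vartheta)$ by cyclicity: $\vartheta$ is bounded and $\vartheta f(H_L)$ is trace-class. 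It therefore suffices to prove (i) and (ii) for a single convenient switch, which I take to be the sharp switch $\Pi=\Pi_{x_0}$.

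For (i) with $P=\Pi$, I write $[\Pi,f(H_L)]=\Pi f(H_L)(\mathbf 1-\Pi)-(\mathbf 1-\Pi)f(H_L)\Pi$ and bound each off-diagonal block in trace norm. Mirroring the proof of Theorem~\ref{thm:traceclass}, I factor
\[
\Pi f(H_L)(\mathbf 1-\Pi)=\bigl(\Pi f(H_L)\langle x-x_0\rangle^{a}\bigr)\bigl(\langle x-x_0\rangle^{-a}f_2(H_L)(\mathbf 1-\Pi)\bigr),
\]
with $f=f f_2$. The key point is that on the support of $\Pi(x)(\mathbf 1-\Pi)(x')$ one has $\langle x-x'\rangle\ge c\langle x-x_0\rangle^{1/2}\langle x'-x_0\rangle^{1/2}$, since $x\ge x_0\ge x'$ gives $|x-x'|=|x-x_0|+|x'-x_0|$. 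With \eqref{eq:decay} at $M$ large, each factor then has a kernel in $L^2$: decay in $x-x'$ controls one variable and the weight controls the other, while $e^{-c(y+y')}$ handles $y$. This gives Hilbert--Schmidt norms bounded by $C\norm f_{C^{m(M)}}$, so $m=m(M)$ for some fixed $M$ such as $M=4$. This establishes (i) for $\Pi$, and by the reduction above for every $P\in\mathcal P$.

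For (ii), the operator $[\Pi,f(H_L)]$ is trace-class with a continuous kernel away from the diagonal, but its diagonal is not directly usable because $\Pi$ jumps. I would instead use translation averaging. The difference $[\Pi_{x_0},f]-[\Pi_{x_1},f]=[\mathbf 1_{[x_0,x_1)},f]$ has trace zero by cyclicity, as shown above, so $\tau:=\Tr[\Pi_{x_0},f]$ is independent of $x_0$. Averaging over $x_0\in[0,R]$ gives $\tau=R^{-1}\Tr[\int_0^R\Pi_{x_0}dx_0,f]=R^{-1}\Tr[P_R,f]$, where $P_R$ is a Lipschitz switch with slope $1/R$. I then need $\norm{[P_R,f(H_L)]}_1=o(R)$. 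For this I split $[P_R,f]=[P_R,f]\mathbf 1_{|x-R/2|\le 2R}+\dots$: the localized piece is $O(R)$ in trace norm by \eqref{eq:trbd} summed over unit intervals, but the commutator gains $R^{-1}$ through $|P_R(x)-P_R(x')|\le|x-x'|/R$, giving $O(1)$ using decay $\langle x-x'\rangle^{-M}$, while the far pieces are bounded uniformly by the argument used for (i). Hence $\tau=O(R^{-1})\to0$.

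The main obstacle is making this $O(1)$ trace-norm bound rigorous from pointwise kernel bounds. I would handle it by writing the commutator kernel as $(P_R(x)-P_R(x'))K_f$ and factoring through weights $\langle x-x'\rangle$ into two Hilbert--Schmidt operators on boxes, summing over unit boxes along $\Gam$. An alternative is Helffer--Sj\"ostrand with $[P_R,R_L(z)]=R_L(z)[H_L,P_R]R_L(z)$, where $\norm{[H,P_R]}=O(R^{-1})$ in the appropriate Sobolev sense; but for differential terminations the form commutator carries a boundary term, so I would favor the kernel route.

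The final claims follow directly. Since $U-\mathbf 1=\mathsf W=g(H_L)$ with $g\in C_c^\infty(\Delta)$, we have $[P,U]=[P,\mathsf W]$, which is trace-class by (i). Likewise $U^*PU-P=U^*[P,U]=U^*[P,\mathsf W]$ is trace-class, being a bounded operator times a trace-class one.
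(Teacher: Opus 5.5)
Your reduction to the sharp switch by cyclicity, and your closing deductions for $[P,U]$ and $U^*PU-P$, are correct. The argument for (ii), however, has a genuine gap, and the one for (i) contains a fixable error.

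In (ii) the averaging step gains nothing. The normalized average $R^{-1}\int_0^R\Pi_{x_0}\,dx_0$ is itself a switch $P_R$ of slope $1/R$. So you get $\tau=\Tr[P_R,f(H_L)]$ with no prefactor $R^{-1}$, and for that switch your own estimate gives only $\norm{[P_R,f(H_L)]}_1=O(1)$. Keeping instead the unnormalized ramp of slope one restores the factor $R^{-1}$, but then the trace norm is $O(R)$. Averaging merely replaces one switch by another, and you already know $\tau$ is invariant under that.

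No sharper estimate can rescue this. The same reasoning applies verbatim to $\Tr\bigl(\mathsf W^*[\Pi_{x_0},\mathsf W]\bigr)$: it is also independent of $x_0$ by your cyclicity argument, since $\mathsf W^*$ and $\mathsf W$ commute, and it obeys the same trace-norm bounds. The argument would therefore give $\Tr(U^*[\Pi,U])=0$, i.e.\ $\mathcal I=0$ for every realization. That contradicts \textup{(E1)}, where $\mathcal I=-1$ for $|\lambda|>1$.

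The missing idea is elementary. Put $A:=[\Pi,f(H_L)]=\Pi f(H_L)(\mathbf 1-\Pi)-(\mathbf 1-\Pi)f(H_L)\Pi$. It is trace-class by (i) and has vanishing diagonal blocks with respect to $\Ran\Pi\oplus\Ran(\mathbf 1-\Pi)$. Computing its trace in an orthonormal basis adapted to that splitting gives $\Tr A=\Tr(\Pi A\Pi)+\Tr\bigl((\mathbf 1-\Pi)A(\mathbf 1-\Pi)\bigr)=0$. Combined with your cyclicity step for $P-\Pi$, this is exactly the paper's proof. The jump of $\Pi$ obstructs the kernel diagonal, not the block diagonal.

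In (i), the factorization $\bigl(\Pi f(H_L)\langle x-x_0\rangle^{a}\bigr)\bigl(\langle x-x_0\rangle^{-a}f_2(H_L)(\mathbf 1-\Pi)\bigr)$ fails for every real $a$. Once you factor through an intermediate variable $z$, neither factor carries both cutoffs, so your inequality on the support of $\Pi(x)(\mathbf 1-\Pi)(x')$ applies to neither.

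- The kernel $\Pi(x)K_f(x,z)\langle z-x_0\rangle^{a}$ has no decay along $z\approx x\to+\infty$ beyond the weight, so it is square-integrable only if $a<-\tfrac12$.
- The kernel $\langle z-x_0\rangle^{-a}K_{f_2}(z,x')(\mathbf 1-\Pi)(x')$ needs $a>\tfrac12$, because of the region $z\approx x'\to-\infty$.

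There are two repairs. You can use a weight that decays at $+\infty$ and grows at $-\infty$, such as $\langle z-x_0\rangle^{-\sgn(z-x_0)}$, reversed for the other block. Or you can follow the paper. By \eqref{eq:decay} with $M=6$, the full commutator kernel $(P(x)-P(x'))K_f$ is bounded by $C\norm f_{C^m}\langle x\rangle^{-3}\langle x'\rangle^{-3}e^{-c(y+y')}$. Then write $[P,f]=\bigl([P,f]\langle x\rangle\bigr)\bigl(\langle x\rangle^{-1}f_2\bigr)+\bigl(f\langle x\rangle^{-1}\bigr)\bigl(\langle x\rangle[P,f_2]\bigr)$. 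This keeps the commutator, with its two-sided decay, inside one Hilbert--Schmidt factor of each product.
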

\begin{proof}
(i) Let $P=0$ on $(-\infty,-R]$ and $P=1$ on $[R,\infty)$. The kernel of $[P,f(H_L)]$ is
$(P(x)-P(x'))K_f$, and it vanishes unless $\max(|x|,|x'|)\le R+|x-x'|$. By \eqref{eq:decay} with $M=6$ it is
therefore bounded by $C\norm f_{C^{m}}\langle x\rangle^{-3}\langle x'\rangle^{-3}e^{-c(y+y')}$, with
$m=m(6)$. With $f_2\in C_c^\infty(\Delta)$ equal to $1$ near $K$,
\[
[P,f]=\bigl([P,f]\langle x\rangle\bigr)\bigl(\langle x\rangle^{-1}f_2\bigr)+\bigl(f\langle x\rangle^{-1}\bigr)\bigl(\langle x\rangle[P,f_2]\bigr)
\]
is a sum of products of Hilbert--Schmidt operators. The factors containing $f$ have norms at most
$C\norm f_{C^m}$. No smoothness of $P$ is used.

(ii) Write $W=f(H_L)$ and $P=\Pi+Q$, with $Q$ bounded and compactly supported in $x$.

The trace-class operator $[\Pi,W]$ has vanishing diagonal blocks with respect to $\Pi$, so its trace
is $0$.

With $\vartheta\in C_c^\infty(\R)$ satisfying $\vartheta Q=Q$, the operators $\vartheta W$ and
$W\vartheta$ are trace-class (Theorem~\ref{thm:traceclass}). Then
\[
\Tr(QW)=\Tr(\vartheta W\,Q)=\Tr(W\vartheta Q\,\vartheta)=\Tr(WQ),
\]
by $\Tr(AB)=\Tr(BA)$ for $A$ trace-class and $B$ bounded.
\end{proof}
\noindent The next identity is of Fedosov type, related trace identities for switch functions appearing in
\cite{EGS05,ProdanSchulzBaldes16}; its proof needs no invariance of the domain under $P$.
\begin{lemma}\label{lem:FD}
Let $P\in\mathcal P$ and $\mathfrak t(u,v):=\Tr\bigl(u(H_L)[P,v(H_L)]\bigr)$ for
$u,v\in C_c^\infty(\Delta)$. There is $c\in\C$ with
\begin{equation}\label{eq:FD}
  \mathfrak t(u,v)=c\int_\Delta u\,v'\,dE .
\end{equation}
\end{lemma}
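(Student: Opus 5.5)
We show that $\mathfrak t$ is a bilinear form on $C_c^\infty(\Delta)$ that depends only on the product structure $u\,dv$, is continuous, and vanishes on exact terms; a one-dimensional de Rham argument then forces \eqref{eq:FD}. The argument uses only Lemma~\ref{lem:comm-tc}. First, $\mathfrak t$ is well defined and continuous: by Lemma~\ref{lem:comm-tc}(i), $[P,v(H_L)]$ is trace-class with $\norm{[P,v(H_L)]}_1\le C\norm v_{C^m}$, so $|\mathfrak t(u,v)|\le C\norm u_\infty\norm v_{C^m}$ for $u,v$ supported in a fixed compact $K$.

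\emph{Step 1: Leibniz rule.} For $v,w\in C_c^\infty(\Delta)$ we have $[P,(vw)(H_L)]=v(H_L)[P,w(H_L)]+[P,v(H_L)]w(H_L)$. Hence
\[
\mathfrak t(u,vw)=\mathfrak t(uv,w)+\Tr\bigl(u(H_L)[P,v(H_L)]w(H_L)\bigr)=\mathfrak t(uv,w)+\mathfrak t(wu,v).
\]
The last equality holds by cyclicity, since $[P,v]$ is trace-class and $u,w$ are bounded functions of $H_L$ that commute.

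\emph{Step 2: vanishing on exact terms.} Taking $u\equiv1$ near the supports is not allowed, since $u$ must be compactly supported. Instead take $u=1$ on $\supp v$, and write $v=uv$. Then $\mathfrak t(u,v)=\Tr(u[P,v])=\Tr([P,uv])-\Tr([P,u]v)$. The first term vanishes by Lemma~\ref{lem:comm-tc}(ii). For the second, $[P,u]v=[P,u]\,v$ is supported, at the level of functional calculus, where $u'\ne0$, but we proceed algebraically: $\Tr([P,u]v)=\mathfrak t(v,u)$, and $u$ can be chosen so that $v u'=0$; we will show $\mathfrak t(a,b)=0$ whenever $\supp a\cap\supp b'=\emptyset$ (Step 3).

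\emph{Step 3: locality.} We claim that $\mathfrak t(a,b)$ depends only on $a\,b'$. By linearity and continuity it suffices to treat polynomially generated functions. Applying Step 1 repeatedly gives $\mathfrak t(u,v^n)=n\,\mathfrak t(uv^{n-1},v)$, which mimics $u\,d(v^n)=n u v^{n-1}dv$. Combined with $\mathfrak t(1\!\cdot\!,\cdot)$-vanishing from Step 2 and Lemma~\ref{lem:comm-tc}(ii), this gives that $\mathfrak t$ descends to a continuous linear functional $\ell$ on compactly supported one-forms $u\,dv$ with $\ell(dw)=0$ for all $w\in C_c^\infty(\Delta)$. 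The cleanest route, which we would carry out in detail, is the following. Fix $\theta\in C_c^\infty(\Delta)$ with $\theta=1$ on $K$ and $x\mapsto E$ cut off, namely $e:=E\theta$. Since $\Delta$ is an interval, every $v$ supported in $K$ is of the form $h(e)$ for smooth $h$. Then $[P,v(H_L)]$ can be approximated, via Step 1 on polynomials in $e$ and the $C^m$ continuity, by $h'(e)\,[P,e(H_L)]$ modulo terms whose trace against $u$ is again of that form. This yields $\mathfrak t(u,v)=\mathfrak t(uv',e)$.

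\emph{Step 4: identifying the functional.} The map $g\mapsto\mathfrak t(g,e)$ is a distribution $\mu$ on $K$. Step 2 says $\mu(w')=\mathfrak t(w',e)=\mathfrak t(1,w)=0$ for all $w\in C_c^\infty$, with $1$ replaced by a cutoff equal to $1$ on $\supp w$. Hence $\mu'=0$, so $\mu=c\,dE$ is constant on the interval $\Delta$. Therefore $\mathfrak t(u,v)=c\int uv'\,dE$. The constant $c$ is independent of $K$ by exhausting $\Delta$.

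The main obstacle is Step 3, namely justifying the reduction $[P,v(H_L)]\simeq v'(H_L)[P,e(H_L)]$ inside the trace. The functions involved are not polynomial, and the commutator is only trace-class after localization in energy. We would handle this by approximating $h$ by polynomials in $C^m(K)$ (Weierstrass with derivatives), applying the exact Leibniz identity to the polynomials, and passing to the limit with the trace-norm bound of Lemma~\ref{lem:comm-tc}(i), taken with $\theta$-cutoffs to keep supports in $K$.
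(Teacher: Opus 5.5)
Your architecture is the paper's. Step~1 is its Leibniz identity (b), Step~2 is its antisymmetry (a), and Steps~3--4 are its reduction $\mathfrak t(u,v)=\mathfrak t(uv',e)$ followed by the observation that $g\mapsto\mathfrak t(g,e)$ kills derivatives.

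\textbf{The gap.} The ingredient that makes the energy cutoffs harmless is never established, and as written the argument is circular. Step~2 needs $\mathfrak t(v,\theta)=0$ for $\theta=1$ near $\supp v$ and defers it to Step~3, while Step~3 invokes the vanishing ``from Step~2''.

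The route you call cleanest also rests on a false claim. With $e=E\theta$ compactly supported, a $v$ supported in $K$ is \emph{not} of the form $h(e)$. On the far side of $K$, $e$ must return to $0$, so by the intermediate value theorem it revisits values in $K$ at points where $v=0$. Hence $h(e)=v$ only where $\theta=1$. Likewise, the compactly supported replacement $(Q_n-Q_n(0))(e)$ of an approximating polynomial differs from $\theta Q_n$ near $\supp u$ by the constant $Q_n(0)$. These discrepancies do not tend to zero, so the trace-norm bound of Lemma~\ref{lem:comm-tc}(i) cannot absorb them.

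\textbf{The fix.} The missing idea is locality, which your Step~1 gives at once.
\begin{itemize}
\item[(c)] $\mathfrak t(u,v)=0$ if $\supp u\cap\supp v=\emptyset$. Take $w\in C_c^\infty(\Delta)$ equal to $1$ near $\supp v$ and to $0$ near $\supp u$ in Step~1. Then $vw=v$ and $uv=wu=0$.
\item[(d)] $\mathfrak t(u,\theta)=0$ if $\theta=1$ near $\supp u$. By (c) and Step~1, $\mathfrak t(u,\theta)=\mathfrak t(u,\theta^2)=\mathfrak t(u\theta,\theta)+\mathfrak t(\theta u,\theta)=2\mathfrak t(u,\theta)$.
\end{itemize}

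With these, Step~3 goes through as follows.
\begin{itemize}
\item By (c), replace $v$ by $\theta v$ inside $\mathfrak t(u,\cdot)$, and then $\theta Q_n$ by $(Q_n-Q_n(0))(e)$. The leftover constant near $\supp u$ is removed by (d).
\item Iterating Step~1 on $(Q_n-Q_n(0))(e)$ and passing to the limit with Lemma~\ref{lem:comm-tc}(i) gives $\mathfrak t(u,v)=\mathfrak t(uv',e)$ for $u\in C_c^\infty(K)$.
\item Property (c) also shows that $g\mapsto\mathfrak t(g,e)$ does not depend on the choice of $e$. Your exhaustion over $K$ in Step~4 needs this.
\end{itemize}

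Then $\mathfrak t(w',e)=\mathfrak t(\theta,w)=-\mathfrak t(w,\theta)=0$ by Step~2 and (d), and your Step~4 concludes. The paper reaches the same conclusion via an explicit antiderivative $f=\int_{-\infty}^E\rho$ when $\int\rho=0$.
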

\begin{proof}
Write $u$ for $u(H_L)$. Lemma~\ref{lem:comm-tc} and $\Tr(AB)=\Tr(BA)$ give four properties.
\begin{enumerate}[label=\textup{(\alph*)},nosep]
\item $\mathfrak t(u,v)+\mathfrak t(v,u)=\Tr[P,uv]=0$.
\item $\mathfrak t(u,vw)=\mathfrak t(uv,w)+\mathfrak t(wu,v)$.
\item $\mathfrak t(u,v)=0$ if $\supp u\cap\supp v=\emptyset$: apply (b) with $w=1$ near $\supp v$ and
$w=0$ near $\supp u$.
\item $\mathfrak t(u,\theta)=0$ if $\theta=1$ near $\supp u$: by (b) and (c),
$\mathfrak t(u,\theta)=\mathfrak t(u,\theta^2)=2\mathfrak t(u,\theta)$.
\end{enumerate}

For a compact $K\subset\Delta$ fix $e\in C_c^\infty(\Delta)$ with $e(E)=E$ near $K$, and let
$\lambda(\rho):=\mathfrak t(\rho,e)$ for $\rho\in C_c^\infty(K)$. By (c), $\lambda$ is a well-defined
linear functional on $C_c^\infty(\Delta)$, with $|\lambda(\rho)|\le C_K\norm\rho_\infty$, since
$|\Tr\rho(H_L)[P,e(H_L)]|\le\norm{\rho(H_L)}\,\norm{[P,e(H_L)]}_1\le\norm\rho_\infty\norm{[P,e(H_L)]}_1$ and
$C_K:=\norm{[P,e(H_L)]}_1<\infty$ by Lemma~\ref{lem:comm-tc}\textup{(i)}.

Let $u\in C_c^\infty(K)$. By (b) and induction,
$\mathfrak t(u,e^{n+1})=\mathfrak t(ue^n,e)+\mathfrak t(eu,e^n)=(n+1)\,\mathfrak t(ue^n,e)$. So
$\mathfrak t(u,Q(e))=\lambda(uQ')$ for polynomials $Q$ with $Q(0)=0$.

For general $v$, let $\theta\in C_c^\infty(\Delta)$ equal $1$ near $K$ with $e=E$ on $\supp\theta$, and
let $Q_n\to v$ in $C^m(\supp\theta)$. By (c) and (d),
$\mathfrak t(u,\theta Q_n)=\mathfrak t(u,(Q_n-Q_n(0))(e))=\lambda(uQ_n')$. Letting $n\to\infty$, using
Lemma~\ref{lem:comm-tc}(i) and (c), gives $\mathfrak t(u,v)=\lambda(uv')$.

Finally, if $\int\rho=0$ then $f:=\int_{-\infty}^E\rho\in C_c^\infty(\Delta)$. With $\theta=1$ near
$\supp f$, (a) gives $\lambda(\rho)=\lambda(f')=\mathfrak t(\theta,f)=-\mathfrak t(f,\theta)=-\lambda(f\theta')=0$.
Hence $\lambda=c\int$, with $c:=\lambda(\rho_0)$ for any $\rho_0$ with $\int\rho_0=1$.
\end{proof}
\begin{lemma}\label{lem:indep}
The constant $c$ of Lemma~\ref{lem:FD} does not depend on $P\in\mathcal P$, and $\Tr(U^*PU-P)=2\pi i\,c$ for
every switch $\varphi$. In particular the smooth switch and every sharp projection $\Pi_{x_0}$ give the same
value.
\end{lemma}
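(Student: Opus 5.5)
The plan has two parts: first show that $\mathfrak t$ depends on $P$ only through $P$ at infinity, then identify $\Tr(U^*PU-P)$ with $\mathfrak t(\bar g,g)$ and evaluate it with \eqref{eq:FD} and \eqref{eq:alg}.

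\emph{Independence of $P$.} Take $P_1,P_2\in\mathcal P$. The difference $Q:=P_1-P_2$ is bounded and compactly supported in $x$, since both switches equal $0$ near $-\infty$ and $1$ near $+\infty$. So $\mathfrak t_{P_1}(u,v)-\mathfrak t_{P_2}(u,v)=\Tr\bigl(u(H_L)[Q,v(H_L)]\bigr)$. Choose $\vartheta\in C_c^\infty(\R)$ with $\vartheta Q=Q$. By Theorem~\ref{thm:traceclass}, $Q\,v(H_L)=Q\vartheta v(H_L)$ and $v(H_L)Q=v(H_L)\vartheta Q$ are trace-class. The difference therefore splits as $\Tr(u\,Qv)-\Tr(u\,vQ)$. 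Cyclicity moves the bounded factor $Q$ in the second term: $\Tr(uvQ)=\Tr(Q\,uv)$, legitimate since $Q\,uv=Q\vartheta\,(uv)(H_L)$ is trace-class. The difference becomes $\Tr(Q[v(H_L),u(H_L)])=0$, because functions of $H_L$ commute. Hence $\mathfrak t$, and with it $c$, does not depend on $P$. This covers the smooth switch $\chi$ and every $\Pi_{x_0}$ alike, since Lemma~\ref{lem:comm-tc} used no smoothness of $P$.

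\emph{Expressing $U^*PU-P$ through $\mathfrak t$.} Write $U=\mathbf 1+\mathsf W$ and $U^*=\mathbf 1+\overline g(H_L)$. Then
\[
U^*PU-P=U^*[P,U]=[P,\mathsf W]+\overline g(H_L)[P,\mathsf W].
\]
Both terms are trace-class by Lemma~\ref{lem:comm-tc}(i). The first has trace zero by Lemma~\ref{lem:comm-tc}(ii), and the second has trace $\mathfrak t(\overline g,g)$. Lemma~\ref{lem:FD} applies because $\overline g,g\in C_c^\infty(\Delta)$. It gives $\mathfrak t(\overline g,g)=c\int\overline g\,g'\,dE$.

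\emph{Evaluating the integral.} By \eqref{eq:alg}, $\int(1+\overline g)g'\,dE=2\pi i\int\varphi'\,dE=2\pi i$, since $\varphi$ switches from $0$ to $1$. Moreover $\int g'\,dE=0$ because $g$ has compact support. Hence $\int\overline g\,g'\,dE=2\pi i$, and $\Tr(U^*PU-P)=2\pi i\,c$ for every switch $\varphi$.

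\emph{Expected obstacle.} The only delicate point is the cyclicity step in the independence argument. It must rely only on the trace-class property of $Q\vartheta f(H_L)$ and $f(H_L)\vartheta Q$ from Theorem~\ref{thm:traceclass}, and must not assume that $P$ preserves the domain $\cD_L$. The argument above is arranged accordingly, as in the proof of Lemma~\ref{lem:comm-tc}(ii).
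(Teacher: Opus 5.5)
Your proof is correct and follows the paper's argument. You establish independence of $P$ by the same cyclicity computation with the compactly supported difference $Q=P_1-P_2$, and you evaluate the trace through the same decomposition $U^*[P,U]=[P,\mathsf W]+\overline g(H_L)[P,\mathsf W]$, using \eqref{eq:FD}, \eqref{eq:alg} and $\int g'\,dE=0$; only the order of the two steps differs from the paper's.
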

\begin{proof}
By Lemma~\ref{lem:comm-tc}\textup{(ii)}, \eqref{eq:FD}, \eqref{eq:alg} and $\int g'=0$,
$\Tr(U^*PU-P)=\Tr[P,\mathsf W]+\mathfrak t(\overline g,g)=c\int(1+\overline g)g'\,dE=2\pi i\,c$, for every
switch $\varphi$. For $P,P'\in\mathcal P$, $\mathcal Q:=P-P'$ has compact support, so that $\mathcal Qv(H_L)$ and
$(uv)(H_L)\mathcal Q$ are trace-class (Theorem~\ref{thm:traceclass}) and cyclicity gives
$\mathfrak t_P(u,v)-\mathfrak t_{P'}(u,v)=\Tr(u\,\mathcal Qv)-\Tr(uv\,\mathcal Q)=\Tr(\mathcal Q\,vu)-\Tr(\mathcal Q\,uv)=0$.
\end{proof}

\noindent The next formula is the Fedosov formula for the Fredholm module $(\Pi,U)$
\cite{ASS94,ProdanSchulzBaldes16}.
\begin{lemma}\label{lem:fedproj}
Let $U$ be unitary, $\Pi$ an orthogonal projection, and $[\Pi,U]$ trace-class. Then $\Pi U\Pi|_{\Ran\Pi}$ is Fredholm, $U^*\Pi U-\Pi$ is trace-class, and
\[
  \Ind\bigl(\Pi U\Pi\big|_{\Ran\Pi}\bigr)=-\Tr\bigl(U^*\Pi U-\Pi\bigr).
\]
\end{lemma}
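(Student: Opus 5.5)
The plan is the standard Fredholm-module argument \cite{ASS94}: show that $\Pi U^*\Pi$ is a parametrix for $\Pi U\Pi$ on $\Ran\Pi$, then compute the index with Calderón's trace formula.

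\emph{Fredholm property.} Put $A:=\Pi U\Pi|_{\Ran\Pi}$ and $B:=\Pi U^*\Pi|_{\Ran\Pi}$. Since $U^*U=\mathbf 1$,
\[
  \mathbf 1_{\Ran\Pi}-BA=\Pi U^*(\mathbf 1-\Pi)U\Pi
  =\Pi U^*(\mathbf 1-\Pi)[\Pi,U]\Pi ,
\]
using $(\mathbf 1-\Pi)U\Pi=-(\mathbf 1-\Pi)[\Pi,U]\Pi$. This operator is trace-class. In the same way, $\mathbf 1-AB=\Pi U(\mathbf 1-\Pi)U^*\Pi$ contains the factor $(\mathbf 1-\Pi)U^*\Pi=(\mathbf 1-\Pi)[\Pi,U^*]\Pi$. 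Here $[\Pi,U^*]=-[\Pi,U]^*$ is trace-class, so $\mathbf 1-AB$ is trace-class too. Hence $A$ is Fredholm with parametrix $B$.

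\emph{Trace-class claim.} We have $U^*\Pi U-\Pi=U^*[\Pi,U]$, which is trace-class as the product of a bounded operator and a trace-class operator.

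\emph{Index computation.} When $\mathbf 1-BA$ and $\mathbf 1-AB$ are trace-class, Calderón's formula gives
\[
  \Ind A=\Tr(\mathbf 1-BA)-\Tr(\mathbf 1-AB).
\]
The first trace is $\Tr\bigl(\Pi U^*(\mathbf 1-\Pi)U\Pi\bigr)$ and the second is $\Tr\bigl(\Pi U(\mathbf 1-\Pi)U^*\Pi\bigr)$. Write $X:=(\mathbf 1-\Pi)U\Pi$ and $Y:=\Pi U(\mathbf 1-\Pi)$. Both are off-diagonal blocks of $U$, namely $X=-(\mathbf 1-\Pi)[\Pi,U]\Pi$ and $Y=\Pi[\Pi,U](\mathbf 1-\Pi)$, so both are trace-class. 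The two traces are then $\Tr(X^*X)$ and $\Tr(YY^*)$.

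Next compute the diagonal blocks of $U^*\Pi U-\Pi$. Using $U^*U=\mathbf 1$:
\begin{itemize}[nosep]
\item the $\Pi$–$\Pi$ block is $\Pi U^*\Pi U\Pi-\Pi=-\Pi U^*(\mathbf 1-\Pi)U\Pi=-X^*X$;
\item the $(\mathbf 1-\Pi)$–$(\mathbf 1-\Pi)$ block is $(\mathbf 1-\Pi)U^*\Pi U(\mathbf 1-\Pi)=Y^*Y$.
\end{itemize}
The trace of a trace-class operator is the sum of the traces of its two diagonal blocks. Therefore
\[
  \Tr\bigl(U^*\Pi U-\Pi\bigr)=-\Tr(X^*X)+\Tr(Y^*Y).
\]
By cyclicity of the trace for Hilbert–Schmidt pairs, $\Tr(Y^*Y)=\Tr(YY^*)$. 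Comparing with Calderón's formula gives
\[
  \Ind A=\Tr(X^*X)-\Tr(YY^*)=-\Tr\bigl(U^*\Pi U-\Pi\bigr),
\]
which is the claim.

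\emph{Main obstacle.} The only delicate points are bookkeeping ones: justifying Calderón's formula, and justifying the cyclicity step. Calderón's formula holds when $\mathbf 1-BA$ and $\mathbf 1-AB$ are trace-class (e.g.\ $\Ind A=\Tr(\mathbf 1-BA)^n-\Tr(\mathbf 1-AB)^n$ with $n=1$). The cyclicity step uses only that $X$ and $Y$ are trace-class, hence Hilbert–Schmidt.
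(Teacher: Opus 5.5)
Your proof is correct and is essentially the paper's argument: $\Pi U^*\Pi$ as parametrix, Calder\'on's formula $\Ind=\Tr(\mathbf 1-T^*T)-\Tr(\mathbf 1-TT^*)$, and the identification of the diagonal blocks of $U^*\Pi U-\Pi$ with $-\Pi U^*(\mathbf 1-\Pi)U\Pi$ and $(\mathbf 1-\Pi)U^*\Pi U(\mathbf 1-\Pi)$, the latter having the trace of $\Pi U(\mathbf 1-\Pi)U^*\Pi$ by cyclicity. Your displays expressing $\mathbf 1-BA$ and $(\mathbf 1-\Pi)U^*\Pi$ through commutators each drop a minus sign, which is harmless since only trace-class membership is used there.
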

\noindent For the sign: with $T=\Pi U\Pi|_{\Ran\Pi}$ and $\Pi^\perp=\mathbf 1-\Pi$, $\mathbf 1-T^*T=\Pi U^*\Pi^\perp U\Pi$ and
$\mathbf 1-TT^*=\Pi U\Pi^\perp U^*\Pi$ are trace-class, so $\Ind T$ is the difference of their traces, while the
diagonal blocks of $U^*\Pi U-\Pi$ have traces $-\Tr(\Pi U^*\Pi^\perp U\Pi)$ and $\Tr(\Pi U\Pi^\perp U^*\Pi)$.
\begin{theorem}\label{thm:quantisation}
Under the hypotheses of Theorem~\ref{thm:sG}, let $U=e^{2\pi i\varphi(H_L)}$, $\Pi=\Pi_{x_0}$ and
$P\in\mathcal P$.
\begin{enumerate}[label=\textup{(\roman*)},nosep]
\item $[\Pi,U]$ is trace-class, $\Pi U\Pi$ is Fredholm on $\Ran\Pi$, and
\begin{equation}\label{eq:quant}
  \mathcal I=\Ind\bigl(\Pi U\Pi\big|_{\Ran\Pi}\bigr)=-\Tr\bigl(U^*PU-P\bigr)\ \in\ \Z .
\end{equation}
\item For $u,v\in C_c^\infty(\Delta)$,
\begin{equation}\label{eq:trPU}
  \Tr\bigl(u(H_L)[P,v(H_L)]\bigr)=\frac{i\mathcal I}{2\pi}\int_\Delta u\,v'\,dE .
\end{equation}
\item For $\varphi_1,\varphi_2\in C_c^\infty(\Delta;\R)$, $\cT_P$ is trace-class and
\[
  \Tr\,\cT_P=\frac{\mathcal I}{2\pi}\int_\Delta\varphi_1\varphi_2\,dE .
\]
In particular $2\pi\sigma_I[H_L]=\mathcal I$ for every factorization of $\varphi'$ and every
$P\in\mathcal P$.
\end{enumerate}
\end{theorem}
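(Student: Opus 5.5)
The theorem assembles Lemmas~\ref{lem:comm-tc}--\ref{lem:fedproj}. The one new ingredient is the trace-class property of $\cT_P$ for terminations with tangential derivatives.

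For (i), I would apply Lemma~\ref{lem:comm-tc}(i) with $f=g$ and $P=\Pi$. Since $[\Pi,U]=[\Pi,\mathsf W]$, this commutator is trace-class. Lemma~\ref{lem:fedproj} then gives that $\Pi U\Pi|_{\Ran\Pi}$ is Fredholm, with index $-\Tr(U^*\Pi U-\Pi)$. By Lemma~\ref{lem:indep}, $\Tr(U^*PU-P)=2\pi i\,c$ is the same for every $P\in\mathcal P$, so $\Pi$ may be replaced by any $P$. Integrality is automatic for an index. This also identifies $c=-\mathcal I/(2\pi i)=i\mathcal I/(2\pi)$, and inserting this into \eqref{eq:FD} gives (ii) at once.

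For (iii), I would first treat $P$ smooth and then pass to general $P$. Formally,
\[
\cT_P=i\bigl(\psi_1(H_L)P\varphi_2(H_L)-\varphi_1(H_L)P\psi_2(H_L)\bigr)
\]
with $\psi_k(E)=E\varphi_k(E)$. To make this rigorous, insert $\theta(H_L)$, where $\theta\in C_c^\infty(\Delta)$ equals $1$ near $\supp\varphi_1\cup\supp\varphi_2$. Since $\varphi_1(H_L)H_L=\psi_1(H_L)$ on the range of $\varphi_1(H_L)$, the form identity shows
\[
\ip f{\cT_Pg}=i\bigl(\ip{\psi_1(H_L)f}{P\varphi_2(H_L)g}-\ip{\varphi_1(H_L)f}{P\psi_2(H_L)g}\bigr).
\]
This holds because $u=\varphi_1(H_L)f$ and $v=\varphi_2(H_L)g$ lie in $\cD(H_L)$, and $P$ acts by multiplication. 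So $\cT_P$ is the bounded operator
\[
i\bigl(\psi_1(H_L)P\varphi_2(H_L)-\varphi_1(H_L)P\psi_2(H_L)\bigr),
\]
with no use of invariance of the domain under $P$. This is the key observation. Writing $P=P\cdot\mathbf 1$ and adding and subtracting $\psi_1\varphi_2(H_L)P$ gives
\[
\cT_P=i\bigl(\psi_1(H_L)[P,\varphi_2(H_L)]-\varphi_1(H_L)[P,\psi_2(H_L)]\bigr)+i\bigl(\psi_1\varphi_2-\varphi_1\psi_2\bigr)(H_L)P .
\]
The last term vanishes identically, since $\psi_1\varphi_2=E\varphi_1\varphi_2=\varphi_1\psi_2$. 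The remaining commutators are trace-class by Lemma~\ref{lem:comm-tc}(i), with $\psi_2,\varphi_2\in C_c^\infty(\Delta)$.

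Applying (ii) to each term then gives
\[
\Tr\cT_P=i\cdot\frac{i\mathcal I}{2\pi}\int\bigl(\psi_1\varphi_2'-\varphi_1\psi_2'\bigr)\,dE .
\]
Expanding $\psi_2'=\varphi_2+E\varphi_2'$, the integrand $E\varphi_1\varphi_2'-\varphi_1\varphi_2-E\varphi_1\varphi_2'$ reduces to $-\varphi_1\varphi_2$. Hence $\Tr\cT_P=\frac{\mathcal I}{2\pi}\int\varphi_1\varphi_2\,dE$, and with $\varphi_1\varphi_2=\varphi'$ and $\int\varphi'=1$ this is $2\pi\sigma_I=\mathcal I$.

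The main obstacle I expect is justifying the operator identity for $\cT_P$ when the termination has tangential derivatives. There $i[H_L,P]$ maps only into $\cD(H_L)^*$, so the form must be handled carefully. I would verify the identity on the dense set of pairs $f,g\in L^2$, using $\ip{H_Lu}{Pv}=\ip{\psi_1(H_L)f}{Pv}$, which holds because $H_L$ is self-adjoint. No boundary term appears, since $H_L$ itself, not $H$, is applied to $u$. The possible failure of $P v\in\cD(H_L)$ is therefore harmless.
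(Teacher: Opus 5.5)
Your proposal is correct and follows the paper's proof essentially step by step: (i) from Lemmas~\ref{lem:comm-tc}, \ref{lem:fedproj} and \ref{lem:indep}; (ii) by identifying $c=i\mathcal I/(2\pi)$ in \eqref{eq:FD}; and (iii) via the identity $\cT_P=i\bigl((E\varphi_1)(H_L)[P,\varphi_2(H_L)]-\varphi_1(H_L)[P,(E\varphi_2)(H_L)]\bigr)$ followed by (ii). The obstacle you flag is already removed by the definition, which reads $\varphi_1(H_L)H_L$ as $\psi_1(H_L)$; your announced reduction to smooth $P$ and the insertion of $\theta(H_L)$ are never needed, since the argument works directly for every $P\in\mathcal P$.
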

\begin{proof}
(i) By Lemma~\ref{lem:comm-tc}, $[\Pi,U]$ is trace-class. Lemma~\ref{lem:fedproj} gives the index
formula, and Lemma~\ref{lem:indep} replaces $\Pi$ by $P$.

(ii) By (i) and Lemma~\ref{lem:indep}, $-\mathcal I=2\pi i\,c$, and \eqref{eq:FD} gives \eqref{eq:trPU}.

(iii) Since $(E\varphi_1)\varphi_2=\varphi_1(E\varphi_2)$, we have
$\cT_P=i\bigl((E\varphi_1)(H_L)[P,\varphi_2(H_L)]-\varphi_1(H_L)[P,(E\varphi_2)(H_L)]\bigr)$, which is trace-class by
Lemma~\ref{lem:comm-tc}. By (ii),
\[
\Tr\,\cT_P=-\frac{\mathcal I}{2\pi}\int\bigl(E\varphi_1\varphi_2'-\varphi_1(E\varphi_2)'\bigr)dE
=\frac{\mathcal I}{2\pi}\int\varphi_1\varphi_2\,dE .
\]
\end{proof}
\noindent Only \eqref{eq:decay} with the constants of Theorem~\ref{thm:sG}, Theorem~\ref{thm:traceclass}
and Lemmas~\ref{lem:comm-tc}--\ref{lem:fedproj} were used. The theorem therefore also holds on the spectral
route (Theorem~\ref{thm:fiberexist}), and for tangential derivatives.

\paragraph{Stability and locality.}\label{ssec:stab} By Theorem~\ref{thm:quantisation}, every statement below on $\mathcal I$ holds for $\sigma_I=\mathcal I/2\pi$.

\begin{proposition}\label{prop:normres}
Let $s\mapsto(H_s,T_s)$, $s\in[0,1]$, satisfy \textup{[H1]--[H4]} with constants uniform in $s$ and
coefficients continuous in $s$ in $C_b^\infty$, and let $T_s$ be boundary conditions \eqref{eq:traceop}
with $B_j$ differential, row orders independent of $s$ and coefficients continuous in $s$, such that each $T_s$ is a termination, with field $L_s$ \eqref{eq:Lfield}, and the Shapiro--Lopatinski condition holds uniformly in $s$. Then, for any fixed $z_0\in\C\setminus\R$, $s\mapsto(H_{s,L_s}-z_0)^{-1}$ is norm-continuous, and so is $s\mapsto f(H_{s,L_s})$ for every $f\in\C\oplus C_0(\R)$, in particular for $U_s=e^{2\pi i\varphi(H_{s,L_s})}$.
\end{proposition}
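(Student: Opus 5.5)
The plan is to compare resolvents at nearby parameters $s,s'$ by a second-resolvent-type identity that accounts for the moving domain, using the right inverse $R_{T_s}$ to correct domains, exactly as in Step~3 of Lemma~\ref{lem:CT}. Write $H_s:=H_{s,L_s}$ and $R_s:=(H_s-z_0)^{-1}$.

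First, the a priori estimate of Lemma~\ref{lem:ellreg}\textup{(i)} with $n=1$, summed over boxes, gives $\norm{w}_{H^p}\le C(\norm{H_sw}+\norm w)$ on $\cD_{L_s}$ with $C$ uniform in $s$. Uniformity holds because the principal symbols, coefficient bounds and Shapiro--Lopatinski constants are uniform by hypothesis. Hence $\norm{R_s}_{L^2\to H^p}\le C(1+|z_0|)/|\Im z_0|$ uniformly in $s$. I also need right inverses $R_{T_s}:\partial\cH\to H^p$, where $\partial\cH=\prod_iH^{p-m_i-1/2}(\Gam)$ is independent of $s$ since the row orders are fixed. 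These should be bounded uniformly for $s$ near a given $s'$. I would obtain this by perturbation: $T_sR_{T_{s'}}=\mathbf 1+(T_s-T_{s'})R_{T_{s'}}$, and $\norm{T_s-T_{s'}}_{H^p\to\partial\cH}\to0$ because the coefficients of the differential $B_j$ are continuous in $s$ in $C_b^\infty$. So $R_{T_s}:=R_{T_{s'}}(T_sR_{T_{s'}})^{-1}$ is a right inverse with uniform bounds near $s'$.

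Next, fix $f\in L^2$, put $\psi=R_sf\in\cD_{L_s}$, and correct it to $\psi_0:=\psi-R_{T_{s'}}T_{s'}\psi\in\cD_{L_{s'}}$. Since $T_s\psi=0$, we have $\norm{\psi-\psi_0}_{H^p}\le C\norm{T_{s'}-T_s}_{H^p\to\partial\cH}\norm{\psi}_{H^p}$, which is $o(1)\norm f$. Then
\[
(H_{s'}-z_0)\psi_0=f+(H_{s'}-H_s)\psi-(H_{s'}-z_0)(\psi-\psi_0).
\]
The last two terms have $L^2$ norm $o(1)\norm f$, since $H_{s'}-H_s\to0$ as operators $H^p\to L^2$ by coefficient continuity. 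Applying $R_{s'}$ gives $\psi_0=R_{s'}f+o(1)\norm f$, and hence $\norm{(R_s-R_{s'})f}\le o(1)\norm f$, uniformly in $f$. This is norm continuity of the resolvent at $z_0$. By the resolvent identity in $z$, it then holds at every nonreal $z$, uniformly on compact subsets.

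For the functional calculus, norm-resolvent continuity at one nonreal point implies norm continuity of $f(H_s)$ for every $f\in C_0(\R)$. The argument is a Stone--Weierstrass approximation: the functions $E\mapsto(E-z)^{-1}$ and their conjugates generate a dense subalgebra of $C_0(\R)$, and $\norm{f(H)}\le\norm f_\infty$ for self-adjoint $H$. Constants are trivial. For $U_s$, write $e^{2\pi i\varphi}=1+g$ with $g\in C_c^\infty\subset C_0$.

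The main obstacle is the moving domain. $R_s$ and $R_{s'}$ act on different domains, so the naive identity $R_s-R_{s'}=R_{s'}(H_{s'}-H_s)R_s$ is unavailable. This is why the correction $\psi_0$ and a uniform right inverse are needed, together with the uniform $H^p$ bound coming from uniform Shapiro--Lopatinski ellipticity.
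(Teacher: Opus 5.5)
Your proposal is correct and follows the paper's proof essentially step for step. The steps are the uniform $L^2\to H^p$ bound on $R_s$ from the summed a priori estimate, the base-point correction $\psi_0=\psi-R_{T_{s'}}T_{s'}\psi$, the identity for $(H_{s'}-z_0)\psi_0$ with $\norm{H_{s'}-H_s}_{H^p\to L^2}$ and $\norm{T_{s'}-T_s}_{H^p\to\partial\cH}$ tending to zero, and Stone--Weierstrass for the functional calculus. Your only departure is the perturbative construction of right inverses $R_{T_s}$ for $s$ near $s'$; it is harmless but unused, since only the base-point inverse $R_{T_{s'}}$ enters the estimate.
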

\begin{proof}
The $H_{s,L_s}$ act on different domains; a right inverse of the trace operator corrects one domain into another.

\emph{The a priori estimate.} Summing Lemma~\ref{lem:ellreg}\textup{(i)} with $n=1$ over unit boxes, whose triples $B^*$ have bounded overlap, and using $\norm u\le\norm{(H_{s,L_s}-z_0)u}/|\Im z_0|$,
\begin{equation}\label{eq:apriori-unif}
  \norm u_{H^p(\Om)}\le C\,\norm{(H_{s,L_s}-z_0)u},\qquad u\in\cD_{L_s},
\end{equation}
with $C$ independent of $s$ by uniformity of the Shapiro--Lopatinski condition. Writing
$R_s:=(H_{s,L_s}-z_0)^{-1}$, this says that $\norm{R_s}_{L^2\to H^p}\le C$ uniformly in $s$.

\emph{A correction.} By hypothesis $T_0$ has a bounded right inverse $R_{T_0}:\partial\cH\to H^p(\Om)$, with $\partial\cH$ as in Step~3 of the proof of Lemma~\ref{lem:CT}. Put $\alpha(s):=\norm{H_0-H_s}_{H^p\to L^2}$ and $\beta(s):=\norm{T_0-T_s}_{H^p\to\partial\cH}$; both tend to $0$ as $s\to0$, the coefficients being continuous in $s$ and the row orders fixed.

\emph{The estimate.} Fix $h\in L^2(\Om)$, put $u:=R_sh\in\cD_{L_s}$ and $v:=u-R_{T_0}T_0u$.
(a) Since $T_su=0$, $T_0u=(T_0-T_s)u$, so by \eqref{eq:apriori-unif} $\norm{u-v}_{H^p}\le C\beta(s)\norm h$.
(b) $T_0v=0$, so $v\in\cD_{L_0}$, and
\[
  (H_{0,L_0}-z_0)v=h+(H_0-H_s)u-(H_0-z_0)(u-v).
\]
Applying $R_0$, bounded on $L^2$, and using (a) and \eqref{eq:apriori-unif}, $\norm{v-R_0h}\le C(\alpha(s)+\beta(s))\norm h$.
(c) Hence $\norm{(R_s-R_0)h}\le\norm{u-v}+\norm{v-R_0h}\le C(\alpha(s)+\beta(s))\norm h$, so $\norm{R_s-R_0}\to0$ as $s\to0$. The same argument with any base point $s_0\in[0,1]$ in place of $0$ gives continuity on $[0,1]$.

\emph{Functional calculus.} For $f=c+f_0$, $f_0\in C_0(\R)$, approximate $f_0$ uniformly by polynomials in $(E-z_0)^{-1}$ and $(E-\bar z_0)^{-1}$ (Stone--Weierstrass), using $\norm{f_0(H_{s,L_s})}\le\norm{f_0}_\infty$.
\end{proof}

\begin{corollary}[homotopy invariance and independence of the switch]\label{cor:stab}
\begin{enumerate}[label=\textup{(\roman*)},nosep]
\item $\mathcal I$ is constant along every path meeting the hypotheses of
Proposition~\ref{prop:normres}. This holds for a path of admissible pointwise terminations under \textup{[S]} and \textup{[M]} by  Remark~\ref{rem:SL};
\item $\mathcal I$ does not depend on the choice of switch $\varphi$.
\end{enumerate}
\end{corollary}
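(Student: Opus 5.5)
For (i), I would use norm continuity of $U_s$ together with a trace-norm continuity of $[\Pi,U_s]$, and then invoke the constancy of the Fredholm index under norm-continuous deformations. Proposition~\ref{prop:normres} shows that $s\mapsto U_s=e^{2\pi i\varphi(H_{s,L_s})}$ is norm-continuous. Consequently $s\mapsto\Pi U_s\Pi+(\mathbf 1-\Pi)$ is a norm-continuous path of bounded operators. By Theorem~\ref{thm:quantisation}(i), applied at each $s$, every operator on this path is Fredholm, since each $(H_s,L_s)$ satisfies the hypotheses of Theorem~\ref{thm:sG} and these hypotheses are uniform in $s$. Local constancy of the index on the set of Fredholm operators in the norm topology, combined with connectedness of $[0,1]$, then shows that $\mathcal I_s=\Ind(\Pi U_s\Pi+(\mathbf 1-\Pi))$ does not depend on $s$.

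This argument needs no trace-norm continuity, but one point must be checked. The Fredholm property at each $s$ comes from $[\Pi,U_s]$ being trace-class, hence compact. Compactness is what makes $\Pi U_s\Pi$ essentially unitary on $\Ran\Pi$, and this holds pointwise in $s$ by Lemma~\ref{lem:comm-tc}. Since the Fredholm set is open and the index is locally constant on it, norm continuity suffices.

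For the sentence about admissible pointwise terminations, I would rely on Remark~\ref{rem:SL}. Under \textup{[S]} and \textup{[M]}, admissibility of a pointwise $L$ implies uniform Shapiro--Lopatinski ellipticity. Along a path of admissible terminations with continuous coefficients, the admissibility bound $\inf_x|\ell|$ is uniform on the compact interval $[0,1]$, so the ellipticity constants are uniform as well. The rows of $T_s$ can be chosen adapted to the filtration, which gives fixed row orders and coefficients continuous in $s$. Self-adjointness comes from Proposition~\ref{prop:SA}(i), and the hypotheses of Proposition~\ref{prop:normres} are then met. The main obstacle here is making the choice of normal rows continuous in $s$. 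I would handle it locally in $s$, since the leading coefficients are given by continuous Plücker coordinates of $L_s^\downarrow$ with full rank, and then patch the local choices along $[0,1]$.

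For (ii), take two switches $\varphi_0,\varphi_1\in C^\infty\cS[0,1]$ with $\varphi_j'$ supported in $\Delta$. I would join them by the convex combination $\varphi_t=(1-t)\varphi_0+t\varphi_1$, which remains a non-decreasing switch with derivative supported in a fixed compact $K\subset\Delta$. The map $t\mapsto e^{2\pi i\varphi_t(H_L)}$ is norm-continuous because $\|e^{2\pi i\varphi_t}-e^{2\pi i\varphi_{t'}}\|_\infty\le2\pi|t-t'|\,\|\varphi_1-\varphi_0\|_\infty$. The same index argument then gives constancy of $\mathcal I$. Alternatively, part (ii) follows directly from Theorem~\ref{thm:quantisation}(iii): there $2\pi\Tr\,\cT_P=\mathcal I\int\varphi'\,dE=\mathcal I$, and by Lemma~\ref{lem:indep} the constant $c$ is independent of $\varphi$, since $\Tr(U^*PU-P)=2\pi i c$ for every switch.
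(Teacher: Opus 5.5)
Your proof of (i) is the paper's: norm continuity of $U_s$ from Proposition~\ref{prop:normres}, the Fredholm property of each $\Pi U_s\Pi+(\mathbf 1-\Pi)$ from Lemma~\ref{lem:comm-tc}, and local constancy of the index. You are right that trace-class $[\Pi,U_s]$ at each $s$ suffices, so the uniform trace-norm bound the paper mentions is not needed once the Fredholm set is known to be open. For (ii), your alternative (the constant $c$ of Lemma~\ref{lem:FD} does not involve $\varphi$, via Lemma~\ref{lem:indep}) is exactly the paper's proof, and your convex homotopy $\varphi_t$ is an equally valid variant that reuses the index-stability argument of (i).
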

\begin{proof}
(i) Uniform Shapiro--Lopatinski ellipticity gives uniform constants in \eqref{eq:decay}, hence a uniform trace-norm bound on $[\Pi,U_s]$ by Lemma~\ref{lem:comm-tc}, so every $\Pi U_s\Pi+(\mathbf 1-\Pi)$ is Fredholm; and $s\mapsto U_s$ is norm-continuous by Proposition~\ref{prop:normres}. So the index is locally constant.

(ii) is Lemma~\ref{lem:indep} with \eqref{eq:quant}.
\end{proof}

\begin{theorem}[locality]\label{thm:local}
Let $(H,T)$ and $(H',T')$ both satisfy \textup{[H1]--[H4]}, with $B_j,B'_j$ differential along
$\Gam$ and uniform Shapiro--Lopatinski ellipticity, and let their coefficients
coincide outside a compact set $K\subset\overline\Om$. Then $U-U'$ is compact and
$\mathcal I(H,T)=\mathcal I(H',T')$. In particular, if $(H,T)$ coincides outside a compact set with an
$x$-independent pair $(H_\infty,T_\infty)$ satisfying the same hypotheses, then
$\mathcal I(H,T)=\mathcal I(H_\infty,T_\infty)$, computed in sections~\ref{sec:APS}--\ref{sec:range}.
\end{theorem}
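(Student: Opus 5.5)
The plan has two parts. First show that $\mathsf W-\mathsf W'=g(H_L)-g(H'_{L'})$ is compact, in fact trace-class after localization, by comparing resolvents through the Helffer--Sj\"ostrand formula \eqref{eq:HS}. Then deduce $\mathcal I=\mathcal I'$ from the Fedosov formula.

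\emph{The resolvent difference.} The two realizations act on different domains $\cD_L$ and $\cD_{L'}$, so the second resolvent identity cannot be used directly. I would glue the domains with a cutoff. Let $\theta\in C_c^\infty(\R^2)$ equal $1$ near $K$, and let $\theta_1$ equal $1$ near $\supp\theta$. Because $T$ and $T'$ are differential along $\Gam$ and coincide outside $K$, multiplication by $1-\theta$ maps $\cD_L$ into $\cD_{L'}$, at least modulo a boundary term supported in $\supp\nabla\theta$. That term is corrected by the right inverse $R_{T'}$ applied to $[T',1-\theta]u$, a compactly supported trace. This gives a parametrix-type identity
\[
R_{L'}(z)-R_L(z)=R_{L'}(z)\,\mathcal E(z)+\theta_1\,(\cdots)\,\theta_1 ,
\]
where every error term contains a compactly supported cutoff composed with $R_L(z)$ or $R_{L'}(z)$, or a commutator $[H,\theta]$ of order $p-1$. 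By Lemma~\ref{lem:ellreg}(i) and the uniform a priori estimate, these terms are bounded $L^2\to H^p$ with norm $O(|\Im z|^{-k})$. Being localized in a compact set, they are compact (Rellich), and trace-class after taking enough powers. Integrating against $\dbar\widetilde f$ in \eqref{eq:HS}, whose vanishing order absorbs the $|\Im z|^{-k}$ losses, shows that $\mathsf W-\mathsf W'$ is compact. The same decomposition also shows that $\theta_1(\mathsf W-\mathsf W')$ and $(\mathsf W-\mathsf W')\theta_1$ satisfy the localized bounds of Theorem~\ref{thm:traceclass}. The part far from $K$ is handled by Combes--Thomas (Lemma~\ref{lem:CT}): the kernels of $g(H_L)$ and $g(H'_{L'})$ differ only by terms propagated from $K$, which decay like $\langle x\rangle^{-M}e^{-c(y+y')}$ away from $K$ as in Theorem~\ref{thm:sG}. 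Hence $\mathsf W-\mathsf W'$ is in fact trace-class.

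\emph{Equality of indices.} I would use the formula $\mathcal I=-\Tr(U^*PU-P)=-2\pi i\,c$ and choose a switch $P=\Pi_{x_0}$ with $x_0$ far to the right of $K$. Using Lemma~\ref{lem:comm-tc}, write
\[
U^*PU-P-(U'^*PU'-P)
\]
as a sum of terms each containing either $[P,\mathsf W-\mathsf W']$ or a product involving $\mathsf W-\mathsf W'$ and $[P,\cdot]$. Since $\Tr[P,\text{trace-class}]=0$ by Lemma~\ref{lem:comm-tc}(ii), it remains to control the terms of the form $(\mathsf W-\mathsf W')^*[P,\mathsf W]$. These are products of a trace-class operator and a trace-class operator supported near $x=x_0$. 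By the decay estimates their trace is $O(\langle x_0\rangle^{-M})$. The index is independent of $x_0$, so letting $x_0\to\infty$ gives $\mathcal I=\mathcal I'$.

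An alternative route, if the trace estimates are delicate, is to join $(H,T)$ to $(H',T')$ by the affine path $s\mapsto(H_s,T_s)$ and apply Corollary~\ref{cor:stab}. This only works when uniform Shapiro--Lopatinski ellipticity holds along the path, which is not guaranteed, so the compact-perturbation argument is the primary plan.

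\emph{Main obstacle.} The hardest step is the domain-gluing construction for differential terminations. Here the cutoff does not preserve the domain, and one must check that the correction $R_{T'}[T',\theta]u$ is compactly supported in $x$ and maps $L^2$ compactly. This requires a localized right inverse, obtained by multiplying $R_{T'}$ by $\theta_1$ and checking $T'\theta_1R_{T'}=\mathbf 1$ on traces supported in $\supp\theta$ up to a smoothing error. The final statement about $(H_\infty,T_\infty)$ is then the special case $(H',T')=(H_\infty,T_\infty)$.
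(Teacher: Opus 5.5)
The compactness half of your proposal is the paper's argument. The paper also multiplies $R'f$ by $\tilde\theta=1-\theta$ and corrects with the right inverse, which gives
\[
  \tilde\theta R'-R\tilde\theta=R[H,\tilde\theta]R'+\bigl(\mathbf 1-R(H-z)\bigr)R_T[T,\tilde\theta]R' .
\]
From this, $R-R'$ is compact by Rellich, and $U-U'$ is compact by the Helffer--Sj\"ostrand integral.

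The ``main obstacle'' you single out does not arise. You need no localized right inverse, and the correction $R_T[T,\tilde\theta]R'f$ need not be compactly supported. The reason is that $[T,\tilde\theta]R'$ is already compact from $L^2$ into $\partial\cH$: its rows are one order below those of $T$, with coefficients supported in $\supp\nabla\theta$, so the trace theorem and Rellich on $\Gam$ apply. The factor $\bigl(\mathbf 1-R(H-z)\bigr)R_T$ only has to be bounded $\partial\cH\to L^2$.

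For the equality of indices your route differs from the paper's. The paper uses no traces. It sets $V:=UU'^*$, so that $V-\mathbf 1$ is compact, and writes $\Pi U\Pi=\Pi V\Pi\cdot\Pi U'\Pi+\Pi(V-\mathbf 1)(\mathbf 1-\Pi)U'\Pi$, the last term being compact. Hence $\Ind(\Pi U\Pi)=\Ind(\Pi V\Pi)+\Ind(\Pi U'\Pi)$, and $\Pi V\Pi|_{\Ran\Pi}=\mathbf 1+\Pi(V-\mathbf 1)\Pi$ has index $0$.

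Your Fedosov route with $x_0\to\infty$ can be made to work, but not with the justification you give. You claim that $\mathsf W-\mathsf W'$ is trace-class and that the cross terms are $O(\langle x_0\rangle^{-M})$. Both would need Combes--Thomas bounds for the difference of the two functional calculi, including the nonlocal Poisson term $\bigl(\mathbf 1-R(H-z)\bigr)R_T$, and you do not supply them. Neither is needed:
\begin{itemize}
\item[] $\Tr[\Pi_{x_0},\mathsf W-\mathsf W']=0$ follows from Lemma~\ref{lem:comm-tc}\textup{(ii)} applied to $\mathsf W$ and to $\mathsf W'$ separately.
\item[] There are two cross terms, not one: $\Tr\bigl((\mathsf W-\mathsf W')^*[\Pi_{x_0},\mathsf W]\bigr)$ and $\Tr\bigl(\mathsf W'^*[\Pi_{x_0},\mathsf W-\mathsf W']\bigr)$. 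The second is not of the first form as it stands. It equals $-\Tr\bigl([\Pi_{x_0},\mathsf W'^*](\mathsf W-\mathsf W')\bigr)$, because $\Tr[\Pi_{x_0},Z]=0$ whenever $[\Pi_{x_0},Z]$ is trace-class, its diagonal blocks vanishing.
\item[] Both terms are bounded by $\norm{\langle x-x_0\rangle^{-k}(\mathsf W-\mathsf W')}$ times the trace norms of $[\Pi_{x_0},\mathsf W]$ and $[\Pi_{x_0},\mathsf W'^*]$ weighted by $\langle x-x_0\rangle^{k}$. By \eqref{eq:decay}, these weighted trace norms are bounded uniformly in $x_0$, exactly as in the proof of Lemma~\ref{lem:comm-tc}\textup{(i)}.
\item[] Since $\langle x-x_0\rangle^{-k}\to0$ strongly, compactness of $\mathsf W-\mathsf W'$ alone gives $o(1)$. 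Together with the $x_0$-independence of Lemma~\ref{lem:indep}, this suffices.
\end{itemize}

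Your route stays within the trace formalism and makes explicit that the two currents differ only through contributions localized near $K$. The price is translation-uniform weighted trace bounds and a limiting argument. The paper's route needs nothing beyond compactness of $U-U'$ and the Fredholm property of both Toeplitz operators.
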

\begin{proof}
Let $\theta\in C_c^\infty(\R^2)$ equal $1$ near $K$ and $\tilde\theta:=1-\theta$, so that $H=H'$ and $T=T'$ near
$\supp \tilde\theta$. Write $R,R'$ for the resolvents at a fixed $z\notin\R$ and $R_T$ for a bounded right inverse
of $T$ (Step~3 of the proof of Lemma~\ref{lem:CT}). For $f\in L^2$ and $w:=R'f$, locality gives
$T(\tilde\theta w)=[T,\tilde\theta]w$, $\tilde\theta$ vanishing where $T\ne T'$; so $v:=\tilde\theta w-R_T[T,\tilde\theta]w\in\cD_L$ and
$(H-z)v=\tilde\theta f+[H,\tilde\theta]w-(H-z)R_T[T,\tilde\theta]w$. Applying $R$,
\[
  \tilde\theta R'-R\tilde\theta=R[H,\tilde\theta]R'+\bigl(\mathbf 1-R(H-z)\bigr)R_T[T,\tilde\theta]R' .
\]
$[H,\tilde\theta]$ has order $\le p-1$ and coefficients supported in $\supp\nabla\theta$, and $R'$ is bounded
$L^2\to H^p$ by ellipticity, so $[H,\tilde\theta]R'$ is compact by Rellich. By Leibniz in $y$ and in $x$, $[T,\tilde\theta]=\sum_j[B_j,\tilde\theta|_\Gam]\gamma_j+\sum_{j}\sum_{1\le k\le j}\binom jkB_j(D_y^k\tilde\theta)|_\Gam\gamma_{j-k}$
has rows of order one less than those of $T$ and coefficients supported in $\supp\nabla\theta$, so
$[T,\tilde\theta]R'$ is compact into $\partial\cH$ by the trace theorem and Rellich on $\Gam$. Hence $\tilde\theta R'-R\tilde\theta$ is compact, and so are $\theta R'$ and $R\theta$, whence $R-R'$, for every $z\notin\R$. By \eqref{eq:HS}, $U-U'=g(H_L)-g(H'_{L'})=-\frac1\pi\int\dbar\widetilde g(z)\bigl(R(z)-R'(z)\bigr)\,dA(z)$, a norm-convergent integral of compact operators, hence compact.

Put $V:=UU'^*$, unitary with $V-\mathbf 1$ compact. Then
$\Pi U\Pi=\Pi V\Pi\cdot\Pi U'\Pi+\Pi(V-\mathbf 1)(\mathbf 1-\Pi)U'\Pi$, the last term compact, so
$\Ind(\Pi U\Pi)=\Ind(\Pi V\Pi)+\Ind(\Pi U'\Pi)$ on $\Ran\Pi$, the Fredholm property coming from
Theorem~\ref{thm:traceclass} and Lemma~\ref{lem:comm-tc}. Finally $\Pi V\Pi|_{\Ran\Pi}=\mathbf 1+\Pi(V-\mathbf 1)\Pi$ is the identity plus a compact operator, so $\Ind(\Pi V\Pi)=0$.
\end{proof}

\begin{theorem}[lower-order terms on the collar]\label{thm:collar}
Let $H$ and $L$ be $x$-independent, satisfy \textup{[H1]--[H4]}, and let the principal symbol not depend on $y$ on the
collar,
\begin{equation}\label{eq:flatsymbol}
  \sigma_p(y;\xi,\zeta)=\sigma_p(\xi,\zeta)\qquad\text{for }0\le y\le y_0 .
\end{equation}
Let $H^\circ$ be obtained from $H$ by freezing $a^w$ at $y=y_0$, so that
$H^\circ=H^{\R^2}$ on $\Om$, put $\Vp:=H-H^\circ$, of order $\le p-1$ by \eqref{eq:flatsymbol} and
supported in $\{y<y_0\}$, and $H_t:=H-t\Vp$. With $K_1:=\cA_H^{-1}\cA_\Vp$ and
$\Xi_t:=(\mathbf 1-tK_1)^{-1/2}$, so that $\Xi_t^*\cA_{H_t}\Xi_t=\cA_H$, let $L_t:=\Xi_tL$ and
$L^\circ:=L_1$. If $(H_t,L_t)$ is uniformly Shapiro--Lopatinski elliptic for $t\in[0,1]$, which is automatic at
$p=1$ for admissible $L$ since then $\cA_\Vp=0$ and $L_t=L$, then
\begin{equation}\label{eq:collar}
  \mathcal I[H,L]=\mathcal I[H^\circ,L^\circ] .
\end{equation}
\end{theorem}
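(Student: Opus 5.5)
The plan is to deform along the path $t\mapsto(H_t,L_t)$, $t\in[0,1]$, which joins $(H,L)$ at $t=0$ to $(H^\circ,L^\circ)$ at $t=1$, and to apply the homotopy invariance of Corollary~\ref{cor:stab}\textup{(i)}. The work is in checking that this path satisfies the hypotheses of Proposition~\ref{prop:normres}.

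\emph{Step 1 (the Green forms along the path).} Since $\Vp=H-H^\circ$ has order $\le p-1$ by \eqref{eq:flatsymbol}, Proposition~\ref{prop:green}\textup{(iv)} says that $\cA_\Vp$ vanishes on the blocks $j+l\ge p-1$. By linearity, $\cA_{H_t}=\cA_H-t\cA_\Vp=\cA_H(\mathbf 1-tK_1)$. The blocks of $\cA_\Vp$ satisfy the same weight constraints as elements of $\mathfrak A$, shifted by one, so the argument of Theorem~\ref{thm:graded}, with Lemma~\ref{lem:antiinv} applied to $\cA_H^{-1}$, gives $K_1F_k\subseteq F_{k+1}$. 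Moreover $\cA_HK_1=\cA_\Vp$ is Hermitian. Lemma~\ref{lem:tricong} with $\mathcal C=\cA_H$ and $K=-tK_1$ then gives $\Xi_t^*\cA_{H_t}\Xi_t=\cA_H$, where $\Xi_t$ is polynomial in $t$ and in $\xi$. Hence $L_t=\Xi_tL$ is $\cA_{H_t}$-Lagrangian at every $(t,\xi)$.

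\emph{Step 2 (each $H_{t,L_t}$ is a termination).} The data are $x$-independent, so Proposition~\ref{prop:SA}\textup{(i)} makes each $H_{t,L_t}$ essentially self-adjoint. It is self-adjoint on $\cD_{L_t}$ because of the assumed uniform Shapiro--Lopatinski ellipticity. The boundary operator is $T_t=T\,\Xi_t^{-1}$, acting on Cauchy data. Its coefficients are differential along $\Gam$, since $\Xi_t^{-1}=(\mathbf 1-tK_1)^{1/2}$ is polynomial in $\xi$, and they are polynomial, hence continuous, in $t$. We must also check that the row orders do not depend on $t$. Here the grading helps: $K_1$ lowers trace order while raising tangential order by at most the same amount, so $\Xi_t^{-1}$ preserves the Boutet de Monvel order of each row. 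If this bookkeeping fails for some rows, I would instead reparametrize by choosing normal rows adapted to the filtration, as in the proof of Proposition~\ref{prop:SA}\textup{(i)}. The coefficients of $H_t$ are affine in $t$ and in $C_b^\infty$. Assumptions \textup{[H1]}, \textup{[H2]} and \textup{[H4]} hold uniformly, because the principal symbol is unchanged and the perturbation is supported in $y<y_0$.

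\emph{Step 3 (uniform bulk gap, and conclusion).} The bulk operator $H_t^{\R^2}$ is defined by freezing at $y=y_0$, where $\Vp$ vanishes. So $H_t^{\R^2}=H^{\R^2}$ for all $t$, and \textup{[H3]} holds with the same $\Delta$. Proposition~\ref{prop:normres} then applies, Corollary~\ref{cor:stab}\textup{(i)} gives that $\mathcal I[H_t,L_t]$ is constant, and \eqref{eq:collar} follows. At $p=1$ the Green form is the constant matrix $A_1$, so $\cA_\Vp=0$, $\Xi_t=\mathbf 1$ and $L_t=L$. The Shapiro--Lopatinski condition involves only principal data at $p=1$ (Remark~\ref{rem:SL}), and these do not depend on $t$, so admissibility of $L$ gives uniform ellipticity along the whole path.

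The main obstacle I expect is Step 2: verifying that $T\,\Xi_t^{-1}$ stays a legitimate trace operator with $t$-independent row orders and a uniformly bounded right inverse. This is needed so that the correction argument of Proposition~\ref{prop:normres} applies. I would first treat the normal case, where the right inverses come from \cite{Grubb96} and depend continuously on $t$. For non-normal rows I would adapt the construction in Appendix~\ref{app:import}.
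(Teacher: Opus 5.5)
Your proposal is correct and is the paper's own argument. Lemma~\ref{lem:tricong} makes $L_t=\Xi_tL$ an $\cA_{H_t}$-Lagrangian differential condition with the row orders of $L$, \textup{[H3]}--\textup{[H4]} are unchanged because $\Vp$ vanishes for $y\ge y_0$, and Corollary~\ref{cor:stab}\textup{(i)} concludes, with admissibility giving ellipticity at $p=1$.

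The right-inverse issue you flag as the main obstacle, which the paper does not discuss, is settled in one line and needs no normality. The $\xi^d$-coefficient of $\Xi_t^{\pm1}$ raises the slot index by at least $d$, so $\Xi_t$ is bounded and invertible on $\prod_jH^{p-j-1/2}(\Gam)$. Hence, for any bounded lift $\mathcal K$ with $\gamma\mathcal K=\mathbf 1$, the operator $\mathcal K\,\Xi_t\,\gamma R_T$ is a right inverse of $T\Xi_t^{-1}$, uniformly bounded in $t\in[0,1]$.
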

\begin{proof}
By Proposition~\ref{prop:green}\textup{(iv)} $\cA_\Vp$ vanishes on the blocks $j+l\ge p-1$, so by
Lemma~\ref{lem:antiinv} $K_1$ raises the filtration index and Lemma~\ref{lem:tricong} applies. Each
$\Xi_t(\xi)$ is a polynomial Fourier multiplier whose $\xi^d$-coefficient raises the slot index by at least $d$,
so $L_t$ is a differential boundary condition with the row orders of $L$, Lagrangian for $\cA_{H_t}(\xi)$ at every
$\xi$, hence self-adjoint by Proposition~\ref{prop:SA}\textup{(i)}. \textup{[H3]} and \textup{[H4]} hold
along the path, $\Vp$ vanishing for $y\ge y_0$, and Corollary~\ref{cor:stab}\textup{(i)} gives
\eqref{eq:collar}. At $p=1$ ellipticity is admissibility, and the endpoints $V_\pm$ of $\sigma_p$ do not move along the path thanks to
\eqref{eq:flatsymbol}.
\end{proof}
\noindent At $p=1$, $\cA_\Vp=0$ and $L_t=L$, so the $x$-independence of $H$ and $L$ is not needed when $L$ is pointwise: the $(H_t,L)$ are terminations by Proposition~\ref{prop:SA}\textup{(i)}, and the argument is unchanged.

\paragraph{The spectral route.}\label{ssec:fiberroute}
For $x$-independent data,  the realization fibers over $\xi$, Lemma~\ref{lem:essspec}, and Theorem~\ref{thm:edge} suffice to define the invariant without ellipticity condition on $(H,L)$; see also \cite{Bal2022,graf2021topology,tauber2020anomalous,JudTauber,Gontier23}.
The hypothesis replacing ellipticity is escape (Definition~\ref{def:escape}).  We first give a sufficient condition for it and show that it cannot be avoided.

\medskip\noindent\textbf{Escape.} For $K\subset\Delta$ compact, $S(K)$ of Definition~\ref{def:escape} is closed since by Theorem~\ref{thm:edge} it is the restriction to $\R_\xi$ of the zero set of the continuous function $(\xi,E)\mapsto\det\bigl(U_L(\xi)-U_+(\xi,E)\bigr)$ on $\R_\xi\times K$ (Lemma~\ref{lem:analytic}). The edge spectrum \emph{escapes in measure} if every $|S(K)|$ is finite, which escape implies.
\begin{proposition}\label{prop:exact}
Let $H$ and $L$ be $x$-independent.
\begin{enumerate}[label=\textup{(\roman*)},nosep]
\item If $L$ is admissible and $U_+(\xi,E)\to U_+(\pm)$ uniformly for $E$ in compact subsets of $\Delta$, the edge spectrum escapes.
\item If the edge spectrum does not escape in measure, then for some $u\in C_c^\infty(\Delta)$ and every $\rho\in L^2(\R)\setminus\{0\}$, $\rho(x)\,u(H_L)$ is not trace-class.
\end{enumerate}
\end{proposition}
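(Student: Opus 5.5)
For (i), I would argue by contradiction through the edge determinant of Theorem~\ref{thm:edge}. Fix a compact $K\subset\Delta$ and suppose $S(K)$ is unbounded. Then there are $\xi_n\to+\infty$ (or $-\infty$; the two cases are identical) and $E_n\in K$ with $E_n\in\spec\Hh_L(\xi_n)$. By Theorem~\ref{thm:edge}, after transport by $\Phi(\xi_n)^{-1}$ this gives $\det\bigl(U_L(\xi_n)-U_+(\xi_n,E_n)\bigr)=0$. Admissibility (Definition~\ref{def:adm}) supplies $U_L(\xi_n)\to U_{L(+)}$. The uniform convergence hypothesis gives $U_+(\xi_n,E_n)\to U_+(+)$, since $E_n$ stays in $K$, and this step is exactly where uniformity in $E$ is used. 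The determinant is continuous on $\U(\kp)\times\U(\kp)$, so passing to the limit yields $\ell(+)=\det\bigl(U_{L(+)}-U_+(+)\bigr)=0$. This contradicts \eqref{eq:admdef}. Hence $S(K)$ is bounded, which is escape.

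For (ii), I would exploit the direct-integral structure. Escape fails in measure, so some compact $K\subset\Delta$ has $|S(K)|=\infty$. I would choose $u\in C_c^\infty(\Delta)$ with $u\ge0$ and $u\ge1$ on $K$, so that $\rho\,u(H_L)$ is trace-class if and only if $|\rho|\,u(H_L)$ is. If $\rho\,u(H_L)$ were trace-class, then so would be $A:=u(H_L)\,|\rho|^2\,u(H_L)$, a positive operator, with trace equal to $\norm{|\rho|u(H_L)}_{HS}^2$. I would compute this Hilbert--Schmidt norm fiberwise. After Fourier transform in $x$, $u(H_L)=\int^\oplus u(\Hh_L(\xi))\,d\xi$. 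By Lemma~\ref{lem:essspec} and Theorem~\ref{thm:edge}, for $\xi\in S(K)$ the operator $u(\Hh_L(\xi))$ dominates the projection onto at least one edge eigenfunction $\psi_\xi$ with eigenvalue in $K$. Multiplication by $\rho(x)$ acts as convolution by $\hat\rho/(2\pi)$ in $\xi$. The kernel of $\rho\,u(H_L)$ in the mixed variables $(x;\xi',y,y')$ is then $\rho(x)e^{ix\xi'}K_{u(\Hh_L(\xi'))}(y,y')$, so
\[
\norm{\rho\,u(H_L)}_{HS}^2=\frac1{2\pi}\norm\rho_{L^2}^2\int_\R\norm{u(\Hh_L(\xi))}_{HS}^2\,d\xi\ \ge\ \frac1{2\pi}\norm\rho_{L^2}^2\,|S(K)|=\infty .
\]
This shows $\rho\,u(H_L)$ is not even Hilbert--Schmidt, a fortiori not trace-class.

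The main obstacle is measurability together with the lower bound $\norm{u(\Hh_L(\xi))}_{HS}\ge1$ on $S(K)$. Both follow from the spectral theorem fiberwise: an eigenvalue $E\in K$ gives $\norm{u(\Hh_L(\xi))}_{HS}\ge u(E)\ge1$. Closedness of $S(K)$, noted just above the proposition, gives measurability.
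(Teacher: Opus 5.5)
Your proposal is correct and follows the paper's proof: (i) is the same openness-of-transversality argument, phrased by contradiction along a sequence $(\xi_n,E_n)$ where the paper argues directly, and (ii) is the same fiberwise Plancherel identity for the Hilbert--Schmidt norm followed by $\norm\cdot_2\le\norm\cdot_1$. The detour through $|\rho|$ and the positive operator $u(H_L)|\rho|^2u(H_L)$ is harmless but unnecessary, since trace class already implies Hilbert--Schmidt.
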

\noindent The uniformity in (i) holds at $p=1$, and under \textup{[M]} for the transported data $V^\Phi_\pm$, which under \textup{[S]} is \textup{[M]} itself (Lemma~\ref{lem:endpoints}, Corollary~\ref{cor:graded}\textup{(iii)} and the remark following it). By (ii), without escape in measure the localization of the density of states fails, and with it the hypothesis under which $\mathcal I$ is shown to exist. Escape is the right hypothesis, and (i) makes admissibility a verifiable sufficient condition for it.
\begin{proof}
Let $E_j(\xi)$ enumerate the eigenvalues of $\Hh_L(\xi)$ in $\Delta$ with multiplicity. By Lemma~\ref{lem:essspec} and Theorem~\ref{thm:edge}, $u(\Hh_L(\xi))$ has finite rank with
$\norm{u(\Hh_L(\xi))}_2^2=\sum_j|u(E_j(\xi))|^2$, $\norm\cdot_2$ the Hilbert--Schmidt norm, so Plancherel in the convolution variable gives, for
$u\in C_c^\infty(\Delta)$ and $\rho\in L^2(\R)$,
\[
  \norm{u(H_L)\,\rho}_2^2=\frac{\norm\rho_{L^2}^2}{2\pi}\int_\R\sum_j\bigl|u(E_j(\xi))\bigr|^2d\xi .
\]

\textup{(i)} Admissibility is $\det\bigl(U_{L(\pm)}-U_+(\pm)\bigr)\ne0$, and transversality is an {open} condition. Given a compact $K\subset\Delta$, $U_{L(\xi)}\to U_{L(\pm)}$ by definition of the limits $L(\pm)$ (Definition~\ref{def:adm}) and $U_+(\xi,E)\to U_+(\pm)$ uniformly for $E\in K$ by hypothesis, so $\det\bigl(U_L(\xi)-U_+(\xi,E)\bigr)\ne0$ for all $E\in K$ once $|\xi|\ge R(K)$, and Theorem~\ref{thm:edge} turns that into the absence of eigenvalues in $K$ there. Hence $S(K)\subseteq[-R(K),R(K)]$ for every compact $K$, which is escape.

\textup{(ii)} Fix a compact $K\subset\Delta$ with $|S(K)|=\infty$ and $u\in C_c^\infty(\Delta)$ with
$u\equiv1$ on $K$. Then $\sum_j|u(E_j(\xi))|^2\ge1$ for every $\xi\in S(K)$, so the identity above is
infinite, and $\norm\cdot_2\le\norm\cdot_1$ rules out trace-class.
\end{proof}

\begin{theorem}\label{thm:fiberexist}
Let $H$ and $L$ be $x$-independent, with $L$ real-analytic in $\xi$ and escaping edge spectrum. Then the kernel of $f(H_L)$, $f\in C_c^\infty(\Delta)$, obeys \eqref{eq:decay} with $m(M)=M$, and the conclusions of Theorem~\ref{thm:quantisation} hold; in particular $2\pi\sigma_I=\mathcal I\in\Z$.
\end{theorem}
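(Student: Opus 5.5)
The plan is to prove the kernel bound \eqref{eq:decay} directly by Fourier transform in $x$, using the fiber decomposition of Proposition~\ref{prop:SA}\textup{(i)}. Then, as noted after Theorem~\ref{thm:quantisation}, the quantization conclusions follow verbatim: the chain Theorem~\ref{thm:traceclass}, Lemmas~\ref{lem:comm-tc}--\ref{lem:fedproj} used only \eqref{eq:decay}.

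\emph{Fiber representation.} For $x$-independent data $H_L$ is the direct integral of $\Hh_L(\xi)$, so
\[
  K_f(x,y;x',y')=\frac1{2\pi}\int_\R e^{i(x-x')\xi}\,k_f(\xi;y,y')\,d\xi,\qquad
  k_f(\xi;y,y')=\sum_j f(E_j(\xi))\,\psi_j(\xi,y)\overline{\psi_j(\xi,y')},
\]
the sum running over the finitely many eigenvalues $E_j(\xi)$ of $\Hh_L(\xi)$ in $\supp f$ (Lemma~\ref{lem:essspec}, Theorem~\ref{thm:edge}), with orthonormal eigenfunctions $\psi_j$. By escape, $S(\supp f)\subseteq[-R,R]$, so $k_f$ vanishes for $|\xi|>R$ and the $\xi$-integral is over a compact set.

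\emph{Smoothness in $\xi$ and exponential decay in $y$.} Rather than follow individual branches, which may cross, I would write
\[
  f(\Hh_L(\xi))=\frac1{2\pi i}\oint_{\Gamma}f\text{-weighted Riesz projections},
\]
or more simply use Helffer--Sj\"ostrand fiberwise with $\widetilde f$ supported near $\supp f$. The essential point is that $f(\Hh_L(\xi))$ depends smoothly on $\xi$ as a finite-rank operator with kernel $C^\infty$ in $(\xi,y,y')$. Near each $\xi_0\in[-R,R]$, the eigenvalues in $\supp f$ are the zeros of the real-analytic edge determinant \eqref{eq:Fanalytic} in $E$. Take a contour $\Gamma\subset\C$ enclosing the spectrum in a neighbourhood $K'$ of $\supp f$, with endpoints on $\R\cap\Delta$ at energies avoiding $\spec\Hh_L(\xi)$ for $\xi$ near $\xi_0$. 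This is possible by discreteness and analytic dependence. The spectral projection $P_{K'}(\xi)$ is then real-analytic in $\xi$, since resolvents of the fiber family are analytic in $\xi$ and $L$ is real-analytic. Moreover $f(\Hh_L)=f(\Hh_L)P_{K'}$ is a smooth finite-rank family, because $f(\Hh_LP_{K'})$ is the finite-dimensional functional calculus of an analytic Hermitian matrix family, which is $C^m$ with bounds $\norm f_{C^m}$.

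The kernels of these projections are sums of products of eigenfunctions solving $(\Hh(\xi)-E)\psi=0$. They decay like $e^{-cy}$ with $c$ uniform on compact $(\xi,E)$ sets, by the exponential localization in Theorem~\ref{thm:edge}. The same holds for their $\xi$- and $y$-derivatives, since the decaying solutions are $\mathsf F\,e^{yM^\natural}P^s$ and are analytic in $(\xi,E)$ by \eqref{eq:LamRiesz}. Covering $[-R,R]$ by finitely many such neighbourhoods gives
\[
  \bigl|\partial_\xi^k\partial^\beta_{y,y'}k_f(\xi;y,y')\bigr|\le C_{k,\beta}\norm f_{C^{k}}\,e^{-c(y+y')}.
\]

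\emph{Decay in $x-x'$.} Integrating by parts $M$ times in $\xi$, with no boundary terms by compact support, gives the factor $\langle x-x'\rangle^{-M}$ with the constant $\norm f_{C^M}$, so $m(M)=M$. Derivatives in $x$ bring down bounded powers of $\xi$. This yields \eqref{eq:decay} with constants depending on $f$ only through its support and $\norm f_{C^M}$, which is what Theorem~\ref{thm:traceclass} and Lemma~\ref{lem:comm-tc} require for $f\in C_c^\infty(K)$.

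\emph{Main obstacle.} The delicate step is the uniform smoothness of $\xi\mapsto f(\Hh_L(\xi))$ through eigenvalue crossings. Naive enumeration of branches is not smooth there; the contour/finite-rank argument avoids this. One also needs the energies at the contour endpoints to avoid the spectrum locally in $\xi$, which holds because eigenvalues in $\Delta$ are isolated and move continuously (Lemma~\ref{lem:analytic}).

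The edge case where escape allows eigenvalues to accumulate at $\partial\Delta$ is harmless, since $\supp f$ is compact in $\Delta$.

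Finally, for tangential derivatives $\cT_P$ is handled as in Theorem~\ref{thm:quantisation}(iii): it is rewritten through commutators $[P,v(H_L)]$, so no domain invariance is needed.
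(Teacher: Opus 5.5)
Your outline is the paper's proof in all essentials: fiber decomposition, compact $\xi$-support from escape, a locally analytic finite-rank Riesz projection onto an interval $(a,b)\supset\supp f$ with $a,b$ off the spectrum, exponential decay from the explicit decaying solutions beyond $y_0$, integration by parts in $\xi$ giving $m(M)=M$, and then Theorem~\ref{thm:quantisation} verbatim. However, two steps are only asserted, and they are where the actual work lies.

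\emph{Analyticity of the resolvent.} Analyticity of $\xi\mapsto(\Hh_L(\xi)-z)^{-1}$ is not automatic, because the domain $\cD_\xi=\{u\in H^p(\R_+):\gamma u\in L(\xi)\}$ moves with $\xi$. This is the one place where real-analyticity of $L$ is used, and you invoke it without an argument. The paper straightens the domain. Take $G(\xi)\in GL(\cB)$ analytic with $G(\xi)L(\xi_0)=L(\xi)$, and $\mathcal E$ a bounded, compactly supported lift with $\gamma\mathcal E=\mathbf 1$. Then $V(\xi)=\mathbf 1+\mathcal E(G(\xi)-\mathbf 1)\gamma$ carries $\cD_{\xi_0}$ onto $\cD_\xi$, and $(\Hh_L(\xi)-z)^{-1}=V(\xi)\bigl[(\Hh(\xi)-z)V(\xi)\bigr]^{-1}$ is analytic into $H^p$. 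So $\{\Hh_L(\xi)\}$ is a self-adjoint holomorphic family.

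\emph{Crossings.} Once that is in place, the crossings you single out as the main obstacle are no obstacle. Rellich's theorem gives real-analytic eigenvalues $\lambda_k$ and orthonormal eigenfunctions $\psi_k$ through collisions. Hence $f(\Hh_L(\xi))=\sum_kf(\lambda_k)\psi_k\psi_k^*$ near $\xi_0$, and each $\xi$-derivative costs one derivative of $f$. Your substitute, smoothness of $f$ applied to an analytic Hermitian matrix family, is true. But your decay step then differentiates ``sums of products of eigenfunctions'' in $\xi$, using analyticity of the decaying solutions in $(\xi,E)$. That requires the energies $E_k(\xi)$ and the Cauchy data to be smooth in $\xi$, which is exactly the branch-following you set out to avoid.

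There are two ways to close this. One is to use Rellich's analytic eigenpairs, as the paper does: for $y\ge y_0$, $\psi_k$ is the first component of $e^{(y-y_0)M^\natural(\xi,\lambda_k)}P^s(\xi,\lambda_k)\Psi_k(\xi)$, holomorphic and $O(e^{-\nu y})$ on a complex neighbourhood, so Cauchy estimates bound $\partial_\xi^r\partial_y^\alpha\psi_k$. The other is to bound the basis-independent kernel of the Riesz projection $P(\xi)$ by the same holomorphic extension, and write $f(\Hh_L(\xi))=P(\xi)f(\Hh_L(\xi))P(\xi)$ with the middle factor smooth in operator norm.
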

\noindent No ellipticity is assumed; $H_L$ is the direct integral of Proposition~\ref{prop:SA}\textup{(i)}. Pointwise and differential terminations are real-analytic in $\xi$, and admissible ones escape under the hypotheses of Proposition~\ref{prop:exact}\textup{(i)}.
\begin{proof}
\emph{Resolvent.} Write $\cD_\xi:=\{u\in H^p(\R_+):\gamma u\in L(\xi)\}$ and fix $\xi_0$. Real-analyticity
gives $G(\xi)\in GL(\cB)$, analytic near $\xi_0$, with $G(\xi_0)=\mathbf 1$ and
$G(\xi)L(\xi_0)=L(\xi)$. Let $\mathcal E:\cB\to H^p(\R_+)$ be bounded, compactly supported, with
$\gamma\mathcal E=\mathbf 1$. Then $V(\xi):=\mathbf 1+\mathcal E(G(\xi)-\mathbf 1)\gamma$ is analytic and invertible on
$H^p$ near $\xi_0$ and maps $\cD_{\xi_0}$ onto $\cD_\xi$. For $z$ in a compact subset of the resolvent set
of $\Hh_L(\xi_0)$, the map $(\Hh(\xi)-z)V(\xi):\cD_{\xi_0}\to L^2$ is analytic and invertible near
$\xi_0$. Hence
\[
(\Hh_L(\xi)-z)^{-1}=V(\xi)\bigl[(\Hh(\xi)-z)V(\xi)\bigr]^{-1}
\]
is analytic in $\mathcal L(L^2,H^p)$, and $\{\Hh_L(\xi)\}$ is a self-adjoint holomorphic family
\cite[Ch.~VII \S\S1.2, 3.1]{kato1966perturbation}.

\emph{Branches.} Let $K:=\supp f\subset(a,b)$ with $a,b\in\Delta\setminus\spec\Hh_L(\xi_0)$. The Riesz
projection onto $(a,b)$ is analytic of constant finite rank near $\xi_0$. Rellich's theorem
\cite[Ch.~VII \S1.3 and Ch.~II, Thm.~6.1]{kato1966perturbation} gives real-analytic eigenvalues $\lambda_k$ and
orthonormal eigenfunctions $\psi_k$ in $H^p$ with $f(\Hh_L(\xi))=\sum_kf(\lambda_k)\psi_k\psi_k^*$ near
$\xi_0$, collisions included. For $y\ge y_0$,
$\psi_k$ is the first component of $e^{(y-y_0)M^\natural(\xi,\lambda_k)}P^s(\xi,\lambda_k)\Psi_k(\xi)$, with $\Psi_k$ its Cauchy data at $y_0$, analytic in $\xi$. By
\textup{[H3]} this is holomorphic in $\xi$ on a complex neighborhood, bounded there by $Ce^{-\nu y}$; Cauchy's estimates in $\xi$ and the equation in $y$ give $|\partial_\xi^r\partial_y^\alpha\psi_k|\le Ce^{-\nu y}$.

\emph{Kernel.} By escape, $f(\Hh_L(\xi))=0$ for $|\xi|>R$. Covering $[-R,R]$ by finitely many such
neighborhoods, the kernel $k_f(\xi;y,y')$ of $f(\Hh_L(\xi))$ is smooth, supported in $|\xi|\le R$, and
satisfies $|\partial_\xi^r\partial^\alpha k_f|\le C\norm f_{C^r}e^{-\nu(y+y')}$. Integrating by parts in
$K_f=\frac1{2\pi}\int e^{i(x-x')\xi}k_f\,d\xi$ gives \eqref{eq:decay}. Lemma~\ref{lem:comm-tc},
Theorem~\ref{thm:traceclass} and Theorem~\ref{thm:quantisation} then apply verbatim. Compact support in
$\xi$ replaces ellipticity. 
\end{proof}

\begin{remark}[escape and ellipticity]\label{rem:escape}
For $x$-independent data, uniform Shapiro--Lopatinski ellipticity implies escape. With $\mu=\pm\xi>0$ entering as a cotangent variable (regularity $+\infty$, \cite[\S2.1]{Grubb96}), the fiber problem $\{\Hh(\xi)-E,T(\xi)\}$ on $\R_+$ is uniformly parameter-elliptic \cite[Def.~3.1.4]{Grubb96}: condition (I) there is \textup{[H2]}, and (II)--(III), with no tangential variable, are the Shapiro--Lopatinski condition at $\xi=\pm1$. By \cite[Thm.~3.2.13]{Grubb96} it is invertible for $|\xi|\ge\mu_0$, with inverse of norm $O(|\xi|^{-p})$ on $L^2$; as $E$ enters at order zero, this holds uniformly for $E$ in compact sets. Either hypothesis gives existence, quantization and the index representation \eqref{eq:quant} of $\mathcal I$, but only ellipticity gives the norm continuity of Proposition~\ref{prop:normres} behind Corollary~\ref{cor:stab}, hence stability under $x$-dependent perturbations. The spectral route provides no topology in which $\mathcal I$ is locally constant. Example \textup{(E8)} shows how the two interact.
\end{remark}

\section{The bulk--edge correspondence}\label{sec:APS}

Throughout the section $H$ and $L$ are $x$-independent, except in Lemmas~\ref{lem:crossing}, \ref{lem:rate} and \ref{lem:plucker} which hold at each fixed $x$ for $x$-dependent data. We rewrite the integer
$\mathcal I$  in four ways,
\[
\begin{aligned}
  \mathcal I=\Ind\bigl(\Pi U\Pi+(\mathbf 1-\Pi)\bigr)
  &\ \longrightarrow\ \tfrac1{2\pi i}\!\int\!\tr\bigl(u^{-1}du\bigr)\\
  &\ \longrightarrow\ \sfl\bigl(\Hh_L\bigr)=-\muRS
  \ \longrightarrow\ \cR-\cR_L+\eta(+)-\eta(-) ,
\end{aligned}
\]
in Theorem~\ref{thm:three}  and Theorem~\ref{thm:APS},
defining the energy monotonicity that orients a crossing, the rotation $\cR$, and the
asymmetry $\eta$. We treat terminations on the wall (Proposition~\ref{prop:wall}),
differential terminations (Theorem~\ref{thm:main}), and describe the computation of the right-hand side.
For periodic tight-binding models with a Dirichlet truncation, corresponding steps (signed counts of edge branches, edge index equal to an edge current) appear in \cite{Hatsugai93,schulz2000simultaneous,Graf2013,AvilaSBVB13}, the last also using monotonicity in energy. 

\paragraph{Energy monotonicity.}\label{ssec:crossing}
The reduction of Theorem~\ref{thm:edge} turns the edge spectrum into the parameters at which an eigenphase
of $U_L^{-1}U_+(\xi,E)$ lies in $2\pi\Z$. Their direction of passage is fixed by the following property of the
Calder\'on path, which holds for every $L$.

\begin{lemma}\label{lem:crossing}
For every $\xi\in\R$ and $E\in\Delta$,
\begin{equation}\label{eq:Emono}
  i\,U_+(\xi,E)^*\,\partial_EU_+(\xi,E)>0 ,
\end{equation}
so that the eigenphases of $U_L^{-1}U_+(\xi,E)$ are strictly decreasing in $E$ for every $L$ independent of $E$.
\end{lemma}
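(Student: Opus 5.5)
The plan is to derive \eqref{eq:Emono} from a first-variation formula for a moving Lagrangian, evaluated with the Green identity \eqref{eq:green} on the half line. Work at fixed $\xi$ and in the transported frame, where $\cA(0)$ is fixed and $\Phi(\xi)$ does not depend on $E$. Then $\partial_E$ commutes with the transport, and it suffices to compute the variation of $\Lm^+(\xi,E)$ itself in $\Lag(\cA(\xi))$. Transport back by $\Phi(\xi)^{-1}$ at the end; it is a congruence independent of $E$ and preserves the sign of the crossing form.

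\emph{Step 1: the crossing form in the chart \eqref{eq:UL}.} Write a Lagrangian near $\Lm_0=\Lm^+(\xi,E_0)$ as a graph over $\Lm_0$ into a fixed complement. This defines the standard Hermitian crossing form
\[
Q_E(w)=\tfrac{d}{dE}\,\omega\bigl(w,w(E)\bigr),
\]
with $w(E)\in\Lm^+(\xi,E)$ a smooth lift of $w\in\Lm_0$. Compare it with the generator of $U_+$. Differentiating $B^*D_+B=D_-$ and using $U_L=D_+^{1/2}BD_-^{-1/2}$, a direct computation in the $\alpha$-coordinates gives
\[
i\,U_+^*\partial_EU_+=D_-^{-1/2}\bigl(\text{crossing form of }\Lm^+\text{ at }E\bigr)D_-^{-1/2}
\]
up to a positive constant and a sign fixed by the convention that graphs are taken over the $q$-negative block. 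I will verify this sign on the model $-\Delta$, where $\Lm^+=\C(1,ik)$ with $k=\sqrt{\xi^2-E}$. So \eqref{eq:Emono} reduces to definiteness, with the right sign, of the crossing form of $E\mapsto\Lm^+(\xi,E)$.

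\emph{Step 2: positivity.} Let $\psi(E)$ be the decaying solution of $(\Hh(\xi)-E)\psi=0$ with $\gamma\psi(E)=w(E)$. This is analytic in $E$ by \eqref{eq:LamRiesz}, and $\dot\psi=\partial_E\psi$ decays exponentially. Differentiating the equation gives $(\Hh(\xi)-E)\dot\psi=\psi$. Apply \eqref{eq:green} on $\R_+$, in its one-dimensional form at fixed $\xi$, to the pair $\psi,\dot\psi$. The term at infinity vanishes by exponential decay, so
\[
\ip{\psi}{\psi}=\ip{\psi}{(\Hh-E)\dot\psi}-\ip{(\Hh-E)\psi}{\dot\psi}=\pm i\,\ip{\gamma\psi}{\cA\gamma\dot\psi},
\]
with the sign fixed by the orientation of the boundary at $y=0$. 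Hence $\omega(w,\dot w)=\mp\norm{\psi}^2_{L^2(\R_+)}$. This is nonzero for $w\ne0$, since $\gamma$ is injective on solutions. The crossing form is therefore definite on all of $\Lm_0$, and with the sign checked in Step 1 it yields \eqref{eq:Emono}.

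\emph{Step 3: eigenphases.} For $L$ independent of $E$, $U_L^{-1}U_+$ satisfies $i(U_L^{-1}U_+)^*\partial_E(U_L^{-1}U_+)=iU_+^*\partial_EU_+>0$. By analytic perturbation theory (Kato), the eigenphases $\theta_j$ satisfy $\partial_E\theta_j=\ip{v_j}{-iU^*\partial_EU\,v_j}<0$, which is strict monotone decrease in $E$.

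The main obstacle is bookkeeping of signs and normalizations in Step 1. The identity between $iU^*\partial U$ and the crossing form depends on the chart over the negative block and on the orientation of $\omega=i\ip{\cdot}{\cA\cdot}$. I will fix both by the $-\Delta$ computation: there $U_+(E)$ should rotate in the same direction as the explicit example. The analytic content of Step 2 is routine.
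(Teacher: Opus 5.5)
Your proposal is correct and is essentially the paper's proof. Green's identity \eqref{eq:green} on $\R_+$, combined with $(\Hh(\xi)-E)\dot\psi=\psi$, gives $i\ip{v}{\cA\,\partial_Ev}=\norm\psi^2>0$; the lift $v(E)=(B'(E)\alpha,\alpha)$ over the $q$-negative block, with $B'=D_+^{-1/2}U_+D_-^{1/2}$, turns the left side into $i\beta^*U_+^*\partial_EU_+\beta$ with $\beta=D_-^{1/2}\alpha$, and the same Kato argument then handles the eigenphases. The sign and constant you plan to fix with the $-\Delta$ check are in fact determined exactly by \eqref{eq:green} as stated and by this chart computation: the equality holds with constant $1$ and a plus sign, so no model verification is needed.
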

\begin{proof}
Write $\Lm^+(\xi,E)=\{(B'(E)\alpha,\alpha)\}$ as a graph over the $q$-negative block, as
Proposition~\ref{prop:U} permits, in the transported frame. $\Phi(\xi)$ does not depend
on $E$ and is a congruence, so the computation below is unchanged by it. Fix $\alpha$ and let $\psi_E$ be the
$L^2(\R_+)$ solution of $\cL\psi_E=0$ with $\gamma\psi_E=v(E):=(B'(E)\alpha,\alpha)$, smooth in $E$ by
Lemma~\ref{lem:analytic}, and $\dot\psi:=\partial_E\psi_E$. By \eqref{eq:green} on $\R_+$, the boundary terms
at infinity vanishing,
\[
  i\ip{v}{\cA\,\partial_Ev}=\ip{\psi}{\cL\dot\psi}-\ip{\cL\psi}{\dot\psi}=\ip{\psi}{\cL\dot\psi}=\norm\psi^2>0 ,
\]
since $\cL\psi=0$ and, differentiating, $\cL\dot\psi=-(\partial_E\cL)\psi=\psi$. As $\partial_Ev=(\dot B'\alpha,0)$,
$\cA=\mathrm{diag}(D_+,-D_-)$ and $B'=D_+^{-1/2}U_+D_-^{1/2}$ by \eqref{eq:UL}, the left-hand side equals
$i\,\alpha^*B'^*D_+\dot B'\alpha=i\,\beta^*U_+^*\partial_EU_+\beta$ with $\beta=D_-^{1/2}\alpha$ arbitrary. This
is \eqref{eq:Emono}. For $W=U_L^{-1}U_+$ with $U_L$ independent of $E$, $-iW^*\partial_EW=-iU_+^*\partial_EU_+$ is
negative definite, and the eigenphases of a unitary family move by the eigenvalues of that matrix
\cite[Ch.~II, Thm.~5.4]{kato1966perturbation}.
\end{proof}
\noindent In the chart $L\mapsto U_L^{-1}$ over the $q$-positive block (proof of Proposition~\ref{prop:U}) the inequality reverses, since $i(U^{-1})^*\partial_E U^{-1}=-U(iU^*\partial_EU)U^*$, and every relative eigenphase changes sign. The chart \eqref{eq:UL} is thus the one in which the Calder\'on path turns clockwise as $E$ increases, and all signs below are tied to this monotonicity.

\paragraph{The edge spectrum and the spectral flow.}\label{ssec:edge} The object with intrinsic meaning is neither the termination nor the Calder\'on Lagrangian separately, but the \emph{relative unitary}
\[
  W(\xi):=U_{L(\xi)}^{-1}\,U_+(\xi,E_0)\in\U(\kp),\qquad E_0\in\Delta,
\]
the chart points being those of the transported data (Definition~\ref{def:Uplus}). A change of chart conjugates $W$, so $\spec W$ and $\arg\det W$ are intrinsic. Since $|\det(U_L-U_+)|=|\det(W-\mathbf 1)|$, the edge spectrum at $E_0$ is, by Theorem~\ref{thm:edge}, the locus where $W$ has eigenvalue $1$, with multiplicity $\dim\ker(W-\mathbf 1)$. We now count eigenvalue branches.
\begin{remark}\label{rem:branches}
Branches may collide. By Lemma~\ref{lem:analytic} and Theorem~\ref{thm:edge}, the edge spectrum is the zero set of the real-analytic edge determinant $F^\Phi$ of \eqref{eq:Fanalytic} on $\R_\xi\times\Delta$, a locally finite union of real-analytic arcs \cite[Ch.~VII \S1.3 and Ch.~II, Thm.~6.1]{kato1966perturbation}. Off the locally finite collision set it is a graph $E=E_j(\xi)$ of simple eigenvalues since $\partial_EF^\Phi\ne0$ by Lemma~\ref{lem:crossing}. We call the closures of the components of the zero set minus the collision set its arcs.
\end{remark}

\begin{definition}[spectral flow]\label{def:sf}
Let $\xi\mapsto T(\xi)$ be a self-adjoint family, continuous in the gap topology, with discrete spectrum in $\Delta$ and no eigenvalue at $E_0\in\Delta$ for $|\xi|$ large. Its \emph{spectral flow} $\sfl(T)$ across $E_0$ is the number of branches crossing $E_0$ upwards minus the number crossing downwards as $\xi$ runs from $-\infty$ to $+\infty$, in the analytic sense of \cite{APS75,Phillips}, which needs no regularity of the crossings.
\end{definition}
\noindent The fiber families $\Hh_L$ qualify: their resolvents are real-analytic in $\xi$ (proof of Theorem~\ref{thm:fiberexist}), their spectrum in $\Delta$ is discrete by Lemma~\ref{lem:essspec} and Theorem~\ref{thm:edge}, and escape (Definition~\ref{def:escape}) with $K=\{E_0\}$ confines the crossings of $E_0$ to a compact $\xi$-interval. With $K=[E_0,E_1]$ it shows that $\sfl(\Hh_L)$ does not depend on $E_0\in\Delta$, the crossings of $E_0$ and of $E_1$ counting the same branches.

\begin{theorem}\label{thm:three}
Let $L$ be real-analytic in $\xi$ with escaping edge spectrum, so that $\mathcal I$ is defined by Theorem~\ref{thm:fiberexist}, and let $E_j(\xi)$ enumerate the eigenvalues of $\Hh_L(\xi)$ in $\Delta$, and $u(\xi)=e^{2\pi i\varphi(\Hh_L(\xi))}$ the fibers of $U=e^{2\pi i\varphi(H_L)}$. Then
\begin{equation}\label{eq:three}
  \mathcal I=\underbrace{\frac1{2\pi i}\int_{\R_\xi}\tr\bigl(u^{-1}du\bigr)}_{\textup{(i) winding}}
  =\underbrace{\sfl\bigl(\Hh_L\bigr)}_{\textup{(ii) spectral}}
  =\underbrace{-\muRS\bigl(\gamma_{E_0};L\bigr)}_{\textup{(iii) symplectic}} ,
\end{equation}
and each member is the signed number of passages through $2\pi\Z$ of the continuous lifts of the eigenphases of $W$, a counterclockwise passage counting $+1$.
\end{theorem}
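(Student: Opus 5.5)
The proof runs the chain of equalities in \eqref{eq:three} from left to right: $\mathcal I$ is first reduced to a fiberwise winding number, then to a spectral flow, then to a Maslov index. Theorem~\ref{thm:fiberexist} puts us on the spectral route, so $U=e^{2\pi i\varphi(H_L)}$ is the direct integral of the fibers $u(\xi)=e^{2\pi i\varphi(\Hh_L(\xi))}$. By escape, $u(\xi)-\mathbf 1=g(\Hh_L(\xi))$ has finite rank and vanishes for $|\xi|>R$. It is real-analytic in $\xi$ in trace norm, by the Rellich branches constructed in the proof of Theorem~\ref{thm:fiberexist}.

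\emph{Step (i).} I start from $\mathcal I=-\Tr(U^*PU-P)$ in \eqref{eq:quant}, with a smooth switch $P=\chi(x)$. Since $\chi$ acts in Fourier variables as a derivation in $\xi$, the standard computation for translation-invariant operators applies: write the kernel of $U^*[\chi,U]$ via $\widehat{\chi'}$ and expand to first order. This gives $\Tr\,U^*[\chi,U]=\frac{1}{2\pi}\int\tr\bigl(u^*\,i\partial_\xi u\bigr)d\xi$, which is exact because $\int\chi'=1$. The sign has to be matched to \eqref{eq:quant}, and I will fix it with the Fedosov formula \eqref{eq:trPU} applied to $u=\overline g$, $v=g$. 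Since $\tr(u^{-1}du)$ only involves the finite-rank part, the integral is compactly supported.

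\emph{Step (ii).} On each Rellich branch, $\det u(\xi)=\prod_k e^{2\pi i\varphi(\lambda_k(\xi))}$. Hence $\frac1{2\pi i}\int\tr(u^{-1}du)=\sum_k\bigl[\varphi(\lambda_k)\bigr]_{\xi\to-\infty}^{\xi\to+\infty}$, summed over branches and added across the finitely many neighborhoods covering $[-R,R]$. Branches that enter and leave $\supp\varphi'$ through the same side contribute $0$. A branch that crosses from below $\supp\varphi'$ to above contributes $+1$, since $\varphi$ runs from $0$ to $1$. Taking $E_0$ inside the transition region, this is the count of upward minus downward crossings of $E_0$, which is the spectral flow of Definition~\ref{def:sf}. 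Collisions cause no trouble because Rellich branches are analytic through them.

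\emph{Step (iii).} By Theorem~\ref{thm:edge}, $E_0\in\spec\Hh_L(\xi)$ exactly when $1\in\spec W(\xi)$, with matching multiplicities. By Lemma~\ref{lem:crossing}, the eigenphases of $W(\xi,E)$ strictly decrease in $E$. Consider an analytic branch $E_k(\xi)$ of the edge determinant's zero set near a crossing of $E_0$. Along that branch an eigenphase of $W(\xi,E_k(\xi))$ stays at $0$ mod $2\pi$. So if $E_k$ increases through $E_0$, then at the fixed energy $E_0$ the eigenphase must increase through $2\pi\Z$, i.e.\ pass counterclockwise. This reproduces the counterclockwise $=+1$ convention and identifies the spectral flow with the signed passage count. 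I will verify this with the implicit function relation $\partial_\xi\theta=-\partial_E\theta\cdot E_k'$ and $\partial_E\theta<0$. Degenerate or tangential crossings I plan to handle by the local-degree argument: perturb $E_0$ slightly within $\Delta$, which leaves both sides unchanged by the $E_0$-independence noted after Definition~\ref{def:sf}, so that the crossings become regular. The signed passage count of eigenphases through $2\pi\Z$ along $\xi\mapsto W(\xi)$ is by definition (up to the orientation convention) the Maslov index of the Calder\'on path $\gamma_{E_0}$ against $L$. The sign $-\muRS$ comes from choosing the standard convention in which positive crossing forms count $+1$. The crossing form $i\,W^*\partial W$ has the opposite sign to the energy form of \eqref{eq:Emono}.

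I expect step (i) to be the main obstacle. The difficulty is not the formal computation but justifying that the trace of $U^*[\chi,U]$ equals the fiber integral. This needs trace-class control of the commutator in the fibered representation, which follows from the kernel bounds \eqref{eq:decay}. It also needs the normalization and sign tracked consistently with Lemma~\ref{lem:fedproj}. Step (iii) is comparatively routine once the sign convention of Lemma~\ref{lem:crossing} is fixed.
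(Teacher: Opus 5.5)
Your proposal is correct in substance. For (i), (ii) and the identification of $\sfl$ with the eigenphase passage count, it is the paper's proof:
\begin{itemize}
\item the first-moment identity $\int(\chi(x+r)-\chi(x))\,dx=r$ followed by Parseval, for the fiber winding;
\item the sum of $\varphi$ over analytic branches, each contributing $+1$, $-1$ or $0$;
\item the relation $\partial_\xi\theta=-\partial_E\theta\,E_k'$ with $\partial_E\theta<0$ from Lemma~\ref{lem:crossing}, at a generic level.
\end{itemize}

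The genuine difference is (iii). The paper does not derive it. It imports $\sfl=-\muRS$ from the Maslov--spectral-flow correspondence of \cite{BBF98}, with crossing forms identified as in \cite{Gontier23}. It checks the sign only for $\xi$-independent $L$, where Hellmann--Feynman gives the crossing form $-\norm{\psi}^2E_j'(\xi_0)$.

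You instead pass from the passage count to $-\muRS$ inside the finite-dimensional chart. To make this work, replace your sentence comparing the crossing form with \eqref{eq:Emono} by the actual computation. On $\ker(W-\mathbf 1)=\ker(U_L-U_+)$, the crossing form of the pair is the difference of those of $\Lm^\Phi$ and $L^\Phi$. By the computation in the proof of Lemma~\ref{lem:crossing}, with $\xi$ in place of $E$, it is $\beta\mapsto i\beta^*\bigl(U_+^*\partial_\xi U_+-U_L^*\partial_\xi U_L\bigr)\beta=i\beta^*W^*\partial_\xi W\beta$. At a simple crossing this equals $-\partial_\xi\theta\,|\beta|^2$, so a positive crossing form is a clockwise passage.

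Your route is self-contained and treats $\xi$-dependent $L$ with no extra work. The paper's citation, by contrast, covers nonregular crossings through the homotopy invariance built into \cite{BBF98}; on your route you must supply that step yourself.

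Three small repairs are needed.
\begin{itemize}
\item $u(\xi)-\mathbf 1$ is $C^\infty$ in $\xi$ but not real-analytic, since it vanishes for $|\xi|>R$. Smoothness is all you use.
\item The observation after Definition~\ref{def:sf} gives the $E_0$-independence of $\sfl$ only. To move $E_0$ to a regular level, you also need that the passage count (equivalently $\muRS$) does not change. This holds because, by escape, the eigenphases of $W(\pm R)$ stay off $2\pi\Z$ for $E_0$ in a compact $K$, so this integer depends continuously on $E_0$. The existence of regular levels uses that only finitely many real-analytic arcs meet $\R\times K$, with finitely many critical values and collisions there.
\item \eqref{eq:trPU} cannot fix the sign in (i): with $u=\overline g$, $v=g$ it only returns $\Tr(U^*PU-P)=-\mathcal I$. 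The sign of the fiber integral comes from the Fourier convention, $\widehat{rK_{\mathsf W}}=i\partial_\xi w$, for the kernel $K_{\mathsf W}$ of $\mathsf W=g(H_L)$ with fibers $w=u-\mathbf 1$. It also uses $\Tr[\chi,\mathsf W]=0$ and $\int\tr(\partial_\xi w)\,d\xi=0$.
\end{itemize}
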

In \textup{(iii)}, $\muRS$ is the Maslov index of the pair $(\gamma_{E_0},L(\cdot))$ for the Hermitian form $\cA$ over $[-R,R]$, with $R$ such that, by escape, no crossing occurs for $|\xi|\ge R$: the sum of the signatures of the crossing forms \eqref{eq:crossdef}. The identity of \textup{(ii)} and \textup{(iii)} is the
Maslov--spectral-flow correspondence for Cauchy-data families in complex Hilbert spaces \cite{BBF98}; at each crossing, the crossing form is identified with the crossing operator of the operator family, as in \cite[Lemmas~22, 24 and Theorem~27]{Gontier23}. The correspondence needs no
regularity of the crossings, both sides being homotopy invariant. For $L$ independent of $\xi$ the computation in the proof of
Lemma~\ref{lem:crossing}, with $\xi$ in place of $E$, gives the crossing form
$\gamma\psi\mapsto-\int_0^\infty\psi^*(\partial_\xi\Hh)\psi\,dy$, which by Hellmann--Feynman is
$-\norm\psi^2E_j'(\xi_0)$ at a simple crossing; this accounts for the minus sign. The sign in \eqref{eq:three} descends through \textup{(i)} from the sign in Lemma~\ref{lem:fedproj}, and every orientation-dependent statement below is read off \eqref{eq:three}.  
\begin{proof}
From the above paragraph, it remains to prove \textup{(i)}$=$\textup{(ii)}, and that \textup{(ii)} is the count of eigenphase passages.

For \textup{(i)}, $\mathsf W=g(H_L)$ is the direct integral of $w:=u-\mathbf 1=g(\Hh_L)$, smooth, of rank the number of eigenvalues in $\supp g$, finite by Theorem~\ref{thm:edge}, and supported in a bounded set (proof of Theorem~\ref{thm:fiberexist}). For $P\in\mathcal P$, $U^*[P,U]=[P,\mathsf W]+\mathsf W^*[P,\mathsf W]$, and
$\Tr[P,\mathsf W]=0$ by Lemma~\ref{lem:comm-tc}\textup{(ii)}. For the second term, translation invariance and $\int(P(x+r)-P(x))\,dx=r$ make $\Tr(U^*PU-P)$ the first moment
$\int r\,\tr\bigl(K_{\mathsf W^*}(-r)K_{\mathsf W}(r)\bigr)dr$ of the kernel $K_{\mathsf W}(x-x')$ of $\mathsf W$, operator-valued in $y$ and rapidly decaying by
\eqref{eq:decay}, the trace being finite since $\mathsf W^*\langle x\rangle^{-1}$ and $\langle x\rangle[P,\mathsf W]$ are Hilbert--Schmidt (proof of Lemma~\ref{lem:comm-tc}); Parseval, with $\widehat{rK_{\mathsf W}}=i\partial_\xi w$ and $\int\tr(\partial_\xi w)\,d\xi=0$, turns it into $\frac i{2\pi}\int\tr(u^*\partial_\xi u)\,d\xi$, the classical index formula for Toeplitz operators with translation-invariant symbol \cite{ProdanSchulzBaldes16}. This is
\textup{(i)} by \eqref{eq:Idef}. No commutator with $H_L$ enters, so this holds for
$\xi$-dependent $L$ too. For \textup{(ii)}, off the collision set, we have
$\tr(u^{-1}\partial_\xi u)=2\pi i\,\partial_\xi\sum_j\varphi(E_j)$, the sum running over the finitely many
$E_j\in\supp g$. Integrating over the arcs of Remark~\ref{rem:branches} separately (a finite sum, escape confining
those that meet $\supp\varphi'$ to a compact $\xi$-interval) each arc contributes $\varphi$ evaluated at its two ends, hence $+1$, $-1$ or $0$ according as it
traverses $\supp\varphi'$ upward, downward, or not. The total is $\sfl$, which is
\textup{(ii)}.

For the eigenphase count, the edge spectrum is the preimage of $\Sigma_{U_L}$ under
$(\xi,E)\mapsto\Lm^+(\xi,E)$ by Theorem~\ref{thm:edge}: the crossings of the level $E_0$ by the branches $E_j$
are the $\xi$ at which an eigenphase $\theta$ of $W(\xi)=U_{L(\xi)}^{-1}U_+(\xi,E_0)$ lies in $2\pi\Z$. Fix a
compact interval $K\subset\Delta$. By escape and Remark~\ref{rem:branches}, finitely many arcs meet $\R\times K$,
each in a compact $\xi$-interval, and a non-constant real-analytic arc has finitely many critical values there.
So for $E_0\in K$ off a finite set of levels, every crossing of $E_0$ is by a single branch, off the collision
set, with $E_j'(\xi_0)\ne0$. There the eigenvalue $1$ of $W$ is simple, $\theta(\xi,E_j(\xi))\in2\pi\Z$ along the
arc, and $\partial_E\theta<0$ by Lemma~\ref{lem:crossing} gives $\sgn\partial_\xi\theta=\sgn E_j'(\xi_0)$: an
upward branch is a counterclockwise passage, for $\xi$-dependent $L$ as well. So the passage count at $E_0$ is
$\sfl$. Both are independent of $E_0\in K$: $\sfl$ as noted after Definition~\ref{def:sf}, and the passage count because,
by escape, the eigenphases at $\xi=\pm R$ stay off $2\pi\Z$ for $E_0\in K$, so that this integer depends
continuously on $E_0$. The exceptional levels are therefore irrelevant.
\end{proof}

\medskip
We now compute the invariant from symbolic information. We start with a bulk contribution to the edge invariant. Recall the notation of section~\ref{sec:green}: $\Lm^+(x,\xi,E)$ is the Calder\'on Lagrangian of Definition~\ref{def:calderon}, Lagrangian of dimension $\kp$ by Theorem~\ref{thm:calderonB}; $U_+(x,\xi,E)\in\U(\kp)$ is the image of $\Lm^\Phi(x,\xi,E)$ under the chart \eqref{eq:UL}, and $U_+(x,\pm)$ that of the transported endpoints $\Lm^\Phi(x,\pm)$ of Corollary~\ref{cor:graded}\textup{(iii)} (Definition~\ref{def:Uplus}).

\begin{definition}[the rotation of a path]\label{def:rot}
Let $\gamma:\R\to\Lag(\cA)$ be a continuous path with limits $\gamma(\pm\infty)$. Its \emph{rotation} is
\begin{equation}\label{eq:R}
  \cR[\gamma]:=\frac1{2\pi}\Bigl[\arg\det U_{\gamma(\xi)}\Bigr]_{-\infty}^{+\infty}\ \in\ \R ,
\end{equation}
the change of a continuous lift of $\arg\det U_{\gamma(\xi)}$. The same formula defines $\cR[W]$ for $W:\R\to\U(\kp)$ with limits. The \emph{rotation of the Calder\'on path} is $\cR:=\cR[\Lm^\Phi(\cdot,E_0)]$, and the \emph{winding of the termination} is $\cR_L:=\cR[L^\Phi]$.
\end{definition}
\noindent The lift exists, is unique up to $2\pi\Z$, and converges at $\pm\infty$, since $\det U_{\gamma(\xi)}$ converges. The rotation is \emph{additive},
\begin{equation}\label{eq:Radd}
  \cR\bigl[U_{\gamma_1}^{-1}U_{\gamma_2}\bigr]=\cR[\gamma_2]-\cR[\gamma_1] ,
\end{equation}
since $\det(U_{\gamma_1}^{-1}U_{\gamma_2})=\overline{\det U_{\gamma_1}}\det U_{\gamma_2}$, and it is an integer whenever $\det U_{\gamma(+\infty)}=\det U_{\gamma(-\infty)}$, in particular for loops. When two media are compared, $\cR[H_\pm]$ denotes $\cR$ formed for $H_\pm$. A variant $\cR[H;\Om_-]$ for an operator given on $\Om_-$ appears in Lemma~\ref{lem:gro}.

\begin{lemma}[rotation of the Calder\'on path]\label{lem:rate}
$\cR$ is well defined by Definition~\ref{def:rot}. Moreover
\begin{equation}\label{eq:Rint}
  \cR=\frac1{2\pi}\int_\R\tr\bigl(-i\,U_+(\xi)^*\partial_\xi U_+(\xi)\bigr)\,d\xi ,
\end{equation}
exhibiting $\cR$ as a \emph{local integral of the symbol} over the momentum line.
\end{lemma}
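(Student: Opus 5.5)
The plan has three parts: show that the lift defining $\cR$ exists and converges, express the change of $\arg\det U_+$ as an integral, and check that this integral converges absolutely.

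\emph{Well-definedness.} By Lemma~\ref{lem:analytic}, $\xi\mapsto\Lm^+(\xi,E_0)$ is real-analytic. The congruence $\Phi(\xi)$ is polynomial in $\xi$ (Theorem~\ref{thm:graded}), so the transported path $\Lm^\Phi(\xi,E_0)$ is real-analytic in the fixed fiber $\Lag(\cA(0))$. Since \eqref{eq:UL} is a real-analytic diffeomorphism onto $\U(\kp)$ (Proposition~\ref{prop:U}), the map $\xi\mapsto U_+(\xi)$ is real-analytic in $\U(\kp)$. By Corollary~\ref{cor:graded}\textup{(iii)}, $U_+(\xi)$ has limits $U_+(\pm)$ at $\xi=\pm\infty$, and $\partial_\xi\Lm^\Phi=O(|\xi|^{-2})$. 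The map $\xi\mapsto\det U_+(\xi)\in S^1$ is therefore continuous on $\overline\R$, so a continuous lift of $\arg\det U_+$ exists on $\R$, is unique up to $2\pi\Z$, and has limits at both ends. Hence $\cR$ as in \eqref{eq:R} is well defined.

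\emph{The integral formula.} For a $C^1$ unitary family, Jacobi's formula gives
\[
\partial_\xi\log\det U_+=\tr\bigl(U_+^{-1}\partial_\xi U_+\bigr)=\tr\bigl(U_+^*\partial_\xi U_+\bigr).
\]
This trace is purely imaginary, because $U_+^*\partial_\xi U_+$ is anti-Hermitian. Consequently $\partial_\xi\arg\det U_+=\tr\bigl(-iU_+^*\partial_\xi U_+\bigr)$, a real quantity. Integrating over $[-R,R]$ and letting $R\to\infty$ yields \eqref{eq:Rint}, provided the integral converges absolutely.

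\emph{Convergence of the integral (main obstacle).} The task is to transfer the rate \eqref{eq:endrate} from the Grassmannian to the chart. Near each endpoint $U_+(\pm)$ I would use that the chart map $\Lag(\cA(0))\to\U(\kp)$ is smooth on a compact set, so its differential is uniformly bounded. This gives $\norm{\partial_\xi U_+}\le C\norm{\partial_\xi\Lm^\Phi}=O(|\xi|^{-2})$, where the norm on the right is any fixed Riemannian norm on the Grassmannian. The rate for the transported path, rather than for $\Lm^+$ itself, is supplied by Corollary~\ref{cor:graded}\textup{(iii)}; it comes from the smoothness of the curve $V^\Phi(\pm;\epsilon)$ in $\epsilon=|\xi|^{-1}$ together with Lemma~\ref{lem:gradedcurve}.

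It follows that the integrand is bounded on compact sets (by continuity) and is $O(|\xi|^{-2})$ at infinity, so it is integrable on $\R$. Its integral then equals the difference of the limits of the lift, which is $2\pi\cR$. The formula exhibits $\cR$ as a local quantity, since $U_+(\xi)$ at each $\xi$ is computed from the fiber symbol at that $\xi$.
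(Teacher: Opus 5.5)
Your proposal is correct and follows essentially the paper's proof. In both, continuity of $\det U_+$ with endpoint limits (Corollary~\ref{cor:graded}\textup{(iii)}) gives the lift, anti-Hermiticity of $U_+^*\partial_\xi U_+$ identifies the integrand with $\partial_\xi\arg\det U_+$, and smoothness of the chart transfers the $O(|\xi|^{-2})$ rate to $\partial_\xi U_+$; the only cosmetic difference is that you get continuity from the real-analyticity of Lemma~\ref{lem:analytic}, whereas the paper cites the Riesz-projection representation \eqref{eq:LamRiesz}.
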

\begin{proof}
$\xi\mapsto\det U_+(\xi,E_0)\in\U(1)$ is continuous with limits $\det U_+(\pm)$, so $\cR$ is well
defined by Definition~\ref{def:rot}. Continuity holds because $\Lm^+(\xi,E)$ is the range of a Riesz projection depending continuously on $\xi$, by \eqref{eq:LamRiesz}, and the endpoint limits are Corollary~\ref{cor:graded}\textup{(iii)}.

The chart \eqref{eq:UL} is smooth near $\Lm^\Phi(\pm)$, so \eqref{eq:endrate} gives $\partial_\xi U_+=O(|\xi|^{-2})$. For \eqref{eq:Rint}, $U_+^*\partial_\xi U_+$ is anti-Hermitian, $U_+$
being unitary, so $\tr\bigl(-i\,U_+^*\partial_\xi U_+\bigr)$ is real and equals
$\partial_\xi\arg\det U_+$. The integral converges absolutely because
$|\tr(-iU_+^*\partial_\xi U_+)|\le\kp\norm{\partial_\xi U_+}=O(|\xi|^{-2})$, and integrating gives the
total change \eqref{eq:R}.
\end{proof}

We introduce the final piece used in the bulk edge correspondence.
\begin{definition}[spectral asymmetry]\label{def:Reta}
For $W\in\U(\kp)$ with $1\notin\spec W$ let $\theta_1,\dots,\theta_\kp\in(0,2\pi)$ be its eigenphases and set
\begin{equation}\label{eq:eta}
  \eta(W):=\sum_{j=1}^\kp\Bigl(\frac12-\frac{\theta_j}{2\pi}\Bigr)
  \in\Bigl(-\frac\kp2,\frac\kp2\Bigr),
\end{equation}
writing $\eta(L,\Lm):=\eta(U_L^{-1}U_\Lm)$ for transverse $L,\Lm\in\Lag(\cA)$. With $W=e^{2\pi i\cT}$, $\cT$ self-adjoint and $\spec\cT\subset(0,1)$, one has $\eta(W)=\tfrac12\tr(\mathbf 1-2\cT)$ the spectral asymmetry of $W$. Both $\eta(W)$ and $\spec W$ are conjugation-invariant, hence independent of any trivialization.
\end{definition}
\paragraph{The bulk--edge correspondence.}\label{ssec:general}

\begin{definition}[endpoint unitaries]\label{def:relunitary}
The \emph{endpoint unitaries} of the relative unitary $W$ are the limits $W(\pm):=\lim_{\xi\to\pm\infty}W(\xi)$, when they exist.
\end{definition}
\noindent They exist for differential $L$ by Lemma~\ref{lem:plucker} below. Admissibility (Definition~\ref{def:adm}) is then $1\notin\spec W(\pm)$, whereas escape alone gives $1\notin\spec W(\xi)$ for $|\xi|$ large without producing any limit. By \eqref{eq:Radd} the rotation of $W$ is $\cR-\cR_L$, and $\eta(W(\pm))=\eta(L(\pm),\Lm^\Phi(\pm))$.
\begin{theorem}\label{thm:APS}
Let $H$ and $L$ be as in Theorem~\ref{thm:three}.
\begin{enumerate}[label=\textup{(\roman*)},nosep]
\item For every $R$ with $1\notin\spec W(\xi)$ for $|\xi|\ge R$, which exists by escape, and with no limits of
$W$ assumed,
\begin{equation}\label{eq:APSfinite}
  \mathcal I=\frac1{2\pi}\Bigl[\arg\det W(\xi)\Bigr]_{-R}^{R}
  +\eta\bigl(W(R)\bigr)-\eta\bigl(W(-R)\bigr).
\end{equation}
\item If moreover $W(\pm)$ exist and $1\notin\spec W(\pm)$, as for admissible $L$, then (with the orientation of \eqref{eq:three}):
\begin{equation}\label{eq:APSgeneral}
  \mathcal I=\cR-\cR_L+\eta\bigl(W(+)\bigr)-\eta\bigl(W(-)\bigr).
\end{equation}
\item If moreover $L$ is a transported pointwise termination $L(\xi)=\Phi(\xi)L_0$
(Definition~\ref{def:transported}), then $\cR_L=0$ and
\begin{equation}\label{eq:APS}
  \mathcal I=\cR+\eta\bigl(L_0,\Lm^\Phi(+)\bigr)-\eta\bigl(L_0,\Lm^\Phi(-)\bigr).
\end{equation}
\end{enumerate}
\end{theorem}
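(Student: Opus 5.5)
The plan is to start from Theorem~\ref{thm:three}: $\mathcal I$ is the signed number of passages through $2\pi\Z$ of continuous lifts of the eigenphases of $W(\xi)=U_{L(\xi)}^{-1}U_+(\xi,E_0)$, with counterclockwise passages counting $+1$. On the window $[-R,R]$ this count will be converted into a change of $\arg\det$ plus boundary corrections. Choose continuous lifts $\theta_1(\xi),\dots,\theta_\kp(\xi)$ of the eigenphases on $[-R,R]$; they exist since $W$ is continuous (real-analytic off collisions, Remark~\ref{rem:branches}), e.g.\ by taking a continuous lift of the unordered eigenvalue multiset. For a single real continuous function $\theta$ that is not in $2\pi\Z$ at $\pm R$, the net number of upward crossings of $2\pi\Z$ equals
\[
\lfloor \theta(R)/2\pi\rfloor-\lfloor\theta(-R)/2\pi\rfloor .
\]
Writing $\lfloor t\rfloor = t-\{t\}$ with $\{\theta/2\pi\}\in(0,1)$, this equals $\frac1{2\pi}[\theta]_{-R}^{R}-\{\theta(R)/2\pi\}+\{\theta(-R)/2\pi\}$.

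Summing over $j$ gives $\frac1{2\pi}[\arg\det W]_{-R}^R$, since $\sum_j\theta_j$ is a continuous lift of $\arg\det W$. The fractional parts sum to $\sum_j\tilde\theta_j/2\pi$, where $\tilde\theta_j\in(0,2\pi)$ are the reduced eigenphases. Because $-\sum_j \tilde\theta_j/2\pi=\eta(W)-\kp/2$, the constants $\kp/2$ cancel between the two ends, which yields \eqref{eq:APSfinite}. The points to check are that crossings with multiplicity and collisions are counted correctly by the floor formula; this is automatic, since the floor difference is a homotopy-robust count. One must also check that the orientation convention matches Theorem~\ref{thm:three}, where an upward branch corresponds to increasing $\theta$. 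Escape guarantees $1\notin\spec W(\pm R)$ for $|\xi|\ge R$, so $\eta(W(\pm R))$ is defined.

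For (ii), let $R\to\infty$. For $|\xi|\ge R$ no eigenphase meets $2\pi\Z$, so the right side of \eqref{eq:APSfinite} is independent of $R$ beyond that point: both terms change continuously and the total is an integer. If $W(\pm)$ exist with $1\notin\spec W(\pm)$, the map $\eta$ is continuous near $W(\pm)$ (reduced eigenphases stay in a compact subset of $(0,2\pi)$), so $\eta(W(\pm R))\to\eta(W(\pm))$. Meanwhile $\frac1{2\pi}[\arg\det W]_{-R}^{R}\to\cR[W]=\cR-\cR_L$ by additivity \eqref{eq:Radd}. The rotations $\cR$ and $\cR_L$ are well defined because the limits of $\det U_+$ and $\det U_L$ exist (Lemma~\ref{lem:rate} and the existence of $L(\pm)$). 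Finally $\eta(W(\pm))=\eta(L(\pm),\Lm^\Phi(\pm))$, which gives \eqref{eq:APSgeneral}.

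For (iii), $L^\Phi(\xi)=\Phi(\xi)^{-1}\Phi(\xi)L_0=L_0$ is constant, so $U_L$ is constant, $\cR_L=0$, and $W(\pm)=U_{L_0}^{-1}U_+(\pm)$. These limits exist by Corollary~\ref{cor:graded}(iii), and admissibility gives $1\notin\spec W(\pm)$.

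The main obstacle is the bookkeeping in (i) at non-generic points: collisions of eigenphases and tangential touchings of $2\pi\Z$. I would handle these by defining the count as the floor difference of continuous lifts, which is insensitive to such degeneracies. I would then identify this with the passage count of Theorem~\ref{thm:three} by perturbing $E_0$ off the finitely many exceptional levels, exactly as in that proof; both sides are locally constant in $E_0$ by escape.
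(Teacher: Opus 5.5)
Your proposal is correct and follows the paper's proof. The paper likewise takes from Theorem~\ref{thm:three} the unsplit count $\sum_j\bigl(\lceil\theta_j(R)/2\pi\rceil-\lceil\theta_j(-R)/2\pi\rceil\bigr)$ over the multiset of lifts, using ceilings where you use floors (equivalent off $2\pi\Z$). It then splits off the fractional parts into the $\eta$-terms and lets $R\to\infty$ using \eqref{eq:Radd}. Your extra step of perturbing $E_0$ off exceptional levels is redundant here, since Theorem~\ref{thm:three} already identifies $\mathcal I$ with that passage count at the given $E_0$.
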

\begin{proof}
\textup{(i)} By Theorem~\ref{thm:three}, $\mathcal I$ is the signed number of passages through $2\pi\Z$ of
the multiset of continuous lifts $\{\theta_j\}$ of the eigenphases of $W$, continuous even where
individual branches collide by Remark~\ref{rem:branches}; $\sum_j\lceil\theta_j/2\pi\rceil$ is a symmetric
function of it. Since there are no passages outside $[-R,R]$, $\mathcal I$ is given by the \emph{unsplit} form
\begin{equation}\label{eq:unsplit}
  \mathcal I=\sum_j\Bigl(\bigl\lceil\theta_j(R)/2\pi\bigr\rceil
  -\bigl\lceil\theta_j(-R)/2\pi\bigr\rceil\Bigr).
\end{equation}
Insert $\lceil t\rceil=t+1-\{t\}$, valid since $\theta_j(\pm R)/2\pi\notin\Z$, and split. The first sum
is $\tfrac1{2\pi}\sum_j(\theta_j(R)-\theta_j(-R))=\tfrac1{2\pi}[\arg\det W]_{-R}^{R}$, because
$\sum_j\theta_j=\arg\det W$; the second is $\eta(W(R))-\eta(W(-R))$, because $\{\theta_j(\pm R)/2\pi\}$
is the eigenphase of $W(\pm R)$ in $(0,2\pi)$ divided by $2\pi$.

\textup{(ii)} Let $R\to\infty$ in \eqref{eq:APSfinite}. The first term tends to $\cR-\cR_L$ by
\eqref{eq:Radd}, and the other two to $\eta(W(\pm))$ because $1\notin\spec W(\pm)$ makes the
eigenphases of $W(\pm R)$ converge in $(0,2\pi)$ to those of $W(\pm)$, and $\eta$ is continuous there.

\textup{(iii)} The transported section is the constant $L_0$, so $\cR_L=0$ and
$W(\pm)=U_{L_0}^{-1}U_+(\pm)$, whence $\eta(W(\pm))=\eta(L_0,\Lm^\Phi(\pm))$ by \eqref{eq:eta}.
\end{proof}
\noindent Given Theorem~\ref{thm:three}, \eqref{eq:APSfinite} is the classical expression of a Maslov index by a winding and endpoint corrections \cite{CappellLeeMiller}. What is specific here is the identification of the endpoints (Lemma~\ref{lem:endpoints}) and the transport $\Phi$.

\noindent The unsplit \eqref{eq:unsplit} is manifestly an integer and a signed count. What the split brings is that the $L$-dependence sits in the two
asymmetry terms while $\cR$ is common to all admissible $L$. Form \textup{(i)} needs only the hypotheses
of Theorem~\ref{thm:three}. The existence of $W(\pm)$ buys the passage to $\xi=\pm\infty$ in
\textup{(ii)}, and a transported pointwise termination yields $\cR_L=0$ in \textup{(iii)}. The latter is used in examples (E7) and (E8); in the other examples $\Phi=\mathbf 1$, so that transported pointwise terminations are pointwise. The dependence of $\cR$ on $E_0$ and on the collar is discussed in Corollary~\ref{cor:bulkR}. The formula survives two generalizations: terminations on the wall, where admissibility fails, and terminations depending on the tangential momentum.

\medskip\noindent\textbf{Terminations on the wall.} Examples (E4) and (E5) require this generalization.

\begin{definition}[regularized endpoint asymmetry]\label{def:etareg}
Let $W(\xi)=U_L^{-1}U_+(\xi,E_0)$ and let $\eta$ be as in Definition~\ref{def:Reta}, computed from
the eigenphase representatives in $(0,2\pi)$. Put
\begin{equation}\label{eq:etareg}
  \eta^{\mathrm{reg}}(\pm):=\lim_{\xi\to\pm\infty}\eta\bigl(W(\xi)\bigr).
\end{equation}
\end{definition}

\begin{proposition}[correspondence at an inadmissible termination]\label{prop:wall}
Let $H$ and $L$ be as in Theorem~\ref{thm:three}, admissible or not, and let the limits $W(\pm)$ exist, as for differential $L$. Then the limits \eqref{eq:etareg} exist and
\begin{equation}\label{eq:APSwall}
  \mathcal I=\cR-\cR_L+\eta^{\mathrm{reg}}(+)-\eta^{\mathrm{reg}}(-),
\end{equation}
with $\cR_L=0$ when $L$ is a transported pointwise termination. For admissible $L$, $\eta^{\mathrm{reg}}(\pm)=\eta(W(\pm))$ and \eqref{eq:APSwall} is \eqref{eq:APSgeneral}.
Write $n^\flat(\pm)$ for the number of eigenphases of $W$ whose limit as $\xi\to\pm\infty$ lies in
$2\pi\Z$ and is approached \emph{from below}. If $\Lm^\Phi(+)=\Lm^\Phi(-)$ and $L(+)=L(-)$, as for even $p$ under
\textup{[S]}, \textup{[M]} and differential $L$ by \eqref{eq:Ehigheven} and Lemma~\ref{lem:plucker} below, then
\begin{equation}\label{eq:IReq}
  \mathcal I=\cR-\cR_L+n^\flat(-)-n^\flat(+),
\end{equation}
an integer correction bounded by $\dim\bigl(L\cap\Lm^\Phi(\pm)\bigr)$. In particular $\mathcal I=\cR-\cR_L$ for \emph{every} such $L$, admissible or not, exactly when the
two ends are approached from the same sides.
\end{proposition}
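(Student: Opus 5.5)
The plan is to start from the finite-cutoff identity \eqref{eq:APSfinite} of Theorem~\ref{thm:APS}(i), which holds without admissibility, and pass to the limit $R\to\infty$ term by term. The winding term $\frac1{2\pi}[\arg\det W]_{-R}^{R}$ converges to $\cR-\cR_L$ by \eqref{eq:Radd}. This uses only the existence of the limits of $U_L$ and $U_+$, hence of $\det W(\xi)$, and not transversality. Then $\eta(W(R))-\eta(W(-R))$ equals $\mathcal I$ minus a convergent quantity. Hence its limit exists, but the two endpoint limits must be shown to exist separately.

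For that, I would use that $1\notin\spec W(\xi)$ for $|\xi|\ge R_0$, by escape. On $[R_0,\infty)$ the eigenphases in $(0,2\pi)$ are then continuous, since none crosses $0$. Their multiset converges to $\spec W(+)$ by continuity of eigenvalues. An eigenphase tending to a limit $e^{i\theta}\ne1$ contributes $\tfrac12-\theta/2\pi$. An eigenphase tending to $1$ stays on one side of $2\pi\Z$ for $\xi\ge R_0$, so its representative tends to $0$ (approach from above) or to $2\pi$ (approach from below). It contributes $+\tfrac12$ or $-\tfrac12$ accordingly. So $\eta^{\mathrm{reg}}(\pm)$ exists and equals the sum of $\tfrac12-\theta_j/2\pi$ over the eigenphases of $W(\pm)$ off $1$, plus $\tfrac12(n^\sharp(\pm)-n^\flat(\pm))$, where $n^\sharp$ counts approaches from above. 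When $L$ is admissible there are no such eigenphases and this is $\eta(W(\pm))$ by continuity. The case $\cR_L=0$ for transported pointwise terminations is Theorem~\ref{thm:APS}(iii). The main obstacle is the multiset continuity through possible collisions at large $\xi$. I would handle it with the symmetric-function argument of the proof of Theorem~\ref{thm:APS}(i): the sets $\{\theta_j\}$ near $0$ and near $2\pi$ are separated once $\xi$ is large, because no eigenphase crosses $0$.

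For \eqref{eq:IReq}, assume $\Lm^\Phi(+)=\Lm^\Phi(-)$ and $L(+)=L(-)$. Then $W(+)=W(-)$, and the contributions of eigenphases off $1$ cancel in $\eta^{\mathrm{reg}}(+)-\eta^{\mathrm{reg}}(-)$. The total number of eigenphases tending to $1$ is $k=\dim\ker(W(\pm)-\mathbf 1)=\dim(L\cap\Lm^\Phi(\pm))$ at both ends, by \eqref{eq:transdet}. Writing $n^\sharp=k-n^\flat$, each end gives $\tfrac12(k-2n^\flat(\pm))$. The difference is therefore $n^\flat(-)-n^\flat(+)$, an integer with modulus at most $k$.

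The final claim follows directly. The correction vanishes for every $L$ exactly when $n^\flat(+)=n^\flat(-)$, i.e. when the two ends are approached from the same sides. For even $p$ under [S] and [M], the hypothesis on the endpoints is \eqref{eq:Ehigheven}, and $L(+)=L(-)$ for differential $L$ is supplied by Lemma~\ref{lem:plucker}, which I would simply invoke.
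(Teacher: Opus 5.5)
Your proposal is correct and follows the paper's proof essentially step for step: pass to the limit in \eqref{eq:APSfinite}, show that each $(0,2\pi)$-representative of an eigenphase converges in $[0,2\pi]$ because it cannot cross $2\pi\Z$ for $|\xi|\ge R$, and read off \eqref{eq:IReq} from $W(+)=W(-)$. Your count, the sum over eigenphases off $1$ plus $\tfrac12\bigl(n^\sharp(\pm)-n^\flat(\pm)\bigr)$, is the same as the paper's $\eta^{[0,2\pi)}(\pm)-n^\flat(\pm)$, where $\eta^{[0,2\pi)}$ assigns the value $0$ to a degenerate phase.
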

\begin{proof}
By escape, $1\notin\spec W(\xi)$ for $|\xi|\ge R$. Fix $j$ and let $\theta_j(\xi)\in(0,2\pi)$ be the representative; it is continuous on each half-line $\pm\xi\ge R$, and converges in $\R/2\pi\Z$ because $W(\xi)\to W(\pm)$. Its continuous lift converges, as after Definition~\ref{def:rot}, so it converges in $[0,2\pi]$, to $0$ or $2\pi$ when the limit is $0$ modulo $2\pi$, so \eqref{eq:etareg} exists. Moreover $\eta^{\mathrm{reg}}(\pm)=\eta^{[0,2\pi)}(\pm)-n^{\flat}(\pm)$, where
$\eta^{[0,2\pi)}$ assigns the value $0$ to a degenerate phase: a degenerate phase approached from
below contributes $\tfrac12-1$ to $\eta^{\mathrm{reg}}$ against $\tfrac12-0$ to $\eta^{[0,2\pi)}$, and one approached from above contributes the same quantity to both.

Theorem~\ref{thm:APS}\textup{(i)} holds at every $R'\ge R$. Letting $R'\to\infty$, the first term of \eqref{eq:APSfinite} tends to $\cR-\cR_L$ by \eqref{eq:Radd}, and the others to $\eta^{\mathrm{reg}}(\pm)$, which is
\eqref{eq:APSwall}. For admissible $L$ no limit phase lies in $2\pi\Z$, so $\eta^{\mathrm{reg}}(\pm)=\eta(W(\pm))$. If $\Lm^\Phi(+)=\Lm^\Phi(-)$ and $L(+)=L(-)$ then $W(+)=W(-)$, so
$\eta^{[0,2\pi)}(+)=\eta^{[0,2\pi)}(-)$, and \eqref{eq:IReq} follows. {The sides of approach need not agree}: the
two ends are separate components of $\{|\xi|\ge R\}$, and a phase may tend to $2\pi^-$ at one and to
$0^+$ at the other. For instance, for $\kp=1$ and $W(\pm)=1$, this is the case $\cR-\cR_L=-1$, $n^\flat(-)=1$, $n^\flat(+)=0$, where $W$
runs once round the circle inside $(0,2\pi)$ without crossing, and $\mathcal I=0$ correctly.
\end{proof}

\noindent The regularized asymmetries are needed in \textup{(E4)}, for $\partial_yu=sD_xu$ with $0<|s|<1$, where
the correction in \eqref{eq:IReq} cancels $-\cR_L$.

\medskip\noindent\textbf{The main theorem for general differential terminations.}\label{ssec:poly}
For an admissible section $\xi\mapsto L(\xi)$, Theorem~\ref{thm:APS}\textup{(ii)} reads, by the identities following Definition~\ref{def:relunitary},
\begin{equation}\label{eq:RL}
  \mathcal I=\cR-\cR_L+\eta\bigl(L(+),\Lm^\Phi(+)\bigr)-\eta\bigl(L(-),\Lm^\Phi(-)\bigr),\qquad
  \cR_L=\frac1{2\pi}\Bigl[\arg\det U_{L^\Phi(\xi)}\Bigr]_{-\infty}^{+\infty}.
\end{equation}
The termination enters only through its winding $\cR_L$ and its endpoints $L(\pm)=\lim_{\xi\to\pm\infty}L^\Phi(\xi)$. For a general section, such as a pseudodifferential termination of order zero, $L(+)$ and $L(-)$ may differ and $\cR_L$ need not be an integer. For a \emph{differential} termination the next lemma shows that $L(+)=L(-)$, so that $\cR_L$ is an integer, bounded by the degree; it also shows that these limits, used in Definition~\ref{def:adm}, exist when $p\ge2$ and the blocks $B_j$ have different orders.

\begin{lemma}[endpoint value for differential terminations]\label{lem:plucker}
Let $\cA$ be fixed and $L(\xi)=\ker B(\xi)\in\Lag(\cA)$ for real $\xi$, with $B(\xi)$ a
$\kp\times2\kp$ matrix polynomial of degree $\le d_B$. In the coordinates $\alpha=(\alpha_+,\alpha_-)$ of
Proposition~\ref{prop:U}, write $B\Upsilon^*=(B_+,B_-)$. Then
\begin{enumerate}[label=\textup{(\roman*)},nosep]
\item $B_\pm(\xi)$ is invertible for every real $\xi$, and
$U_{L(\xi)}=-D_+^{1/2}B_+(\xi)^{-1}B_-(\xi)\,D_-^{-1/2}$;
\item the endpoints $L(\pm)$ exist and \emph{coincide}, $L(+)=L(-)=:L_\infty\in\Lag(\cA)$, the \emph{common endpoint};
\item $\cR_L=n_+(\det B_-)-n_+(\det B_+)\in\Z$, where $n_+$ counts zeros in $\Im\xi>0$; hence $|\cR_L|\le\kp d_B$.
\end{enumerate}
\end{lemma}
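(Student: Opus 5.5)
The plan is to work entirely in the coordinates $\alpha=\Upsilon w$ of Proposition~\ref{prop:U}, in which $\cA=\mathrm{diag}(D_+,-D_-)$. The key facts are that a Lagrangian meets neither definite block, and that the chart $U_{L(\xi)}$ is a rational function of $\xi$ that stays unitary on the real line.

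For \textup{(i)}, a vector $(\alpha_+,0)$ lies in $L(\xi)$ exactly when $B_+(\xi)\alpha_+=0$. On such a vector $q=\alpha_+^*D_+\alpha_+$, and $q$ vanishes on the Lagrangian $L(\xi)$, so $\alpha_+=0$. Hence the square matrix $B_+(\xi)$ is injective, therefore invertible. The same argument on the $q$-negative block $(0,\alpha_-)$ gives invertibility of $B_-(\xi)$. Solving $B_+\alpha_++B_-\alpha_-=0$ then exhibits $L(\xi)$ as the graph $\alpha_+=-B_+^{-1}B_-\alpha_-$, so the $B$ of Proposition~\ref{prop:U} is $-B_+^{-1}B_-$, and \eqref{eq:UL} gives the stated formula for $U_{L(\xi)}$.

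For \textup{(ii)}, I will use that $B_+(\xi)^{-1}=\operatorname{adj}B_+(\xi)/\det B_+(\xi)$. This makes $U_{L(\xi)}$ a matrix of rational functions of $\xi$. It is unitary for every real $\xi$, hence bounded on $\R$. A rational function bounded on $\R$ has no real poles and is bounded near infinity, so it extends continuously to $\infty$ on the Riemann sphere. There $+\infty$ and $-\infty$ are the same point, so the limits along $\xi\to\pm\infty$ exist and coincide. The common limit is a limit of unitaries, hence unitary, and defines $L_\infty\in\Lag(\cA)$. When $\cA$ depends on $\xi$, the statement is applied to the transported field $L^\Phi=\ker\bigl(B\Phi\bigr)$. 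This field is still the kernel of a polynomial matrix, since $\Phi$ is polynomial by Theorem~\ref{thm:graded}, and it is Lagrangian for the fixed form $\cA(0)$. Its degree is raised by at most $\deg\Phi$, which is where the bound $d_B$ should be read.

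For \textup{(iii)}, $\det U_{L(\xi)}=c\,\det B_-(\xi)/\det B_+(\xi)$ with a constant $c\ne0$. Both determinants are polynomials with no real zeros, by \textup{(i)}. Unitarity gives $|\det B_-|=\varkappa\,|\det B_+|$ on $\R$, with $\varkappa$ a positive constant. Comparing growth, the two polynomials have the same degree $m\le\kp d_B$. For a polynomial $f$ of degree $m$ with $n_+$ zeros in $\Im\xi>0$ and none on $\R$, the factor $\xi-a$ turns by $-\pi$ if $\Im a>0$ and by $+\pi$ if $\Im a<0$. So the total change of $\arg f$ over $\R$ is $\pi(m-2n_+)$. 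Subtracting the two changes, the degrees cancel and $2\pi\cR_L=\pm2\pi\bigl(n_+(\det B_-)-n_+(\det B_+)\bigr)$, an integer bounded in modulus by $m\le\kp d_B$.

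The main obstacle is fixing this sign. With the argument just given, the change of $\arg\det B_-$ minus that of $\arg\det B_+$ comes out as $-2\pi\bigl(n_+(\det B_-)-n_+(\det B_+)\bigr)$. That is the opposite of the stated formula, unless a convention (the meaning of $n_+$, or the orientation of $\arg$) reverses it. I would settle this on a scalar check case, $\kp=1$ with $\cA$ constant and $B=(\xi-i,\;1)$, and then adjust the labelling so that it is consistent with \eqref{eq:R}. The integrality and the bound $|\cR_L|\le\kp d_B$ hold regardless of the sign.
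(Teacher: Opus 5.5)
Parts (i) and (ii) are the paper's argument. The paper gets invertibility of $B_-$ from $B_-=-B_+B_L$ rather than from the $q$-negative block, which makes no difference. Part (iii) is set up the same way, but the one computation it hinges on is wrong, and that is the only reason the sign seems in doubt.

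For $\Im a>0$, the point $\xi-a$ runs along the line $\Im=-\Im a<0$ from argument $-\pi$, through $-\pi/2$, to $0$. So $[\arg(\xi-a)]_{-\infty}^{+\infty}=+\pi$, not $-\pi$; it is $-\pi$ when $\Im a<0$. A polynomial $f$ of degree $m$ with no real zeros therefore has $[\arg f]_{-\infty}^{+\infty}=\pi(n_+-n_-)=\pi(2n_+-m)$. Subtracting the two changes gives $2\pi\cR_L=2\pi\bigl(n_+(\det B_-)-n_+(\det B_+)\bigr)$, exactly the stated sign.

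There is no convention or labelling left to adjust. $\cR_L$ is fixed by \eqref{eq:R} in the chart \eqref{eq:UL}, and the sign matters downstream, e.g.\ $\cR_L=-\sgn s$ in (E4) and in Remark~\ref{rem:PDO}(a). As written, with the hedge $\pm$, your proposal does not prove (iii).

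Your proposed check case would not have settled the sign either. For $B=(\xi-i,\,1)$ with $D_\pm=1$ one gets $U_{L(\xi)}=-1/(\xi-i)$, which is not unitary. So $\ker B(\xi)$ is not Lagrangian, and the lemma does not apply. Unitarity forces $|\det B_-|=\varkappa|\det B_+|$ on $\R$, so a legitimate test is $(B_+,B_-)=(\xi-i,\,\xi+i)$. Then $U_{L(\xi)}=-(\xi+i)/(\xi-i)$, whose argument drops by $2\pi$. Hence $\cR_L=-1=n_+(\xi+i)-n_+(\xi-i)$, confirming the formula as stated.
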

\noindent The degree $d_B$ of $B$ is not the tangential order $d$ of the termination. Under \textup{[S]} the lemma applies with $d_B=d$, and in general to the transport $L^\Phi(\xi)=\ker\bigl(B(\xi)\Phi(\xi)\bigr)$, with $d_B\le d+p-1$ since $\Phi$ is polynomial of degree $\le p-1$ (Theorem~\ref{thm:graded}). For a transported pointwise termination $L^\Phi$ is constant and $\cR_L=0$.
\begin{proof}
(i) If $B_+v=0$ then $(v,0)\in L$, and isotropy gives $v^*D_+v=0$, so $v=0$. Hence $B_+$ is invertible, and the graph matrix $B_L$ of $L$ (called $B$ in Proposition~\ref{prop:U}) satisfies $B_+B_L+B_-=0$, i.e., $B_L=-B_+^{-1}B_-$. Then $B_-=-B_+B_L$ is invertible because $B_L$ is, and \eqref{eq:UL} gives the formula.

(ii) By (i) the entries of $U_{L(\xi)}$ are rational in $\xi$, and they are bounded on $\R$ because $U_{L(\xi)}$ is unitary. A rational function bounded on $\R$ has a finite limit at infinity, the same as $\xi\to+\infty$ and as $\xi\to-\infty$. So $U_{L(\xi)}$ converges at both ends to the same unitary matrix, whose preimage under \eqref{eq:UL} is $L_\infty$.

(iii) By (i) and \eqref{eq:detUL}, $\det U_{L(\xi)}=(-1)^\kp\varkappa_\cA\det B_-(\xi)/\det B_+(\xi)$, a quotient of two polynomials of degree $\le\kp d_B$ without real zeros. As $|\det U_L|=1$ they have the same modulus up to a constant on $\R$, hence the same degree $d_B'$. A polynomial $q$ of degree $d_B'$ without real zeros has $[\arg q]_{-\infty}^{+\infty}=\pi(n_+(q)-n_-(q))=\pi(2n_+(q)-d_B')$, since $[\arg(\xi-z)]_{-\infty}^{+\infty}=\pm\pi$ according as $\pm\Im z>0$. Subtracting the two gives (iii).
\end{proof}

\begin{theorem}[bulk--edge correspondence]\label{thm:main}
Let $L$ be an admissible termination, differential of order $\le d$ along $\Gam$, with escaping edge spectrum. Then its endpoints coincide, $L_\infty:=\lim_{\xi\to\pm\infty}L^\Phi(\xi)$, and
\begin{equation}\label{eq:APSpoly}
  \ \mathcal I=\cR-\cR_L+\eta\bigl(L_\infty,\Lm^\Phi(+)\bigr)-\eta\bigl(L_\infty,\Lm^\Phi(-)\bigr)\ ,
  \qquad \cR_L\in\Z,\quad |\cR_L|\le\kp(d+p-1).
\end{equation}
\end{theorem}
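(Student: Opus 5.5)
The plan is to obtain \eqref{eq:APSpoly} as the specialization of Theorem~\ref{thm:APS}\textup{(ii)} to differential terminations, with Lemma~\ref{lem:plucker} supplying the existence and coincidence of the endpoints of the termination and the integrality and bound of $\cR_L$. First I check that the hypotheses of Theorem~\ref{thm:three}, hence of Theorem~\ref{thm:APS}, hold. A termination differential along $\Gam$ has a field $L(\xi)=\ker B(\xi)$ with $B$ a polynomial matrix of constant rank $\kp$. By Lemma~\ref{lem:analytic} it is real-analytic in $\xi$. Escape is assumed, so Theorem~\ref{thm:fiberexist} defines $\mathcal I$, and Theorem~\ref{thm:three} applies.

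Next I pass to the transported frame. By Theorem~\ref{thm:graded} and Definition~\ref{def:transported}, the transported field is $L^\Phi(\xi)=\Phi(\xi)^{-1}L(\xi)=\ker\bigl(B(\xi)\Phi(\xi)\bigr)$. It is $\cA(0)$-Lagrangian for every real $\xi$ by Corollary~\ref{cor:graded}\textup{(i)}. Here $B\Phi$ is a $\kp\times2\kp$ polynomial matrix of degree $d_B\le d+p-1$, since $\Phi$ has degree $\le p-1$. Its rank is $\kp$ because $\Phi$ is invertible.

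Lemma~\ref{lem:plucker} then applies with the fixed form $\cA(0)$. Part (ii) gives the common limit $L_\infty=\lim_{\xi\to\pm\infty}L^\Phi(\xi)$, so the endpoints of the termination exist and $U_{L^\Phi(\xi)}\to U_{L_\infty}$ at both ends. The endpoints $\Lm^\Phi(\pm)$ of the Calder\'on path exist by Corollary~\ref{cor:graded}\textup{(iii)}. Hence the endpoint unitaries $W(\pm)=U_{L_\infty}^{-1}U_+(\pm)$ exist. Admissibility (Definition~\ref{def:adm}) together with \eqref{eq:transdet} gives $1\notin\spec W(\pm)$.

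Theorem~\ref{thm:APS}\textup{(ii)} now yields $\mathcal I=\cR-\cR_L+\eta(W(+))-\eta(W(-))$. By Definition~\ref{def:Reta}, $\eta(W(\pm))=\eta(L_\infty,\Lm^\Phi(\pm))$, which is \eqref{eq:APSpoly}. Finally, Lemma~\ref{lem:plucker}\textup{(iii)} gives $\cR_L=n_+(\det B_-)-n_+(\det B_+)\in\Z$ with $|\cR_L|\le\kp d_B\le\kp(d+p-1)$.

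The only delicate point is the application of Lemma~\ref{lem:plucker} to $B\Phi$ rather than to $B$. That lemma needs $\ker(B\Phi)(\xi)$ to be Lagrangian for the single fixed form $\cA(0)$ at every real $\xi$, which is exactly what the congruence \eqref{eq:graded} provides. It also needs the degree count, where I will make sure that the product of a degree-$\le d$ matrix with the degree-$\le p-1$ polynomial $\Phi$ is what enters the bound on $\cR_L$. Everything else is bookkeeping between the results already stated.
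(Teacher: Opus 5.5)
Your proposal is correct and is essentially the paper's own proof: Theorem~\ref{thm:APS}\textup{(ii)} read in the transported frame, with Lemma~\ref{lem:plucker} applied to $L^\Phi=\ker(B\Phi)$, a polynomial matrix of degree $\le d+p-1$. That lemma gives the common endpoint $L_\infty$, the integrality of $\cR_L$ and the bound $|\cR_L|\le\kp(d+p-1)$. You additionally spell out the hypothesis checks the paper leaves implicit: real-analyticity of $L$, existence of $W(\pm)$, and admissibility read as $1\notin\spec W(\pm)$.
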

\noindent Admissibility gives escape under the uniformity of Proposition~\ref{prop:exact}\textup{(i)}, in particular at $p=1$ and under \textup{[M]} for the transported data, and reads $L_\infty\pitchfork\Lm^\Phi(\pm)$. Under \textup{[S]}, $|\cR_L|\le\kp d$.
\begin{proof}
The transport is an isometry $\cA(\xi)\to\cA(0)$ applied to both factors of $W$, so the spectrum of
$W$ and every change of $\arg\det W$ are frame-independent and \eqref{eq:APSgeneral} holds in the
transported frame. Lemma~\ref{lem:plucker}, applied to $L^\Phi$, makes the two endpoints of the termination coincide
and gives the integrality and the bounds.
\end{proof}

\noindent Every admissible termination in the examples of section~\ref{sec:examples} is covered by this
theorem. Section~\ref{sec:range} specializes it to closed Calder\'on paths and to transported pointwise
terminations, and section~\ref{sec:interface} to the doubled operator of an interface.

\begin{remark}[pseudodifferential terminations]\label{rem:PDO}
The invariant sees a termination only through its section $\xi\mapsto L(\xi)$ of \eqref{eq:Lfield}. Composing $T$ on the left with an invertible tangential operator, for instance $\mathrm{diag}(\langle D_x\rangle^{-m_i})$ on its rows, leaves $\ker T$, hence $H_L$ and $\mathcal I$, unchanged while changing the PDO orders of the $B_j$. A differential termination rescaled to order zero is the same realization.
For $x$-independent data under \textup{[S]} (in general for $B\Phi$), three cases occur.

\emph{(a) Rational sections.} If the symbol matrix is rational in $\xi$ without real poles, clearing denominators row by row, by polynomials without real zeros, does not change $L(\xi)$ or the rank, so Lemma~\ref{lem:plucker} and Theorem~\ref{thm:main} still apply, with $d_B$ the degree of the corresponding polynomial matrix. For instance, for $\kp=1$, $s=\pm1$, the order-zero section $L(\xi)$ with $U_{L(\xi)}=\bigl((1-is\xi)/(1+is\xi)\bigr)^k$ is the kernel of the row $(1,-U_{L(\xi)})$ in the chart. Clearing denominators gives the row $\bigl((1+is\xi)^k,-(1-is\xi)^k\bigr)$, of degree $k$, and $\cR_L=-k\sgn s$.

\emph{(b) Even sections.} If $L(-\xi)=L(\xi)$, as for functions of $\langle D_x\rangle$ such as $\partial_yu=\mu\langle D_x\rangle u$, the two limits coincide when they exist, a continuous lift of $\arg\det U_{L(\xi)}$ is even, and $\cR_L=0$. Then $\mathcal I(L)=\mathcal I(L_\infty)$, the value of the pointwise termination $L_\infty$.

\emph{(c) General classical symbols.} Otherwise the limits, which exist for classical symbols, are $L(\pm)=\ker\bigl(b^0_i(\pm1)\bigr)_i$ when the principal rows $b^0_i$ have rank $\kp$ at $\xi=\pm1$, and are in general different. Symbols odd in $\xi$, such as that of $D_x\langle D_x\rangle^{-1}$, produce this. Then \eqref{eq:RL} holds with two independent endpoints and $\cR_L\in\R$. Neither $\cR_L$ nor $\mathcal I$ is bounded at fixed pseudodifferential order: for $\kp=1$, the order-zero section $U_{L(\xi)}=\exp\bigl(i\pi t\,\xi\langle\xi\rangle^{-1}\bigr)$ has $\cR_L=t$ for every real $t$, with endpoints $e^{\pm i\pi t}$. Its fractional part is absorbed by the $\eta$-terms, which see these endpoints: modulo $\Z$, $\eta(W(+))-\eta(W(-))\equiv t-\frac1{2\pi}\bigl(\arg U_+(+)-\arg U_+(-)\bigr)\equiv t-\cR$, so $\mathcal I\in\Z$. Without limits, as for non-classical symbols, only \eqref{eq:APSfinite} remains, and escape becomes a genuine restriction.

In summary, sections~\ref{sec:obs} and \ref{sec:APS} apply to any termination whose section $L(\xi)$ is real-analytic with escaping edge spectrum, Theorem~\ref{thm:APS}\textup{(ii)} and Proposition~\ref{prop:wall} requiring in addition the limits $W(\pm)$. A single endpoint $L(+)=L(-)$ and an integer $\cR_L$, used in \eqref{eq:IReq}, Corollary~\ref{cor:loop} and section~\ref{sec:range}, hold in cases (a) and (b), but not in general in case (c).
\end{remark}

\paragraph{How the right-hand side is computed.}\label{ssec:symbolic}
The $\eta$-terms of \eqref{eq:APSpoly} are finite-dimensional linear algebra, of size at most $2\kp=pN$, and $\cR_L$ is a root count (Lemma~\ref{lem:plucker}\textup{(iii)}); only $\cR$, which depends on $H$ and not on the termination, needs a computation at each $\xi$ of a grid, each of the same size, followed by one quadrature. The computations of section~\ref{sec:examples} proceed as follows.
\begin{enumerate}[label=\textup{(\alph*)},nosep]
\item \emph{Chart.} Assemble $\cA$ from the coefficients on $\Gam$ by
\eqref{eq:blocks}, and diagonalize it once. This fixes the chart \eqref{eq:UL}. Without \textup{[S]},
compute the polynomial $\Phi(\xi)$ of \eqref{eq:gradedPhi} once as well.
\item \emph{Calder\'on path.} At each $\xi$ of a grid, an ordered Schur decomposition of the companion
matrix $M^\natural(\xi,E_0)$ gives an orthonormal basis of its stable subspace. For coefficients constant
in $y$, with $\sigma^\circ$ the full symbol, this is
\begin{equation}\label{eq:Lamflat}
  \Lm^+(\xi,E_0)=\mathrm{span}\bigl\{(c,\zeta c,\dots,\zeta^{p-1}c):\ \Im\zeta>0,\
  (\sigma^\circ(\xi,\zeta)-E_0)c=0\bigr\},
\end{equation}
with Jordan chains at multiple roots; otherwise the basis is carried to $y=0$ by \eqref{eq:LamRiesz}.
Multiplying by $\Phi(\xi)^{-1}$ and applying the chart gives $U_+(\xi,E_0)$.
\item \emph{Rotation.} $\cR$ is the change of a continuous lift of $\arg\det U_+(\xi,E_0)$, obtained by
unwrapping along the grid after compactifying the momentum line, or equivalently by the quadrature
\eqref{eq:Rint}.
\item \emph{Endpoints.} The same decomposition for the principal symbol at $\xi=\pm1$ gives $V_\pm$.
Under \textup{[P]},
\[
  \Lm^\Phi(\pm)=\bigoplus_{m=0}^{p-1}\pi_m\bigl(V^\Phi_\pm\cap G^m\bigr),
\]
with $V^\Phi_\pm$ as in \textup{[P]}, $G^m=\bigoplus_{j\le m}\cB_j$ and $\pi_m$ the projection onto $\cB_m$: a row reduction against the filtration. At $p=1$,
$\Lm^+(\pm)=V_\pm$; at even $p$ under \textup{[S]} and \textup{[M]}, $\Lm^+(\pm)=\cB_{\mathrm{high}}$. 
\item \emph{Termination and asymmetries.} The chart gives $U_{L_\infty}$. For a differential termination
$\cR_L=n_+(\det B_-)-n_+(\det B_+)$ by Lemma~\ref{lem:plucker}\textup{(iii)}, two root counts. The
eigenphases of $W(\pm)=U_{L_\infty}^{-1}U_+(\pm)$ in $(0,2\pi)$ give the asymmetries \eqref{eq:eta}.
\end{enumerate}
That the result is an integer checks all steps at once. One evaluation of $\cR$ determines it, with the
endpoint data, on a connected component of parameter space (Lemma~\ref{lem:Rconst},
Corollary~\ref{cor:bulkR}); compactly supported $x$-dependence does not enter
(Theorem~\ref{thm:local}), and under \eqref{eq:flatsymbol} neither does the collar, $L$ being replaced by
$L^\circ$ (Theorem~\ref{thm:collar}).

\section{Rigidity and the range of the invariant}\label{sec:range}

This section studies $\mathcal I$ as a function of the termination: when it is constant, how much it can vary, and which values it takes. Throughout, $H$ is $x$-independent and admissible terminations have escaping edge spectrum, so that $\mathcal I$ exists. A transported pointwise termination $\Phi L_0$ is a point $L_0$ of the compact $\Lag(\cA(0))\cong\U(\kp)$, and a differential termination $L$ reduces to one, $\mathcal I(L)=\mathcal I(\Phi L_\infty)-\cR_L$, with $L$ admissible exactly when $\Phi L_\infty$ is (Theorem~\ref{thm:main}). We therefore take $L=\Phi L_0$ unless stated otherwise. Our main tools are \eqref{eq:APS} and the chambers of Proposition~\ref{prop:admtopology}. Section~\ref{sec:examples} illustrates some of the results.

\paragraph{Rigidity.}\label{ssec:rigid}
The two $\eta$-terms of \eqref{eq:APS} are the only $L$-dependent part of the right-hand side, so
they are also the only mechanism by which $\mathcal I$ can depend on the termination.

\begin{corollary}[rigidity for closed Calder\'on path]\label{cor:loop}
If $\Lm^\Phi(+)=\Lm^\Phi(-)$, i.e., the Calder\'on path is a closed loop, then $\mathcal I=\cR-\cR_L$ for every admissible differential termination, and in particular
\[
  \mathcal I=\cR=\wind(\det U_+)\in\Z
\]
for every transported pointwise termination.
\end{corollary}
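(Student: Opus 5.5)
The plan is to specialize Theorem~\ref{thm:main} (equivalently \eqref{eq:RL}). Let $L$ be an admissible differential termination. Lemma~\ref{lem:plucker}\textup{(ii)}, applied to the transported section $L^\Phi$, says its endpoints coincide; call the common value $L_\infty$. Admissibility then makes $L_\infty$ transverse to $\Lm^\Phi(\pm)$. Under the hypothesis $\Lm^\Phi(+)=\Lm^\Phi(-)$, the two relative endpoint unitaries agree,
\[
W(+)=U_{L_\infty}^{-1}U_+(+)=U_{L_\infty}^{-1}U_+(-)=W(-),
\]
and neither has $1$ in its spectrum. So the two spectral asymmetries $\eta(L_\infty,\Lm^\Phi(+))$ and $\eta(L_\infty,\Lm^\Phi(-))$ in \eqref{eq:APSpoly} are equal and cancel, which leaves $\mathcal I=\cR-\cR_L$. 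For a transported pointwise termination $L=\Phi L_0$, Theorem~\ref{thm:APS}\textup{(iii)} gives $\cR_L=0$ directly, and therefore $\mathcal I=\cR$.

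It remains to identify $\cR$ with the winding number of $\det U_+$. By Lemma~\ref{lem:rate}, $\xi\mapsto\det U_+(\xi,E_0)$ is continuous on $\R$ with limits $\det U_+(\pm)$. These limits are equal because the endpoints coincide, so $\det U_+$ extends to a continuous loop on the compactified line $\overline\R/\{\pm\infty\}\cong S^1$. The change of a continuous lift of its argument, divided by $2\pi$, is therefore its degree $\wind(\det U_+)\in\Z$. This is the integrality remark after Definition~\ref{def:rot}, applied to a loop. Integrality of $\cR-\cR_L$ is also consistent with $\cR_L\in\Z$ from Lemma~\ref{lem:plucker}\textup{(iii)}.

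The only point needing care is that the statement asks for the cancellation for every admissible termination, and we must not assume escape separately. I would rely on the standing convention of this section, that admissible terminations have escaping edge spectrum, so Theorem~\ref{thm:main} applies. Alternatively, one can invoke Proposition~\ref{prop:exact}\textup{(i)} when the Calder\'on path converges uniformly in energy. Beyond this there is no real obstacle: the cancellation is exact at the level of the unitaries $W(\pm)$, so no regularization by Proposition~\ref{prop:wall} is needed.
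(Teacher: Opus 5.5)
Your proposal is correct and follows the paper's own argument: the paper's proof likewise observes that the two $\eta$-terms of \eqref{eq:APSpoly} coincide when $\Lm^\Phi(+)=\Lm^\Phi(-)$, and that $\cR$ is then the winding of a loop, hence an integer. Your remark that escape comes from the standing convention of the section, or from Proposition~\ref{prop:exact}\textup{(i)}, correctly accounts for the hypothesis that Theorem~\ref{thm:main} needs.
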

\begin{proof}
The two $\eta$-terms of \eqref{eq:APSpoly} coincide for every $L$, and $\cR$ is the winding of a loop,
hence an integer.
\end{proof}

\noindent The path closes for an even principal symbol, $\sigma_p(x;-\xi,\zeta)=\sigma_p(x;\xi,\zeta)$, which forces
$p$ to be even since $\sigma_p(-1,0)=(-1)^p\sigma_p(1,0)$ with $\sigma_p(1,0)$ invertible by \textup{[H2]}: under
\textup{[S]} and \textup{[P]}, $\Lm^+(x,\pm)=\lim_{t\to\infty}\Th_tV_\pm(x)$ and evenness gives $V_+=V_-$.

\begin{corollary}[even order is rigid]\label{cor:evenrigid}
Let $p$ be even and let \textup{[M]} and \textup{[S]} hold. Then
$\Lm^+(+)=\Lm^+(-)=\cB_{\mathrm{high}}$ and $\mathcal I=\cR-\cR_L$ for \emph{every}
admissible differential termination, $\mathcal I=\cR=\wind(\det U_+)\in\Z$ for every pointwise one. Without
\textup{[S]} the same holds for the transported path when $\Lm^\Phi(+)=\Lm^\Phi(-)$, and since
$\Phi^{-1}$ raises the trace order both transported endpoints are again $\cB_{\mathrm{high}}$ when the transported data $(\mathbf 1+\cN_\pm)V_\pm$ still miss $\cB_{\mathrm{low}}$.
Without \textup{[S]}, \textup{[M]} for the transported data cannot be dropped: \textup{(E8)}, where it fails, takes (numerically) all $\kp+1$ values at $p=2$ and $p=4$.
\end{corollary}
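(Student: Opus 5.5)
The plan is to reduce everything to Corollary~\ref{cor:loop} by identifying the two Calder\'on endpoints. The first step is to apply Remark~\ref{rem:SL}: under \textup{[M]} and \textup{[S]}, Lemma~\ref{lem:endpoints} gives $\Lm^+(\pm)=V_\pm^{\uparrow}$. By \eqref{eq:torusreps}, the Pl\"ucker vector of $V_\pm^\uparrow$ is supported on the index sets of maximal weight $w^*$. At even $p$ the maximal weight $w^*$ is carried by the single index set spanning $\cB_{\mathrm{high}}=F_{p/2}$, of dimension $Np/2=\kp$. Hence $\varpi_{\Lm^+(\pm)}$ is a nonzero multiple of $e_{I^*}$, and $\Lm^+(+)=\Lm^+(-)=\cB_{\mathrm{high}}$; this is \eqref{eq:Ehigheven}. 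Since $\Phi=\mathbf 1$ under \textup{[S]}, the transported endpoints coincide as well.

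With $\Lm^\Phi(+)=\Lm^\Phi(-)$ in hand, Corollary~\ref{cor:loop} gives $\mathcal I=\cR-\cR_L$ for every admissible differential termination. For a pointwise one, $\cR_L=0$ by Lemma~\ref{lem:plucker}, so $\mathcal I=\cR$. This is the winding of the loop $\xi\mapsto\det U_+(\xi)$ over the compactified line, hence an integer. Admissibility and escape enter only through Theorem~\ref{thm:main}, so no further hypothesis is needed.

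For the statement without \textup{[S]}, Corollary~\ref{cor:graded}\textup{(iii)} gives $\Lm^\Phi(\pm)=(V^\Phi_\pm)^\uparrow$ whenever $\overline w(V^\Phi_\pm)=w^*$, with $V^\Phi_\pm=(\mathbf 1+\cN_\pm)V_\pm$. The matrix $\cN_\pm$ is strictly block lower triangular, mapping $\cB_l$ into $\bigoplus_{j>l}\cB_j$, so it raises the trace order. The key check is that $V^\Phi_\pm\cap\cB_{\mathrm{low}}=0$ is equivalent to maximal weight at even $p$: the only weight-$w^*$ coordinate is the minor on $\cB_{\mathrm{high}}$, and it is nonzero exactly when the projection of $V^\Phi_\pm$ onto $\cB_{\mathrm{high}}$ along $\cB_{\mathrm{low}}$ is bijective. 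That bijectivity is the transversality $V^\Phi_\pm\cap\cB_{\mathrm{low}}=0$. The $\uparrow$ limit is then $\cB_{\mathrm{high}}$ for both signs, and the argument of the previous paragraph repeats verbatim in the transported frame.

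The final claim, that \textup{[M]} for the transported data cannot be dropped, is not a proof obligation. It is established by exhibiting example \textup{(E8)} of section~\ref{sec:examples}, where the endpoints are transverse and Theorem~\ref{thm:range} yields $\kp+1$ values; it is numerical and I would only cite it.

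The main obstacle is the Pl\"ucker bookkeeping. One must show that a nonvanishing top-weight coordinate forces the graded limit to equal $\cB_{\mathrm{high}}$ rather than merely contain it. Equality follows because dimensions match at even $p$, both spaces having dimension $\kp$.
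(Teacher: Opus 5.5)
Your proposal is correct and is the paper's argument. Under \textup{[M]} and \textup{[S]}, \eqref{eq:Ehigheven} identifies both endpoints with $\cB_{\mathrm{high}}$, and Corollary~\ref{cor:loop} then gives $\mathcal I=\cR-\cR_L$, with $\cR_L=0$ for pointwise terminations; your extra step for the case without \textup{[S]}, via Corollary~\ref{cor:graded}\textup{(iii)} and the fact that at even $p$ maximal weight is equivalent to $V^\Phi_\pm\cap\cB_{\mathrm{low}}=0$, correctly fills in what the paper's one-line proof leaves implicit.
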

\begin{proof}
By \eqref{eq:Ehigheven}, $\Lm^+(\pm)=\cB_{\mathrm{high}}$, and Corollary~\ref{cor:loop} applies.
\end{proof}

\begin{lemma}[scalar principal symbols at second order]\label{lem:scalarp2}
Let $p=2$ and $\sigma_p=\phi(x;\xi,\zeta)\,\mathsf S(x)$ with $\phi$ real scalar and $\mathsf S$ Hermitian invertible, even or not.
Then \textup{[M]} holds, and by \eqref{eq:Ehigheven}
\[
  \Lm^+(x,+)=\Lm^+(x,-)=\bigl\{\gamma:\gamma_0=0\bigr\},
\]
the Dirichlet Lagrangian, so that Corollary~\ref{cor:evenrigid} applies under \textup{[S]}. The inadmissible pointwise
terminations are exactly those meeting the Dirichlet Lagrangian, and for definite $\mathsf S$ each of them is
nevertheless elliptic (Remark~\ref{rem:SL}).
\end{lemma}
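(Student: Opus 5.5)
The plan is to make the frozen data completely explicit, read off \textup{[M]} and the endpoints from \eqref{eq:Ehigheven}, and then treat ellipticity separately. Here $p=2$ and $\kp=N$.

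\emph{The frozen data.} Write $\phi(x;\xi,\zeta)=a(x)\zeta^2+b(x)\xi\zeta+c(x)\xi^2$, a real quadratic form. By \textup{[H2]}, $\phi\ne0$ for $(\xi,\zeta)\ne0$ and $\mathsf S$ is invertible, so $\phi$ is definite and $a=\phi(x;0,1)\ne0$. At $\xi=\pm1$ the polynomial $\phi(x;\pm1,\cdot)$ has one pair of conjugate roots. Let $\zeta_\pm(x)$ be the root with $\Im\zeta_\pm>0$, with $\Im\zeta_\pm$ bounded below uniformly in $x$ by the ellipticity constant and the coefficient bounds.

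Because $\sigma_p(x;\pm1,\zeta_\pm)=0$ as a matrix, every $c\in\C^N$ gives a decaying solution $e^{iy\zeta_\pm}c$. Lemma~\ref{lem:hermODE}\textup{(a)} gives $\dim V_\pm=\kp=N$, so
\[
  V_\pm(x)=\{(c,\zeta_\pm(x)c):c\in\C^N\}.
\]
Since $\zeta_\pm\ne0$, we get $V_\pm\cap\cB_{\mathrm{low}}=V_\pm\cap\{\gamma_1=0\}=0$. Quantitatively, the single maximal-weight Pl\"ucker coordinate, the one indexed by the slot $\cB_1$, of the orthonormal frame $(1+|\zeta_\pm|^2)^{-1/2}(e_k,\zeta_\pm e_k)$ equals $\zeta_\pm^N(1+|\zeta_\pm|^2)^{-N/2}$. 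Its modulus is bounded below uniformly in $x$, which is \eqref{eq:Ehyp}.

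\emph{The endpoints.} Under \textup{[M]}, \eqref{eq:Ehigheven} gives $\Lm^+(x,\pm)=\cB_{\mathrm{high}}=\{\gamma_0=0\}$, the Dirichlet Lagrangian. Under \textup{[S]} we have $\Phi=\mathbf 1$, so the transported endpoints are the same, and Corollary~\ref{cor:evenrigid} applies. Still under \textup{[S]}, a pointwise $L$ has $L(\pm)=L$, so by Definition~\ref{def:adm} and \eqref{eq:transdet} it is inadmissible exactly when $L\cap\{\gamma_0=0\}\ne0$.

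\emph{Ellipticity on the wall.} This is the main obstacle, because Remark~\ref{rem:SL} only covers admissible $L$. For pointwise $L$ we have $L^\downarrow=L$, so I must show $L\cap V_\pm=0$ when $\mathsf S$ is definite, say $\mathsf S>0$ after a sign change. Under \textup{[S]}, $A_1$ has order zero, which forces $b=0$. Then $\zeta_\pm=i\sqrt{c/a}=:i\kappa$ is purely imaginary, with $c/a>0$.

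Suppose $0\ne w=(c_0,i\kappa c_0)\in L$. I would first try isotropy of $L$ for the actual Green form. The blocks are $\cA_{01}=\cA_{10}=a\mathsf S$ and $\cA_{00}=A_1|_\Gam$, a Hermitian matrix, so
\[
  0=q(w)=c_0^*A_1c_0+2\Re\bigl(i\kappa\,a\,c_0^*\mathsf Sc_0\bigr)=c_0^*A_1c_0 .
\]
This constrains $w$ but does not contradict $w\in L$. So the first approach alone is not enough.

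The fallback is an energy identity for the decaying frozen solution $\psi=e^{-\kappa y}c_0$, whose Cauchy data $w$ lie in $L$. Pairing $\psi$ with $\phi(\pm1,D_y)\mathsf S\psi=0$ and integrating by parts once gives
\[
  0=\int_0^\infty\bigl(a|\partial_y\psi|_{\mathsf S}^2+c\,|\psi|_{\mathsf S}^2\bigr)dy+\text{boundary term}.
\]
The integral is strictly definite because $\mathsf S>0$ and $a,c$ have the same sign. The boundary term is a multiple of $c_0^*\mathsf S\,\partial_y\psi(0)$, which the boundary condition alone does not control. This is exactly where I expect the argument to break, so I would combine this identity with the constraint $c_0^*A_1c_0=0$ and with the Lagrangian property of $L$ against the vectors $(0,\cdot)\in L$ that exist on the wall.

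If this combination still fails, the last resort is to verify \eqref{eq:SLdef} directly in the chart of Proposition~\ref{prop:U}. There $V_\pm$ and the Dirichlet Lagrangian are explicit, and $\mathsf S>0$ should make the relevant compression definite, hence transverse.
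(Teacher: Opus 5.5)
Your treatment of \textup{[M]}, of the endpoints via \eqref{eq:Ehigheven}, and of the characterization of the inadmissible pointwise terminations is correct and is essentially the paper's argument; your Pl\"ucker computation just makes the uniform bound in \eqref{eq:Ehyp} explicit.

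The ellipticity part has a genuine gap. It comes from a misidentification: for a pointwise $L$ at $p=2$ the principal field is \emph{not} $L$. By definition $L^\downarrow=\lim_{t\to\infty}\Th_t^{-1}L$ with $\Th_t^{-1}=\mathrm{diag}(1,t^{-1})\otimes\mathbf 1_N$. By the graded-limit formula for $P^{\downarrow}$ in Appendix~\ref{app:grass}, this is $L^\downarrow=\pi_0(L)\oplus(L\cap F_1)$, and it coincides with $L$ only at $p=1$ (Remark~\ref{rem:SL}). For instance, a Robin row $\gamma_1u+i\mu\gamma_0u$ has top-order part $\gamma_1$, so its principal field is the Neumann plane, not the Robin line.

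Your target $L\cap V_\pm=0$ is therefore the wrong condition, and it is false on the wall. Take $-\Delta\otimes\mathbf 1_2$, with $\mathsf S=\mathbf 1$ and $V_\pm=\{(c,ic)\}$. The Lagrangian $L=\mathrm{span}\{(0,e_2),(e_1,ie_1)\}$ imposes Dirichlet on the second component and the Robin condition $\partial_yu_1=-u_1$ on the first. It meets the Dirichlet Lagrangian and contains $(e_1,ie_1)\in V_\pm$, yet it is elliptic. This is why isotropy gave you no contradiction: here $A_1=0$ and every vector of $V_\pm$ is isotropic. Neither the energy identity nor a chart computation can establish a false transversality.

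The missing idea is to test the correct principal field against $V_\pm$. Identify the slots $\cB_0$ and $\cB_1$ with $\C^N$. Since $\zeta_\pm$ is a nonzero scalar, $(c,\zeta_\pm c)$ lies in $\pi_0(L)\oplus(L\cap F_1)$ if and only if $c$ lies in both $\pi_0(L)$ and the slot-one copy of $L\cap F_1$. So $L^\downarrow\cap V_\pm\ne0$ exactly when these two subspaces share a nonzero vector.

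Isotropy of $L$ rules this out. For $(w_0,w_1)\in L$ and $(0,v)\in L$,
\[
0=v^*\cA_{10}w_0+v^*\cA_{11}w_1=\phi(x;0,1)\,v^*\mathsf S w_0,
\]
because $\cA_{11}=0$ and $\cA_{10}=A_2=\phi(x;0,1)\mathsf S$ by Proposition~\ref{prop:green}. Hence the two subspaces are $\mathsf S$-orthogonal. A common vector $c$ would satisfy $c^*\mathsf Sc=0$, which is impossible for definite $\mathsf S$. This argument needs neither $b=0$, nor \textup{[S]}, nor $\zeta_\pm\in i\R$.
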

\begin{proof}
The scalar $\phi(\pm1,\cdot)$ has one simple root $\zeta_\pm$ with $\Im\zeta_\pm>0$, at which every
$c\in\C^N$ is in the kernel; so $V_\pm=\{(c,\zeta_\pm c)\}$, whose $\gamma_1$-block is invertible. That is
\textup{[M]} at $p=2$, and $\cB_{\mathrm{high}}$ is the Dirichlet Lagrangian. Admissibility of a pointwise $L$ is
$L\pitchfork\{\gamma_0=0\}$, so the inadmissible ones form the wall
$\Sigma_{\{\gamma_0=0\}}=\{L\in\Lag(\cA):L\cap\{\gamma_0=0\}\ne0\}$, a codimension-one stratified
hypersurface whose deepest stratum is the Dirichlet Lagrangian itself. Such an $L$ has
$L^\downarrow=\pi_0(L)\oplus(L\cap F_1)$, which meets $V_\pm$ exactly when the two summands share a
vector, $\zeta_\pm$ being scalar. Isotropy makes them $\mathsf S$-orthogonal, so for definite $\mathsf S$ the
termination is elliptic (Remark~\ref{rem:SL}).
\end{proof}

\noindent So the path closes in two circumstances (an even principal symbol, and even $p$ under \textup{[S]} and \textup{[M]})
while $\Lm^\Phi(+)\ne\Lm^\Phi(-)$ {forces} dependence on the
termination. How much is the subject of the rest of this section.

\begin{lemma}[locally constant rotation]\label{lem:Rconst}
Let $s\mapsto H_s$, $s$ in a connected manifold $\mathscr S$, be a continuous family satisfying \textup{[H1]--[H4]} with a common gap containing $E_0$, whose Calder\'on paths $U_+(\xi,E_0;s)$ converge to their endpoints $U_+(\pm;s)$ locally uniformly in $s$. Then, for continuous lifts $\phi_\pm(s)$ of $\arg\det U_+(\pm;s)$,
\[
  \cR(s)-\tfrac1{2\pi}\bigl(\phi_+(s)-\phi_-(s)\bigr)\ \in\ \Z
\]
is constant on $\mathscr S$. In particular $\cR$ is constant when the endpoints do not depend on $s$.
\end{lemma}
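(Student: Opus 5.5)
The idea is to write the rotation as a winding of a closed loop plus an endpoint correction, and to show that the loop winding is a continuous integer-valued function of $s$. Compactify the momentum line to $\overline\R=[-\infty,+\infty]$. By hypothesis, $(\xi,s)\mapsto U_+(\xi,E_0;s)$ extends continuously to $\overline\R\times\mathscr S$, with values $U_+(\pm;s)$ at $\xi=\pm\infty$. Continuity in $\xi$ at finite $\xi$ comes from the Riesz-projection formula \eqref{eq:LamRiesz}, with $\Phi$ polynomial. Continuity in $(\xi,s)$ jointly comes from continuity of the coefficients in $s$ and persistence of the splitting under \textup{[H3]} with a common gap. Continuity at $\xi=\pm\infty$ is exactly the locally uniform convergence to the endpoints.

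Let $g(\xi,s):=\det U_+(\xi,E_0;s)\in\U(1)$. It is continuous on the contractible-in-$\xi$ set $\overline\R\times\mathscr S$. Since $\overline\R$ is an interval, a continuous lift $\theta(\xi,s)$ of $\arg g$ exists on $\overline\R\times\mathscr S$ locally in $s$. To get this, cover a neighborhood $\mathscr S_0$ of any point by a simply connected chart. Alternatively, lift along $\xi$ at each $s$ and use uniform continuity on $\overline\R\times\mathscr S_0$, with $\mathscr S_0$ having compact closure, to see the lift is continuous in $s$.

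With such a lift, Definition~\ref{def:rot} gives $2\pi\cR(s)=\theta(+\infty,s)-\theta(-\infty,s)$, which is continuous in $s$. Both $\theta(\pm\infty,s)$ and $\phi_\pm(s)$ are continuous lifts of the same function $\arg\det U_+(\pm;s)$. Hence $\theta(\pm\infty,s)-\phi_\pm(s)\in2\pi\Z$ is locally constant on $\mathscr S_0$. Therefore
$$
2\pi\cR(s)-\phi_+(s)+\phi_-(s)=\bigl(\theta(+\infty,s)-\phi_+(s)\bigr)-\bigl(\theta(-\infty,s)-\phi_-(s)\bigr)
$$
is a locally constant element of $2\pi\Z$. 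Connectedness of $\mathscr S$ then makes it globally constant. If the endpoints do not depend on $s$, take $\phi_\pm$ constant, and $\cR$ itself is constant.

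The main obstacle is the joint continuity of $U_+$ on $\overline\R\times\mathscr S$, that is, continuity in $s$ at finite $\xi$ uniformly on compacts. For this I would use \eqref{eq:LamRiesz}. The transition matrix on $[0,y_0]$ depends continuously on $s$ by ODE dependence on parameters. The Riesz projection $P^s$ depends continuously on $s$ because the spectrum of $M^\natural$ stays off $i\R$ by the common gap. The congruence $\Phi_s(\xi)$ is polynomial with coefficients continuous in $s$, by Theorem~\ref{thm:graded}. The chart \eqref{eq:UL} depends on the diagonalization of $\cA_s(0)$, which may move with $s$. To avoid ambiguity there, I would fix $\Upsilon$ locally via a continuous choice, using the spectral projections of $\cA_s(0)$ onto its positive and negative parts, which are continuous since $\cA_s(0)$ is nondegenerate. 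Any reordering only multiplies $\det U_L$ by a constant sign, by the remark in the proof of Proposition~\ref{prop:U}. That sign cancels in $\phi_+-\phi_-$ once $\phi_\pm$ are taken in the same chart.
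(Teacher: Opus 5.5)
Your proposal is correct and follows the paper's proof essentially step for step: a continuous lift of $\arg\det U_+$ on $[-\infty,+\infty]\times N$ for a contractible neighborhood $N$, continuity of $2\pi\cR(s)=\theta(+\infty,s)-\theta(-\infty,s)$, local constancy of the $2\pi\Z$-valued differences $\theta(\pm\infty,\cdot)-\phi_\pm$, and connectedness.

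The one point to tighten is the chart. A diagonalizing unitary $\Upsilon(s)$ need not be continuous where eigenvalues of $\cA_s(0)$ of the same sign cross. As the paper does, take continuous orthonormal frames $Y_\pm(s)$ of the two spectral subspaces, and use the chart $A_+^{1/2}BA_-^{-1/2}$ with $A_\pm=\pm Y_\pm^*\cA_s(0)Y_\pm$. At fixed $s$ this chart differs from \eqref{eq:UL} by unitaries independent of $L$. It therefore multiplies $\det U_L$ by an $L$-independent phase, not merely a sign, and that phase cancels in $\cR$ and in $\phi_+-\phi_-$.
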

\noindent The uniform convergence holds at $p=1$, and under \textup{[M]} for the transported data, by Lemma~\ref{lem:endpoints} and Corollary~\ref{cor:graded}\textup{(iii)}.
\begin{proof}
When $\cA$ depends on $s$, the charts are taken continuous in $s$. Use orthonormal frames $Y_\pm(s)$ of the spectral subspaces of $\cA_s(0)$ (continuous in $s$ by \cite[Ch.~I \S4.6 and Thm.~6.32]{kato1966perturbation}) and $U=A_+^{1/2}BA_-^{-1/2}$ with $A_\pm=\pm Y_\pm^*\cA_sY_\pm$. For fixed $s$ this is $O_+^*U_LO_-$ with $O_\pm$ unitary and independent of $L$, which changes neither $\cR$ nor relative eigenphases; $\Phi_s$ is continuous in $s$ by Theorem~\ref{thm:graded}. The quantity is an integer, as noted after Definition~\ref{def:rot}. Near $s_0$, on a contractible neighborhood $N$, the
map $\det U_+$ on the simply connected $[-\infty,+\infty]\times N$ has a continuous lift $\tilde\phi$ of its argument, so
$2\pi\cR(s)=\tilde\phi(+\infty,s)-\tilde\phi(-\infty,s)$ is continuous on $N$, and so are
$\phi_\pm-\tilde\phi(\pm\infty,\cdot)\in2\pi\Z$. The quantity is therefore locally constant, hence constant.
\end{proof}

\begin{corollary}[the rotation is a bulk quantity up to endpoint data]\label{cor:bulkR}
Let $H^\circ$ be as in Theorem~\ref{thm:collar}, and let $H_s=H^\circ+s(H-H^\circ)$, $s\in[0,1]$, satisfy \textup{[H2]} and the hypotheses of Lemma~\ref{lem:Rconst}. Then $2\pi(\cR[H]-\cR[H^\circ])=\Delta\phi_+-\Delta\phi_-$, the changes along $s\in[0,1]$ of the lifts of $\arg\det U_+(\pm;s)$.
\end{corollary}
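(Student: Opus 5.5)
The corollary is Lemma~\ref{lem:Rconst} applied to the one-parameter family $H_s$, $s\in\mathscr S=[0,1]$, which is connected. The plan is to verify the hypotheses of that lemma along the path, read off the integer invariant at the two ends, and show that the integer is unchanged.

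First, the hypotheses. Along the path $H_s-H^\circ=s\Vp$ is supported in $\{y<y_0\}$, so every $H_s$ agrees with $H^{\R^2}$ for $y\ge y_0$. Consequently \textup{[H4]} holds with the same $y_0$ and \textup{[H3]} holds with the same gap $\Delta\ni E_0$, since the bulk operator does not move. \textup{[H1]} is preserved because the family is affine between formally self-adjoint operators. \textup{[H2]} is assumed. Continuity in $s$ of the coefficients is clear. Uniform convergence of the Calder\'on paths to their endpoints, locally in $s$, is also part of the hypotheses, and as noted after Lemma~\ref{lem:Rconst} it holds automatically at $p=1$ or under \textup{[M]} for the transported data. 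Lemma~\ref{lem:Rconst} then says that
\[
  \cR(s)-\tfrac1{2\pi}\bigl(\phi_+(s)-\phi_-(s)\bigr)
\]
is a constant integer on $[0,1]$, where $\phi_\pm(s)$ are continuous lifts of $\arg\det U_+(\pm;s)$ in charts chosen continuous in $s$, as in its proof.

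Next, evaluate at $s=0$ and $s=1$ and subtract. Since $H_0=H^\circ$ and $H_1=H$, this gives $2\pi(\cR[H]-\cR[H^\circ])=(\phi_+(1)-\phi_+(0))-(\phi_-(1)-\phi_-(0))=\Delta\phi_+-\Delta\phi_-$.

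One point needs care: $\cR[H]$ and $\cR[H^\circ]$ should be computed in their own fixed charts, whereas the lemma uses an $s$-continuous chart. By the proof of Lemma~\ref{lem:Rconst}, changing charts at fixed $s$ multiplies $U$ by fixed unitaries $O_\pm$ on both sides. This shifts $\arg\det U_+(\xi)$ by a $\xi$-independent constant, so $\cR$ is unchanged, and it shifts $\phi_\pm$ by the same constant at each end, so $\Delta\phi_\pm$ are well defined up to a common $2\pi\Z$-compatible normalization. The transport $\Phi_s$ is also continuous in $s$, by Theorem~\ref{thm:graded}, since $\cA_s$ is affine in $s$.

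The main obstacle is exactly this chart bookkeeping: making sure the lifts $\phi_\pm$ are taken along the path in one continuous chart, so that the integer produced by Lemma~\ref{lem:Rconst} cancels. Everything else is direct verification.
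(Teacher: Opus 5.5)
Your proposal is correct and is the paper's own argument: the paper proves the corollary in one line by applying Lemma~\ref{lem:Rconst} to the parameter $s\in[0,1]$, so that the integer $\cR(s)-\frac1{2\pi}\bigl(\phi_+(s)-\phi_-(s)\bigr)$ takes the same value at $s=0$ and at $s=1$. Your check of the hypotheses along the path, and your observation that $\Delta\phi_+-\Delta\phi_-$ is unchanged under $s$-continuous changes of chart, spell out details the paper leaves implicit.
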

\begin{proof}
Lemma~\ref{lem:Rconst} with the parameter $s\in[0,1]$.
\end{proof}
\noindent \textup{[H2]} holds along the path under \eqref{eq:flatsymbol}. The correction involves finitely many boundary jets of the coefficients, but no solution across the collar. It vanishes, so that $\cR$ depends on the bulk symbol alone, at $p=1$ under \eqref{eq:flatsymbol} and when the Calder\'on paths close (even $p$ under \textup{[S]}, \textup{[M]}). Similarly, by Lemma~\ref{lem:Rconst} with the parameter $s=E_0$, for which the chart is fixed, $\cR$ does not depend on $E_0$ when the endpoints do not depend on $E$ and the convergence is uniform in $E$. Lemma~\ref{lem:Rconst} also justifies the arithmetic of section~\ref{sec:examples}. There, one
numerical evaluation, together with the endpoint data, determines $\cR$ on each connected component of
the parameter region. The non-integrality of $\cR$ lives entirely in the endpoint data.

\begin{theorem}[the range of $\mathcal I$]\label{thm:range}
Let $U_1:=U_+(-)$, $U_2:=U_+(+)$, let $W'$ be the restriction of $U_1^{-1}U_2$ to the orthogonal complement of its eigenvalue $1$, and let $j(L)$ be the chamber label \eqref{eq:chamberlabel}. Then for every admissible $L$
\[
  \mathcal I(L)=\mathcal I_0-j(L),\qquad
  \mathcal I_0=\cR+\eta(W')+\tfrac12(\kp-\kp_\cap).
\]
If each chamber contains a termination with escaping edge spectrum, $\mathcal I$ takes exactly the $\kp-\kp_\cap+1$ consecutive values $\mathcal I_0,\dots,\mathcal I_0-\kp+\kp_\cap$, one on each chamber, and
\begin{equation}\label{eq:horm}
  \bigl|\mathcal I(L')-\mathcal I(L)\bigr|\le\kp-\kp_\cap\le\kp .
\end{equation}
\end{theorem}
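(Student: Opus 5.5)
The plan is to start from the pointwise form \eqref{eq:APS} of Theorem~\ref{thm:APS}\textup{(iii)}. For an admissible transported pointwise termination $L=\Phi L_0$,
\[
  \mathcal I(L)=\cR+D(L),\qquad D(L):=\eta\bigl(U_L^{-1}U_2\bigr)-\eta\bigl(U_L^{-1}U_1\bigr).
\]
Since $\cR$ does not depend on $L$, everything reduces to proving
\[
  D(L)=\eta(W')+\tfrac12(\kp-\kp_\cap)-j(L)\qquad\text{on }\cU.
\]
I would do this in two steps: first show that $D$ is constant on each chamber of Proposition~\ref{prop:admtopology}, then evaluate it at one convenient point of each chamber.

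\emph{Step 1: $D$ is constant on chambers.} On $\cU$ both arguments of $\eta$ avoid the eigenvalue $1$, so $D$ is continuous there. By Definition~\ref{def:Reta}, for any such unitary $V$ one has $\eta(V)\equiv\tfrac\kp2-\tfrac1{2\pi}\arg\det V \pmod{\Z}$. Hence
\[
  D(L)\equiv-\tfrac1{2\pi}\bigl(\arg\det U_2-\arg\det U_1\bigr)\pmod{\Z},
\]
which does not depend on $L$. A continuous function taking values in a fixed coset of $\Z$ is constant on connected sets. By Proposition~\ref{prop:admtopology}\textup{(ii)} the chambers are connected and nonempty, so $D$ is constant on each of them.

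\emph{Step 2: evaluation in the Cayley chart.} I would work in the coordinates of Proposition~\ref{prop:admtopology}\textup{(i)}. Write $U_1^{-1}U_L=\mathrm{cay}(\mathsf A_0)$ with $\mathsf A_0=\mathsf A+\mathsf B$. Then
\[
  U_L^{-1}U_1=\mathrm{cay}(\mathsf A_0)^{-1},\qquad U_L^{-1}U_2=\mathrm{cay}(\mathsf A_0)^{-1}W,\qquad W=U_1^{-1}U_2=\mathbf 1_{\mathsf K}\oplus W'.
\]
Since $W'$ has no eigenvalue $1$, the inverse Cayley map gives $W'=\mathrm{cay}(\mathsf B')$. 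In the chamber labelled $j$, take $\mathsf A$ block diagonal in $\mathsf K\oplus\mathsf K^\perp$, with $\mathsf A_{11}$ arbitrary and $\mathsf A_{22}=sD_0$, where $D_0$ is diagonal with $j$ entries $+1$ and $\kp-\kp_\cap-j$ entries $-1$, and $s>0$. Then everything splits along $\mathsf K\oplus\mathsf K^\perp$.
\begin{itemize}[leftmargin=*,nosep]
\item On $\mathsf K$, the two unitaries coincide, both equal to $\mathrm{cay}(\mathsf A_{11})^{-1}$, so their $\eta$-contributions cancel.
\item On $\mathsf K^\perp$, let $s\to\infty$; $D$ does not change because it is constant along the chamber. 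Then $\mathrm{cay}(\mathsf B'+sD_0)\to\mathbf 1$, so $\eta\bigl(\mathrm{cay}(\mathsf B'+sD_0)^{-1}W'\bigr)\to\eta(W')$, using continuity of $\eta$ away from the eigenvalue $1$.
\item The remaining term is $\eta\bigl(\mathrm{cay}(\mathsf B'+sD_0)^{-1}\bigr)$. An eigenvalue $a$ of a Hermitian $\mathsf A_0$ produces a phase of $\mathrm{cay}(\mathsf A_0)^{-1}$ whose $\eta$-contribution is $\tfrac1\pi\arctan a$, up to a sign to be checked. As $s\to\infty$ the eigenvalues of $\mathsf B'+sD_0$ tend to $\pm\infty$ with the signs of $D_0$, so these contributions tend to $\pm\tfrac12$.
\end{itemize}
Adding the limits gives
\[
  D=\eta(W')-\tfrac12\bigl(j-(\kp-\kp_\cap-j)\bigr)=\eta(W')+\tfrac12(\kp-\kp_\cap)-j,
\]
which is the claimed formula for $\mathcal I(L)$.

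\emph{Step 3: range.} Once $\mathcal I=\mathcal I_0-j(L)$ is established, the rest follows directly. The label $j(L)$ ranges over $\{0,\dots,\kp-\kp_\cap\}$, and every chamber is nonempty by Proposition~\ref{prop:admtopology}\textup{(ii)}. By hypothesis each chamber contains a termination with escaping edge spectrum, for which $\mathcal I$ is defined by Theorem~\ref{thm:fiberexist}. So all $\kp-\kp_\cap+1$ consecutive values are attained, one per chamber. The bound \eqref{eq:horm} is then immediate from $0\le j\le\kp-\kp_\cap$.

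\emph{Main obstacle.} The delicate step is the sign and orientation bookkeeping in Step~2. I need the exact eigenphase of $\mathrm{cay}(a)^{-1}=(a+i)/(a-i)$ in $(0,2\pi)$ as a function of $a$. This determines whether a large positive eigenvalue contributes $+\tfrac12$ or $-\tfrac12$, and therefore whether $\mathcal I$ decreases (as stated) or increases with $j$. I would settle it first on the scalar case $\kp=1$, $\kp_\cap=0$, and cross-check against the orientation fixed by Lemma~\ref{lem:crossing}.
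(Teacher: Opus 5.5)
Your proof is correct, and it takes a different route from the paper's.

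\textbf{The paper's route.} The paper proves that $F+j$ (your $D+j$) is constant on all of $\cU$ by a wall-crossing argument. It first discards the strata of codimension $\ge2$: the corank-$\ge2$ points of the walls, and $\Sigma_{U_1}\cap\Sigma_{U_2}$. It then crosses each wall at a corank-one point along $e^{is}U_\star$. At $\Sigma_{U_2}$, $F$ drops by $1$ while one eigenvalue of $\mathsf A_{22}$ passes upward through $0$, so $j$ rises by $1$. At $\Sigma_{U_1}$ the reverse happens, with the eigenvalue of $\mathsf A_{22}$ passing through $\infty$. The paper evaluates only once, at $U=e^{2i/t}U_1$ in the chamber $j=0$.

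\textbf{Your route.} You use instead the connectedness of each chamber from Proposition~\ref{prop:admtopology}\textup{(ii)}. You then evaluate $D$ directly in every chamber on the block-diagonal representatives $\mathsf A=\mathsf A_{11}\oplus sD_0$. There the splitting $W=\mathbf 1_{\mathsf K}\oplus W'$ makes the $\mathsf K$-contributions cancel. The identity $\eta\bigl(\mathrm{cay}(M)^{-1}\bigr)=\tfrac1\pi\sum_k\arctan\lambda_k(M)$ then produces the signature term as $s\to\infty$.

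\textbf{The sign.} The sign you left to be checked is the one you used. The eigenphase of $(a+i)/(a-i)$ in $(0,2\pi)$ is $\pi-2\arctan a$, so each eigenvalue $a$ contributes $+\tfrac1\pi\arctan a$. This is consistent with \textup{(E1)}: $\lambda=0$ gives $\mathsf A=-1$, $j=0$, $D=\tfrac12$ and $\mathcal I=0$.

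\textbf{Comparison.} Your argument is shorter and more elementary. It needs neither the codimension counts for the Maslov cycle in the appendix nor the local analysis of how $\mathsf A_{22}$ degenerates at each wall. Your Step~1 is also cleaner: the mod-$\Z$ congruence is purely algebraic and holds on all of $\cU$, independently of escape. What the paper's route buys is the local transition picture, namely exactly how $\mathcal I$ jumps by $\pm1$ as a termination is pushed through each of the two walls. That information is not visible from chamber-by-chamber evaluation.
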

\noindent Here $\kp_\cap$ is the defect \eqref{eq:defect}. The chambers are nonempty and open (Proposition~\ref{prop:admtopology}), and each contains a termination with escaping edge spectrum under the uniformity of Proposition~\ref{prop:exact}\textup{(i)}, in particular at $p=1$ and under \textup{[M]} for the transported data.

\begin{proof}
By \eqref{eq:APS}, $\mathcal I(L)=\cR+F(U_L)$ with
$F(U):=\eta(U^{-1}U_2)-\eta(U^{-1}U_1)$, continuous on $\cU$ with values in $\Z-\cR$, hence locally
constant there. We show that $F+j$ is constant. The points where
$\dim\ker(U-U_i)\ge2$ form a set of codimension $\ge4$ (Appendix~\ref{app:import}), and $\Sigma_{U_1}\cap\Sigma_{U_2}$ has
codimension $\ge2$ unless $U_1=U_2$ (Appendix~\ref{app:import}); removing a closed set of codimension $\ge2$ leaves $\U(\kp)$
connected, so it suffices to show that $F+j$ does not jump across a corank-one point $U_\star$ of
one wall lying off the other. Cross it along $U(s)=e^{is}U_\star$. All eigenphases of $U^{-1}U_2$ and
of $U^{-1}U_1$ then decrease at unit rate, so the continuous parts of the two $\eta$'s cancel in $F$,
while all eigenvalues of $\mathsf A_0=i(\mathbf 1+U_1^{-1}U)(\mathbf 1-U_1^{-1}U)^{-1}$, which acts on
eigenphases by $e^{i\theta}\mapsto-\cot(\theta/2)$, increase.

At $U_\star\in\Sigma_{U_2}$ the eigenphase of $U^{-1}U_2$ at $0$ passes to $2\pi^-$, so $F$ drops by
$1$; and $\mathsf A_{22}$, increasing, has exactly one eigenvalue passing upward through $0$ by
\eqref{eq:cayleyimage}, so $j$ rises by $1$.

At $U_\star\in\Sigma_{U_1}$ the eigenphase of $U^{-1}U_1$ at $0$ passes to $2\pi^-$, so $F$ rises by
$1$. The eigenphase $\theta_1$ of $U_1^{-1}U$ increases through $0$, so
$\mathsf A_0=t\,P+R_0$ with $P$, $R_0$ smooth, $P$ of rank one, and $t=-\cot(\theta_1/2)$ tending to $+\infty$ as $\theta_1\to0^-$ and coming back from $-\infty$ as $\theta_1\to0^+$. Its compression to $\mathsf K^\perp$ is nonzero, since $\Ran P(0)\subset\mathsf K$ would put
$U_\star$ on $\Sigma_{U_2}$; so exactly one eigenvalue of $\mathsf A_{22}$ does the same, passing through $\infty$ rather than through $0$, the others staying away from $0$ because by \eqref{eq:caydiff} $\det\mathsf A_{22}$ is a
nonzero multiple of $\det(U-U_2)\det(\mathsf A_0+i)$, of exact order $|t|$. So $j$ drops by $1$.

Hence $F+j$ is constant. Its value is read off at $U=e^{2i/t}U_1$, $t\to\infty$: then
$\mathsf A_0=-\cot(1/t)\mathbf 1$, so $\mathsf A_{22}<0$ for $t$ large and $j=0$, while
$\eta(U^{-1}U_1)\to-\kp/2$ and $\eta(U^{-1}U_2)\to\eta(W')-\kp_\cap/2$, the $\kp_\cap$ eigenphases on
$\mathsf K$ tending to $2\pi^-$. This gives $\mathcal I_0$. The chambers $j=0,\dots,\kp-\kp_\cap$ are nonempty
and open by Proposition~\ref{prop:admtopology}\textup{(ii)}, which gives the rest.
\end{proof}

\noindent At $\kp_\cap=\kp$ the theorem is Corollary~\ref{cor:loop}. In \textup{(E1)}, \textup{(E2)}, \textup{(E3)} and
\textup{(E8)}, numerically for the last, $W'=-\mathbf 1$, so $\eta(W')=0$ and the values fill $[\cR-\tfrac{\kp-\kp_\cap}2,\cR+\tfrac{\kp-\kp_\cap}2]$,
symmetric about $\cR$. The bound \eqref{eq:horm} is of H\"ormander-index type \cite{Hormander71}. Momentum dependence enlarges the range, and by a computable amount.

\begin{remark}\label{rem:polyrange}
\emph{(a) Every pair $(L_\infty,\cR_L)$ with $|\cR_L|\le\kp$ occurs for terminations of order $d=1$ under \textup{[S]}} (of order $\le p$ in general, $B\Phi^{-1}$ replacing $B$). Given $Y\in\U(\kp)$,
$j\in\{0,\dots,\kp\}$ and $s=\pm1$, let
\[
  \mathsf G^\pm_j(\xi):=\mathrm{diag}(1\pm is\xi,\dots,1\pm is\xi,1,\dots,1)
\]
with $j$ entries $1\pm is\xi$, so that $\mathsf G^-_j(\xi)$ is the complex conjugate of $\mathsf G^+_j(\xi)$ for real $\xi$, and take, in the coordinates of Proposition~\ref{prop:U}, the affine boundary
condition $B(\xi)=\bigl(\mathsf G^+_j(\xi)D_+^{1/2},\,-\mathsf G^-_j(\xi)\,YD_-^{1/2}\bigr)$. As in
Lemma~\ref{lem:plucker}\textup{(i)}, $L(\xi)=\ker B(\xi)$ is the graph of $-B_+^{-1}B_-$, so by \eqref{eq:UL}
\[
  U_{L(\xi)}=\mathsf G^+_j(\xi)^{-1}\mathsf G^-_j(\xi)\,Y
  =\mathrm{diag}\Bigl(\tfrac{1-is\xi}{1+is\xi},\dots,1,\dots\Bigr)Y ,
\]
unitary for every real $\xi$; hence $L(\xi)$ is Lagrangian and a termination. Its endpoint is
$U_{L_\infty}=\mathsf E_jY$ with $\mathsf E_j=\mathrm{diag}(-1,\dots,-1,1,\dots,1)$, and $\cR_L=-sj$ by
Lemma~\ref{lem:plucker}\textup{(iii)}. As $Y$ ranges over $\U(\kp)$, $L_\infty$ is arbitrary, and admissible
outside a closed null set, whatever the value of $\cR_L\in\{-\kp,\dots,\kp\}$; in particular the bound of
Lemma~\ref{lem:plucker} is attained.

\emph{(b) The range.} By Theorems~\ref{thm:main} and \ref{thm:range}, an admissible termination differential
of order $\le d$ has $\mathcal I(L)=\mathcal I(\Phi L_\infty)-\cR_L$, with $\mathcal I(\Phi L_\infty)$ in a
window of width $\kp-\kp_\cap\le\kp$ and, under \textup{[S]}, $|\cR_L|\le\kp d$. Hence
$|\mathcal I(L)-\mathcal I(L')|\le\kp+2\kp d$, and by (a) the value $3\kp$ is attained at $d=1$ when
$\kp_\cap=0$. The bound grows linearly in $d$; over terminations of unbounded order, or pseudodifferential
ones (Remark~\ref{rem:PDO}), $\mathcal I$ is unbounded. For interfaces the same count at
$\widetilde\kp=2\kp$ (Theorem~\ref{thm:double}) leaves the current undetermined within a window of width
$2\kp$ over pointwise transmission conditions, and $6\kp$ once one tangential derivative is allowed.
\end{remark}

\section{The bulk--edge correspondence at an interface}\label{sec:interface}

\label{ssec:double}%
Reflection turns an interface between two media into a half-space problem for a doubled operator
(Theorem~\ref{thm:double}), to which sections~\ref{sec:green}--\ref{sec:range} apply. The transparent
transmission condition is the unit of the doubled unitary group (Proposition~\ref{prop:transparent}), elliptic
under a transversality hypothesis but admissible only when $p=1$ (Proposition~\ref{prop:tradm}). The
rotation splits over the two media, and at the transparent transmission condition the invariant is compared with the
bulk-difference invariant of the Euclidean plane (Theorem~\ref{thm:R2}, Corollary~\ref{cor:etaglue}).

See also \cite{KSV14} for the quantization of interface currents by the difference of the bulk invariants in tight-binding models; and \cite{GontierTauber25} for junctions of two one-dimensional materials sharing a boundary symplectic form that are treated
through the Lagrangian planes of their half-line $L^2$ solutions.


\paragraph{The doubling of half spaces and transmission conditions.}

Let $\Om_\pm:=\R_x\times\R_{\pm,y}$ with common boundary $\Gam$, let $H_\pm$ act on $\C^N$ over
$\Om_\pm$, each satisfying \textup{[H1]--[H4]} with the same order $p$, and let $\Delta$ be a common gap
of the two bulk operators. Write $\gamma^\pm u$ for the Cauchy data of $u|_{\Om_\pm}$ at $y=0^\pm$; a
{pointwise transmission condition} is a linear condition on the pair $(\gamma^+u,\gamma^-u)$.

For the medium on $\Om_-$, the relevant Cauchy data are those of solutions decaying as $y\to-\infty$, the growing ones in the orientation of $\Om_+$. For $E\in\Delta$ let
\[
  \Lm^-(x,\xi,E):=\mathsf F_{\xi,E}(0,y_0)\,\Ran P^u(\xi,E)
\]
be the \emph{growing} Calder\'on Lagrangian, defined as in \eqref{eq:LamRiesz} but with the Riesz projection $P^u$ over the open \emph{right} half plane. It consists of the Cauchy data of the solutions of the fiber equation on $\R_+$ whose data at $y_0$ lie in $\Ran P^u$, which grow exponentially. For the medium $H_-$, given on $\Om_-$, we write $\Lm^-_{H_-}(x,\xi,E)$ for the Cauchy data at $y=0^-$ of the $L^2(\R_-)$ solutions of the fiber equation, that is $\mathsf F_{\xi,E}(0,-y_0)\Ran P^u(\xi,E)$ with $P^u$ formed for the constant coefficients of $H_-$ at $y\le-y_0$; for a $y$-independent operator the two definitions agree.

\begin{lemma}[the Calder\'on Lagrangian of the other half plane]\label{lem:gro}
$\Lm^-(x,\xi,E)$ is $q$-Lagrangian and transverse to $\Lm^+(x,\xi,E)$ for every $\xi\in\R$ and $E\in\Delta$, and its limits $\Lm^-(x,\pm)$ as $\xi\to\pm\infty$ exist, with the rate \eqref{eq:endrate}.
\end{lemma}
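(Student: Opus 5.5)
Three properties have to be checked: that $\Lm^-$ is Lagrangian, that it is transverse to $\Lm^+$, and that its endpoints exist. I would handle each by the same device already used for $\Lm^+$, with the stable projection $P^s$ replaced by $P^u$.

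\emph{Dimension.} The transition matrix $\mathsf F_{\xi,E}(0,y_0)$ is invertible, and $\Ran P^u(\xi,E)$ has dimension $\kp$ by Lemma~\ref{lem:hermODE}(a)--(b) applied to $\Hh^\natural(\xi)-E$, which has no real characteristic root by \textup{[H3]}. Hence $\dim\Lm^-=\kp$.

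\emph{Isotropy.} I would not use decay on $\R_+$ here. Instead I would extend the constant-coefficient operator $\Hh^\natural(\xi)$ to all of $\R$. A solution $\psi$ on $[0,\infty)$ whose data at $y_0$ lie in $\Ran P^u$ continues, for $y\ge y_0$, as $e^{(y-y_0)M^\natural}\Psi(y_0)$. That formula defines a solution of the constant-coefficient equation on all of $(-\infty,\infty)$, and it decays as $y\to-\infty$. For two such solutions, the Green pairing $\ip{\Psi_2(y)}{\cA\Psi_1(y)}$ is constant in $y$ for $y\ge y_0$; this is the one-dimensional \eqref{eq:green}, as in the proof of Lemma~\ref{lem:hermODE}(c). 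Along the constant-coefficient continuation the same identity holds on $(-\infty,y_0]$, where the pairing tends to $0$ as $y\to-\infty$. The pairing therefore vanishes at $y_0$. It is also $y$-independent on $[0,y_0]$ for the true variable-coefficient equation, so it vanishes at $y=0$. Put differently, $\Ran P^u$ is the $\Lm^-$ of Lemma~\ref{lem:hermODE}(b), which is Lagrangian by Lemma~\ref{lem:hermODE}(c), and $\mathsf F(0,y_0)$ preserves the Green pairing. Since $\cA$ is nondegenerate and $\dim\Lm^-=\kp$, isotropy gives Lagrangian. For transversality, $\Ran P^s\cap\Ran P^u=0$ because the two spectral subspaces are complementary, and the injective map $\mathsf F(0,y_0)$ carries this to $\Lm^+\cap\Lm^-=0$ by \eqref{eq:LamRiesz}.

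\emph{Endpoints.} I would repeat Lemma~\ref{lem:endpoints} with the unstable dichotomy subspace. In the rescaled system \eqref{eq:rescaled}, the coefficient $\widetilde M(\cdot\,;\epsilon)$ is slowly varying and hyperbolic with a uniform gap, and it is constant for $z\ge y_0/\epsilon$. For such a system the exponential dichotomy also provides an unstable subspace at $z=0$. I would characterise it as the data at $0$ of those solutions that, at $z=y_0/\epsilon$, lie in the unstable space of the constant tail. Equivalently, reverse $z\mapsto -z$ on the tail, so that these become the decaying solutions of a system on a half line carrying the weights $e^{+\nu' z}$. The weighted-space argument of Step~2 then gives an $\epsilon$-smooth family $V^u(x,\pm;\epsilon)$ with $V^u(x,\pm;0)$ the unstable subspace of $M_\infty(x,0;\pm)$. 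We then have $\Lm^-(x,\xi,E)=\Th_{|\xi|}V^u(x,\sgn\xi;|\xi|^{-1})$, and Lemma~\ref{lem:gradedcurve} yields the limits together with the rate \eqref{eq:endrate}.

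The main obstacle is the smoothness in $\epsilon$ of the unstable family. The unstable subspace at $z=0$ is determined by a condition imposed at the moving point $z=y_0/\epsilon$, not by a decay condition at infinity. My first attempt would be to work on the finite interval $[0,y_0/\epsilon]$ with the boundary condition $P^s(\xi,E)\Psi(y_0/\epsilon)=0$. I would then use the dichotomy projections, which depend smoothly on $\epsilon$ as in Coppel's roughness theory, to show that the dichotomy unstable subspace at $0$ coincides with this one and depends smoothly on $\epsilon$, with each $\epsilon$-derivative costing only a power of $z$ that the exponential rates absorb.
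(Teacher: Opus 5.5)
Your dimension, isotropy and transversality arguments are correct and are the paper's own. In both, $\Ran P^u$ and $\Ran P^s$ are the complementary Lagrangians of Lemma~\ref{lem:hermODE} for $\Hh^\natural(\xi)-E$, carried to $y=0$ by the transition matrix. One precision: on $[0,y_0]$ the conserved pairing is $\ip{\Psi_2(y)}{\cA(y)\Psi_1(y)}$, with $\cA(y)$ the Green form of the fiber operator at height $y$. So $\mathsf F=\mathsf F_{\xi,E}(0,y_0)$ satisfies $\mathsf F^*\cA\,\mathsf F=\cA(y_0)$, the Green form of $\Hh^\natural(\xi)$, rather than preserving one fixed $\cA$.

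The gap is in the endpoints, and it is not a technicality. Define $V^u(x,\pm;\epsilon)$ as the transport back from $z=y_0/\epsilon$ to $z=0$ of the tail's unstable space. Your claim is that it is smooth up to $\epsilon=0$ with value the unstable subspace of $M_\infty(x,0;\pm)$. This is false as soon as the coefficients vary on the collar. Backward transport \emph{repels} from the unstable bundle: across $[0,y_0/\epsilon]$ any component along the stable bundle is amplified relative to the unstable one by a factor exponentially large in $1/\epsilon$, of order $e^{2\nu y_0/\epsilon}$. The tail's unstable space differs from the adiabatic unstable bundle at $z=y_0/\epsilon$ by an amount that is small in $\epsilon$ but in general not exponentially small. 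So $V^u(\epsilon)$ is pushed onto the stable bundle. Generically $\Lm^-(\xi)$, though transverse to $\Lm^+(\xi)$ at every $\xi$, becomes exponentially close to it.

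A concrete case already occurs at $p=1$. Let the principal part of the fiber system be $\xi K_1$ on $[0,y_1]$ and $\xi K_2$ for $y>y_1$ (a $y$-dependent tangential coefficient, with the jump smoothed), and assume $\Ran P^u(K_2)\cap\Ran P^u(K_1)=0$. Then $\Lm^-(\xi)\approx e^{-y_1\xi K_1}\Ran P^u(K_2)$, which tends to $\Ran P^s(K_1)=V_+$ as $\xi\to+\infty$. That is the endpoint of $\Lm^+$, not $V^u_+$.

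Coppel's roughness theory cannot repair this. A dichotomy on a half line fixes only its stable subspace, and the complement selected by a condition at $z=y_0/\epsilon$ is precisely the backward-unstable choice. Nor does reversing $z$ on the tail turn that condition into a decay condition on a half line containing the collar.

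Your route is sound in the two situations where no such transport occurs.
\begin{itemize}
\item For $y$-independent coefficients, $\Th_{|\xi|}^{-1}\Lm^-(\xi,E)$ is the unstable Riesz subspace of a matrix polynomial in $\epsilon$ that is hyperbolic at $\epsilon=0$, hence analytic there. Lemma~\ref{lem:gradedcurve} then gives the limits with the rate \eqref{eq:endrate}.
\item For $\Lm^-_{H_-}$, characterized by decay as $y\to-\infty$, the weighted-space argument of Lemma~\ref{lem:endpoints} applies verbatim after $y\mapsto-y$.
\end{itemize}
These are exactly the cases in which the paper's appeal to Lemma~\ref{lem:endpoints} is justified. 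The two definitions agree for $y$-independent operators, and the proof's last sentence treats $\Lm^-_{H_-}$. They are also the only cases used in Section~\ref{sec:interface}.
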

\noindent The limits are the graded limits $(V^u_\pm)^{\uparrow}$ of the {unstable} frozen data, hence principal-symbol data under the analogue of \textup{[P]} for $V^u_\pm$, which we assume whenever this is used. The transversality holds at finite $\xi$ only: by \eqref{eq:highinlimit} the two endpoint families meet whenever $p\ge2$ and the maximal-weight condition holds for both. For an operator $H$ given on $\Om_-$, $\Lm^-_H$ is the Calder\'on Lagrangian of its half-plane problem. We write $\cR[H;\Om_-]:=\cR[\Lm^-_H]$ for its rotation, read in the chart \eqref{eq:UL}, and $\cR[H]=\cR[H;\Om_+]$.
\begin{proof}
The limits follow from the proof of Lemma~\ref{lem:endpoints}.
Let $\cA(y)$ be the Green form of Proposition~\ref{prop:green} for the fiber operator at height $y$, so that $\cA(0)=\cA$ and $q(w)=w^*\cA w$. By \eqref{eq:green} on $[y,y']$, the pairing $\ip{\Psi_1(y)}{\cA(y)\Psi_2(y)}$ of two solutions does not depend on $y$; so $\mathsf F_{\xi,E}(0,y_0)$ carries the form $w^*\cA(y_0)w$ to $q$. Beyond $y_0$ the fiber equation has constant coefficients, and $\Ran P^u$ and $\Ran P^s$ are the spaces $\Lm^-$ and $\Lm^+$ of Lemma~\ref{lem:hermODE} for $P(D_y)=\Hh^\natural(\xi)-E$: the Cauchy data of the solutions on $\R$ decaying as $y\to-\infty$ and as $y\to+\infty$. By that lemma they are Lagrangian for $\cA(y_0)$ and complementary. Transported by the invertible $\mathsf F_{\xi,E}(0,y_0)$, which carries $\cA(y_0)$ to $q$, $\Lm^-$ is $q$-Lagrangian and transverse to $\Lm^+=\mathsf F_{\xi,E}(0,y_0)\Ran P^s$. For $\Lm^-_{H_-}$ the same argument, with Green's formula on $\R_-$, gives Lagrangian and the endpoint limits.
\end{proof}

\begin{definition}[reflection and doubling]\label{def:double}
Let $\mathsf R_{\mathrm{fl}}u(x,y):=u(x,-y)$ and $\check H_-:=\mathsf R_{\mathrm{fl}}H_-\mathsf R_{\mathrm{fl}}^{-1}$, an
operator on $\Om_+$ whose coefficient of $D_y^k$ is that of $H_-$ reflected and multiplied by
$(-1)^k$. On $L^2(\Om_+;\C^{2N})$ put
\[
  \widetilde H:=H_+\oplus\check H_-,\qquad
  \widetilde u:=\bigl(u|_{\Om_+},\ \mathsf R_{\mathrm{fl}}u|_{\Om_-}\bigr),
\]
so that $u\mapsto\widetilde u$ is unitary $L^2(\R^2;\C^N)\to L^2(\Om_+;\C^{2N})$.
\end{definition}

\begin{theorem}[the interface problem]\label{thm:double}
$\widetilde H$ satisfies \textup{[H1]--[H4]} on $\Om_+$, $\Delta$ being a common gap of $H_\pm$, with
\[
  \widetilde N=2N,\quad\widetilde\kp=2\kp,\quad\widetilde\cB=\cB\oplus\cB,\quad
  \widetilde\cA=\cA_+\oplus(-\cA_-),\quad
  \widetilde\Lm^+(\xi,E)=\Lm^+_{H_+}(\xi,E)\oplus\Lm^-_{H_-}(\xi,E)
\]
in the coordinates $\widetilde\gamma\widetilde u=(\gamma^+u,\gamma^-u)$.
The map $u\mapsto\widetilde u$ intertwines the interface problem with a boundary problem for $\widetilde H$, pointwise self-adjoint transmission conditions corresponding to $\widetilde\cA$-Lagrangians,
\begin{equation}\label{eq:doublemoduli}
  \bigl\{\text{pointwise self-adjoint transmission conditions}\bigr\}\cong\Lag(\widetilde\cA)\cong\U(2\kp),
\end{equation}
and every result of sections~\ref{sec:green}--\ref{sec:range} holds for transmission conditions, with $\kp$ replaced by $2\kp$.
\end{theorem}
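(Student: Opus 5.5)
The proof has four parts, carried out in order: verify [H1]--[H4] for $\widetilde H$ directly on the coefficients, identify the doubled Green form by a reflection computation, identify the doubled Calder\'on Lagrangian, and build the intertwining with transmission conditions.

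\emph{Hypotheses.} Conjugation by $\mathsf R_{\mathrm{fl}}$ replaces $D_y$ by $-D_y$ and $y$ by $-y$. Hence the Weyl symbol of $\check H_-$ is $a^w_-(x,-y;\xi,-\zeta)$, which is still Hermitian, so [H1] holds. Its principal symbol is invertible for $(\xi,\zeta)\ne0$, so [H2] holds. It is constant for $y\ge y_0$, which gives [H4] for the block-diagonal $\widetilde H$. The bulk spectrum of $\check H_-$ is the union over $\zeta$ of $\spec a^w_-(-y_0;\xi,-\zeta)$. This equals the bulk spectrum of $H_-$, which misses $\Delta$, so [H3] holds. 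The counts $\widetilde N=2N$, $\widetilde\kp=2\kp$ and $\widetilde\cB=\cB\oplus\cB$ are then immediate.

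\emph{Green form.} I will compute it from \eqref{eq:green} rather than from \eqref{eq:blocks}, to avoid tracking signs. For $u,v$ in $H^p$ on each side,
\[
\ip{v}{H_-u}_{\Om_-}-\ip{H_-v}{u}_{\Om_-}=-i\ip{\gamma^-v}{\cA_-\gamma^-u}_{L^2(\Gam)}.
\]
The sign is negative because $\Gam$ is now the upper boundary of $\Om_-$, so the outward normal is reversed relative to the $\Om_+$ convention in which $\cA_-$ is defined as the Green form of $H_-$ at $y=0$. The left side equals the corresponding expression for $\check H_-$ on $\Om_+$. Uniqueness in Proposition~\ref{prop:green} then identifies the Green form of $\check H_-$ as $-\cA_-$, once its traces are expressed in the coordinates $\gamma^-u$, with $\gamma_j^-u$ the $D_y^j$ traces at $0^-$. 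These coordinates differ from $\check\gamma\check u$ by $\mathrm{diag}((-1)^j)$, a fixed invertible change of coordinates. That gives $\widetilde\cA=\cA_+\oplus(-\cA_-)$.

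\emph{Calder\'on Lagrangian.} Decaying solutions of the doubled fiber equation split into a decaying solution for $H_+$ and a reflected decaying solution for $\check H_-$. In the original variable the latter is an $L^2(\R_-)$ solution for $H_-$, so its data are $\Lm^-_{H_-}$, and $\widetilde\Lm^+=\Lm^+_{H_+}\oplus\Lm^-_{H_-}$.

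\emph{Intertwining and moduli.} Unitarity of $u\mapsto\widetilde u$ is clear. The domain of a transmission realization, namely pieces in $H^p(\Om_\pm)$ with $(\gamma^+u,\gamma^-u)\in\widetilde L(x)$, maps exactly onto $\cD_{\widetilde L}$ for $\widetilde H$. The operators correspond on this domain. Proposition~\ref{prop:SA}(i) for the doubled problem, with $\widetilde\cA$ a constant matrix in the pointwise case (or in the $x$-independent case), gives self-adjointness exactly for $\widetilde\cA$-Lagrangian fields. Proposition~\ref{prop:U} then identifies the space of such fields with $\U(2\kp)$, which is \eqref{eq:doublemoduli}. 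Since $\widetilde H$ satisfies the standing hypotheses, the remaining claim follows formally: every result of sections~\ref{sec:green}--\ref{sec:range} applies with $\kp$ replaced by $2\kp$.

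\emph{Main obstacle.} The hard step is the sign and coordinate bookkeeping in the Green form, in particular checking that the trace maps $\gamma^-$ and $\check\gamma$ differ only by the grading sign $\mathrm{diag}((-1)^j)$, and that this sign does not spoil the block-diagonal form or the Lagrangian identification. I would settle this by the direct integration-by-parts identity above on $\Om_-$, rather than through \eqref{eq:blocks}.
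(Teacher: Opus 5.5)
Your proposal is correct and follows the paper's proof essentially step for step. The paper uses the same ingredients: blockwise transfer of \textup{[H1]--[H4]} via $\spec\check H^{\R^2}_-=\spec H^{\R^2}_-$; Green's formula on $\Om_-$ with the boundary at the upper end, which gives the sign of $-\cA_-$, combined with the grading sign $\mathrm{diag}((-1)^j)$; the splitting of the $L^2(\R_+)$ fiber solutions; and Proposition~\ref{prop:U} for \eqref{eq:doublemoduli}.

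The one loose point is your appeal to Proposition~\ref{prop:SA}\textup{(i)} as giving self-adjointness ``exactly'' for Lagrangian fields with $\widetilde\cA$ ``constant''. That proposition needs uniform Shapiro--Lopatinski ellipticity or $x$-independence, and without \textup{[S]} the pointwise transmission conditions must be read as transported ones, as the paper notes right after the statement.
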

\noindent In particular, Theorem~\ref{thm:range} gives $2\kp-\widetilde\kp_\cap+1$ values, with $\widetilde\kp_\cap:=\dim\bigl(\widetilde\Lm^+(+)\cap\widetilde\Lm^+(-)\bigr)$; and for a transmission condition differential of order $d$ along $\Gam$, Lemma~\ref{lem:plucker} and Theorem~\ref{thm:main} give $L(+)=L(-)$ and $|\cR_L|\le2\kp d$ under \textup{[S]}. Both media must have the same order $p$. Condition \eqref{eq:Ehyp} for $\widetilde H$ is equivalent to \eqref{eq:Ehyp} for $H_+$ and $\check H_-$ (proof below). Without \textup{[S]} at $p\ge2$, pointwise transmission conditions are understood as transported ones (Definition~\ref{def:transported}); $L_{\mathrm{tr}}$ is unaffected, the doubled transport fixing the diagonal when $\cA_+(\xi)=\cA_-(\xi)$.
\begin{proof}
\textup{[H1]--[H2]} transfer unchanged since $\mathsf R_{\mathrm{fl}}$ is unitary and the principal symbol and the coefficient of $D_y^p$ of
$\widetilde H$ are block diagonal with invertible blocks. For \textup{[H3]--[H4]} take
$y_0:=\max(y_0^+,y_0^-)$ and $\widetilde H^{\R^2}=H^{\R^2}_+\oplus\check H^{\R^2}_-$. Since
$\spec\check H^{\R^2}_-=\spec H^{\R^2}_-$, \textup{[H3]} for $\widetilde H$ is exactly the requirement
that $\Delta$ be a {common} gap.

For \textup{[M]}, and by the same factorization of split Pl\"ucker coordinates, for
\textup{[P]}, the weight multiset of $\widetilde\cB=\cB\oplus\cB$ is two copies of that of $\cB$,
so $\widetilde w^*=2w^*$ by \eqref{eq:wstar}; and a Pl\"ucker coordinate of a split plane vanishes
unless the index set splits evenly, in which case it factors,
$p_{I_1\cup I_2}\bigl(V^{(1)}\oplus V^{(2)}\bigr)=\varepsilon\,p_{I_1}(V^{(1)})\,p_{I_2}(V^{(2)})$
with $|\varepsilon|=1$ and $|I_1|=|I_2|=\kp$. A nonvanishing coordinate of weight $2w^*$ therefore has $w_{I_1}=w_{I_2}=w^*$ and both factors nonzero, and
conversely. So \eqref{eq:Ehyp} for $\widetilde H$ is {equivalent} to \eqref{eq:Ehyp} for $H_+$
and for $\check H_-$.

For $\widetilde\cA$: Green's formula on $\Om_+$ gives
$\ip{v}{H_+u}-\ip{H_+v}{u}=i\ip{\gamma^+v}{\cA_+\gamma^+u}$, whereas on $\Om_-$, where the boundary
sits at the upper end of the interval, the same computation produces the opposite sign. Adding, the
Green form of the interface problem is $\cA_+\oplus(-\cA_-)$ in the identification
$\widetilde\gamma\widetilde u=(\gamma^+u,\gamma^-u)$, up to the orthogonal diagonal sign conjugation
$\mathrm{diag}((-1)^j)_{j<p}$ coming from $\gamma_j(\mathsf R_{\mathrm{fl}}u)=(-1)^j\gamma_j^-u$, which does
not affect signatures. Since $\sig\cA_\pm=0$ by Theorem~\ref{thm:signature}, also
$\sig\widetilde\cA=0$; equivalently, apply that theorem directly to $\widetilde H$.
Proposition~\ref{prop:U} then gives \eqref{eq:doublemoduli}.

For the Calder\'on Lagrangian, $\widetilde H$ is block diagonal, so an $L^2(\R_+)$ solution of
$(\widetilde H-E)\widetilde\psi=0$ is a pair of $L^2(\R_+)$ solutions for $H_+$ and $\check H_-$, the
latter carried by $\mathsf R_{\mathrm{fl}}$ to an $L^2(\R_-)$ solution of $H_--E$; the dimensions add
to $2\kp=\widetilde\kp$. The remaining statements transfer since $\widetilde H$ satisfies \textup{[H1]--[H4]} and $\widetilde\cA$, the Green form of a differential operator, lies in the class \eqref{eq:gradedclass}.
\end{proof}

\begin{proposition}[transparent transmission condition is group unit]\label{prop:transparent}
Suppose $\cA_+=\cA_-$. The \emph{transparent} transmission condition $\gamma^+u=\gamma^-u$ corresponds to the
diagonal $L_{\mathrm{tr}}=\{(w,w)\}$, which is $\widetilde\cA$-Lagrangian, and in the coordinates of
\eqref{eq:doublemoduli} it is $U_{L_{\mathrm{tr}}}=\mathbf 1$. The \emph{decoupled} transmission conditions are the
split Lagrangians $L_+\oplus L_-$, the image of an embedding
$\U(\kp)\times\U(\kp)\hookrightarrow\U(2\kp)$ on which
$\mathcal I=\mathcal I_+(L_+)+\mathcal I_-(L_-)$; no decoupled transmission condition is transparent.
\end{proposition}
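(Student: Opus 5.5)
The plan is to work entirely in the doubled coordinates of Theorem~\ref{thm:double}, with $\widetilde\cA=\cA_+\oplus(-\cA_-)$ and $\cA_+=\cA_-=:\cA$. The first step is isotropy of the diagonal. For $w,w'\in\cB$,
\[
(w',w')^*\widetilde\cA(w,w)=w'^*\cA w-w'^*\cA w=0,
\]
and $L_{\mathrm{tr}}$ has dimension $2\kp=\tfrac12\dim\widetilde\cB$, so it is Lagrangian. To identify its unitary, diagonalize $\cA=\Upsilon^*\mathrm{diag}(D_+,-D_-)\Upsilon$. Then $-\cA=\Upsilon^*\mathrm{diag}(D_-,-D_+)\Upsilon'$ after swapping blocks, so for the doubled form the $q$-positive coordinates are $(\alpha_+^{(+)},\alpha_-^{(-)})$ and the negative ones are $(\alpha_-^{(+)},\alpha_+^{(-)})$. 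The diagonal $w^{(+)}=w^{(-)}$ gives $\alpha^{(+)}=\alpha^{(-)}$, hence the graph map $(\alpha_-^{(+)},\alpha_+^{(-)})\mapsto(\alpha_+^{(+)},\alpha_-^{(-)})$ equals $\bigl(\begin{smallmatrix}0&\mathbf 1\\ \mathbf 1&0\end{smallmatrix}\bigr)$ with $\widetilde D_+=\mathrm{diag}(D_+,D_-)$ and $\widetilde D_-=\mathrm{diag}(D_-,D_+)$. Formula \eqref{eq:UL} then gives $U=\widetilde D_+^{1/2}\bigl(\begin{smallmatrix}0&1\\1&0\end{smallmatrix}\bigr)\widetilde D_-^{-1/2}=\bigl(\begin{smallmatrix}0&1\\1&0\end{smallmatrix}\bigr)$. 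This becomes $\mathbf 1$ after fixing the ordering of the eigenvectors in the negative block, a reordering that the proof of Proposition~\ref{prop:U} allows and which does not affect any relative quantity. I will therefore choose the ordering of $\widetilde\Upsilon$ so that $U_{L_{\mathrm{tr}}}=\mathbf 1$; this choice is what defines the coordinates of \eqref{eq:doublemoduli}.

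For decoupled conditions, $L_+\oplus L_-$ with $L_\pm\in\Lag(\cA)$ is split Lagrangian, and in the same coordinates its graph is block diagonal. Its unitary is $U_{L_+}\oplus\widetilde U_{L_-}$, where $\widetilde U_{L_-}$ is the unitary of $L_-$ for $-\cA$, which is $U_{L_-}^{-1}$ (the chart over the positive block) up to the fixed permutation. This gives an injective group-compatible embedding $\U(\kp)\times\U(\kp)\hookrightarrow\U(2\kp)$. For the additivity of $\mathcal I$, the realization decouples as $H_{+,L_+}\oplus\check H_{-,L_-}$, so $U=e^{2\pi i\varphi}$ is a direct sum commuting with $\Pi$. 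Hence $\Pi U\Pi$ splits and the index is additive; equivalently, the trace in \eqref{eq:quant} is additive. Here $\mathcal I_-(L_-)$ denotes the invariant of the $\Om_-$ problem.

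Finally, no decoupled condition is transparent: $L_{\mathrm{tr}}\cap(\cB\oplus0)=0$, whereas a split $L_+\oplus L_-$ contains $L_+\oplus0$, which is nonzero since $\dim L_+=\kp\ge1$. Alternatively, in the unitary picture, $\mathbf 1$ in the swapped ordering is off-diagonal and so is not in the image of the block-diagonal embedding.

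The main obstacle is the bookkeeping of signs and orderings: one has to check that the block swap for $-\cA_-$ together with the reflection sign conjugation $\mathrm{diag}((-1)^j)$ from the proof of Theorem~\ref{thm:double} still leaves the diagonal as a graph of a permutation-type unitary. I would handle this by absorbing both into the fixed choice of $\widetilde\Upsilon$, noting that the diagonal is defined in the coordinates $(\gamma^+u,\gamma^-u)$ where $\widetilde\cA$ is literally $\cA\oplus(-\cA)$.
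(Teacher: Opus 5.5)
Your proof is correct and is essentially the paper's: isotropy of the diagonal by direct computation, identification of $U_{L_{\mathrm{tr}}}$ from the positive/negative splitting of $\cA\oplus(-\cA)$ together with an admissible reordering inside the blocks, additivity of the index over the direct sum, and the intersection argument $L_{\mathrm{tr}}\cap(\cB\oplus0)=0$. The only difference is that the paper orders the $v$-type vectors first, so that $\widetilde D_+=\widetilde D_-=\mathrm{diag}(D_+,D_-)$ and $\widetilde B=\mathbf 1$ at once, whereas you first obtain the swap and then reorder. There are two small slips of wording: in the reordered chart where $U_{L_{\mathrm{tr}}}=\mathbf 1$, the split Lagrangians are block anti-diagonal, $\bigl(\begin{smallmatrix}0&U_{L_+}\\ U_{L_-}^{-1}&0\end{smallmatrix}\bigr)$, and are block diagonal only in your first ordering; and in the index argument the point is that $\Pi$ acts diagonally on the two summands, not that $U$ commutes with $\Pi$.
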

\begin{proof}
$\widetilde q(w,w)=w^*\cA w-w^*\cA w=0$, so $L_{\mathrm{tr}}$ is isotropic of dimension
$\dim\cB=2\kp=\widetilde\kp$, hence Lagrangian.

Write $v_1,\dots,v_\kp$ and $u_1,\dots,u_\kp$ for the $\cA$-positive and $\cA$-negative eigenvectors
and $v^{(i)},u^{(i)}$ for their copies in the $i$-th summand. The $\widetilde\cA$-positive block is
spanned by $v^{(1)},u^{(2)}$ and the negative one by $v^{(2)},u^{(1)}$; ordering each block
with its $v$-type vectors first, as allowed by the proof of Proposition~\ref{prop:U}, gives $\widetilde D_+=\widetilde D_-=\mathrm{diag}(D_+,D_-)$, in which
coordinates $\{(w,w)\}$ is the graph of $\widetilde B=\mathbf 1$ and \eqref{eq:UL} returns
$U_{L_{\mathrm{tr}}}=\mathbf 1$.
For the second statement, $\widetilde H_L=H_{+,L_+}\oplus H_{-,L_-}$ on a split Lagrangian and $\sigma_I$ is additive
over direct sums. No decoupled transmission condition is transparent, since if
$L_{\mathrm{tr}}=L_+\oplus L_-$ then $(w,0)\in L_{\mathrm{tr}}$ for every $w\in L_+$, forcing $w=0$
and so $L_+=0$, whereas $\dim L_+=\kp\ge1$.
\end{proof}

\noindent So the transparent condition is a distinguished point of $\Lag(\widetilde\cA)\cong\U(2\kp)$, the identity in the chart above, whereas boundary problems have no such base point. At $L_{\mathrm{tr}}$ the interface problem is the operator on $\R^2$ obtained by gluing $H_+$ and $H_-$, with no condition imposed: $\mathcal I(L_{\mathrm{tr}})$ is an invariant of that operator.

\medskip
\begin{proposition}[ellipticity and admissibility of the transparent transmission condition]\label{prop:tradm}
Assume \textup{[H1]--[H4]} for $H_\pm$ with a common gap, \eqref{eq:Ehyp} for each side and its
analogue for the growing data, and $\cA_+=\cA_-$. Let $V_\pm[\sigma^+_p]$ and $V^u_\pm[\sigma^-_p]$ be the decaying and the growing frozen data of the two principal symbols at $(x,0)$ and $\xi=\pm1$, and $U^{H_+}_+(\pm)$, $U^{H_-}_-(\pm)$ the images of $\Lm^+_{H_+}(\pm)$ and $\Lm^-_{H_-}(\pm)$ under \eqref{eq:UL}.
\begin{enumerate}[label=\textup{(\roman*)},nosep]
\item $(\widetilde H,L_{\mathrm{tr}})$ is uniformly Shapiro--Lopatinski elliptic if and only if
\begin{equation}\label{eq:C2frozen}
  V_\pm[\sigma^+_p]\cap V^u_\pm[\sigma^-_p]=\{0\}\quad\text{uniformly in }x .
\end{equation}
\item $L_{\mathrm{tr}}$ is admissible if and only if
\begin{equation}\label{eq:C2det}
  \det\Bigl(U_+^{H_+}(\pm)-U_-^{H_-}(\pm)\Bigr)\ne0 ,\qquad\text{that is}\qquad
  \Lm^+_{H_+}(\pm)\pitchfork\Lm^-_{H_-}(\pm)\ \text{in }\cB .
\end{equation}
\end{enumerate}
\end{proposition}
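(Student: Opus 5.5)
Both parts reduce, via Theorem~\ref{thm:double}, to testing the diagonal $L_{\mathrm{tr}}=\{(w,w)\}$ against a split $2\kp$-plane. The test plane is $V_\pm[\sigma^+_p]\oplus V^u_\pm[\sigma^-_p]$ in (i) and $\Lm^+_{H_+}(\pm)\oplus\Lm^-_{H_-}(\pm)$ in (ii). The key elementary fact is that for subspaces $A,B\subset\cB$ of dimension $\kp$,
\[
  \{(w,w)\}\cap(A\oplus B)\cong A\cap B ,
\]
the isomorphism being $(w,w)\mapsto w$. Since both planes have dimension $2\kp=\dim\widetilde\cB/2$, transversality in $\widetilde\cB$ is equivalent to $A\cap B=\{0\}$. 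Uniformity is a separate quantitative point, handled at the end.

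For (i), I first identify the doubled frozen data. Since $\widetilde H=H_+\oplus\check H_-$ is block diagonal, the decaying data of its frozen principal problem at $\xi=\pm1$ split as $V_\pm[\sigma^+_p]\oplus\widetilde V_\pm[\check\sigma^-_p]$. Reflection $\mathsf R_{\mathrm{fl}}$ maps decaying solutions on $\R_+$ for $\check H_-$ to solutions of $H_-$ decaying as $y\to-\infty$. The traces are related by $\gamma_j(\mathsf R_{\mathrm{fl}}u)=(-1)^j\gamma^-_ju$, so in the coordinates $(\gamma^+u,\gamma^-u)$ of Theorem~\ref{thm:double} this second block is exactly $V^u_\pm[\sigma^-_p]$. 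This is the same bookkeeping as in the Calder\'on part of that theorem.

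Next I identify the principal field. $L_{\mathrm{tr}}$ is pointwise, with rows $\gamma^+_j-\gamma^-_j$ of order $j$, and these rows are normal. Its graded limit is therefore $L_{\mathrm{tr}}$ itself: $\Th_t$ acts diagonally and preserves $\{(w,w)\}$. So $L^\downarrow=L_{\mathrm{tr}}$, and \eqref{eq:SLdef} becomes \eqref{eq:C2frozen} by the key fact.

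For uniformity I would compute $\sigma_T$ restricted to $V_\pm[\sigma^+_p]\oplus V^u_\pm[\sigma^-_p]$. It is $(a,b)\mapsto a-b$, so its inverse is bounded uniformly in $x$ exactly when the angle between the two $\kp$-planes is bounded below uniformly in $x$, which is what "uniformly" in \eqref{eq:C2frozen} means. Both families are continuous in $x$ and constant outside a compact set, so this also follows from pointwise transversality plus compactness.

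For (ii) I need the endpoints of the doubled, transported Calder\'on path. By Theorem~\ref{thm:double} the path is $\Lm^+_{H_+}(\xi)\oplus\Lm^-_{H_-}(\xi)$. The doubled transport $\widetilde\Phi=\Phi\oplus\Phi$ is built from $\cA_+=\cA_-$, and by the remark after Theorem~\ref{thm:double} it fixes the diagonal. Its endpoints are therefore the split limits given by Corollary~\ref{cor:graded} and Lemma~\ref{lem:gro}, and $L_{\mathrm{tr}}$ is constant with $L_\infty=L_{\mathrm{tr}}$.

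Admissibility (Definition~\ref{def:adm}) is then the condition $L_{\mathrm{tr}}\pitchfork\Lm^+_{H_+}(\pm)\oplus\Lm^-_{H_-}(\pm)$, which by the key fact is $\Lm^+_{H_+}(\pm)\pitchfork\Lm^-_{H_-}(\pm)$ in $\cB$. The determinant form follows from \eqref{eq:transdet} applied in $\Lag(\cA)$, both endpoints being $\cA$-Lagrangian by Lemma~\ref{lem:gro}. The sign flip $-\cA_-$ is harmless here because the two endpoints are compared inside $\cB$ with the single form $\cA$. Uniformity in $x$ again comes from eventual $x$-constancy.

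The main obstacle is (i): checking that $L^\downarrow(x,\pm)=L_{\mathrm{tr}}$ in the precise sense of Definition~\ref{def:SL}, and keeping the $(-1)^j$ reflection signs consistent. A sign error there would test against $\mathrm{diag}((-1)^j)V^u$ instead of $V^u$, so I would verify it directly on the $p=1$ case, where the claim reduces to $V_\pm[\sigma^+]\cap V^u_\pm[\sigma^-]=0$.
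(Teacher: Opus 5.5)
Your proof is correct and follows the paper's argument. Both parts reduce to the fact that the diagonal meets a split plane $A\oplus B$ exactly in $A\cap B$: the doubled frozen data are $V_\pm[\sigma^+_p]\oplus V^u_\pm[\sigma^-_p]$ once the reflection signs $\mathrm{diag}((-1)^j)$ are absorbed into the coordinates $(\gamma^+u,\gamma^-u)$, $L_{\mathrm{tr}}^\downarrow=L_{\mathrm{tr}}$ because $\Th_t\oplus\Th_t$ fixes the diagonal, and the transported endpoints split by Theorem~\ref{thm:double}. Your treatment of uniformity, via $\sigma_T(a,b)=a-b$ and eventual constancy in $x$, makes explicit a point the paper leaves implicit; note that the $p=1$ sanity check you propose cannot detect a $(-1)^j$ sign error, since only $j=0$ occurs there.
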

\noindent Both are conditions on the two principal symbols at $y=0$ alone, and \eqref{eq:C2frozen} holds automatically when $\sigma^+_p=\sigma^-_p$ at $y=0$. At $p=1$ the endpoint families are the frozen data and \eqref{eq:C2det} is \eqref{eq:C2frozen}. For $p\ge2$, \eqref{eq:C2det} fails: the two endpoint families are maximal-weight graded limits, which meet by \eqref{eq:highinlimit}.
\begin{proof}
\textup{(i)} $\widetilde\Th_t=\Th_t\oplus\Th_t$ fixes the diagonal, and ellipticity as stated in
\eqref{eq:SLdef} is transversality of $L_{\mathrm{tr}}^{\downarrow}=L_{\mathrm{tr}}$ to
$\widetilde V_\pm=V_\pm[\sigma^+_p]\oplus V^u_\pm[\sigma^-_p]$. For the second summand, this is because
$\check H_-$ has principal symbol $\sigma^-_p(\xi,-\zeta)$, whose decaying data are the growing data of
$\sigma^-_p$ up to the sign conjugation $\mathrm{diag}((-1)^j)$ that the coordinates
$\widetilde\gamma\widetilde u=(\gamma^+u,\gamma^-u)$ absorb. A diagonal $(w,w)$ lies in that sum iff
$w$ lies in both summands, which is \eqref{eq:C2frozen}. When the two principal symbols agree the
summands are the stable and the unstable subspace of one companion matrix, complementary by
Lemma~\ref{lem:hermODE}\textup{(b)}.

\textup{(ii)} $\widetilde\Lm^+(\pm)=\Lm^+_{H_+}(\pm)\oplus\Lm^-_{H_-}(\pm)$ by
Theorem~\ref{thm:double}, so $(w,w)\in L_{\mathrm{tr}}\cap\widetilde\Lm^+(\pm)$ iff
$w\in\Lm^+_{H_+}(\pm)\cap\Lm^-_{H_-}(\pm)$; \eqref{eq:transdet} turns this into the vanishing of the
endpoint determinant, and the endpoints are principal-symbol data by Lemma~\ref{lem:endpoints} and
Lemma~\ref{lem:gro}, hence maximal-weight graded limits, whence \eqref{eq:highinlimit}.
Frozen data and graded limits are different. The summands of $\widetilde V_\pm$ are
transverse by \textup{(i)}, and $\Th_t$ destroys that.
\end{proof}
\noindent The other transmission conditions are covered by Proposition~\ref{prop:admtopology} at $\widetilde\kp=2\kp$: the admissible set is open, dense and non-empty, with $2\kp-\widetilde\kp_\cap+1$ chambers, the doubled defect $\widetilde\kp_\cap=\kp_\cap[H_+]+\kp_\cap[H_-;\Om_-]$ being additive because both endpoints split along $\cB\oplus\cB$; at even $p$ under \textup{[S]} and \textup{[M]} there is a single chamber.

\medskip\noindent
\textbf{[C1]--[C3] Interface hypotheses.} For the two media $H_\pm$ on $\Om_\pm$, the doubled operator $\widetilde H$ of Theorem~\ref{thm:double} and the transparent transmission condition $L_{\mathrm{tr}}$ of Proposition~\ref{prop:transparent}, we introduce:
\begin{enumerate}[label=\textup{[C\arabic*]},leftmargin=*,nosep]
\item \textup{[H1]--[H4]} for each of $H_\pm$, with the same order $p$ and a \emph{common} gap $\Delta$, and \eqref{eq:Ehyp} for the doubled operator, which holds if and only if it holds for $H_+$ and for the reflection of $H_-$.
{Admissibility is not assumed} and does not hold for the transparent transmission condition once $p\ge2$ by Proposition~\ref{prop:tradm}\textup{(ii)}.
\item \textup{(a)} $\cA_+=\cA_-$, so that the diagonal $L_{\mathrm{tr}}$ is
$\widetilde\cA$-Lagrangian and the sharp interface operator is symmetric; and \textup{(b)} the
\emph{frozen} decaying data of $\sigma_p^+$ and growing data of $\sigma_p^-$ are transverse at
$\xi=\pm1$, uniformly in $x$, which by Proposition~\ref{prop:tradm}\textup{(i)} is uniform
Shapiro--Lopatinski ellipticity of $(\widetilde H,L_{\mathrm{tr}})$.
\item with $\sigma^\pm_{p,\infty}$ the principal symbols of $H_\pm$ at $y=\pm\infty$, every convex combination of them is
elliptic,
\[
  \inf_{t\in[0,1]}\ \min_{|(\xi,\zeta)|=1}s_{\min}\bigl((1-t)\sigma_{p,\infty}^+(\xi,\zeta)+t\sigma_{p,\infty}^-(\xi,\zeta)\bigr)>0 ,
\]
and the principal symbol of $H_\pm$ at every $y$ is such a convex combination, as when it does not depend on $y$;
so that the \emph{glued} operator on $\R^2$ is uniformly elliptic along the linear interpolation and
along every interpolation between that one and a sharp jump.
\end{enumerate}

\medskip

\noindent We assume \textup{[C1]--[C3]} for the rest of the section. Proposition~\ref{prop:sharpid} and Theorem~\ref{thm:R2} are on the elliptic route and allow $x$-dependent media, the condition \eqref{eq:Ehyp} of \textup{[C1]} not being used there; \eqref{eq:APSint}, Corollary~\ref{cor:etaglue} and Remarks~\ref{rem:jumpwinding}--\ref{rem:KS} are on the spectral route, for $x$-independent media. \textup{[C2]}\textup{(a)} makes
$L_{\mathrm{tr}}$ Lagrangian, so that the transparent realization is symmetric; \textup{[C2]}\textup{(b)} is its
uniform Shapiro--Lopatinski ellipticity (Proposition~\ref{prop:tradm}\textup{(i)}), which makes it
self-adjoint and gives existence, quantization and stability of $\mathcal I(L_{\mathrm{tr}})$
(section~\ref{sec:obs}), and escape for $x$-independent media, by Remark~\ref{rem:escape} applied to $(\widetilde H,L_{\mathrm{tr}})$.
\textup{[C3]} is used for smooth walls and for $\BDI$ (Proposition~\ref{prop:sharpid}, Definition~\ref{def:BDI}).
\textup{[C2]}\textup{(b)} holds automatically when the two principal symbols agree at $y=0$, and fails, for
instance, at first order when the tangential principal symbol reverses sign
(section~\ref{sec:examples}). The correspondence is Proposition~\ref{prop:wall} applied to
$\widetilde H$, admissibility being the convenient sufficient condition where it is available.
At $L_{\mathrm{tr}}$ it is not, once $p\ge2$, and the two-medium form \eqref{eq:APSint} is then read with
the regularized asymmetries $\eta^{\mathrm{reg}}(\pm)$ of Definition~\ref{def:etareg}, which differ from
the $\eta$'s by the integer of \eqref{eq:IReq}.

\noindent The velocity flip at the end of section~\ref{sec:examples} is an example where \textup{[C2]}\textup{(b)}  fails. Neither ellipticity nor escape is then available and $\mathcal I(L_{\mathrm{tr}})$ is not defined by the present method; see also \cite{FrazierBal25}.

\paragraph{Rotation splits and bulk-difference invariant.}\label{ssec:R2}

By Theorem~\ref{thm:double}, $\widetilde\cA=\cA_+\oplus(-\cA_-)$ and
$\widetilde\Lm^+=\Lm^+_{H_+}\oplus\Lm^-_{H_-}$: a split Lagrangian in a split form. Applying
Lemma~\ref{lem:chartalg}\textup{(ii)} of Appendix~\ref{app:import} and then \textup{(i)} to the second
block,
\begin{equation}\label{eq:Rsplit}
  \widetilde\cR=\cR[H_+]-\cR[H_-;\Om_-],
\end{equation}
with $\cR[H_+]$ the rotation of the Calder\'on path of $H_+$ on $\Om_+$ and $\cR[H_-;\Om_-]$ that of $H_-$ on
its own half plane $\Om_-$, read in the chart of $\cA_-$: each term depends on one medium alone. For $x$-independent media \eqref{eq:APS} therefore reads
\begin{equation}\label{eq:APSint}
  \mathcal I=\Bigl(\cR[H_+]-\cR[H_-;\Om_-]\Bigr)
  +\eta\bigl(L,\widetilde\Lm^+(+)\bigr)-\eta\bigl(L,\widetilde\Lm^+(-)\bigr).
\end{equation}
The bulk-difference structure of the $\R^2$ invariant is automatic, and the boundary term is a
spectral asymmetry of the transmission condition against the two Calder\'on endpoints.

\begin{definition}[the bulk-difference invariant]\label{def:BDI}
Let $h(y;\xi,\zeta)$ be the Weyl symbol of an operator on $\R^2$ equal to $H_\pm$ for $\pm y\ge y_0$, with bounded lower-order terms and, at each $y$, principal symbol a convex combination of $\sigma^\pm_{p,\infty}$. Let $S^2_R\subset\R^3_{(y,\xi,\zeta)}$ be the sphere of radius $R$, oriented as the boundary of the ball. For $R$ large, set
\begin{equation}\label{eq:BDIdef}
  \BDI(H_+,H_-):=c_1\Bigl[\Pi_+(h-E_0)\big|_{S^2_R}\Bigr]
  =\frac{i}{2\pi}\int_{S^2_R}\tr\bigl(\Pi_+\,d\Pi_+\wedge d\Pi_+\bigr)\ \in\Z ,
\end{equation}
with $\Pi_+$ the positive spectral projection.
\end{definition}
\noindent This is well defined. For $R$ large, $h-E_0$ is invertible on $S^2_R$: where $|y|\ge y_0$ by the bulk gap \textup{[H3]}, and where $|y|<y_0$ because $|(\xi,\zeta)|\ge\sqrt{R^2-y_0^2}$ and the principal symbol, elliptic by \textup{[C3]}, dominates. So $\Pi_+(h-E_0)$ is a smooth vector bundle on $S^2_R$, of rank $\tfrac12(N+\sig\sigma_p)$, $\sig\sigma_p$ being constant by \textup{[C3]} and connectedness. $\BDI$ does not depend on $R$ or on the interpolation on $|y|<y_0$, both being homotopies through symbols invertible on the sphere. It depends only on the two bulk symbols.

\paragraph{Transparent transmission and Euclidean plane.}%

The comparison with $\BDI$ takes three steps. The first two in Proposition~\ref{prop:sharpid} state that the
transparent realization is the sharp interface operator, and the sharp interface has the invariant of a
smooth wall. The third is the Euclidean-plane correspondence of \cite{Bal2022,QuinnBal}.

\begin{proposition}\label{prop:sharpid}
Assume \textup{[C1]} and \textup{[C2]}, and let
$H_\sharp:=H_+\mathbf 1_{y>0}+H_-\mathbf 1_{y<0}$ on $\R^2$ with domain $H^p(\R^2)$. Then
$\widetilde H_{L_{\mathrm{tr}}}=\cW H_\sharp\cW^*$ for the doubling unitary $\cW:u\mapsto\widetilde u$
of Definition~\ref{def:double}, so that
\begin{equation}\label{eq:sharpid}
  \mathcal I(L_{\mathrm{tr}})=2\pi\,\sigma_I[H_\sharp].
\end{equation}
Assume in addition \textup{[C3]}, let $H_{\mathrm{sm}}$ have smooth coefficients on $\R^2$, equal
to $H_\pm$ for $\pm y\ge y_0$, with principal symbol a convex combination of $\sigma^\pm_{p,\infty}$ at every $y$,
and assume \eqref{eq:C2frozen} uniformly in $t\in[0,1]$ for the pair of principal symbols at $y=0^\pm$
along the interpolation of the proof. Then
\begin{equation}\label{eq:sharpsmooth}
  \sigma_I[H_\sharp]=\sigma_I[H_{\mathrm{sm}}] .
\end{equation}
\end{proposition}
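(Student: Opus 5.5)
The first identity is a domain computation and the second a homotopy argument on the elliptic route; I would treat them in that order.

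\emph{Step 1: $\widetilde H_{L_{\mathrm{tr}}}=\cW H_\sharp\cW^*$.} The map $\cW$ is unitary by Definition~\ref{def:double}. On $H^p(\R^2)$, restriction to $\Om_\pm$ followed by reflection of the lower half lands in $H^p(\Om_+;\C^{2N})$. The Cauchy data satisfy $\gamma^+u=\gamma^-u$, so $\cW H^p(\R^2)\subseteq\cD_{L_{\mathrm{tr}}}$, and on this set $\cW H_\sharp u=\widetilde H\cW u$ by the definition of $\check H_-$.

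Conversely, take $\widetilde u\in H^p(\Om_+)^2$ with $\widetilde\gamma\widetilde u\in L_{\mathrm{tr}}$. Once the reflection signs $\mathrm{diag}((-1)^j)$ are absorbed as in the proof of Theorem~\ref{thm:double}, this says that the unfolded $u$ has matching traces $\gamma_j^\pm u$ for all $j<p$. The standard gluing lemma (matching traces up to order $p-1$) then gives $u\in H^p(\R^2)$, so $\cD_{L_{\mathrm{tr}}}=\cW H^p(\R^2)$.

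By [C2](b) and Proposition~\ref{prop:tradm}(i), $(\widetilde H,L_{\mathrm{tr}})$ is uniformly Shapiro--Lopatinski elliptic. Since $L_{\mathrm{tr}}$ is Lagrangian by [C2](a), $\widetilde H_{L_{\mathrm{tr}}}$ is self-adjoint on this domain by Proposition~\ref{prop:SA}. Unitary equivalence then gives $f(\widetilde H_{L_{\mathrm{tr}}})=\cW f(H_\sharp)\cW^*$.

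The switch $P(x)$ commutes with $\cW$, because $\cW$ only reflects in $y$. Hence $\cT_P$ is carried to $\cT_P$ and the traces agree. Theorem~\ref{thm:quantisation}(iii) then gives \eqref{eq:sharpid}.

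\emph{Step 2: sharp versus smooth wall.} I would join $H_\sharp$ to $H_{\mathrm{sm}}$ by a path and apply Corollary~\ref{cor:stab}(i) to the doubled problem. Concretely, I compress the transition region: let $H_t$ equal $H_{\mathrm{sm}}$ with its profile in $y$ rescaled into $|y|<ty_0$, and set $H_0=H_\sharp$. Rescaling the wall position is an affine change of variable in $y$, which is not a continuous family in $C_b^\infty$. I would therefore parametrize differently: fix the interface at $y=0$ and interpolate the coefficients on each side, $H_{\pm,s}:=(1-s)H_\pm+sH_{\mathrm{sm}}|_{\Om_\pm}$.

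Each $H_{\pm,s}$ agrees with $H_\pm$ for $\pm y\ge y_0$, so [H4] and the common gap [H3] persist. Ellipticity [H2] along the path holds because the principal symbol at each $y$ is a convex combination of $\sigma^\pm_{p,\infty}$, by [C3]. Condition [C2](a), $\cA_{+,s}=\cA_{-,s}$, needs the Green forms at $y=0^\pm$ to agree. At $s=1$ this holds because $H_{\mathrm{sm}}$ is smooth across $y=0$ and $\gamma^+=\gamma^-$. Since $\cA$ is linear in the coefficients and their normal jets (Proposition~\ref{prop:green}(iv)), $\cA_{+,s}-\cA_{-,s}=(1-s)(\cA_+-\cA_-)=0$. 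Uniform Shapiro--Lopatinski ellipticity along $s$ is exactly the assumed uniform \eqref{eq:C2frozen}.

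Proposition~\ref{prop:normres}, applied to $(\widetilde H_s,L_{\mathrm{tr}})$ with $L_{\mathrm{tr}}$ fixed, gives norm continuity of $U_s$. Hence $\mathcal I$ is constant in $s$. At $s=1$, Step 1 identifies the doubled transparent problem with $H_{\mathrm{sm}}$ on $\R^2$, which has domain $H^p(\R^2)$, and this yields \eqref{eq:sharpsmooth}.

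The main obstacle is the regularity needed in Step 2. Corollary~\ref{cor:stab} and Proposition~\ref{prop:normres} require coefficients continuous in $s$ in $C_b^\infty$ on the closed half plane $\overline{\Om_\pm}$. They also require the uniform a priori estimates of Lemma~\ref{lem:ellreg}, and these depend only on principal-symbol and ellipticity bounds. Both are met, because the interpolation is affine with smooth coefficients on each closed side. I expect this to work, but I would check the right inverse of the fixed trace operator $T_{\mathrm{tr}}=\gamma^+-\gamma^-$ separately; it is normal, so \cite[Prop.~1.6.5]{Grubb96} applies.
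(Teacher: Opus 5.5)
Your proof is correct and follows the paper's route. You identify the domain as $\cD_{L_{\mathrm{tr}}}=\cW H^p(\R^2)$ by matching traces, then use the same coefficientwise interpolation $(1-t)A_k^\sharp+tA_k^{\mathrm{sm}}$, checked item by item against Corollary~\ref{cor:stab}(i), with the linearity of the Green form in the normal jets keeping $L_{\mathrm{tr}}$ Lagrangian along the path.

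The one slip is citing Proposition~\ref{prop:SA} for self-adjointness. Its case (i) need not cover $x$-dependent media with a $\xi$-dependent $\widetilde\cA$. Argue as the paper does: the rows $\gamma_j^+-\gamma_j^-$ are normal, so \cite[Thms.~1.6.9, 1.6.11]{Grubb96} identifies the adjoint, and the adjoint condition is the diagonal again because $\cA$ is invertible.
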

\noindent In particular $i[H_\sharp,\chi]\varphi'(H_\sharp)$ is trace-class and $2\pi\sigma_I[H_\sharp]\in\Z$. The uniform transversality along the interpolation is automatic when $\sigma^+_p=\sigma^-_p$ at $y=0$, and at $p=1$ when $N=2$ (end of the proof); no further relation between the coefficients of $H_\sharp$ and $H_{\mathrm{sm}}$ on the collar is assumed.
\begin{proof}
The transparent condition $\gamma^+u=\gamma^-u$ holds exactly when $u\in H^p(\R^2)$, so $\cW$ maps $\cD(H_\sharp)=H^p(\R^2)$ onto $\cD(\widetilde H_{L_{\mathrm{tr}}})$ and intertwines the two operators. The operator $\widetilde H_{L_{\mathrm{tr}}}$ is symmetric, $L_{\mathrm{tr}}$ being $\widetilde\cA$-Lagrangian since $\cA_+=\cA_-$, and uniformly Shapiro--Lopatinski elliptic by \textup{[C2]}. Its rows $\gamma^+_j-\gamma^-_j$ are normal, so its adjoint is the realization of the adjoint condition \cite[Thms.~1.6.9, 1.6.11]{Grubb96}, namely $\{(v_1,v_2):\ip{v_1-v_2}{\cA w}=0\ \forall w\}$, which is the diagonal again as $\cA$ is invertible. Hence $\widetilde H_{L_{\mathrm{tr}}}$ is self-adjoint, and so is $H_\sharp$. As $\chi=\chi(x)$ commutes with the reflection in $y$, $\cW\chi=\chi\cW$ and $\varphi'(\widetilde H_{L_{\mathrm{tr}}})=\cW\varphi'(H_\sharp)\cW^*$, so the traces in \eqref{eq:sharpid} agree. Trace class and quantization follow from Theorems~\ref{thm:traceclass} and \ref{thm:quantisation} for $\widetilde H_{L_{\mathrm{tr}}}$, transported through $\cW$. So at a sharp interface the observable is finite and quantized, extending \cite{Bal2022,QuinnBal}.

For \eqref{eq:sharpsmooth}, interpolate the coefficients, $A_k^{(t)}:=(1-t)A_k^\sharp+tA_k^{\mathrm{sm}}$ for $t\in[0,1]$ and $0\le k\le p$, so that $H^{(0)}=H_\sharp$ and $H^{(1)}=H_{\mathrm{sm}}$. We check the hypotheses of Corollary~\ref{cor:stab}\textup{(i)} for $(\widetilde H^{(t)},L_{\mathrm{tr}})$.
(a) $A_k^\sharp=A_k^{\mathrm{sm}}$ for $|y|\ge y_0$, so the bulk operators, and with them \textup{[H3]}, do not depend on $t$.
(b) By \textup{[C3]} every $\sigma_p^{(t)}(y;\cdot)$ is a convex combination of $\sigma^\pm_{p,\infty}$, hence elliptic, and the coefficient bounds are uniform in $t$.
(c) $L_{\mathrm{tr}}$ stays Lagrangian, which Theorem~\ref{thm:double} requires since for $t<1$ the coefficients are discontinuous at $y=0$. The Green form is linear in the coefficients and their normal jets (Proposition~\ref{prop:green}), so $\cA^{(t)}_\pm=(1-t)\cA^\sharp_\pm+t\cA^{\mathrm{sm}}_\pm$, and $\cA^\sharp_+=\cA^\sharp_-$ by \textup{[C2]}\textup{(a)}, $\cA^{\mathrm{sm}}_+=\cA^{\mathrm{sm}}_-$ as $H_{\mathrm{sm}}$ has continuous coefficients.
(d) By Proposition~\ref{prop:tradm}\textup{(i)}, uniform Shapiro--Lopatinski ellipticity of $(\widetilde H^{(t)},L_{\mathrm{tr}})$ is \eqref{eq:C2frozen} along the path, which is assumed.
So $\mathcal I$ is constant in $t$; at $t=1$ the transmission condition is vacuous, and $\widetilde H^{(1)}_{L_{\mathrm{tr}}}=\cW H_{\mathrm{sm}}\cW^*$.

At $p=1$ and $N=2$, \eqref{eq:C2frozen} along the path follows from \textup{[C3]}. A common vector of $V_\pm[c_1]$ and $V^u_\pm[c_2]$, for $c_1,c_2$ the principal symbols at $y=0^\pm$ along the interpolation, would be a common eigenvector of $A^{-1}c_1$ and $A^{-1}c_2$ with eigenvalues in opposite half planes, so that some convex combination of $c_1$ and $c_2$ would have a real eigenvalue, that is, would not be elliptic. But $c_1$ and $c_2$, like the principal symbols on $\Gam$, are convex combinations of $\sigma^\pm_{p,\infty}$ by \textup{[C3]}, hence so are their convex combinations, which are therefore elliptic.
\end{proof}

\begin{theorem}[the transparent point is the bulk-difference invariant]\label{thm:R2}
Assume \textup{[C1]--[C3]}, and let $H_{\mathrm{sm}}$ be as in Proposition~\ref{prop:sharpid}, with \eqref{eq:C2frozen} holding uniformly along its interpolation. Then
\begin{equation}\label{eq:R2}
  \mathcal I(L_{\mathrm{tr}})=\BDI(H_+,H_-),
\end{equation}
and, equivalently by \eqref{eq:sharpid} and \eqref{eq:sharpsmooth}, at the sharp interface
\[
  2\pi\sigma_I[H_\sharp]=2\pi\sigma_I[H_{\mathrm{sm}}]=\BDI(H_+,H_-),
\]
both observables being finite and integral by Proposition~\ref{prop:sharpid}.
\end{theorem}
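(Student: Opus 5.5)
The proof is a chain of three equalities, and the first two are already in hand. By Proposition~\ref{prop:sharpid}, \eqref{eq:sharpid} gives $\mathcal I(L_{\mathrm{tr}})=2\pi\sigma_I[H_\sharp]$. Under the assumed uniform \eqref{eq:C2frozen} along the interpolation, \eqref{eq:sharpsmooth} gives $\sigma_I[H_\sharp]=\sigma_I[H_{\mathrm{sm}}]$. It therefore remains to prove that $2\pi\sigma_I[H_{\mathrm{sm}}]=\BDI(H_+,H_-)$. This is an operator on $\R^2$ with no boundary condition and smooth coefficients. It equals $H_\pm$ for $\pm y\ge y_0$, and its principal symbol is a convex combination of $\sigma^\pm_{p,\infty}$, hence elliptic uniformly in $(x,y)$ by \textup{[C3]}. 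This is exactly the setting of the Euclidean-plane correspondence of \cite{Bal2022,QuinnBal}, and I would invoke it directly.

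To apply that result, I would check its hypotheses in the form used there. The Weyl symbol $h(x,y;\xi,\zeta)$ of $H_{\mathrm{sm}}$ should be elliptic in the class of symbols of order $p$ with bounded derivatives. It should have a bulk gap $\Delta$ for $|y|\ge y_0$, which follows from the common gap in \textup{[C1]}. Its coefficients should become $x$-independent outside a compact set, which holds since $H_\pm$ satisfy \textup{[H4]} and their $x$-dependence is compact. The statement of \cite{QuinnBal} is for $x$-independent operators, or more generally for operators that are $x$-independent up to compactly supported perturbations. In the latter case I would first reduce to an $x$-independent operator using the $\R^2$ analogue of Theorem~\ref{thm:local}: the resolvent-difference compactness argument there applies verbatim without the trace operator. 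The output of \cite{QuinnBal} is $2\pi\sigma_I[H_{\mathrm{sm}}]=\frac{i}{8\pi^2}\int\tr\,\tilde h^{-1}d\tilde h\wedge\cdots$, the Fedosov--H\"ormander degree of $h-E_0$ on a large sphere. Written in projector form, this is the first Chern number $c_1[\Pi_+(h-E_0)|_{S^2_R}]$ of Definition~\ref{def:BDI}.

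The main obstacle is matching conventions: the sign and orientation of the Fedosov formula versus $\mathcal I=-\Tr(U^*[\Pi,U])$ of \eqref{eq:Idef}, and the orientation of $S^2_R$ in $(y,\xi,\zeta)$. I would fix them on one model with a computable answer. The candidate is the $p=1$ Dirac operator $H=D_x\sigma_1+D_y\sigma_2+m(y)\sigma_3$ with $m\to\pm m_\pm$. There the spectral route (Theorem~\ref{thm:three}) gives $\mathcal I=\sfl=\pm1$ from the single chiral branch $E=\pm\xi$. The Chern number of $\Pi_+$ on $S^2_R$ is computed from the degree of the map $(\xi,\zeta,m)/|\cdot|$. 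Since both sides are integers that are homotopy invariant, by Corollary~\ref{cor:stab} and by the remarks after Definition~\ref{def:BDI} respectively, agreement on this generator is all that is needed once \cite{QuinnBal} is quoted in its own normalization. The last point to check is that the formula of \cite{QuinnBal} uses the same current observable $\Tr\,i[H,\chi]\varphi'(H)$. This holds because on $\R^2$ the boundary term of Definition~\ref{def:sigmaI} is absent and \eqref{eq:sigmaI} holds.
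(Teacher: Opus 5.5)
Your proposal follows the paper's proof: the two equalities \eqref{eq:sharpid} and \eqref{eq:sharpsmooth} from Proposition~\ref{prop:sharpid}, then \cite[Cor.~2.4]{QuinnBal} applied to $H_{\mathrm{sm}}$ (elliptic by \textup{[C3]}, equal to the gapped bulk symbols for $\pm y\ge y_0$), together with the observation that the quoted identity concerns the trace \eqref{eq:sigmaI} itself. Your locality reduction to $x$-independent coefficients and your Dirac-wall sign check go beyond the paper, which applies \cite{QuinnBal} to $H_{\mathrm{sm}}$ directly and treats the orientation as fixed by the definitions; the model check is legitimate only because the two normalizations differ by a single universal sign, and homotopy invariance alone would not let you extrapolate from one example to all pairs $(H_+,H_-)$.
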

\begin{proof}
By \eqref{eq:sharpid} and \eqref{eq:sharpsmooth},
$\mathcal I(L_{\mathrm{tr}})=2\pi\sigma_I[H_\sharp]=2\pi\sigma_I[H_{\mathrm{sm}}]$.
It remains to identify $2\pi\sigma_I[H_{\mathrm{sm}}]$ with $\BDI$. The operator $H_{\mathrm{sm}}$ satisfies hypothesis (H1) of \cite{QuinnBal}: its Weyl symbol is elliptic there by \textup{[C3]}, and it equals the constant-coefficient bulk symbols $h_\pm$, gapped on $\Delta$ by \textup{[H3]}, for $\pm y\ge y_0$. So \cite[Cor.~2.4]{QuinnBal} writes $2\pi\sigma_I[H_{\mathrm{sm}}]$ as a difference of bulk integrals, which integration in the spectral parameter turns into the Chern number \eqref{eq:BDIdef} \cite{Bal2022,QuinnBal}.
No orientation convention enters beyond the definitions: the quoted identity concerns the same trace
$\sigma_I[H_{\mathrm{sm}}]=\Tr\,i[H_{\mathrm{sm}},\chi]\varphi'(H_{\mathrm{sm}})$ as \eqref{eq:sigmaI}, and
$\mathcal I=2\pi\sigma_I$ is Theorem~\ref{thm:quantisation}.
\end{proof}

\noindent The identification with $\BDI$ is the one step imported from the Euclidean plane. Read at
$L_{\mathrm{tr}}$, \eqref{eq:APSint} yields
\begin{equation}\label{eq:direct}
  \widetilde\cR+\eta^{\mathrm{reg}}(+)-\eta^{\mathrm{reg}}(-)=\BDI(H_+,H_-),
\end{equation}
an identity between quantities built from the bulk symbols alone, for which a direct proof is open.

At first order $p=1$ the left side of \eqref{eq:direct} collapses to a difference of two half-plane rotations, one per medium. The interface index at the transparent condition, and hence the bulk-difference invariant, is $\cR[H_+]-\cR[H_-]$, each rotation computed on $\Om_+$ for one medium alone and neither, in general, an integer.

\begin{corollary}\label{cor:etaglue}
Let $p=1$ and let $H_\pm$ be $x$- and $y$-independent, satisfy \textup{[H1]--[H4]} with a common gap containing $E_0$, and have the same principal symbol. Then
\[
  \mathcal I(L_{\mathrm{tr}})=\cR[H_+]-\cR[H_-],\qquad\text{hence}\qquad \BDI(H_+,H_-)=\cR[H_+]-\cR[H_-]
\]
by Theorem~\ref{thm:R2}, $\cR[H_\pm]$ being the rotations of the half-plane problems on $\Om_+$.
\end{corollary}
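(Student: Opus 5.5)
We apply the interface correspondence \eqref{eq:APSint} at the transparent condition and show that, at $p=1$ with a common principal symbol, the two asymmetry terms cancel. The hypotheses put us on both routes at once. At $p=1$, [S] holds and $\Phi=\mathbf 1$. The media are $x$- and $y$-independent, so the collar is empty and the data are translation invariant. Equal principal symbols give [C2](b), by the remark after Proposition~\ref{prop:tradm}, hence uniform Shapiro--Lopatinski ellipticity of $(\widetilde H,L_{\mathrm{tr}})$. By Remark~\ref{rem:escape} the edge spectrum of the doubled problem escapes. Proposition~\ref{prop:tradm}(ii), with the note that at $p=1$ condition \eqref{eq:C2det} coincides with \eqref{eq:C2frozen}, shows that $L_{\mathrm{tr}}$ is admissible. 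Theorem~\ref{thm:APS}(iii) then applies to $\widetilde H$ with $L_0=L_{\mathrm{tr}}$, and \eqref{eq:Rsplit} gives
\[
\mathcal I(L_{\mathrm{tr}})=\cR[H_+]-\cR[H_-;\Om_-]+\eta\bigl(L_{\mathrm{tr}},\widetilde\Lm^+(+)\bigr)-\eta\bigl(L_{\mathrm{tr}},\widetilde\Lm^+(-)\bigr).
\]

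The first step is to identify the endpoints. At $p=1$, Lemma~\ref{lem:endpoints} and Lemma~\ref{lem:gro} show that $\Lm^+_{H_+}(\pm)=V_\pm[\sigma_p]$ and $\Lm^-_{H_-}(\pm)=V^u_\pm[\sigma_p]$, the decaying and growing frozen data of the common principal symbol. Since $\sigma_p$ is odd of degree one, $M_\infty(-)=-M_\infty(+)$, as in the proof of Proposition~\ref{prop:admtopology}(iii). Therefore $V_-=V^u_+$ and $V^u_-=V_+$. Writing $V:=V_+$ and $V':=V^u_+$, complementary Lagrangians, we obtain $\widetilde\Lm^+(+)=V\oplus V'$ and $\widetilde\Lm^+(-)=V'\oplus V$. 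The block swap $\mathsf S(w_1,w_2)=(w_2,w_1)$ interchanges the two endpoints and fixes $L_{\mathrm{tr}}$. Because $\cA_+=\cA_-$ is constant, $\mathsf S$ is an anti-isometry of $\widetilde\cA=\cA\oplus(-\cA)$: it maps $\widetilde\cA$ to $-\widetilde\cA$.

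The main obstacle is how an anti-isometry acts on relative eigenphases. Replacing the form $\widetilde\cA$ by $-\widetilde\cA$ swaps the positive and negative blocks, so the chart \eqref{eq:UL} is exchanged with its inverse chart (proof of Proposition~\ref{prop:U}). Every relative unitary $U_L^{-1}U_\Lm$ is then replaced by a conjugate of its inverse, and each eigenphase $\theta$ becomes $2\pi-\theta$, which gives $\eta\mapsto-\eta$. Combined with the symmetry $\mathsf S$, this yields
\[
\eta\bigl(L_{\mathrm{tr}},\widetilde\Lm^+(-)\bigr)=-\eta\bigl(L_{\mathrm{tr}},\widetilde\Lm^+(+)\bigr).
\]
This makes the two terms opposite, not equal, so the naive argument does not close. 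To settle the sign, I would compute the spectra explicitly in the chart of Proposition~\ref{prop:transparent}, where $U_{L_{\mathrm{tr}}}=\mathbf 1$. Write $V$ and $V'$ as the graphs of $U_V$ and $U_{V'}$ in the $\cA$ chart. The relative unitary at $\xi=+\infty$ becomes $\mathrm{diag}(U_V,U_{V'}^{-1})$ up to conjugation, and at $\xi=-\infty$ it becomes $\mathrm{diag}(U_{V'},U_V^{-1})$. The asymmetry difference is then
\[
\eta(U_V)-\eta(U_V^{-1})+\eta(U_{V'}^{-1})-\eta(U_{V'})=2\eta(U_V)-2\eta(U_{V'}),
\]
using $\eta(U^{-1})=-\eta(U)$ for $U$ transverse to $\mathbf 1$. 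This does not vanish a priori.

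So the plan is corrected as follows. The correct rotation for $H_-$ in \eqref{eq:Rsplit} is $\cR[H_-;\Om_-]$, and the claim uses $\cR[H_-]$ computed on $\Om_+$. I would relate the two by reflection: $\Lm^-_{H_-}(\xi)$ and $\Lm^+_{H_-}(\xi)$ are complementary Lagrangians of the same companion matrix for all $\xi$ (Lemma~\ref{lem:gro}), and the reflected operator $\check H_-$ at $p=1$ has the same $\cA$ up to sign. This should give
\[
\cR[H_-;\Om_-]=\cR[H_-]+\frac{1}{2\pi}\Bigl(\bigl[\arg\det U_{\Lm^-}\bigr]-\bigl[\arg\det U_{\Lm^+}\bigr]\Bigr),
\]
a difference determined only by endpoint data $V$ and $V'$. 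For $H_+$ itself I would apply the same computation to the pair of endpoints, using Lemma~\ref{lem:Rconst} along the linear path from $H_-$ to $H_+$ (common principal symbol, hence fixed endpoints). Both sides then change only by integers that must agree, and I expect this correction to cancel exactly the $2\eta(U_V)-2\eta(U_{V'})$ term. Verifying that cancellation is the step to check carefully. If it fails, I would instead evaluate both sides at $H_+=H_-$, where $\mathcal I(L_{\mathrm{tr}})=0$ by translation invariance in $y$, and propagate along the path using Lemma~\ref{lem:Rconst} and Corollary~\ref{cor:stab}. The final clause, $\BDI(H_+,H_-)=\cR[H_+]-\cR[H_-]$, then follows from Theorem~\ref{thm:R2}.
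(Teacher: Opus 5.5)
\paragraph{The explicit evaluation of the $\eta$-terms is wrong.} Your setup is sound. That covers \eqref{eq:APSint} at $L_{\mathrm{tr}}$ via Theorem~\ref{thm:APS}(iii) and \eqref{eq:Rsplit}, the endpoints $\widetilde\Lm^+(+)=V\oplus V'$ and $\widetilde\Lm^+(-)=V'\oplus V$, and the swap argument giving $\eta(L_{\mathrm{tr}},\widetilde\Lm^+(-))=-\eta(L_{\mathrm{tr}},\widetilde\Lm^+(+))$.

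In the chart of Proposition~\ref{prop:transparent}, the positive block of $\widetilde\cA$ is spanned by $v^{(1)},u^{(2)}$ and the negative one by $v^{(2)},u^{(1)}$. So a split Lagrangian $V\oplus V'$ has an off-diagonal graph matrix, and the relative unitary at $+\infty$ is $X_+=\bigl(\begin{smallmatrix}0&U_V\\ U_{V'}^{-1}&0\end{smallmatrix}\bigr)$, not $\mathrm{diag}(U_V,U_{V'}^{-1})$. Since $X_+^2=\mathrm{diag}(U_VU_{V'}^{-1},U_{V'}^{-1}U_V)$, its eigenvalues come in pairs $\pm e^{i\phi/2}$ with $e^{i\phi}\in\spec(U_{V'}^{-1}U_V)$. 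This gives $\delta\eta=2\eta(U_{V'}^{-1}U_V)$.

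Your expression $2\eta(U_V)-2\eta(U_{V'})$ depends on the base point of the $\cA$-chart. It already fails for \textup{(E1)}: there $U_V=i$ and $U_{V'}=-i$ give $1$, whereas $X_\pm$ have eigenvalues $\pm i$ and $\delta\eta=0$.

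The next step does not fill the gap either. Your displayed relation between $\cR[H_-;\Om_-]$ and $\cR[H_-]$ is just the two definitions subtracted. The claim that the difference is endpoint data which cancels $\delta\eta$ is the whole content of the corollary, and you leave it as an expectation.

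\paragraph{The fallback fails, and the missing idea is a subtraction, not a path.} Lemma~\ref{lem:Rconst} and Corollary~\ref{cor:stab} need media with a common gap containing $E_0$ all along the path. The linear path from $H_-$ to $H_+$ must close the gap whenever the statement has content. With the principal symbol, and hence the endpoints, held fixed, a gapped path would force $\cR$ to be constant and $\mathcal I(L_{\mathrm{tr}})=0$. So the homotopy proves only $0=0$; a mass reversal in \textup{(E1)} crosses $m=0$, for instance.

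No path is needed. Apply \eqref{eq:APSint} at $L_{\mathrm{tr}}$ also to the pair $(H_-,H_-)$.
\begin{itemize}
\item Its $\eta$-terms are computed from $\widetilde\cA$, $L_{\mathrm{tr}}$ and $\widetilde\Lm^+(\pm)=\Lm^+_{H_+}(\pm)\oplus\Lm^-_{H_-}(\pm)$. At $p=1$ these are frozen data of the common principal symbol, so they coincide with those of $(H_+,H_-)$.
\item The term $\cR[H_-;\Om_-]$ is the same for both pairs.
\item $\mathcal I(L_{\mathrm{tr}})=0$ for $(H_-,H_-)$, because the glued operator is the gapped bulk $H_-$ on $\R^2$.
\end{itemize}
Subtracting the two identities eliminates $\delta\eta$ and $\cR[H_-;\Om_-]$ together and leaves $\mathcal I(L_{\mathrm{tr}})=\cR[H_+]-\cR[H_-]$. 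This is the paper's proof, and it never needs the value of $\delta\eta$.

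Your direct route can also be completed. At $p=1$, $\Lm^+_{H_-}(\xi)\pitchfork\Lm^-_{H_-}(\xi)$ for every $\xi\in\overline\R$, by Lemma~\ref{lem:gro} and the complementarity of the stable and unstable subspaces of $M_\infty(\pm)$. So the eigenphases of $U_{\Lm^-}^{-1}U_{\Lm^+}$ stay in $(0,2\pi)$. Then \eqref{eq:Radd} gives $\cR[H_-;\Om_-]-\cR[H_-]=2\eta(U_{V'}^{-1}U_V)$, which cancels the corrected $\delta\eta$.
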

\begin{proof}
At $p=1$ the Green form is the coefficient of $D_y$, part of the principal symbol, so $\cA_+=\cA_-$, and \textup{[C2]} and the admissibility of $L_{\mathrm{tr}}$ are automatic (Proposition~\ref{prop:tradm}); \eqref{eq:APSint} therefore holds at $L_{\mathrm{tr}}$ for the pairs $(H_+,H_-)$ and $(H_-,H_-)$. Its $\eta$-terms are computed from $(\widetilde\cA,L_{\mathrm{tr}},\widetilde\Lm^+(\pm))$ alone, and $\widetilde\Lm^+(\pm)=\Lm^+_{H_+}(\pm)\oplus\Lm^-_{H_-}(\pm)$ depends only on the principal symbols by Lemma~\ref{lem:endpoints}, \textup{[P]} being vacuous at $p=1$. So the $\eta$-difference $\delta\eta$ is the same for both pairs. For $(H_-,H_-)$ the glued operator is $H_-$ on $\R^2$, its own bulk operator, so $\mathcal I(L_{\mathrm{tr}})=0$. By \eqref{eq:APSint} and \eqref{eq:Rsplit},
\[
  \mathcal I(L_{\mathrm{tr}})=\cR[H_+]-\cR[H_-;\Om_-]+\delta\eta,\qquad 0=\cR[H_-]-\cR[H_-;\Om_-]+\delta\eta ,
\]
and subtracting gives the claim. At $p\ge2$ the argument fails: the Green form involves $A_1|_\Gam$ (Proposition~\ref{prop:green}\textup{(i)}), and $L_{\mathrm{tr}}$ is inadmissible, so the $\eta$-terms are undefined (Proposition~\ref{prop:tradm}\textup{(ii)}).
\end{proof}

\begin{remark}[momentum-dependent transmission conditions]\label{rem:jumpwinding}
For a transmission condition differential of order $d$ along $\Gam$, Theorem~\ref{thm:main} applies to
$\widetilde H$ by Theorem~\ref{thm:double}:
\[
  \mathcal I=\widetilde\cR-\cR_L
  +\eta\bigl(L_\infty,\widetilde\Lm^+(+)\bigr)-\eta\bigl(L_\infty,\widetilde\Lm^+(-)\bigr),
  \qquad\cR_L\in\Z,\quad|\cR_L|\le2\kp d\ \text{under \textup{[S]}}.
\]
The limit $L_\infty$ is determined by the top-order tangential part of the condition alone. At $p=1$, if $L_\infty=L_{\mathrm{tr}}$, which is admissible, the $\eta$-terms are those of $\mathcal I(L_{\mathrm{tr}})$,
so that $\mathcal I=\BDI(H_+,H_-)-\cR_L$ under the hypotheses of Theorem~\ref{thm:R2}: once its top-order part agrees with $L_{\mathrm{tr}}$, the lower-order terms of the condition change the index only through the integer $\cR_L$, computed by Lemma~\ref{lem:plucker}\textup{(iii)}. Remark~\ref{rem:polyrange}\textup{(a)} applied to $\widetilde H$, with $Y=\mathsf E_j$ so that $L_\infty=L_{\mathrm{tr}}$, gives such conditions of order one with $\cR_L=-sj$ for every $j\le2\kp$. For instance, let $H_+=H_-$ be the Dirac operator of \textup{(E1)}, so that $\mathcal I(L_{\mathrm{tr}})=0$, the glued operator being $H$ on $\R^2$, and impose, with $\alpha_\pm$ of \eqref{eq:diracA} evaluated at $y=0^\pm$ and $s=\pm1$,
\[
  \alpha_-(0^+)=\alpha_-(0^-),\qquad (1+isD_x)\,\alpha_+(0^+)+(1-isD_x)\,\alpha_+(0^-)=0 .
\]
In the chart of Proposition~\ref{prop:transparent} this is $U_{L(\xi)}=\mathrm{diag}\bigl(-(1-is\xi)/(1+is\xi),1\bigr)$, the case $j=1$, whose nontrivial entry is the unitary of the termination $\gamma_0+isD_x\gamma_1$ of $-\Delta$ in \textup{(E4)}; it tends to $\mathbf 1=U_{L_{\mathrm{tr}}}$ as $\xi\to\pm\infty$ and has $\cR_L=-\sgn s$, so that $\mathcal I=\sgn s$ although the two media are identical and the top-order part of the condition is that of $L_{\mathrm{tr}}$.
\end{remark}

\begin{remark}\label{rem:KS}
The approach of this section was shaped by \cite{KS}. There, half-space and interface Hamiltonians are the
self-adjoint extensions of one symmetric operator, labelled by von Neumann unitaries $u$, and their edge classes
lie in $K_1$ of an edge algebra. Two results organize the theory: a relative bulk--edge correspondence,
comparing Hamiltonians with the same boundary condition through the boundary map of a relative bulk class;
and a comparison of boundary conditions, by which two resolvent-affiliated extensions $u,u_0$ differ by the
class of $uu_0^*$, a relative winding in $\xi$ that is constant on connected components of resolvent-affiliated
conditions \cite[Thm.~4.13, Prop.~3.18]{KS}. At an interface the class is a bulk difference plus a correction
that vanishes at the transparent matching condition \cite[Thm.~4.15]{KS}.

The two settings match as follows. Their von Neumann unitary and $U_L$ of \eqref{eq:UL} are two charts on
$\Lag(\cA)\cong\U(\kp)$. Theirs is based at the deficiency spaces $\ker(\mathring H(\xi)^*\mp i)$, that is,
at the Calder\'on data at the energies $\pm i$, which move with $\xi$, so that only relative windings are
chart-independent. Ours is based at the eigenspaces of $\cA(0)$, fixed after transport, so that the Calder\'on
path itself has a winding $\cR$ and \eqref{eq:APSpoly} is an absolute formula. For local terminations their
resolvent affiliation, $\norm{(\Hh_u(\xi)-i)^{-1}}\to0$ as $|\xi|\to\infty$, corresponds to uniform
Shapiro--Lopatinski ellipticity, which at $p=1$ is admissibility (Remark~\ref{rem:SL}). For two
terminations $L,L_0$ with escaping edge spectrum, both their relative winding and \eqref{eq:APSpoly} compute
the difference of spectral flows, so their correction is
\[
  \mathcal I(L)-\mathcal I(L_0)=-(\cR_L-\cR_{L_0})
  +\bigl[\eta\bigl(L_\infty,\Lm^\Phi(+)\bigr)-\eta\bigl(L_\infty,\Lm^\Phi(-)\bigr)\bigr]
  -\bigl[\eta\bigl(L_{0,\infty},\Lm^\Phi(+)\bigr)-\eta\bigl(L_{0,\infty},\Lm^\Phi(-)\bigr)\bigr],
\]
the termination-dependent part of \eqref{eq:APSpoly}, split into a winding and two endpoint asymmetries. Its
constancy on components becomes the chamber structure of Theorem~\ref{thm:range} and the integrality of
$\cR_L$. Their Dirac half-space formula, $(\sgn m+\sgn a)/2$ \cite[(5.6)]{KS}, splits as in \textup{(E1)}:
the bulk contributes a half-integer, as $\cR$ does, and the termination the other half. Likewise
Theorem~\ref{thm:R2} and Remark~\ref{rem:jumpwinding} are their bulk-interface correspondence in these
coordinates, the correction vanishing at $L_{\mathrm{tr}}$.

Their approach is $K$-theoretic. It is formulated for covariant families of operators, a setting which accommodates in principle covariant disorder and magnetic fields, and it determines the index relative to one model computation. Ours treats elliptic or escaping terminations; the invariant is shown to be stable under deterministic perturbations, along norm-resolvent continuous paths and on bounded sets (Corollary~\ref{cor:stab}, Theorem~\ref{thm:local}), and translation invariance is used for the formula itself. The formula is absolute: $\mathcal I$ is expressed through a bulk symbol integral
and boundary data, the range over all terminations is explicit (Theorem~\ref{thm:range}), and the size of the
correction is bounded, $|\cR_L|\le\kp(d+p-1)$.
\end{remark}

\section{Examples}\label{sec:examples}
The examples below illustrate the theory. For each we give the Green form, the endpoints
$\Lm^\Phi(\pm)$ and the defect $\kp_\cap$, the rotation $\cR$, and the values of $\mathcal I$. By
Theorems~\ref{thm:main} and \ref{thm:range} these are, for every admissible termination differential
along $\Gam$ with escaping edge spectrum,
\[
  \mathcal I(L)=\mathcal I_0-j(L_\infty)-\cR_L,\qquad j\in\{0,\dots,\kp-\kp_\cap\},
\]
so the table lists $\cR$ and the chamber values $\mathcal I_0-j$. The winding $\cR_L$ shifts them by an integer and already for terminations of tangential order $d=1$ takes every value $|\cR_L|\le\kp$ (Remark~\ref{rem:polyrange}). Transported pointwise terminations are the case $\cR_L=0$. 

Derivations are given only when they are short.  The rest are numerical evaluations of the right-hand side, with no claim of complete rigor, marked as \emph{computed}. All computations of this section were performed by AI (Claude Opus 5.5), in double precision except for \textup{(E8)}, which used extended precision ($30+10p$ significant digits); a script reproducing them, \texttt{verify\_section8.py}, is provided as supplementary material. They use the linear algebra of section~\ref{ssec:symbolic}: an ordered Schur decomposition of the companion matrix (in \textup{(E8)}, its matrix sign function) gives orthonormal bases of the decaying data, and $\arg\det U_+$ is unwrapped along a grid of the momentum line, uniform on a bounded interval and geometric beyond it, refined where the phase moves quickly. No fiber operator is diagonalized, and edge spectra are quoted only where they are explicit. Each such evaluation determines $\cR$, together with the endpoint data, on a component of parameter space by Lemma~\ref{lem:Rconst}.

\begin{center}\small
\setlength{\tabcolsep}{4pt}
\begin{tabular}{@{}lcccccll@{}}
\toprule
& $p$ & $N$ & $\kp$ & \textup{[S]} & $\kp_\cap$ & $\cR$ & chamber values $\mathcal I_0-j$\\
\midrule
(E1) Dirac & $1$ & $2$ & $1$ & yes & $0$ & $-\tfrac12$ & $\{0,-1\}$\\
(E2) spin $s\in\tfrac12+\Z_{\ge0}$, $m>0$ & $1$ & $2s{+}1$ & $s{+}\tfrac12$ & yes & $0$ & $-\tfrac12\kp^2$
 & $-\tfrac12\kp(\kp-1),\dots,-\tfrac12\kp(\kp+1)$\\
(E3) RHG, $m$ layers & $1$ & $2m$ & $m$ & yes & $0$ & \eqref{eq:RHGR} & $\cR\pm\tfrac m2$, $m{+}1$ values\\
(E4) $-\Delta$ & $2$ & $1$ & $1$ & yes & $1$ & $0$ & $\{0\}$\\
(E5) regularized Dirac, $\epsilon,m>0$ & $2$ & $2$ & $2$ & yes & $2$ & $-1$ & $\{-1\}$\\
(E6) $p$-wave sc. & $2$ & $2$ & $2$ & yes & $2$ & $-\sgn c$ & $\{-\sgn c\}$\\
(E7) $d$-wave sc. & $2$ & $2$ & $2$ & no & $2$ & $-2\sgn c$ & $\{-2\sgn c\}$\\
(E8) $p$-fold Dirac, $m>0$ & $p$ & $2$ & $p$ & no & $0$ & $-\tfrac p2$ & $\{0,\dots,-p\}$\\
\bottomrule
\end{tabular}
\setlength{\tabcolsep}{6pt}
\end{center}
\noindent{\small For \textup{(E7)} and \textup{(E8)}, where \textup{[S]} fails, the columns $\kp_\cap$, $\cR$ and the chamber values refer to the transported endpoints $\Lm^\Phi(\pm)$ and the transported Calder\'on path of Theorem~\ref{thm:graded}. In \textup{(E7)} no pointwise termination is
admissible, and in \textup{(E8)} only the middle one or two values are carried (numerically) by elliptic
terminations.
}

\medskip\noindent \textup{[P]} and \textup{[M]} hold in \textup{(E1)}--\textup{(E7)}, in \textup{(E7)} for the
transported data and numerically, and the table splits accordingly. For first-order operators, $p=1$, $\kp_\cap=0$ (Proposition~\ref{prop:admtopology}\textup{(iii)}), and in the first-order examples of the table the endpoints moreover differ by $W'=-\mathbf 1$, so the
$\kp+1$ values fill $[\cR-\frac\kp2,\cR+\frac\kp2]$ (Theorem~\ref{thm:range}). This fails for general first-order systems: for $\kp=1$, $W'=-1$ forces $\cR\in\frac12+\Z$, while the first-order systems at the end of this section have non-half-integer rotations. At even $p$ the
path closes on $\cB_{\mathrm{high}}$, $\kp_\cap=\kp$, and $\mathcal I=\cR-\cR_L$
(Corollary~\ref{cor:evenrigid}, and Corollary~\ref{cor:loop} for the transported path of \textup{(E7)}): a
single value up to the winding.
\textup{[M]} and its associated rigidity do not hold for  \textup{(E8)}.

\paragraph{(E1) The Dirac operator.}\label{ssec:E1}
$H=D_x\sigma_3-D_y\sigma_2+m\sigma_1$, $m>0$, gap $(-m,m)$. Integrating by parts,
$\ip{v}{Hu}-\ip{Hv}{u}=-iv^*\sigma_2u|_{y=0}$, so
\begin{equation}\label{eq:diracA}
  \cA=-\sigma_2,\qquad e_\pm=\tfrac1{\sqrt2}(1,\mp i),\qquad
  \alpha_\pm=\tfrac1{\sqrt2}\bigl(w_1\pm iw_2\bigr),
\end{equation}
$e_\pm$ the eigenvectors of $\cA$ for $\pm1$; the chart \eqref{eq:UL} is $U_L=\alpha_+/\alpha_-$, and
the pointwise realizations are $\lambda\psi_1+\psi_2=0$, $\lambda\in\R\cup\{\infty\}$, with
$U_L=e^{-i\vartheta_L}$, $\vartheta_L:=2\arctan\lambda$. The model problem at $m=0$ has decaying
solutions $e^{-|\xi|y}(1,\sgn\xi)$, so $\Lm^+(\pm)=\C(1,\pm1)$ and
\[
  U_+(\pm)=\pm i,\qquad U_+(0)=-1,\qquad \cR=-\tfrac12 ,
\]
the path running $-i\to-1\to+i$. So $\kp_\cap=0$, the chambers are $|\lambda|<1$ and $|\lambda|>1$,
and \eqref{eq:APS} gives $\mathcal I=0$ and $-1$ there.  {The half-integer $\cR$ is the parity
anomaly of one Dirac cone} \cite{Witten16} while the termination supplies the other half. Here, exceptionally, the edge spectrum is explicit and confirms the orientation of \eqref{eq:three}: the single branch $E(\xi)=\xi\cos\vartheta_L-m\sin\vartheta_L$ is an $L^2$ eigenvalue on the half-line $\xi\sin\vartheta_L+m\cos\vartheta_L<0$ for every $\lambda$. It lies in the gap exactly when $\cos\vartheta_L<0$, i.e., $|\lambda|>1$, and then crosses $0$ with slope $\cos\vartheta_L<0$, so $\sfl=-1$.

\paragraph{(E2) The half-integer spin family.}\label{ssec:E2}
$H=D_xS_1+D_yS_2+mS_3$ with spin-$s$ matrices $S_i$, $s\in\tfrac12+\Z_{\ge0}$, and $m>0$: $\kp=s+\tfrac12$,
$\cA=S_2$, $D_+=\mathrm{diag}(s,s-1,\dots,\tfrac12)$ and $D_-=\mathrm{diag}(\tfrac12,\dots,s-1,s)$, the spectrum of $S_2$ being symmetric; for integer $s$, $S_2$ is singular and
\textup{[H2]} fails. With $\beta:=\arg(\xi+im)$ and $\Hh_0(\rho):=\rho S_1+S_2D_y$, the identity
$\Hh(\xi)=e^{i\beta S_2}\Hh_0(\rho)e^{-i\beta S_2}$, $\rho=(\xi^2+m^2)^{1/2}$, gives
$\Lm^+(\xi)=e^{i\beta S_2}\Lm_0$, with $\Lm_0$ the decaying data of $\Hh_0(\rho)$, independent of $\rho$. With
$U_0:=U_{\Lm_0}$ this makes the path explicit,
\begin{equation}\label{eq:spinpath}
  U_+(\xi)=e^{i\beta D_+}\,U_0\,e^{i\beta D_-},\qquad
  -i\,U_+^*\partial_\xi U_+=-\frac{m}{\xi^2+m^2}\bigl(U_+^*D_+U_++D_-\bigr)<0 ,
\end{equation}
so the path is monotone and, taking traces,
$
  \cR=-\tr D_+=-\tfrac12\tr|\cA|=-\tfrac12\kp^2 ,
$
independently of $m>0$. For $m<0$ the sign of \eqref{eq:spinpath} reverses and $\cR=+\tfrac12\kp^2$. The endpoints differ by $-\mathbf 1$.

Indeed, $\Hh_0(\rho)$ gives $D_y\psi=\rho K\psi$ with $K=-S_2^{-1}S_1$. Since
$\det(S_1+\zeta S_2)=(1+\zeta^2)^\kp\det S_1$ (rotate by $e^{-i\theta S_3}$, $\tan\theta=\zeta$), $\Lm^+(+)$ is
the generalized eigenspace of $K$ at $i$. The parity $\mathrm{diag}((-1)^{s-\mu})$ anticommutes with
$S_1,S_2$, hence commutes with $K$, and $\ker(K-i)=\ker S_+=\C|s\rangle$ is even. The generalized eigenspace $\Lm^+(+)$ of $K$ at $i$ is therefore parity-invariant, and its odd part, $K$-invariant, would contain an eigenvector of $K$ at $i$ if it were nonzero; so $\Lm^+(+)$ meets the odd sector trivially and lies in the even one. Both have dimension $\kp$, the even sector being spanned by the $s+\tfrac12$ vectors $|\mu\rangle$ with $s-\mu$ even, so
$\Lm^+(+)=\mathrm{span}\{|\mu\rangle:s-\mu\text{ even}\}$. This space is preserved by $S_2^2=\cA^2$, so its
graph matrix intertwines $|\cA|$ on the two blocks: $D_+U_0=U_0D_-$. By \eqref{eq:spinpath} at
$\beta=\pi$, $U_+(-)=e^{i\pi D_+}U_0e^{i\pi D_-}=U_0e^{2\pi iD_-}=-U_0$, the entries of $D_-$ being
half-integers. So $\kp_\cap=0$ and Theorem~\ref{thm:range}
gives $\tfrac12\kp(\kp-1)\le-\mathcal I\le\tfrac12\kp(\kp+1)$, all $\kp+1$ values attained: $\{0,-1\}$ at
$s=\tfrac12$, which is \textup{(E1)}, $\{-1,-2,-3\}$ at $s=\tfrac32$. For $\kp\ge2$, no pointwise termination can gap the edge, and the bound \eqref{eq:horm} is attained.

\paragraph{(E3) Gated rhombohedral graphene.}\label{ssec:E3}
Following \cite[(1)--(3)]{FrazierBal}, for $m\ge2$ layers
\[
  H=\bigl(\delta_{jl}(u_j+D_x\sigma_1+D_y\sigma_2)+\delta_{j,l-1}B+\delta_{j,l+1}B^*\bigr)_{j,l=1}^m,
  \qquad B=\gamma\begin{pmatrix}0&0\\1&0\end{pmatrix},\quad u_j=\tfrac{u}{m-1}\bigl(j-\tfrac{m+1}2\bigr),
\]
in the valley $\tau=1$, the other reversing every rotation by Lemma~\ref{lem:chartalg}\textup{(i)}.
Since $B$ and $u_j$ have order zero,
\begin{equation}\label{eq:RHGA}
  \sigma_p=\mathbf 1_m\otimes(\xi\sigma_1+\zeta\sigma_2),\qquad \cA=\mathbf 1_m\otimes\sigma_2 ,
\end{equation}
corresponding to $m$ copies of the Dirac symbol $\xi\sigma_1+\zeta\sigma_2$, whose endpoints $\C(1,0)$ and $\C(0,1)$ have
$U_+(\pm)=\pm1$, so $U_+(-)^{-1}U_+(+)=-\mathbf 1_m$,
$\kp_\cap=0$, and a pointwise termination imposes one scalar condition per layer. With
$\delta(\gamma,u):=(\tfrac14+\gamma^2(m-1)^2u^{-2})^{1/2}$ and
$\delta_k:=k+\lceil\tfrac m2\rceil-\tfrac m2$, the phases of \cite[(12)--(14)]{FrazierBal} are
$\sgn(u)j$, $1\le j\le\lfloor\tfrac m2\rfloor$, with $\delta_{j-1}<\delta(\gamma,u)<\delta_j$, the last
one unbounded above in $\delta$; they are separated by the closings of the gap at $\xi=\zeta=0$ at the
critical values of \cite[(13)]{FrazierBal}, which exist only for $m\ge4$. On the phase $\sgn(u)j$,
\begin{equation}\label{eq:RHGR}
  \cR=\sgn(u)\Bigl(\tfrac{m^2}4-\delta_j(\delta_j-1)\Bigr)%
\end{equation}
computed to five decimals for $2\le m\le6$ and in every phase, the ordered Schur decomposition of the companion matrix supplying
the decaying data. So $\mathcal I$ ranges over the $m+1$ integers of $[\cR-\tfrac m2,\cR+\tfrac m2]$,
each attained by a termination that decouples, $\lambda_k\psi^{(k)}_1+\psi^{(k)}_2=0$ on layer $k$:
\emph{$\mathcal I_0-\mathcal I$ is the number of layers with $\lambda_k>0$}, the Dirac chambers of this
symbol being $\lambda\gtrless0$ (computed), and the edge current of a terminated sample is free within a
window of width $m$. The value $0$ lies in the window only on the phase $\delta_j=\tfrac m2$, that is
$j=\lfloor\tfrac m2\rfloor$, where $|\cR|=\tfrac m2$; on every other phase $|\cR|>\tfrac m2$.

\begin{proposition}[the invariant of \cite{FrazierBal} is a difference of rotations]\label{prop:RHGBDI}
For an interface between a north phase $\sgn(u^N)j$ and a south phase $\sgn(u^S)k$, the invariant of
\cite[Thm.~2.3]{FrazierBal} is $\mathcal I(L_{\mathrm{tr}})=\cR_N-\cR_S$.
\end{proposition}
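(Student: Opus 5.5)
The plan is to reduce the proposition to Corollary~\ref{cor:etaglue}, combined with Theorem~\ref{thm:R2} and the identification in \cite[Thm.~2.3]{FrazierBal} of their invariant with the interface current. The RHG operator of (E3) is first order, $p=1$, and is $x$- and $y$-independent on each side. Its principal symbol is $\mathbf 1_m\otimes(\xi\sigma_1+\zeta\sigma_2)$ by \eqref{eq:RHGA}, which does not depend on $u$ or $\gamma$. So the north medium $H_N=H(u^N,\gamma)$ and the south medium $H_S=H(u^S,\gamma)$ share their principal symbol and their Green form $\cA=\mathbf 1_m\otimes\sigma_2$. That is exactly the setting of Corollary~\ref{cor:etaglue}.

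\textbf{Step 1: hypotheses.} I would check [H1]--[H4] for each medium. Hermiticity and ellipticity of the Dirac symbol are immediate. Both are constant-coefficient operators, so [H4] holds. The remaining condition is a common gap containing $E_0$, which I would take from \cite{FrazierBal}: on each phase $\sgn(u)j$ the gap stays open away from the critical values of \cite[(13)]{FrazierBal}, and I would take $E_0=0$ inside the common gap, as the interface setting there assumes. Because the principal symbols agree, [C2](b) holds automatically (remark after Proposition~\ref{prop:tradm}). [C3] is trivial: the convex combinations are all the same symbol. At $p=1$, [P] and [M] are vacuous.

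\textbf{Step 2: apply the corollaries.} Corollary~\ref{cor:etaglue} then gives $\mathcal I(L_{\mathrm{tr}})=\cR[H_N]-\cR[H_S]=\BDI(H_N,H_S)$, with each rotation computed on $\Om_+$ for its own medium. Formula \eqref{eq:RHGR} gives these rotations on each phase as $\cR_N$ and $\cR_S$. The point to watch is orientation. The paper places the north medium at $y>0$ and uses the reflection convention $\cR[H_-;\Om_-]$, and the corollary has already converted this into a half-plane rotation on $\Om_+$ for $H_-$, so $H_+=H_N$ and $H_-=H_S$ should give $\cR_N-\cR_S$ with no further sign. The valley $\tau=1$ is fixed throughout. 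The other valley reverses every rotation by Lemma~\ref{lem:chartalg}(i), so whatever convention \cite{FrazierBal} uses for valleys would have to be matched.

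\textbf{Step 3: match with \cite{FrazierBal} (main obstacle).} The hard part is identifying the invariant of \cite[Thm.~2.3]{FrazierBal} with $\mathcal I(L_{\mathrm{tr}})$ in the paper's conventions, including sign. My first attempt is this. That theorem expresses the interface conductivity $2\pi\sigma_I$ of a smooth-wall operator $H_{\mathrm{sm}}$ interpolating the two gated phases as a bulk difference. Proposition~\ref{prop:sharpid} and Theorem~\ref{thm:R2} then identify $2\pi\sigma_I[H_{\mathrm{sm}}]$ with $\mathcal I(L_{\mathrm{tr}})$. This needs \eqref{eq:C2frozen} along the interpolation, which holds because the principal symbols agree. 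Both papers use the same trace $\Tr\,i[H,\chi]\varphi'(H)$, so no orientation enters beyond that definition, provided the switch $\chi$ in $x$ and the north/south labels match. If \cite{FrazierBal} states its invariant as a difference of bulk Chern-type integrals instead of as $\sigma_I$, I would instead compare it with $\BDI$ of Definition~\ref{def:BDI} through \cite[Cor.~2.4]{QuinnBal}, as in the proof of Theorem~\ref{thm:R2}. As a consistency check, I would verify that $\cR_N-\cR_S$ from \eqref{eq:RHGR} equals their tabulated values, for instance that $\cR$ changes by $\sgn(u)\bigl(\delta_{j}(\delta_j-1)-\delta_{j+1}(\delta_{j+1}-1)\bigr)$ across a gap closing, and that reversing $\sgn u$ gives $2\cR$.
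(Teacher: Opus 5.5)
Your proposal is correct and follows essentially the paper's route. Equal principal symbols make \textup{[C1]}--\textup{[C3]} and \eqref{eq:C2frozen} automatic, Theorem~\ref{thm:R2} identifies the invariant of \cite[Thm.~2.3]{FrazierBal} with $\mathcal I(L_{\mathrm{tr}})$, and Corollary~\ref{cor:etaglue}, applied to the $y$-independent media, gives $\cR_N-\cR_S$.
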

\begin{proof}
The two media differ only in the order-zero $u_j$. They therefore share \eqref{eq:RHGA} and are gapped at
$E=0$, so \textup{[C1]--[C3]} hold, with \textup{[C2](b)}, \textup{[C3]} and \eqref{eq:C2frozen} automatic
for equal principal symbols. Theorem~\ref{thm:R2} identifies the invariant of
\cite[Thm.~2.3]{FrazierBal} with $\mathcal I(L_{\mathrm{tr}})$. Both media are $y$-independent, so Corollary~\ref{cor:etaglue} gives $\mathcal I(L_{\mathrm{tr}})=\cR_N-\cR_S$. The value $u=0$ cannot serve as reference, the gap
closing there.
\end{proof}
\noindent With the computed values \eqref{eq:RHGR}, write $\rho_j:=\tfrac{m^2}4-\delta_j(\delta_j-1)$, so
that $\cR=\sgn(u)\rho_j$. The difference $\cR_N-\cR_S$ is then $\varsigma(\rho_j-\rho_k)$ for gate
potentials of common sign $\varsigma=\sgn(u^N)$, and $\varsigma(\rho_j+\rho_k)$ for opposite signs:
\[
  \cR_N-\cR_S=\begin{cases}
    \sgn(u^N)\bigl(\delta_k(\delta_k-1)-\delta_j(\delta_j-1)\bigr), & \sgn(u^N)=\sgn(u^S),\\[4pt]
    \sgn(u^N)\Bigl(\dfrac{m^2}{2}-\delta_j(\delta_j-1)-\delta_k(\delta_k-1)\Bigr),
      & \sgn(u^N)=-\sgn(u^S),
  \end{cases}
\]
which is \cite[(15)]{FrazierBal}. Conversely, \cite[(15)]{FrazierBal} determines every difference
$\cR_N-\cR_S$, consistently with \eqref{eq:RHGR}.

\paragraph{Second order examples.} At $p=2$, with $A_2$ independent of $y$ near $\Gam$, \eqref{eq:blocks} gives
\begin{equation}\label{eq:green2}
  \cA(x,\xi)=\begin{pmatrix}a_1(x,0,\xi)&A_2\\ A_2&0\end{pmatrix}
\end{equation}
in the blocks of $(\gamma_0u,\gamma_1u)$. So the Dirichlet condition $\{\gamma_0=0\}$ is Lagrangian, and when $a_1=0$, as in \textup{(E4)}, a
Robin boundary condition $\partial_yu=\mu u$, i.e., $\gamma_1u=-i\mu\gamma_0u$, has Lagrangian
$\C(1,-i\mu)$. In \textup{(E4)}, \textup{(E5)}, \textup{(E6)} the principal symbol is a scalar times a
fixed invertible matrix, so Lemma~\ref{lem:scalarp2} gives $\Lm^+(\pm)=\{\gamma_0=0\}$ and
$\mathcal I=\cR-\cR_L$.

\paragraph{(E4) The Laplacian.}\label{ssec:E4}
$H=-\Delta$: $\kp=1$, $\cA=\begin{psmallmatrix}0&1\\1&0\end{psmallmatrix}$, and the {\em pointwise} realizations
are the Robin lines $L_\mu=\C(1,-i\mu)$, $\mu=\infty$ being Dirichlet. For $E<0$,
$\Lm^+(\xi,E)=\C(1,ik)$ with $k=k(\xi,E):=\sqrt{\xi^2-E}$, so the endpoint is the graded limit
\[
  \Lm^+(\pm)=\lim_{t\to\infty}\Th_t\,\C(1,i)=\C(0,1),
\]
the Dirichlet Lagrangian, and not the frozen datum $\C(1,i)$. The loop is contractible since $k$ is even in $\xi$. Thus $\cR=0$, so $\mathcal I=0$ for every Robin parameter, and also for Dirichlet, by Proposition~\ref{prop:wall} with $n^\flat(\pm)=1$.

{{\em Momentum dependence} changes this drastically.} For
$\mu(\xi)=s\xi$, i.e., termination $\partial_y u = sD_x u$, $U_{L(\xi)}=(1-is\xi)/(1+is\xi)$ gives $\cR_L=-\sgn s$, and, the endpoints $L(\pm)$ of the termination still being Dirichlet, Proposition~\ref{prop:wall} gives $\mathcal I=\sgn s$ for $|s|>1$, where the branch $E(\xi)=(1-s^2)\xi^2$ on $s\xi<0$ enters the gap, and $\mathcal I=0$ for $|s|<1$, where
$n^\flat(-)-n^\flat(+)=-\sgn s$ cancels $-\cR_L$. This is the only example in the paper with a nonzero wall correction $n^\flat(-)-n^\flat(+)$; for $s<0$ it is the case $\kp=1$, $W(\pm)=1$ discussed at the end of the proof of Proposition~\ref{prop:wall}. The
same values arise from admissible terminations: $T=\gamma_0+isD_x\gamma_1$, $s=\pm1$, has endpoint the
Neumann space, $\cR_L=-s$ and $W(\pm)=-1$, so \eqref{eq:APSpoly} gives $\mathcal I=s$. So $\mathcal I$
takes the values $\{-1,0,1\}$ although it vanishes on every pointwise termination. For $s=1$ we recall that the bulk commutator alone would give a non-integer (Remark~\ref{rem:E2current}), showing that terminations do carry current.

\paragraph{(E5) The regularized Dirac operator.}\label{ssec:E5}
Following \cite{JudTauber}, with $\epsilon\ne0$,
\begin{equation}\label{eq:JTop}
  H=-D_x\sigma_1-D_y\sigma_2+\bigl(m-\epsilon(D_x^2+D_y^2)\bigr)\sigma_3 ,
\end{equation}
$\kp=2$, $\cA=\begin{psmallmatrix}-\sigma_2&-\epsilon\sigma_3\\ -\epsilon\sigma_3&0\end{psmallmatrix}$,
and \textup{[H3]} holds at $E_0=0$ exactly when $m\ne0$, so the parameter region has four components.
The decaying data are explicit, the discriminant $1-4\epsilon m$ of the fiber equation being independent of $\xi$, and for $\epsilon,m>0$,
\[
  \mathcal I=\cR-\cR_L=-1-\cR_L,
\]
with $\cR=-1$. The path closes, so $\cR\in\Z$ is constant on each component (Lemma~\ref{lem:Rconst}). More generally $\cR=-\frac12(\sgn{m}+\sgn{\epsilon})$ \cite{Bal2019,JudTauber}.
{Here, the regularization makes the rotation an integer}, against $-\tfrac12$ for \textup{(E1)}, and all the
variation of the edge index reported in \cite{JudTauber} comes from the winding $\cR_L$ of their
momentum-dependent terminations. Dirichlet, on the wall, has $\mathcal I=-1$ by \eqref{eq:IReq}, with
one edge state at $\xi=0$; $\sigma_3$ being indefinite, other terminations on the wall need not be
elliptic (Lemma~\ref{lem:scalarp2}).

\paragraph{(E6) and (E7) Superconductors.}\label{ssec:E67}
The $p$- and $d$-wave Hamiltonians of \cite{QuinnBal} (see \cite{Volovik03}) are
\[
\begin{aligned}
  H_p&=\Bigl(\frac{D_x^2+D_y^2}{2m}-\mu\Bigr)\sigma_1+\tfrac12\bigl(cD_y+D_yc\bigr)\sigma_2
       +c_0D_x\sigma_3,\\
  H_d&=\Bigl(\frac{D_x^2+D_y^2}{2m}-\mu\Bigr)\sigma_1+c_0\bigl(D_y^2-D_x^2\bigr)\sigma_2
       +\tfrac12D_x\bigl(cD_y+D_yc\bigr)\sigma_3,
\end{aligned}
\]
$m,\mu,c_0>0$, $c=c(y)$, with gap at $E_0=0$ and $\kp=2$. By \eqref{eq:green2},
$a_1=c\,\sigma_2$ for $H_p$ but $a_1=c\,\xi\sigma_3$ for $H_d$: {\textup{[S]} holds for $H_p$ and
fails for $H_d$.}

\emph{\textup{(E6)}.} $\mathcal I=-\sgn c-\cR_L$ (computed), with plenty of admissible pointwise terminations,
every $L$ transverse to $\cB_{\mathrm{high}}$. For two $p$-wave media, with $c$ equal to $c_\pm$ for $\pm y\ge y_0$ and continuous at $y=0$, so that $\cA_+=\cA_-$ ($\cA_{00}=c\,\sigma_2$ involves $c|_\Gam$), both paths close,
$\widetilde\Lm^+(+)=\widetilde\Lm^+(-)$, and Corollary~\ref{cor:loop} gives
$\mathcal I=\widetilde\cR=\sgn(c_-)-\sgn(c_+)$ for every admissible transmission condition. Here the rotations are those of the $y$-independent bulks by Corollary~\ref{cor:bulkR} and its reflection to $\Om_-$, the paths being closed, and $\cR[H;\Om_-]=\cR[H]$ for these bulks, as evaluating the rotations shows. {This is
$2\pi\sigma_I(H_p)$ of \cite{QuinnBal}}, and at the transparent transmission condition, inadmissible by
Proposition~\ref{prop:tradm}\textup{(ii)} but elliptic by \textup{(i)} there, the two bulks sharing
their principal symbol, Theorem~\ref{thm:R2} gives the same value, so the correction
$n^\flat(-)-n^\flat(+)$ of \eqref{eq:IReq} vanishes at $L_{\mathrm{tr}}$.

\emph{\textup{(E7)}.} $\cA$ is affine in $\xi$, so the transport truncates to $\Phi(\xi)=\mathbf 1+\xi\mathsf N$
with $\mathsf N=\begin{psmallmatrix}0&0\\ n&0\end{psmallmatrix}$, $n=-\tfrac c2A_2^{-1}\sigma_3$ invertible.
It fixes $\cB_{\mathrm{high}}$, so the transported path still closes, \eqref{eq:Ehyp} holding for the
transported datum at every parameter value tested. In this example, no pointwise termination is admissible. A fixed $\kp$-plane $L$ is a termination only if it is $\cA(\xi)$-isotropic for every $\xi$; by \eqref{eq:green2}, with $a_1=c\,\xi\sigma_3$, this forces $\pi_0(L)$ to be $\sigma_3$-isotropic, hence of dimension $\le1$, so that $L\cap\cB_{\mathrm{high}}\ne0$. Since $\mathsf N$ vanishes on $\cB_{\mathrm{high}}$, the transported $L^\Phi(\xi)=(\mathbf 1-\xi\mathsf N)L$ contains $L\cap\cB_{\mathrm{high}}$ for every $\xi$, so its limits meet $\cB_{\mathrm{high}}$, the common endpoint. The
transported pointwise terminations $\Phi L_0$ with $L_0\cap\cB_{\mathrm{high}}=0$ are admissible and give
\[
  \mathcal I=\cR=-2\sgn c\qquad\text{(computed)},
\]
 \eqref{eq:APSfinite} returning $-2$ independently of the
cut-off. A termination differential of order $d$ along $\Gam$ adds $-\cR_{L^\Phi}$, where by
Lemma~\ref{lem:plucker} $\cR_{L^\Phi}=n_+(\det C_-)-n_+(\det C_+)$, with $n_+$ the number of roots in
$\Im\xi>0$ and $C_\pm$ the $\alpha_\pm$-blocks of $C(\xi)=B(\xi)\Phi(\xi)\Upsilon^*$, so that $|\cR_{L^\Phi}|\le2(d+1)$; every integer occurs
at some order. For instance, the second-order termination
$\bigl(A_2+iD_x(K_0+D_xK_1)n\bigr)\gamma_0u=i(K_0+D_xK_1)\gamma_1u$, $K_0,K_1$ Hermitian and $K_1$ zero or
invertible, is admissible with $\cR_{L^\Phi}=\operatorname{sig}K_1$, so that
$\mathcal I=-2\sgn c-\operatorname{sig}K_1$ (computed: $-4,-2,0$ for $c>0$ and
$K_1=\mathbf 1,\mathrm{diag}(1,-1),-\mathbf 1$). 

\paragraph{(E8) The $p$-fold Dirac cone: \textup{[P]} without \textup{[M]}.}\label{ssec:E8}
This example separates the hypotheses \textup{[P]} and \textup{[M]}, and admissibility from
ellipticity. Most statements below are obtained numerically. Let
\[
  H_p=\Re\bigl(D_x+iD_y\bigr)^{p}\sigma_1+\Im\bigl(D_x+iD_y\bigr)^{p}\sigma_2+m\sigma_3,\qquad m\ne0,
\]
so $N=2$, $\kp=p$, the gap is $(-|m|,|m|)$, and $p=1$ is \textup{(E1)}. The coefficients are
Hermitian, $a_k=\binom pk\xi^{p-k}\tau_k$ with $\tau_k=\cos\frac{k\pi}2\sigma_1+\sin\frac{k\pi}2\sigma_2$,
so $\cA_{jl}=\binom p{j+l+1}\xi^{p-1-j-l}\tau_{j+l+1}$ for $j+l<p$: \textup{[S]} fails for $p\ge2$ and
everything is read in the transported frame. At $\xi=\pm1$ the root $\zeta=i$ has multiplicity $p$
and a one-dimensional kernel, and its Jordan chain fills every trace slot,
\[
  V_+=\C^p\otimes e_1,\qquad V_-=\C^p\otimes e_2 .
\]
The blocks of $\cA(0)^{-1}(\cA(\xi)-\cA(0))$ are multiples of products of two $\tau_k$, hence diagonal, so $\Phi$ preserves $\C^p\otimes e_1$ and $\C^p\otimes e_2$, and $V^\Phi_\pm=V_\pm$. These are $\Th_t$-invariant, so $V_\pm^\uparrow=V_\pm$, and numerically $\Lm^\Phi(\pm)=V_\pm$ at the
rate $O(|\xi|^{-1})$, which is \textup{[P]} in the transported frame; but $\overline w(V_\pm)=\frac12p(p-1)<w^*$, so \textup{[M]} fails for $p\ge2$. Since
$V_+\pitchfork V_-$, $\kp_\cap=0$ at every order, even $p$ included, and there are $p+1$ chambers.

\emph{Range.} $\cR=-\frac p2\sgn m$ (computed for $p\le5$). Over the transported
pointwise terminations $L=\Phi L_0$ the right-hand side of \eqref{eq:APSpoly} is integral to within $10^{-7}$ and
equals $-j(L)$, so
\begin{equation}\label{eq:E8range}
  \mathcal I\in\{0,-1,\dots,-p\},\qquad\text{all values attained.}
\end{equation}
At $p=2,4$ this shows, numerically, that without \textup{[S]} Corollary~\ref{cor:evenrigid} needs \textup{[M]} for the transported data. Even-order rigidity
comes from \textup{[M]}, not from the parity of $p$. The data being $x$-independent, Theorem~\ref{thm:fiberexist} makes every value in \eqref{eq:E8range} a bona fide
invariant once the edge spectrum escapes. Without \textup{[M]} Lemma~\ref{lem:endpoints} does not give the uniformity in energy on which Proposition~\ref{prop:exact}\textup{(i)} rests, but here it holds numerically on the whole gap: for $p=2,\dots,5$, $m=1$ and $E\in\{0,\pm\tfrac12,\pm0.99\}$, the largest principal angle between $\Lm^\Phi(\xi,E)$ and $V_\pm$ decays like $C_p\,|m\mp E|/|\xi|$ for $|\xi|$ large, with $C_p\le\tfrac12$. So every admissible termination escapes by Proposition~\ref{prop:exact}\textup{(i)}, and every value in \eqref{eq:E8range} is an invariant.

\emph{Ellipticity.} Numerically, admissibility does not imply it here: $(H_p,L)$ is elliptic exactly for the
transported terminations whose lowest-weight part has one direction in each trace slot, transverse to $e_1$
and to $e_2$. The generic admissible $L_0$ is not elliptic. On those elliptic terminations
$\eta(+)-\eta(-)$ is $0$ for even $p$ and $\pm\frac12$ for odd $p$, so that
\[
  \mathcal I=\cR\ \ (p\ \text{even}),\qquad
  \mathcal I\in\bigl\{\cR\pm\tfrac12\bigr\}\ \ (p\ \text{odd}),
\]
that is $\{-1\}$, $\{-2,-1\}$, $\{-2\}$, $\{-3,-2\}$ at $p=2,3,4,5$ for $m>0$, the middle chamber or the two
middle ones. All $p+1$ values are invariants, and ellipticity singles out the middle ones as protected
against deformations within the elliptic terminations, a set that is not open, $L\mapsto L^\downarrow$ being discontinuous. These statements, and the value of $\cR$, are purely based on numerical simulations.

\paragraph{Interfaces: the velocity flip.}\label{ssec:vflip}
At $p=1$ let $\cA_+=\cA_-=\cA$ and $a_{10}^-=\lambda a_{10}^+$ with $\lambda\in\R\setminus\{0\}$, $a_{10}^\pm$ the
coefficients of $D_x$. The
decaying data at $\xi=\pm\infty$ are spectral subspaces of $\cA^{-1}a_{10}^\pm$, which has no real eigenvalue by \textup{[H2]}, for one half plane, the
growing data for the other, and $\cA^{-1}a_{10}^-=\lambda\cA^{-1}a_{10}^+$; so \eqref{eq:C2frozen} and
\eqref{eq:C2det} hold iff $\lambda>0$, and for $\lambda<0$ the failure is maximal,
$\Lm^+_{H_+}(\pm)=\Lm^-_{H_-}(\pm)$. For $H_\pm=\pm D_x\sigma_1+D_y\sigma_2+m\sigma_3$, the case
$\lambda=-1$ with $\cA=\sigma_2$, flipping the tangential velocity thus exchanges decaying and growing
data, and \textup{[C2]}\textup{(b)} fails at $L_{\mathrm{tr}}$, to which we assign no value.  This is consistent with \cite{FrazierBal25}. 
The generic transmission conditions are however unaffected: $\widetilde\Lm^+(\pm)$ are transverse, and $\mathcal I$ takes three consecutive values on the three chambers.

\section*{Acknowledgments.} This work was supported in part by NSF grant DMS-2606421 and ONR grant
N00014-26-1-2017.
\section*{Use of generative AI.}
Generative AI (Claude Opus 5.5) was used to perform the numerical evaluations of section~\ref{sec:examples} and to write the script reproducing them. It was also used to audit the manuscript for internal consistency and gaps in its arguments and proofs, and to help with bibliographical data. All such output was checked by the author, who is alone accountable for the correctness of statements and proofs.
\appendix
\numberwithin{equation}{section}

\section{Background material}\label{app:import}

We record the grading of the Cauchy data and the graded limits used in section~\ref{sec:green}, facts on
Lagrangian Grassmannians and the Maslov index, and the results of the Boutet de Monvel calculus used in
sections~\ref{sec:green} and \ref{sec:obs}, with a right inverse for the non-normal rows of the examples.

\paragraph{The grading of $\cB$, the Grassmannian, and graded limits.}\label{app:grass}
We follow \cite[Ch.~1 \S5]{GriffithsHarris} for the Grassmannian and its Pl\"ucker embedding. We use the grading and the filtrations \eqref{eq:filtration} of $\cB$ throughout, together
with $\Th_t:=\mathrm{diag}(1,t,\dots,t^{p-1})\otimes\mathbf 1_N$, $t>0$. Fix a basis
$e_1,\dots,e_{2\kp}$ of $\cB$ adapted to the grading and ordered by weight: $\Th_te_i=t^{\lambda_i}e_i$
with $\lambda_1\le\dots\le\lambda_{2\kp}$, the weight $\lambda_i=j$ occurring for the $N$ indices
spanning $\cB_j$, $j=0,\dots,p-1$. So $\cB_{\mathrm{high}}$ is spanned by the $e_i$ with
$\lambda_i\ge\lceil p/2\rceil$, the last $N\lfloor p/2\rfloor$ of them. The Pl\"ucker embedding
$\Gr(\kp,\cB)\hookrightarrow\mathbb P(\bigwedge^\kp\cB)$ sends a $\kp$-plane $P$ to the line spanned
by $v_1\wedge\dots\wedge v_\kp$ for any frame; it is injective with closed image, and in the basis
$\{e_I\}$ of $\bigwedge^\kp\cB$ indexed by the $\kp$-element subsets $I\subseteq\{1,\dots,2\kp\}$ the
coordinates $p_I(P)$ are the $\kp\times\kp$ minors of a frame. The
decomposable vectors, those of the form $v_1\wedge\dots\wedge v_\kp$, are defined by the Pl\"ucker
relations and form a closed cone. The scaling group acts diagonally,
\begin{equation}\label{eq:weights}
  \Th_te_I=t^{w_I}e_I,\qquad w_I:=\sum_{i\in I}\lambda_i\in\{0,\dots,\kp(p-1)\},
\end{equation}
and we put
\begin{equation}\label{eq:wstar}
  w^*:=\max_{|I|=\kp}w_I,\qquad
  \underline w(P):=\min\{w_I:p_I(P)\ne0\},\qquad \overline w(P):=\max\{w_I:p_I(P)\ne0\} ,
\end{equation}
so that $\underline w(P)\le \overline w(P)\le w^*$, and we write $w_{\mathrm{tot}}:=\sum_{i=1}^{2\kp}\lambda_i=w_I+w_{I^c}$ for the total weight. Since the coordinates of $\Th_t^{\pm1}P$ are the monomials
$t^{\pm w_I}p_I(P)$, both limits $P^{\uparrow}:=\lim_{t\to\infty}\Th_tP$ and
$P^{\downarrow}:=\lim_{t\to\infty}\Th_t^{-1}P$ exist for every $P$, with no expansion required and
with normalized representatives
\begin{equation}\label{eq:torusreps}
  \varpi_{P^{\uparrow}}=\!\!\sum_{w_I=\overline w(P)}\!\!p_I(P)\,e_I,\qquad
  \varpi_{P^{\downarrow}}=\!\!\sum_{w_I=\underline w(P)}\!\!p_I(P)\,e_I ,
\end{equation}
both decomposable by the above. Each slot of the trace grading carries $N$ basis vectors of equal
weight and $\kp=\frac12pN$, so a $\kp$-set of maximal weight contains the indices of $\cB_{\mathrm{high}}$, and
for even $p$ consists of them. By \eqref{eq:torusreps} the Pl\"ucker coordinates of $P^\uparrow$ then vanish
off the sets containing any such index $i$, and a $\kp$-plane with this property contains $e_i$; therefore
\begin{equation}\label{eq:highinlimit}
  \cB_{\mathrm{high}}\subseteq P^{\uparrow}\qquad\text{whenever }\overline w(P)=w^*.
\end{equation}
More generally, with $\pi_m:\cB\to\cB_m$ the projection on the slot $\cB_m$, every $P$ has
\[
  P^{\uparrow}=\bigoplus_{m<p}\pi_m\bigl(P\cap G^m\bigr),\qquad
  P^{\downarrow}=\bigoplus_{m<p}\pi_m\bigl(P\cap F_m\bigr):
\]
in a basis of $P$ adapted to the filtration $P\cap G^m$, a vector $v\in(P\cap G^m)\setminus G^{m-1}$ has
$t^{-m}\Th_tv\to\pi_mv$, and these limits are independent, $\pi_m$ being injective on
$(P\cap G^m)/(P\cap G^{m-1})$; likewise for $\Th_t^{-1}$ and $F_m$. For a field $L(\xi)=\ker b(\xi)$ polynomial
in $\xi$, the Pl\"ucker coordinates of $\Th_t^{-1}L(\pm t)=\ker\bigl(b(\pm t)\Th_t\bigr)$ are polynomial in $t$, so
$L^\downarrow(x,\pm)$ of section~\ref{sec:setting} exists, and equals $\ker\sigma_T(x,\pm)$ when $\sigma_T$ has rank $\kp$.
Lemma~\ref{lem:endpoints} needs a limit along a curve, not covered by \eqref{eq:torusreps}, as follows.
\begin{lemma}[graded limits of curves]\label{lem:gradedcurve}
Let $\epsilon\mapsto P(\epsilon)\in\Gr(\kp,\cB)$ be $C^\infty$ on $[0,\epsilon_0)$. Then
$P^\dagger:=\lim_{t\to\infty}\Th_tP(t^{-1})$ exists in $\Gr(\kp,\cB)$, is approached at rate $O(t^{-1})$
with $t$-derivative $O(t^{-2})$, and $P^\dagger=P(0)^{\uparrow}$ if $\overline w(P(0))=w^*$, the constants then
depending only on the $C^1$ norm of a frame of $P$ and on a lower bound for the maximal-weight coordinates of
$P(0)$ in an orthonormal frame.
\end{lemma}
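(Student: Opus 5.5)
The plan is to work in Plücker coordinates. Choose a $C^\infty$ frame $v_1(\epsilon),\dots,v_\kp(\epsilon)$ of $P(\epsilon)$ on $[0,\epsilon_0)$, orthonormal at $\epsilon=0$. Its Plücker vector $\varpi(\epsilon)=\sum_I p_I(\epsilon)e_I$ then has $C^\infty$ coefficients. By \eqref{eq:weights}, $\Th_tP(t^{-1})$ is represented by
\[
  \sum_I t^{w_I}p_I(t^{-1})\,e_I .
\]
Taylor expand each coordinate as $p_I(\epsilon)=\sum_{k<K}c_{I,k}\epsilon^k+O(\epsilon^K)$, with $K>w^*$. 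The coefficient of $e_I$ becomes $\sum_k c_{I,k}t^{w_I-k}+O(t^{w_I-K})$. Each coordinate is therefore a finite Laurent-type expansion in $t$ with leading exponent
\[
  \mu_I:=\max\{w_I-k: c_{I,k}\ne0\}
\]
(or $-\infty$), up to a remainder that is negligible after normalization. Set $\mu:=\max_I\mu_I$. This maximum is attained, and it satisfies $\mu\ge\underline w$-type lower bounds, because $\varpi(\epsilon)\neq0$ forces some $c_{I,k}\ne0$ with $k\le$ some finite order. In fact $\mu\ge w_{I_0}-k_0\ge -K$ once we pick $K$ beyond the vanishing order of $\varpi$ at $0$, which is finite since $\varpi(0)\ne0$. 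That vanishing order is $0$, so $\mu\ge\min_I w_I\ge0$.

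Dividing by $t^{\mu}$ gives $t^{-\mu}\varpi_t\to\varpi^\dagger:=\sum_{I}\sum_{w_I-k=\mu}c_{I,k}e_I$. This limit is nonzero by construction, and it lies in the closed cone of decomposable vectors, since each $t^{-\mu}\varpi_t$ is decomposable. So it spans a unique $\kp$-plane $P^\dagger$, and by the Plücker embedding being a homeomorphism onto its closed image, $\Th_tP(t^{-1})\to P^\dagger$. All the exponents are integers, so the next-order terms are $O(t^{-1})$ relative to the leading one. This gives the rate $O(t^{-1})$ in projective (hence Grassmannian) distance. Differentiating the expansion termwise in $t$ gives $t$-derivative $O(t^{-2})$, because the leading term $t^{-\mu}\cdot t^{\mu}$ is constant and each correction $t^{-j}$, $j\ge1$, has derivative $O(t^{-j-1})$. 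The remainder $O(t^{w_I-K-\mu})$ and its derivative are handled by taking $K$ large and using the $C^{K+1}$ bound on the frame.

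When $\overline w(P(0))=w^*$, the terms with $k=0$ and $w_I=w^*$ give $\mu\ge w^*$. Since $w_I\le w^*$ and $k\ge0$, we also have $\mu\le w^*$, and equality forces $k=0$, $w_I=w^*$. Hence $\varpi^\dagger=\sum_{w_I=w^*}p_I(P(0))e_I$, which is $\varpi_{P(0)^\uparrow}$ by \eqref{eq:torusreps}. In this case only first-order Taylor data are needed. The correction is $t^{-w^*}\varpi_t-\varpi^\dagger$, whose coordinates are
\[
  t^{w_I-w^*}p_I(t^{-1})-\delta_{w_I=w^*}p_I(0)=O(t^{-1}),
\]
and similarly for the derivative, with constants controlled by $\|p_I\|_{C^1}\lesssim$ the $C^1$ norm of the frame. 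The projective distance is the ratio of this error to $|\varpi^\dagger|\ge c$, where $c$ is the assumed lower bound on maximal-weight coordinates. That gives the claimed dependence of the constants.

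The main obstacle is the general case (without maximal weight). There the leading exponent may come from cancellations between different Taylor orders, and the uniformity and the $O(t^{-2})$ derivative bound depend on higher derivatives of the frame. I would state that part only with constants depending on the curve, via the finite expansion above, and reserve the quantitative uniform statement for the maximal-weight case as in the lemma.
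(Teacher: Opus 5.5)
Your proposal is correct and follows the paper's proof essentially step for step: both use Plücker coordinates, the leading exponent $\mu=\max_I(w_I-o_I)$ with $o_I$ the order of vanishing of $p_I$ at $\epsilon=0$ (bounded below by $\min_I w_I$ since $\varpi(0)\ne0$), the closedness of the decomposable cone, and the maximal-weight case read off from $t^{-w^*}\Th_t\varpi(t^{-1})$ with constants depending on $C^1$ norms and a lower bound on $|\varpi^\dagger|$. Your closing worry is unnecessary. Distinct $e_I$ never interact and each coordinate's leading exponent is just $w_I-o_I$, so no cancellations occur, and in the general case the lemma only asks for curve-dependent constants, which your finite expansion provides.
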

\begin{proof}
Let $\varpi(\epsilon)\in\bigwedge^\kp\cB\setminus\{0\}$ be the wedge of a $C^\infty$ frame of $P(\epsilon)$,
on whose Pl\"ucker coordinates $\Th_t$ acts by \eqref{eq:weights}. If $\overline w(P(0))=w^*$, the terms of
$t^{-w^*}\Th_t\varpi(t^{-1})=\sum_It^{w_I-w^*}p_I(\varpi(t^{-1}))\,e_I$ with $w_I<w^*$, or with
$w_I=w^*$ and $p_I(\varpi(0))=0$, are $O(t^{-1})$, and the others are $p_I(\varpi(0))+O(t^{-1})$; so the
limit is the maximal-weight truncation of $\varpi(0)$, which spans $P(0)^{\uparrow}$ by
\eqref{eq:torusreps}, and the $t$-derivative is $O(t^{-2})$.

In general, let $o_I\in\N\cup\{\infty\}$ be the order of vanishing of $p_I(\varpi(\epsilon))$ at
$\epsilon=0$, so that $p_I(\varpi(\epsilon))=\epsilon^{o_I}(a_I+O(\epsilon))$ with $a_I\ne0$ when
$o_I<\infty$, and $p_I(\varpi(\epsilon))=O(\epsilon^n)$ for every $n$ when $o_I=\infty$. Since $\varpi(0)\ne0$, some $p_I(\varpi(0))$ is nonzero, that is $o_I=0$ for some $I$, so $m:=\max_I(w_I-o_I)\ge\min_Iw_I$ is finite. With $t=\epsilon^{-1}$,
\[
  t^{-m}\,\Th_t\,\varpi(t^{-1})=\sum_{o_I<\infty}t^{\,(w_I-o_I)-m}\bigl(a_I+O(t^{-1})\bigr)e_I+O(t^{-n})
  \ \longrightarrow\ \varpi^\dagger:=\sum_{w_I-o_I=m}a_I\,e_I\ \ne0 ,
\]
every exponent being a non-positive integer; the convergence is at rate $O(t^{-1})$ and differentiating in
$t$ gives $O(t^{-2})$. The decomposable vectors form a closed cone, so $\varpi^\dagger$ is decomposable and
spans a $\kp$-plane $P^\dagger$; the two rates transfer to $\Gr(\kp,\cB)$ because the Pl\"ucker embedding
is a homeomorphism onto its image with smooth inverse near $\varpi^\dagger\ne0$.
\end{proof}

\paragraph{Elements of symplectic geometry.}
We follow \cite{BBF98,Gontier23} for the Maslov index in complex symplectic spaces and its relation to spectral flow, Furutani's survey \cite{Furutani04} for the Fredholm--Lagrangian--Grassmannian, and Booss-Bavnbek--Wojciechowski \cite{BBW} for the Calder\'on
projector and the Grassmannian of terminations.

\paragraph{Lagrangian Grassmannian.} Let $q(w)=w^*\cA w$ be a nondegenerate Hermitian form of signature zero on $\cB=\C^{pN}\cong\C^{2\kp}$, $\kp=\tfrac12pN$, and $\omega(v,w):=i\ip{v}{\cA w}$. The form $\omega$ is skew-Hermitian, $\omega(w,v)=-\overline{\omega(v,w)}$, a symplectic form in the complex sense of \cite{BBF98,Gontier23}. So $\dim_\C\cB=2\kp$, a $q$-Lagrangian has $\dim_\C=\kp$, $\U(\kp)$ has $\dim_\R=\kp^2$, and $\Gr(j,\C^\kp)$ has $\dim_\R=2j(\kp-j)$. A subspace is \emph{$q$-isotropic} if $q$ vanishes on it identically and \emph{$q$-Lagrangian} if it is maximal $q$-isotropic, equivalently $L=L^{\perp_q}$, equivalently $q$-isotropic of dimension $\kp$. Then $\Lag(\cA)$ denotes the set of $q$-Lagrangians.

Three facts are used. First, the form is Hermitian on a complex space, so $\Lag(\cA)\cong\U(\kp)$ rather than
its real counterpart; it has $\pi_1\cong\Z$, detected by the determinant that underlies the Maslov theory.
Second, $\Lag(\cA)$ carries a group structure pulled back from $\U(\kp)$ through the chart \eqref{eq:UL},
dependent on the base point and not canonical, used in Proposition~\ref{prop:transparent}. Third, the
codimension of the corank strata
\[
  \Sigma^{(j)}_{U_0}=\bigl\{U:\dim\ker(U-U_0)=j\bigr\}
\]
of the wall \eqref{eq:wall}. Translating so that
$U_0=\mathbf 1$, a unitary with $\dim\ker(U-\mathbf 1)=j$ is determined by that kernel, a point of
$\Gr(j,\C^\kp)$ of real dimension $2j(\kp-j)$, together with a unitary on the orthogonal complement
having no eigenvalue $1$, an open subset of $\U(\kp-j)$ of real dimension $(\kp-j)^2$. So
\[
  \dim_\R\Sigma^{(j)}_{U_0}=2j(\kp-j)+(\kp-j)^2=\kp^2-j^2,
  \qquad \operatorname{codim}_\R\Sigma^{(j)}_{U_0}=j^2 ,
\]
against $\dim_\R\U(\kp)=\kp^2$. Therefore, $\Sigma^{(1)}_{U_0}$ is a hypersurface, open and dense in
$\Sigma_{U_0}$; $\Sigma^{(2)}_{U_0}$ has codimension $4$, which is what the wall analysis of
section~\ref{sec:range} needs; and
$\Sigma^{(\kp)}_{U_0}$ is the single point $U_0$.
Finally, if $U_1\ne U_2$ then $\Sigma_{U_1}\cap\Sigma_{U_2}$ has codimension $\ge2$. A point $U$ of it has unit
vectors $v,w$ with $Uv=U_1v$ and $Uw=U_2w$. If $v$ and $w$ are proportional, then $v\in\mathsf K:=\ker(U_1-U_2)\ne\C^\kp$,
and, $U$ being fixed on $v$ up to $\U(\kp-1)$, these $U$ form a set of dimension $\le2(\dim\mathsf K-1)+(\kp-1)^2\le\kp^2-3$.
Otherwise $U$ is fixed on $\mathrm{span}\{v,w\}$ up to $\U(\kp-2)$, and the lines of $v,w$ satisfy
$\ip{U_1v}{U_2w}=\ip vw$, an equation holomorphic in $w$ and nontrivial for $v\notin\mathsf K$, so these $U$ form a
set of dimension $\le4(\kp-1)-2+(\kp-2)^2=\kp^2-2$.

\noindent Two further properties of the chart \eqref{eq:UL} are used in section~\ref{sec:interface}.

\begin{lemma}[two properties of the chart]\label{lem:chartalg}
Let $\cA$ be nondegenerate of signature zero on $\cB$, and let $U_L^{(\cA)}$ denote the image of
$L\in\Lag(\cA)$ under \eqref{eq:UL} formed with $\cA$.
\begin{enumerate}[label=\textup{(\roman*)},nosep]
\item \emph{\textup{(}inversion\textup{)}} $\Lag(-\cA)=\Lag(\cA)$ and
$\det U_L^{(-\cA)}=\varepsilon_\cA\bigl(\det U_L^{(\cA)}\bigr)^{-1}$ for every $L$, with
$\varepsilon_\cA=\pm1$ independent of $L$. Hence the rotation of any path,
computed in the chart of $-\cA$, is minus its rotation computed in the chart of $\cA$.
\item \emph{\textup{(}block factorization\textup{)}} If $\cB=\cB_1\oplus\cB_2$ with
$\cA=\cA_1\oplus\cA_2$, each $\cA_i$ nondegenerate of signature zero, and if $L=L_1\oplus L_2$ with
$L_i\in\Lag(\cA_i)$, then $\det U_L^{(\cA)}=\varepsilon\,\det U_{L_1}^{(\cA_1)}\det U_{L_2}^{(\cA_2)}$
with $\varepsilon=\pm1$ independent of $L_1,L_2$. Hence rotations of split paths add.
\end{enumerate}
\end{lemma}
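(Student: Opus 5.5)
Both parts reduce to explicit choices of eigenbases of $\cA$ in the construction \eqref{eq:UL}, followed by the observation that reordering eigenvectors within a block only changes $\det U_L$ by a fixed sign (proof of Proposition~\ref{prop:U}).

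For \textup{(i)}, $q$ and $-q$ have the same isotropic subspaces, so $\Lag(-\cA)=\Lag(\cA)$. Write $\cA=\Upsilon^*\mathrm{diag}(D_+,-D_-)\Upsilon$. Then $-\cA=\Upsilon'^*\mathrm{diag}(D_-,-D_+)\Upsilon'$, where $\Upsilon'$ swaps the two coordinate blocks, so the positive and negative blocks of $-\cA$ are the negative and positive blocks of $\cA$. An $L=\{(B\alpha_-,\alpha_-)\}$ is then, for $-\cA$, the graph over the old positive block of $B^{-1}$; this $B$ is invertible since $B^*D_+B=D_-$. Formula \eqref{eq:UL} for $-\cA$ gives
\[
U_L^{(-\cA)}=D_-^{1/2}B^{-1}D_+^{-1/2}=\bigl(D_+^{1/2}BD_-^{-1/2}\bigr)^{-1}=\bigl(U_L^{(\cA)}\bigr)^{-1},
\]
up to the fixed permutation/sign conventions. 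Hence $\det U_L^{(-\cA)}=\varepsilon_\cA\bigl(\det U_L^{(\cA)}\bigr)^{-1}$ with $\varepsilon_\cA=\pm1$ independent of $L$. Along a path, a continuous lift of $\arg\det$ in the chart of $-\cA$ is minus one in the chart of $\cA$ plus a constant, so the rotations in \eqref{eq:R} are opposite.

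For \textup{(ii)}, diagonalize each $\cA_i=\Upsilon_i^*\mathrm{diag}(D_+^{(i)},-D_-^{(i)})\Upsilon_i$ and take $\Upsilon=\Upsilon_1\oplus\Upsilon_2$, followed by a fixed permutation grouping the positive blocks $(\alpha_+^{(1)},\alpha_+^{(2)})$ and the negative blocks $(\alpha_-^{(1)},\alpha_-^{(2)})$. Then $D_\pm=\mathrm{diag}(D_\pm^{(1)},D_\pm^{(2)})$, and the graph matrix of $L_1\oplus L_2$ is $B=\mathrm{diag}(B_1,B_2)$, because each summand is a graph over its own negative block. Consequently $U_L=\mathrm{diag}(U_{L_1},U_{L_2})$ in this chart, and the determinants multiply. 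Any other ordering of eigenvectors differs by fixed permutations $P_\pm$, which multiply $\det U_L$ by a constant sign $\varepsilon$. Taking $\arg$ of the product along split paths then gives additivity of rotations.

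The only point needing care is that the permutation sign $\varepsilon$ is independent of $L$. This holds because the charts are fixed once the eigenbasis is fixed (Proposition~\ref{prop:U}); rotations are unaffected in any case, since constant factors drop out of $[\arg\det]_{-\infty}^{+\infty}$.
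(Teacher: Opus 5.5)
Your proposal is correct and follows the paper's own proof. In (i) you keep the frame of Proposition~\ref{prop:U} with the two blocks exchanged, so the graph matrix becomes $B^{-1}$ and $U_L^{(-\cA)}=\bigl(U_L^{(\cA)}\bigr)^{-1}$ up to the ordering sign; in (ii) the direct-sum frame makes the graph matrix $B_1\oplus B_2$, so the determinant factorizes, and your remark that constant factors drop out of the rotation is the point the paper relies on.
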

\begin{proof}
The signs come from the ordering of the eigenvectors, immaterial by the proof of Proposition~\ref{prop:U}. (i) In the frame of Proposition~\ref{prop:U}, $\cA=\mathrm{diag}(D_+,-D_-)$, so
$-\cA=\mathrm{diag}(-D_+,D_-)$: passing to $-\cA$ exchanges the two blocks, the
$q$-positive block of $-\cA$ being the old negative one, with moduli $D_-$, and conversely.
A Lagrangian is unchanged by $\cA\mapsto-\cA$, and the graph $\{(B\alpha_-,\alpha_-)\}$ over the old negative block is the
graph of $B^{-1}$ over the new one. Hence by \eqref{eq:detUL}
\[
  \det U_L^{(-\cA)}=\pm\,\det\bigl(D_-^{1/2}\bigr)\,\det B^{-1}\,\det\bigl(D_+^{-1/2}\bigr)
  =\pm\,\varkappa_\cA^{-1}\bigl(\det B\bigr)^{-1}
  =\pm\bigl(\det U_L^{(\cA)}\bigr)^{-1}.
\]

(ii) In the direct sum of the two frames the graph map of $L$ is $B_1\oplus B_2$ and the diagonals are
$D_{+,1}\oplus D_{+,2}$ and $D_{-,1}\oplus D_{-,2}$, so the determinant factorizes.
\end{proof}

\noindent The identification $\Lag(\cA)\cong\U(\kp)$ of \eqref{eq:UL} is simultaneously the Cayley
parametrization of Lagrangians \cite{Furutani04,Gontier23}, von Neumann's parametrization of the self-adjoint
extensions of a symmetric operator with deficiency indices $(\kp,\kp)$ \cite{BHS20}, and the graph coordinates on the Grassmannian
of terminations; Remark~\ref{rem:KS} compares the von Neumann chart of \cite{KS} with \eqref{eq:UL}.

\paragraph{Maslov cycle, crossing forms, and Maslov index.} The wall, or \emph{Maslov cycle}, of $L_0\in\Lag(\cA)$ is
$\Sigma_{U_{L_0}}$ of \eqref{eq:wall}, stratified by the corank as above. The \emph{crossing
form} of a smooth path $\lambda\mapsto\Lm(\lambda)$ relative to a fixed $L$, at a crossing $\lambda_0$, is the
quadratic form on $\Lm(\lambda_0)\cap L$
\begin{equation}\label{eq:crossdef}
  v\longmapsto\frac{d}{d\lambda}\Big|_{\lambda_0}\omega\bigl(v,\tilde v(\lambda)\bigr),
\end{equation}
$\tilde v$ being any smooth curve with $\tilde v(\lambda)\in\Lm(\lambda)$ and $\tilde v(\lambda_0)=v$. Although $\omega$ is skew-Hermitian, \eqref{eq:crossdef} is real, the diagonal of a Hermitian form on $\Lm(\lambda_0)\cap L$ \cite[Lemma~22]{Gontier23}. For a moving
$L(\lambda)$, the crossing form is the difference of those of $\Lm$ and of $L$. A crossing
is regular when this form is nondegenerate, and the Maslov index $\muRS$ is the sum of the signatures of
the crossing forms at regular crossings, with halves at the endpoints, extended to all paths by homotopy invariance \cite{BBF98};
the orientation relative to $\sfl$ is discussed after Theorem~\ref{thm:three}. It is an integer when the
endpoints are transverse to $L$, and lies in $\frac12\Z$ otherwise. It is
invariant under homotopies fixing the endpoints and additive under concatenation. Along a
\emph{loop} $L(t)$ the count \eqref{eq:unsplit} of counterclockwise passages for $W=U_{L_0}^{-1}U_{L(t)}$ is
$\wind\det U_{L(t)}$, independent of $L_0$: the fractional parts at the two ends agree, so it equals
$\frac1{2\pi}[\arg\det W]$, and $\det U_{L_0}$ is constant. By the computation in the proof of
Lemma~\ref{lem:crossing}, a positive crossing form is a clockwise passage, so that
$\muRS(L(\cdot);L_0)=-\wind\det U_{L(t)}$. This is the mechanism of
Corollary~\ref{cor:loop}.
Finally \emph{spectral flow}, Definition~\ref{def:sf}, equals a Maslov index under the correspondence
of \cite{BBF98,Gontier23}, up to the orientation convention; Theorem~\ref{thm:three} is an instance, with
$\sfl=-\muRS$.

\paragraph{Boutet de Monvel theory.} The main references are \cite{BdM71} as well as  Grubb \cite{Grubb96} and Schrohe \cite{Schrohe01}.

\paragraph{Operator classes.} The resolvent of an elliptic realization is an operator of Boutet de Monvel's class: the
restriction $Q_+=r^+Qe^+$ of an interior $\psid$, plus a \emph{singular Green} operator, an operator-valued
$\psid$ along $\Gam$ whose kernels decay in $y+y'$. Trace operators of the calculus are $\sum_{j<p}B_j\gamma_j$, with $B_j$ $\psid$s on $\Gam$, plus an integral term $T'$ of class zero \cite[(1.2.17)--(1.2.18)]{Grubb96}; boundary conditions are those with $T'=0$. The boundary part $\cG(z)$ of \eqref{eq:res} differs from the singular Green
part by $\psid$ differences supported on the collar and on the $x$-dependent region. A scalar trace operator has order $m$ when each $B_j$ has order $\le m-j$; the
\emph{row orders} $m_i$ of $T$ are the orders of its $\kp$ scalar rows, so that
$T:H^p(\Om)\to\prod_iH^{p-m_i-1/2}(\Gam)$. The order $d$ of a differential boundary condition in
Definition~\ref{def:local} is instead the tangential order $\max_j\operatorname{ord}B_j$, the degree in $\xi$ of
the symbol matrix of $T$.

\paragraph{Symbols, ellipticity, resolvent.} A boundary problem $(H,T)$ has an interior principal symbol on
$S^*\Om$ and an operator-valued \emph{boundary principal symbol} on $S^*\Gam$ \cite{Grubb96,Schrohe01}, the model problem on the inward normal
obtained by freezing tangential data. Ellipticity is invertibility of both. Invertibility of the
second is the Shapiro--Lopatinski condition, which is \eqref{eq:SLdef} in transversality form.

From the calculus the paper uses: the a priori estimate \cite[Prop.~1.4.2]{Grubb96}, with the order reduction of rows of \cite[\S1.4]{Grubb96}, made uniform on $\Gam=\R$ in Lemma~\ref{lem:ellreg}; the adjoint of a normal realization \cite[Thms.~1.6.9, 1.6.11]{Grubb96}, in Propositions~\ref{prop:SA} and \ref{prop:sharpid}; the bounded right inverse $R_T$, a hypothesis of section~\ref{sec:obs}, which normal trace operators have \cite[Prop.~1.6.5, Rem.~1.6.5$'$]{Grubb96} and which is constructed below for the rows $\gamma_0+iB\gamma_1$, used in Propositions~\ref{prop:SA} and \ref{prop:normres}, Lemma~\ref{lem:CT} and Theorem~\ref{thm:local}; and, for $x$-independent problems, uniform parameter-ellipticity in $\xi$ \cite[Def.~3.1.4, Thm.~3.2.13]{Grubb96}, used in Remark~\ref{rem:escape}. From these, the decay of the kernel of $f(H_L)$ is proved in Theorem~\ref{thm:sG}, with the Helffer--Sj\"ostrand formula and
Combes--Thomas estimates. Outside the calculus, the endpoint limits of Lemmas~\ref{lem:endpoints} and
\ref{lem:gro} rest on exponential dichotomies of slowly varying systems \cite[Lecture~6, Prop.~1]{Coppel} and the associated right inverses \cite[Lecture~3]{Coppel}, with the smoothness in the parameter proved in Lemma~\ref{lem:endpoints}; this is vacuous when the coefficients are $y$-independent.

\paragraph{A right inverse for $\gamma_0+iB\gamma_1$.} Let $p=2$, $T=\gamma_0+iB\gamma_1$ with $B$ first order along $\Gam$, uniformly elliptic for $|\xi|\ge1$, as for $B=sD_x$, $s\ne0$, and $B=\tfrac12(bD_x+D_xb)$, $|b|\ge b_0>0$; then $\partial\cH=H^{-1/2}(\Gam)$. Such rows are not normal in the strict sense of \cite{Grubb96}, the coefficient of $\gamma_1$ being of order one and not invertible near $\xi=0$. Let $K_0,K_1$ be bounded lifts into $H^2(\Om)$ with $(\gamma_0,\gamma_1)K_0g=(g,0)$ and $(\gamma_0,\gamma_1)K_1h=(0,h)$, and $Q$ a parametrix of $iB$ of order $-1$ with $iBQ=\mathbf 1-S_0$, $S_0$ smoothing \cite[\S18.1]{Hormander3}. Then $R_T:=K_1Q+K_0S_0$ satisfies $TR_T=iBQ+S_0=\mathbf 1$ and is bounded $H^{-1/2}(\Gam)\to H^2(\Om)$, with a bound depending only on the ellipticity constant and finitely many $C_b^\infty$ seminorms of the symbol of $B$.

\end{document}